\newcommand{\be}{\begin{equation}}
\newcommand{\ee}{\end{equation}}
\newcommand{\bfig}{\begin{figure}\begin{center}}
\newcommand{\efig}{\end{center}\end{figure}}
\newcommand{\bi}{\begin{itemize}}
\newcommand{\ei}{\end{itemize}}
\newcommand{\lan}{\langle}
\newcommand{\ran}{\rangle}
\newcommand{\Tr}{\mathrm{Tr}}
\newcommand{\mO}{\mathcal{O}}
\newcommand{\wt}{\widetilde}
\newcommand{\ol}{\overline}
\newcommand{\Rd}{\mathbb{R}^{d}}
\newcommand{\Rdd}{\mathbb{R}^{d-1}}
\newcommand{\HA}{\mathcal{H}_A}
\newcommand{\HB}{\mathcal{H}_B}
\newcommand{\vx}{\vec{x}}
\newcommand{\vd}{\vec{\delta}}
\newcommand{\HG}{\mathcal{H}_G}
\newcommand{\AR}{\mathcal{A}[R]}
\newcommand{\BV}{\mathcal{B}(V)}
\newtheorem{thm}{Theorem}[section]
\newtheorem{conj}{Conjecture}
\theoremstyle{definition}
\newtheorem{mydef}{Definition}[section]
\newcommand{\CPN}{\mathbb{CP}^{N-1}}
\begin{document}
\title{Symmetries in Quantum Field Theory and Quantum Gravity}
\author[a]{Daniel Harlow}
\author[b,c]{and Hirosi Ooguri}
\affiliation[a]{Center for Theoretical Physics\\ Massachusetts Institute of Technology, Cambridge, MA 02139, USA}
\affiliation[b]{Walter Burke Institute for Theoretical Physics\\ California Institute of Technology,  Pasadena, CA 91125, USA}
\affiliation[c]{Kavli Institute for the Physics and Mathematics of the Universe (WPI)\\ University of Tokyo,
   Kashiwa, 277-8583, Japan}
\emailAdd{harlow@mit.edu, ooguri@caltech.edu}
\abstract{In this paper we use the AdS/CFT correspondence to refine and then establish a set of old conjectures about symmetries in quantum gravity.  We first show that any global symmetry, discrete or continuous, in a bulk quantum gravity theory with a CFT dual would lead to an inconsistency in that CFT, and thus that there are no bulk global symmetries in AdS/CFT.  We then argue that any ``long-range'' bulk gauge symmetry leads to a global symmetry in the boundary CFT, whose consistency requires the existence of bulk dynamical objects which transform in all finite-dimensional irreducible representations of the bulk gauge group.  We mostly assume that all internal symmetry groups are compact, but we also give a general condition on CFTs, which we expect to be true quite broadly, which implies this.  We extend all of these results to the case of higher-form symmetries. Finally we extend a recently proposed new motivation for the weak gravity conjecture to more general gauge groups, reproducing the ``convex hull condition'' of Cheung and Remmen. 

An essential point, which we dwell on at length, is precisely defining what we mean by gauge and global symmetries in the bulk and boundary.  
Quantum field theory results we meet while assembling the necessary tools include continuous global symmetries without Noether currents, new perspectives on spontaneous symmetry-breaking and 't Hooft anomalies, a new order parameter for confinement which works in the presence of fundamental quarks, a Hamiltonian lattice formulation of gauge theories with arbitrary discrete gauge groups, an extension of the Coleman-Mandula theorem to discrete symmetries, and an improved explanation of the decay $\pi^0\to\gamma \gamma$ in the standard model of particle physics. We also describe new black hole solutions of the Einstein equation in $d+1$ dimensions with horizon topology $\mathbb{T}^p\times \mathbb{S}^{d-p-1}$.}
\maketitle

\section{Introduction}
It has long been suspected that the consistency of quantum gravity places constraints on what kinds of symmetries can exist in nature \cite{Misner:1957mt}. In this paper we will be primarily interested in three such conjectural constraints \cite{Polchinski:2003bq,Banks:2010zn}: 

\begin{conj} \label{nosym}
No global symmetries can exist in a theory of quantum gravity.  
\end{conj}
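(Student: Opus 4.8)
The plan is to \emph{refine} Conjecture~\ref{nosym} into a precise statement within the AdS/CFT correspondence and then prove that refined statement; the first task is therefore definitional. I would declare a bulk \emph{global} symmetry to be a transformation that acts on the low-energy bulk effective theory, commutes with the bulk Hamiltonian, and admits at least one local operator $\phi$, supported in an arbitrarily small bulk ball, that transforms nontrivially. By contrast a bulk \emph{gauge} symmetry is one whose charge is detectable from infinity (a Gauss law), which in asymptotically AdS language means it descends to a genuine conserved current of the boundary CFT. The claim to be established is then: no consistent CFT is dual to a bulk possessing a global symmetry in the above sense. Getting these definitions right is not a formality — the whole argument only has teeth if ``global'' is pinned down so that a charged, ball-localizable operator is guaranteed to exist, and so that the gauge case is cleanly excluded.

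The engine of the proof is \textbf{entanglement wedge reconstruction}. Choose a constant-time bulk slice $\Sigma$, place the charged operator $\phi$ in a small ball $b$ near the center of AdS, and partition the boundary sphere into three regions $A$, $B$, $C$, chosen so that the Ryu--Takayanagi/HRT entanglement wedge of each of the three pairs $AB$, $BC$, $CA$ contains $b$, while the wedge of no single region does; for a central ball and suitably sized arcs this nesting can be arranged, though checking it for the relevant states and geometries is itself a nontrivial input. Entanglement wedge reconstruction then supplies boundary operators $\phi_{AB}$, $\phi_{BC}$, $\phi_{CA}$, supported on the respective pairs of regions, each acting as $\phi$ on the code subspace of low-energy bulk states. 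Since every boundary region lies in some entanglement wedge, the bulk symmetry pushes forward to an action on the entire boundary operator algebra, i.e.\ it necessarily induces a boundary symmetry; the question is what structure that boundary symmetry must have.

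Closing the loop: if the bulk symmetry is continuous and generated by a bulk Noether current, that current reduces on the boundary to a conserved CFT current $j^\mu$, hence to a \emph{splittable} charge $Q = Q_A + Q_B + Q_C$ with $Q_R$ supported in $R$. On the code subspace $\phi$ equals $\phi_{BC}$, so $[Q_A,\phi]=0$ there; likewise $[Q_B,\phi]=0$ and $[Q_C,\phi]=0$; hence $[Q,\phi]=0$, contradicting that $\phi$ is charged. The hard part — and the main obstacle — is the general case: a discrete symmetry, or a continuous symmetry with no Noether current, supplies no ready-made splittable charge, so one must prove that the induced boundary symmetry is nonetheless implemented by operators (charges or unitaries) that are ``splittable'' in the precise sense required for the three-region cancellation to run, and then push the same contradiction through at the level of symmetry unitaries acting on the code subspace rather than of a charge density. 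Handling this is exactly what forces the auxiliary technology advertised in the introduction — continuous symmetries without currents, a Coleman--Mandula–type rigidity for discrete groups, the split-property characterization of when a symmetry localizes, and (for non-compact groups) the extra CFT hypothesis — and the higher-form generalization follows by replacing the local charged operator with a charged extended operator and the tripartition of a Cauchy slice with a tripartition adapted to the relevant cycle. Once the general splittable-implementation statement is secured the contradiction is uniform, and we take the resulting AdS/CFT theorem as the justification for Conjecture~\ref{nosym} itself.
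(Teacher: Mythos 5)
Your strategy is the paper's strategy: refine the conjecture to an AdS/CFT statement, push the bulk symmetry to a boundary symmetry, split the boundary symmetry operator over regions whose entanglement wedges cannot reach the charged object, and derive a contradiction (this is Theorem~\ref{noglobalthm}). Your three-region, pairwise-wedge version of the continuous case is essentially the Eastin--Knill-style commutator argument that the paper relegates to a footnote, and as far as it goes it is fine. But for the statement of Conjecture~\ref{nosym} itself two genuine gaps remain. First, the case you call ``the main obstacle'' --- discrete symmetries, or continuous ones without a Noether current --- is precisely where the proof has content, and you only gesture at it. What is actually needed is (i) splittability of the induced boundary symmetry on spatial $\mathbb{S}^{d-1}$, which the paper gets from the split property plus the state--operator correspondence (Definition~\ref{splitdef} and Section~\ref{splitsec}), and (ii) an argument that the localized unitaries $U(g,R_i)$ and the edge unitary can be taken to act within the code subspace without disturbing the bulk away from the HRT surfaces, since for a discrete group there is no charge density whose commutator you can take. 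The paper also supplies an alternative route that sidesteps splittability entirely --- multiboundary wormholes, where the factorization $U(g,\partial\Sigma)=\prod_i U(g,R_i)$ is automatic because the boundary is disconnected --- and you would need one of these two completions for your contradiction to be ``uniform'' as claimed.

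Second, your charged operator $\phi$ ``supported in an arbitrarily small bulk ball'' does not exist in a gravitational theory: any such operator must be gravitationally (and possibly gauge) dressed to the boundary, so its support necessarily enters the entanglement wedge of whichever region its dressing line lands in, and your assertion that $\phi=\phi_{BC}$ on the code subspace is then false as stated. The paper deals with this by building the dressing into the definition of a bulk global symmetry (Definition~\ref{bulkglobaldef}): the symmetry is required to commute with the boundary stress tensor and to act on dressed (quasilocal) operators preserving their dressing, and condition (d) demands \emph{two} operators with identical asymptotic-conformal representation --- hence identical gravitational Wilson lines --- but different representations of the putative global group. The contradiction is then that the boundary pieces $U(g,R_i)$ can only see the dressing, which is the same for both operators, so they cannot distinguish the two representations. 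Your definitional split between ``global'' and ``gauge'' (Gauss law at infinity) points in this direction, but without explicitly arguing that the dressing carries no charge under the assumed global symmetry --- equivalently, without something playing the role of condition (d) --- the commutators $[Q_A,\phi]$, $[Q_B,\phi]$, $[Q_C,\phi]$ are not all controlled and the three-region cancellation does not close.
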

\begin{conj}\label{allcharge}
If a quantum gravity theory at low energies includes a gauge theory with compact gauge group $G$, there must be physical states that transform in all finite-dimensional irreducible representations of $G$.  For example if $G=U(1)$, with allowed charges $Q=nq$ with $n \in \mathbb{Z}$, then there must be states with all such charges.
\end{conj}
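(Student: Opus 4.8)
The plan is to establish the conjecture inside the AdS/CFT correspondence: reduce it to a property of the dual boundary CFT, and then prove that property. Throughout I take ``a low-energy gauge theory with compact group $G$'' to mean a \emph{long-range} (deconfined, massless) bulk gauge field --- this is the regime in which the conjecture should hold --- and part of the job is to make that hypothesis precise.

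\textbf{Step 1: a long-range bulk gauge symmetry produces a boundary global symmetry with group exactly $G$.} For continuous $G$, the bulk gauge fields $A^a_\mu$ are massless, so their non-normalizable boundary modes source conserved spin-one currents $j^a_\mu$ of dimension $d-1$ in the CFT; the charges $Q^a=\int_{S^{d-1}}\star j^a$ generate a $G$-action on $\mathcal{H}_{S^{d-1}}$, and that action is unbroken precisely because a Coulomb-phase bulk gauge field leaves no charged condensate in the vacuum. For discrete $G$, I would instead use that the bulk $G$-gauge theory carries genuine Wilson lines and codimension-two vortices with nontrivial mutual braiding, and that pushing the vortices to the conformal boundary yields codimension-one topological operators $U_g$ generating a $G$-symmetry of the CFT. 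The boundary group is exactly $G$, not a quotient $G/Z$, because the stated hypothesis is that the bulk gauge group is $G$: the bulk then possesses Wilson lines $W_R$ for \emph{every} irrep $R$ of $G$, all inequivalent since a long-range gauge field screens none of them.

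\textbf{Step 2: the boundary CFT carries a charged operator in every irrep.} Fix a finite-dimensional irrep $R$ and take the bulk Wilson line $W_R[\gamma]$ along an arc $\gamma$ with both endpoints on the conformal boundary. In the long-range phase this line is genuine and unscreened, but it can be freely deformed away from its endpoints; sending the endpoints to the boundary and stripping the appropriate conformal factor defines a nonzero boundary operator at each endpoint, carrying $G$-charge $R$ and $\bar R$. That the endpoint carries charge $R$ follows from the bulk Gauss law: the electric flux $\oint\star F$ over a small bulk sphere linking the endpoint is pinned to lie in $R$, and transporting this statement to the boundary --- where the flux reaching a region equals the boundary charge there --- says that linking the symmetry operator $U_g$ (or acting with $e^{i\alpha^aQ^a}$) around the endpoint operator acts by the matrix $R(g)$. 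Finally the state--operator map turns a charge-$R$ boundary operator into a charge-$R$ state in $\mathcal{H}_{S^{d-1}}$, whose bulk dual is a dynamical bulk excitation transforming in $R$: this supplies the physical states in all irreps that the conjecture demands. Entanglement-wedge reconstruction of the RT-surface flux operator, together with the Gauss law, gives an alternative route to the same charge bookkeeping and is the natural form for presenting the detailed argument.

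\textbf{Main obstacle.} The hard part is Step 1, in two respects. First, one must show the boundary symmetry genuinely exists and is non-anomalous in the sense needed: for continuous $G$ this means proving a massless bulk gauge field is dual to an honestly conserved current --- not a dimension-$(d-1)$ operator whose non-conservation would betray a hidden bulk St\"uckelberg mass --- and that the symmetry is unbroken; for discrete $G$ it demands a workable definition of ``long-range discrete gauge symmetry'' in the bulk and the precise identification of the dual boundary operators, which is where most of the technical labor sits. Second, the claim in Step 2 that a Wilson-line endpoint is a genuine nonzero \emph{local} operator, rather than the endpoint of an irreducibly nonlocal defect, is exactly where long-range-ness is essential --- the construction fails in a confining or Higgsed bulk phase --- and treating the continuous and discrete cases uniformly is delicate. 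A further subtlety I would flag but treat separately is whether ``physical state in rep $R$'' must mean a normalizable single-particle state: the boundary state--operator map directly yields only a Hilbert-space state of charge $R$, and promoting this to a genuine asymptotic charged object needs an extra clustering-type argument separating charge-$R$ and charge-$\bar R$ excitations.
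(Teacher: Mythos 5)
There is a genuine gap, and it sits exactly where you flagged it but did not close it: Step 2's claim that the endpoint of a boundary-anchored Wilson line $W_R[\gamma]$ is a nonzero \emph{local} charged operator of the CFT presupposes that the line can be cut, i.e.\ that dynamical charges in representation $R$ exist --- which is the conjecture itself. ``Long-range-ness'' does not supply this: pure Maxwell theory in AdS with no charged matter is a long-range $U(1)$ gauge theory, has boundary-anchored Wilson lines and a conserved boundary current, and yet has no charged states; your two steps go through verbatim in that example and would ``prove'' a false statement, so the argument cannot be complete. The Gauss-law bookkeeping only shows that \emph{if} an endpoint operator existed it would carry charge $R$; it does not show the line factorizes into two quasi-local pieces rather than remaining an irreducibly nonlocal defect. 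Note also that a Wilson line with \emph{both} endpoints on the same boundary transforms under the single-boundary symmetry only in $R\otimes\bar R$ (adjoint action on the two ends), so even granting it as an operator and applying the state--operator map you reach only ``neutral'' tensor combinations --- for $U(1)$, only charge zero --- never charge $R$ itself. The same issue undermines the Step 1 assertion that the boundary group is ``exactly $G$, not $G/Z$'': faithfulness of the boundary action on local operators is precisely what must be proven, not read off from the existence of bulk lines.

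The missing ingredient is a genuinely holographic input, and it is what the paper uses: in a true CFT dual the Hilbert space on two disconnected boundary spheres factorizes, so the asymptotic symmetry operator $U(g,\Sigma_R)$ of one side is an operator of a single CFT copy. Threading the eternal AdS--Schwarzschild wormhole with a Wilson line in representation $\alpha$ gives $U^\dagger(g,\Sigma_R)\,W_\alpha\,U(g,\Sigma_R)\,W_\alpha^\dagger=D_\alpha(g)$, and since for every $g\neq e$ some $D_{\alpha_g}(g)$ is nontrivial, $U(g,\Sigma_R)$ acts faithfully on the one-boundary Hilbert space. Faithfulness is then enough: a faithful unitary representation of a compact group contains a faithful finite-dimensional subrepresentation, every irreducible representation appears in finite tensor powers of that representation and its conjugate, and the state--operator correspondence converts the resulting charged local operators into the required states. (For connected $G$ the paper also gives an independent route --- matching the set of allowed topologically nontrivial background fluxes, i.e.\ Dirac quantization, on $\mathbb{S}^2\times X$ --- which directly settles your $G$ versus $G/\Gamma$ worry.) Your construction, by contrast, never invokes factorization, splittability, or any other property that distinguishes quantum gravity with a CFT dual from bulk gauge theory in a fixed AdS background, which is why it cannot get off the ground as written.
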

\begin{conj}\label{compact}
If a quantum gravity theory at low energies includes a gauge theory with gauge group $G$, then $G$ must be compact.
\end{conj}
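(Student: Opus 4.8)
The plan is to combine the AdS/CFT dictionary with a field-theoretic statement about conformal field theories. Using the correspondence established above between ``long-range'' bulk gauge symmetries and boundary global symmetries, a bulk gauge group $G$ produces a global symmetry of the boundary CFT acting on local operators --- more precisely, by whatever quotient of $G$ acts faithfully on the charged operators. So it suffices to show: under a mild condition satisfied by any CFT with a holographic dual, the group of internal global symmetries of a CFT is compact. Conjecture~\ref{compact} then follows, since a genuinely non-compact bulk gauge group would force a non-compact boundary global symmetry.

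To analyze the boundary side I would work in radial quantization, where the global symmetry acts by unitaries $U(g)$ on the Hilbert space $\mathcal{H}$ on $S^{d-1}$, i.e.\ on the space of local operators. Since the symmetry is internal it commutes with $T_{\mu\nu}$, hence with the dilatation operator $D$, so each $U(g)$ preserves the eigenspace $\mathcal{H}_\Delta$ of operators of dimension $\Delta$. The condition I would impose on the CFT is that there are finitely many operators below any dimension bound, so that $D$ has purely discrete spectrum with finite multiplicities; this holds in all known unitary CFTs and is part of the standard package of assumptions for theories with an AdS dual. Granting it, $g \mapsto U(g)$ is a continuous homomorphism from $G$ into $\prod_\Delta U(\mathcal{H}_\Delta)$, a product of compact unitary groups, hence into a compact group by Tychonoff. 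The closure of the image --- the group that genuinely acts on the theory, and which one should identify with ``$G$'' --- is then a closed subgroup of a compact group, so compact. Two special cases are reassuring: when $G$ is semisimple non-compact it has no nontrivial finite-dimensional unitary representation, so it would act trivially on every operator, contradicting that it acts at all; and when the symmetry has a Noether current $J_\mu$, reflection positivity forces the invariant bilinear form on the Lie algebra appearing in $\langle J^a_\mu J^b_\nu\rangle$ to be positive-definite, which again puts $G$'s Lie algebra in compact type.

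The step I expect to be the real obstacle --- and the reason a condition on the CFT is needed rather than an unconditional theorem --- is the non-compact \emph{abelian} (or infinite discrete) case. A non-compact $G$, such as $\mathbb{R}$ or $\mathbb{Z}$, can map onto a compact quotient acting on the operators, or can act faithfully yet with image of compact closure when the charges form a dense set. In the first case the honest symmetry group is the compact quotient, and dually one should call the bulk gauge group $U(1)$ rather than $\mathbb{R}$ whenever all charged fields carry charges in a lattice. In the second case one must decide whether such a dense-charge theory can occur at all, and this is exactly where the finiteness condition does the work, forcing the closure of the symmetry action to be compact. Pinning down the cleanest form of this condition, verifying it is implied by the axioms holographic CFTs are assumed to satisfy, and carefully matching the bulk and boundary notions of ``the gauge group'' across this subtlety, is where most of the care will go; the connected non-abelian part of the argument is by comparison routine.
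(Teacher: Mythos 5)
Your overall strategy is the same as the paper's: dualize to a boundary global symmetry and argue that the group which genuinely acts on the CFT has compact closure, with the dense-charge abelian case correctly identified as the crux (the paper makes exactly this point with its two-compact-boson example, and explicitly warns that ``discrete spectrum plus the nonexistence of faithful finite-dimensional unitary representations of noncompact groups'' is not a valid argument). However, your key step has a genuine gap. Taking the closure of the image of $G$ inside the infinite product $\prod_\Delta U(\mathcal{H}_\Delta)$ only produces a compact \emph{topological} group: nothing in the discrete-spectrum/finite-multiplicity hypothesis prevents that closure from being, say, an infinite-dimensional torus or a profinite group, i.e.\ not a Lie group at all. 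Within the framework of the paper (all symmetry groups are Lie groups, and a long-range bulk gauge symmetry is defined only for compact Lie $G$, with the representation-theoretic machinery of theorems \ref{faithfulthm} and \ref{levythm} behind the duality), such a closure cannot be identified with a candidate bulk gauge group, so the conjecture is not yet established. You also do not verify that elements of the closure satisfy the defining conditions of a global symmetry (preservation of $\mathcal{A}[R]$, faithfulness on local operators, commutation with $T_{\mu\nu}$); this is plausible under pointwise limits but needs an argument.

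The paper closes this gap with a strictly stronger hypothesis than discrete spectrum, namely finite generation (definition \ref{finitegendef}): a finite set $S_0$ of primaries whose OPE recursively generates all primaries, in addition to finitely many primaries below any $\Delta$. Then $S_0$ spans a finite-dimensional unitary representation $\rho$ of $G$ which is \emph{faithful} precisely because $S_0$ generates everything and $G$ acts faithfully on local operators; hence $G\cong\rho(G)\subset U(M)$ with $M$ finite, and $G'=\overline{\rho(G)}$ is a closed subgroup of $U(M)$, so a compact \emph{Lie} group by the closed subgroup theorem. Finite generation is then used a second time to show that every primary transforms in a representation occurring in finite tensor powers of $\rho$ and $\rho^{*}$, so all correlators obey the $G'$ selection rules and $G'$ is an honest compact global symmetry containing $G$. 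Note also that the paper only \emph{conjectures} that a discrete spectrum implies finite generation, and in $d=2$ a modified hypothesis is needed (continuous spectra from long strings; the free noncompact scalar is the cautionary counterexample). Your plan would be repaired by replacing the ``finitely many operators below any dimension'' condition with the OPE finite-generation hypothesis and taking the closure inside the resulting finite-dimensional unitary group rather than in the infinite product.
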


These conjectures are quite nontrivial, since it is easy to write down low-energy effective actions of matter coupled to gravity which violate them.  For example Einstein gravity coupled to two $U(1)$ gauge fields has a $\mathbb{Z}_2$ global symmetry exchanging the two gauge fields, and also has no matter fields which are charged under those gauge fields.  If we instead use two $\mathbb{R}$ gauge fields, then we can violate all three  at once.  Conjectures \ref{nosym}-\ref{compact} say that such effective theories cannot be obtained as the low-energy limit of a consistent theory of quantum gravity: they are in the ``swampland'' \cite{Vafa:2005ui,ArkaniHamed:2006dz,Adams:2006sv,Ooguri:2006in}.\footnote{Note however that the charged states required by conjecture \ref{allcharge} might be heavy, and in particular they might be black holes.}

The ``classic'' arguments for conjectures \ref{nosym}-\ref{compact} are based on the consistency of black hole physics.  One  argument for conjecture \ref{nosym} goes as follows \cite{Banks:2010zn}. Assume that a \textit{continuous} global symmetry exists.  There must be some object which transforms in a nontrivial representation of $G$.   Since $G$ is continuous, by combining many of these objects we can produce a black hole carrying an arbitrarily complicated representation of $G$.\footnote{More rigorously, given any faithful representation of a compact Lie group $G$, theorem \ref{levythm} below tells us that all irreducible representations of $G$ must eventually appear in tensor powers of that representation and its conjugate.  If $G$ is continuous, meaning that as a manifold it has dimension greater than zero, then there are infinitely many irreducible representations available.}  We then allow this black hole to evaporate down to some large but fixed size in Planck units: the complexity of the representation of the black hole will not decrease during this evaporation since the Hawking process depends only on the geometry and is uncorrelated with the global charge (for example if $G=U(1)$ then positive and negative charges are equally produced).  According to Bekenstein and Hawking the entropy of this black hole is given by \cite{Bekenstein:1973ur,Hawking:1974sw}
\be\label{bhS}
S_{BH}=\frac{Area}{4G_N},
\ee
but this is not nearly large enough to keep track of the arbitrarily large representation data we've stored in the black hole.  Thus either \eqref{bhS} is wrong, or the resulting object cannot be a black hole, and is instead some kind of remnant whose entropy can arbitrarily exceed \eqref{bhS}.  There are various arguments that such remnants lead to inconsistencies, see eg \cite{Susskind:1995da}, but perhaps the most compelling case against either of these possibilities is simply that they would necessarily spoil the statistical-mechanics interpretation of black hole thermodynamics first advocated in \cite{Bekenstein:1973ur}.  This interpretation has been confirmed in many examples in string theory \cite{Susskind:1993ws,Strominger:1996sh,Horowitz:1996nw,Strominger:1997eq,Benini:2015eyy,Berkowitz:2016jlq}.  

The classic argument for conjecture \ref{allcharge} is simply that once a gauge field exists, then so does the appropriate generalization of the Reissner-Nordstrom solution for any representation of the gauge group $G$.  The classic argument for conjecture \ref{compact} is that at least if $G$ were $\mathbb{R}$, the non-quantization of charge would imply a continuous infinity in the entropy of black holes in a fixed energy band, assuming that black holes of any charge exist, which again contradicts the finite Bekenstein-Hawking entropy.  Moreover non-abelian examples of noncompact continuous gauge groups are ruled out already in low-energy effective field theory since they do not have well-behaved kinetic terms (for noncompact simple Lie algebras the Lie algebra metric $\Tr\left(T_a T_b\right)$ is not positive-definite).  

These arguments for conjectures \ref{nosym}-\ref{compact} certainly have merit, but they are not completely satisfactory.  The argument for conjecture \ref{nosym} does not apply when the symmetry group is discrete, for example when $G=\mathbb{Z}_2$ then there is only one nontrivial irreducible representation, but why should continuous symmetries be special?  In arguing for conjecture \ref{allcharge}, does the existence of the Reissner-Nordstrom solution really tell us that a charged object exists? As long as it is non-extremal, this solution really describes a two-sided wormhole with zero total charge.  It therefore does not obviously tell us anything about the spectrum of charged states with one asymptotic boundary.\footnote{A common response to this complaint is that we should view the ends of the Reissner-Nordstrom wormhole as ``objects'' in their own right, which could exist even without the other end, but why should we?  It certainly does not follow from classical general relativity, and semiclassically charged black holes are always pair-produced unless we make them out of charged matter.  In \cite{Harlow:2015lma} it was argued that the question of whether or not a wormhole can be cut is a UV-sensitive one, which can be resolved only with input from a complete quantum gravity theory such as AdS/CFT, and we also take this point of view here.  In the end we agree that wormholes should always be cuttable, but this is more like a consequence of conjecture \ref{allcharge} rather than an argument for it.}  We could instead consider ``one-sided'' charged black holes made from gravitational collapse, but then we must first have charged matter to collapse: conjecture \ref{allcharge} would then already be satisfied by this charged matter, so why bother with the black hole at all?  To really make an argument for conjecture \ref{allcharge} based on charged solutions of general relativity that do not already have charged matter, we need to somehow satisfy Gauss's law with a non-trivial electric flux at infinity but no sources.  It is not possible to do this with trivial spatial topology. One possibility is to consider one-sided charged ``geons'' created by quotienting some version of the Reissner-Nordstrom wormhole by a $\mathbb{Z}_2$ isometry \cite{Louko:2004ej}, but this produces a non-orientable spacetime and/or requires that we gauge a discrete $\mathbb{Z}_2$ symmetry that flips the sign of the field strength. Depending on what kinds of matter fields exist these operations may not be allowed, for example there could be fermions which require the spacetime manifold to admit a spin structure.  Another possibility is to consider extremal Reissner-Nordstrom black holes, where the electric flux ends on a timelike singularity, but again it is not clear if this is really allowed without knowing more about the structure of quantum gravity.  Finally the argument for conjecture \ref{compact} implicitly relies on that for conjecture \ref{allcharge}, since one needs to assume that a continuous infinity of Reissner-Nordstrom wormholes implies a continuous infinity of charged black holes, and the argument also does not work if the gauge group $G$ is discrete.  We thus feel that there is considerable room still to improve our understanding of conjectures \ref{nosym}-\ref{compact}.

A more ``empirical'' approach to these conjectures is simply to observe that they seem to be true in all known string compactifications \cite{Banks:1988yz,Polchinski:2003bq,ArkaniHamed:2006dz}.  In particular there do not seem to be any discrete global symmetries.  But again this is also not particularly satisfying: this type of reasoning will never tell us \textit{why} conjectures \ref{nosym}-\ref{compact} are correct.  

The main goal of this paper is to use our best set of quantum gravity theories, those provided by the AdS/CFT correspondence, to justify conjectures \ref{nosym}-\ref{compact}.  Our arguments are partly based on those given in \cite{Harlow:2015lma} for case of $G=U(1)$, but they are more systematic.  Indeed we will for the most part use general group-theoretic language which applies equally well to continuous and discrete symmetry groups.

Roughly speaking our main results are the following:

\bi
\item[(i)] Any global symmetry in the bulk of AdS/CFT would be inconsistent with the local structure of the degrees of freedom in the CFT, so no such symmetries can exist.  
\item[(ii)] A compact global symmetry in a holographic CFT corresponds to a compact gauge symmetry in the bulk, with the same symmetry group in either description.
\item[(iii)] A holographic CFT with a compact global symmetry $G$ must have have local operators that transform in all finite-dimensional irreducible representations of $G$. These are then dual to objects in the bulk charged under all representations of $G$.   
\item[(iv)] There is a simple condition on the set of CFTs, which we believe holds in all CFTs with discrete spectrum and a unique stress tensor, which requires the full internal global symmetry group of that CFT to be compact.
\ei

There are several problems with these results as stated: the most obvious is that we have not said what we mean by gauge and global symmetries.  For example in any quantum field theory, the projection operator onto the $42$nd eigenstate of the Hamiltonian is a hermitian operator that commutes with the Hamiltonian.  Does this mean it generates a symmetry?  Should it have a Noether current? Do we expect it to correspond to a gauge symmetry in the bulk?   Moreover aren't gauge symmetries just redundancies of description?  How can something which is unphysical be dual to something which is physical?  What if there is a bulk gauge theory which is in a confining and/or Higgs phase? Is it still dual to a global symmetry in the CFT?  What precisely would we mean by a global symmetry of a gravitational theory if one existed?  Resolving these questions will be our first order of business, and will require careful consideration of some deep issues in quantum field theory and quantum gravity.  Our main innovation is perhaps in introducing the notion of ``long-range gauge symmetry'' in section \ref{gaugesec}, which formalizes the idea of a weakly-coupled gauge field.  It also gives a new order-parameter for confinement in the presence of fundamental quarks, which could be useful in many circumstances. Roughly speaking we use the presence of a global symmetry in the dual CFT to diagnose the phase of a gauge theory in the bulk, but we strip the holography out of this and give a strictly bulk definition which makes sense even if there is no gravity.  Also in section \ref{globalsec} we discuss the validity of Noether's theorem at some length, giving examples of quantum field theories with continuous global symmetries that do not have Noether currents, and explaining both why such examples are possible and why they do not affect our later arguments for points (i-iv).  We also point out a connection between anomalies and Noether's theorem, which we use to clarify the usual discussion of pion physics in the standard model of particle physics.

\bfig
\includegraphics[height=5cm]{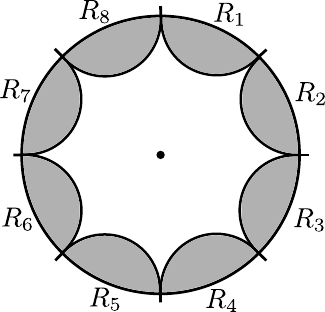}
\caption{A bulk time slice viewed from above, with the boundary timeslice $\Sigma$ split up into disjoint spatial regions $R_i$.  We've shaded the entanglement wedge of each $R_i$ grey, and the point in the center lies in none of these entanglement wedges.}\label{regionsfig}
\efig
The precise formulations of and arguments for (i-iv) are presented in sections \ref{symsec}-\ref{compsec}, and are actually quite simple once we have all the terminology straight.  To give a flavor of our methods, we here sketch our arguments for points (i) and (iii) for the special case of $G=U(1)$ (point (ii) ends up being basically equivalent to point (iii) once the relevant definitions are in place, and our argument for (iv) is simple  and self-contained enough that we just present it in section \ref{compsec}).  Indeed say that we had a $U(1)$ global symmetry in the bulk: we would then also have a $U(1)$ global symmetry in the boundary theory.  By Noether's theorem, this would be generated by a conserved current $J_\mu$.  The usual argument from here is to simply observe that this current is dual to a dynamical gauge field in the bulk \cite{Witten:1998qj}, contradicting our assumption that the symmetry was global.  This argument however fails for discrete symmetries: an argument which generalizes better to arbitrary symmetry groups is as follows. Split a spatial slice $\Sigma$ of the boundary into a disjoint set of small regions $R_i$, as shown in figure \ref{regionsfig}.  We can write the symmetry generator which rotates by an angle $\theta$ as
\be\label{Uprod}
U(\theta,\Sigma)\equiv e^{i\theta \int_\Sigma *J}=\prod_i e^{i\theta \int_{R_i}*J}.
\ee
Now since we have assumed the existence of a nontrivial bulk global symmetry, there must be a localized object that is charged under this symmetry.  Moreover there must be a charged operator $\phi^\dagger$ that creates it, obeying
\be\label{Uphi}
U^\dagger(\theta,\Sigma)\phi U(\theta,\Sigma)=e^{iq\theta}\phi,
\ee
where $q$ is the charge of the object.  

But now there is a problem: for small enough regions $R_i$, \eqref{Uprod} and \eqref{Uphi} are inconsistent. Roughly speaking this is because the finite spatial support of the operators $e^{i\theta \int_{R_i}*J}$ ensures that from the bulk point of view they are localized ``near the boundary'', and thus by bulk causality must commute with the operator $\phi$ when it is located near the center of the bulk, as in figure \ref{regionsfig}.  We can formalize this by noting that we can arrange for the operator $\phi$ to be in the complement of the ``entanglement wedge'' of each of the $R_i$'s, which is the natural bulk subregion dual to $R_i$ \cite{Czech:2012bh,Wall:2012uf,Headrick:2014cta,Dong:2016eik}.  This means that within a ``code subspace'' of sufficiently semiclassical states, $\phi$ can be represented in the CFT with spatial support only on the complement of any particular $R_i$, and thus within this subspace must commute with all of the $e^{i\theta \int_{R_i}*J}$ \cite{Almheiri:2014lwa,Harlow:2016vwg}.\footnote{This argument is complicated by the fact that bulk local operators do not really exist, since they must be ``dressed'' by Wilson lines, etc, to make them invariant under bulk diffeomorphisms and internal gauge symmetries.  But this dressing must also commute with our assumed global symmetry, since otherwise that symmetry would have to be gauged as well.  We will discuss this further in section \ref{symsec} below when we define what we mean by a global symmetry in gravity.}  But then satisfying \eqref{Uphi} is impossible, so there must not have been such a bulk global symmetry in the first place.  The key input in this argument was Noether's theorem, which as we explain more below is basically a consequence of the local structure of the boundary CFT, and our general argument for arbitrary symmetry groups will rely on a generalization of that theorem (hence our need to treat that theorem carefully in section \ref{globalsec}). 

Our argument for point (iii) proceeds on similar lines.  Following \cite{Harlow:2015lma} we consider the algebra of a Wilson line in the minimal-charge representation of $U(1)$ threading the AdS-Schwarzschild geometry from one boundary to the other (see figure \ref{threadfig} below) with the exponential of the integrated electric flux over one of the spatial boundaries
\be
e^{-\frac{i\theta}{q^2}\int\star F}e^{i\int A} e^{\frac{i\theta}{q^2}\int\star F}=e^{i\theta}e^{i\int A}.
\ee  
The locality of the boundary CFT implies that this electric flux is an operator with nontrivial support only on one of the CFTs, and its algebra with the Wilson line is apparently nontrivial for all $\theta\in (0,2\pi)$.  But this is only possible if a single copy of the CFT has states of minimal charge, since otherwise there would be a $0<\theta<2\pi$ for which the exponential of the integrated flux would be trivial and thus have to act trivially on the Wilson line.  For example if there were only even charges, so that $\frac{1}{q^2}\int \star F=2n$ in all states, then we would have $e^{\frac{i\pi}{q^2}\int\star F}=1$.  Thus all charges must be present.

To ease the presentation we will first establish (i-iv) only for internal global symmetries, which send all operators at a point to other operators at the same point, and wait until section \ref{bigdiffsec} to discuss spacetime global symmetries such as boosts and rotations.  In that section we also give a discrete generalization of the Coleman-Mandula theorem.  In section \ref{psec} we will then show that analogous conjectures also hold for higher-form symmetries, which we review for the convenience of the reader.  The arguments for spacetime and higher-form symmetries are mostly the same as for ordinary internal global symmetries, but several interesting new subtleties arise.  The higher-form versions of the conjectures have some interesting interplay with the original conjectures, which we discuss.

Finally in section \ref{wgcsec} we briefly consider the ``weak gravity conjecture'' of \cite{ArkaniHamed:2006dz}.  In \cite{Harlow:2015lma} it was pointed out that arguments similar to those we use in proving (i-iv) motivate the idea that any bulk gauge field is emergent, and it was shown that a simple model of such an emergent gauge field, the $\mathbb{CP}^{N-1}$ $\sigma$-model of \cite{D'Adda:1978uc,Witten:1978bc}, automatically obeys a version of the weak gravity conjecture.  We will show that this argument can be generalized to gauge groups other than $U(1)$, and in particular for gauge group $U(1)^k$ reproduces the rather nontrivial ``convex hull condition'' introduced in \cite{Cheung:2014vva}.  We view this as evidence that the ``emergence'' explanation of the weak gravity conjecture is on the right track, although we are unfortunately not able to resolve the long-standing debate over what the precise version of the conjecture should be \cite{ArkaniHamed:2006dz,Heidenreich:2015nta,Heidenreich:2016aqi}. 

Various technical results and reviews are presented in the appendices, and may be referred to as needed.

It is worth discussing what our results do not exclude.  The most important thing they do not exclude is \textit{approximate} global symmetries in quantum gravity.  Indeed these are quite common in string theory, and arise basically anytime that the low-energy effective action for the appropriate light degrees of freedom does not have relevant or marginal terms which break a possible global symmetry.  For example even in the standard model this happens with $B-L$ symmetry ($B$ and $L$ separately are broken by anomalies).  Our arguments will only exclude bulk global symmetries which are good symmetries acting on the entire Hilbert space of quantum gravity, including black hole states.  In contrast, approximate symmetries which emerge in the way just described are good only in some low-energy subspace.  It is very important for phenomenology to understand how approximate such global symmetries can be (see e.g. \cite{Kamionkowski:1992mf}), for example are there lower bounds on the sizes of the coefficients of operators which violate them in the low energy effective action?  We will not answer this question here, but we view it as ripe for future study.  

A second restriction on our results is that they apply only in theories of quantum gravity which are holographic.  In fewer than four spacetime dimensions there are known examples of quantum gravity theories which are precisely formulated using local gravitational path integrals, with the string worldsheet being an especially simple example.  There is no obstruction to such theories having global symmetries: indeed in the string worldsheet theory target space isometries and worldsheet parity give examples of internal and spacetime global symmetries.  In this context it is interesting to note that in fact several of our arguments as stated work only for at least three (bulk) spacetime dimensions.  For example the situation in figure \ref{regionsfig} requires spatial locality in the boundary theory.  We believe however that it is the absence of holography which is the real culprit, for example the oriented version of pure three-dimensional Einstein gravity has spatial reflection and time reversal as global symmetries even though our arguments would have applied there had it been holographic.  More discussion on how these theories avoid being holographic is given in \cite{Harlow:2018tqv}, along with further references.

Finally we apologize for the length of this paper, which is the result of our efforts to be careful about the many subtleties involved in what at heart are relatively simple arguments. We have done our best to structure the paper in a modular way, and we encourage readers to skip to whichever subjects they find interesting without feeling the need to read all intervening material.  To aid this process, we have included markers in sections \ref{globalsec} and \ref{gaugesec} to indicate which material is essential in getting to our arguments for conjectures \ref{nosym}-\ref{compact}: one good strategy might be to read only the definitions in the beginnings of these sections and then jump straight to section \ref{symsec}.  Sections \ref{completenesssec} and \ref{compsec} are more or less independent, and section \ref{wgcsec} is especially so.  Obviously the appendices are only there for those who want them.  A short overview of our arguments is also available in \cite{hosmall}.

\subsection{Notation}\label{notsec}
In this paper we discuss quantum field theory at a higher level of rigor than is usual, but still not at a level that would satisfy a mathematician.  In particular we will \textit{not} give a formal set of axioms which defines quantum field theory.  This is unavoidable, since there is currently no such set of axioms which is both necessary and sufficient to capture the full range of examples of interest, but it puts us in the awkward position of ``proving'' statements about objects which we have not defined.  To make this less piecemeal, we here state a few basic ideas which we expect to be part of any reasonable definition of quantum field theory.  
\bi
\item We will for the most part be interested in quantum field theories on Lorentzian manifolds of the form $\Sigma\times \mathbb{R}$, where $\Sigma$ is some spatial manifold and $\mathbb{R}$ is time.  We will view the metric $g_{\mu\nu}$ on $\Sigma \times \mathbb{R}$ as a  background gravitational field.  A given quantum field theory may or may not make sense on a specific choice of $\Sigma$ and $g_{\mu\nu}$, but  for each choice where it does there is a Hilbert space and a (possibly time-dependent) Hamiltonian.  
\item For any subregion $R$ of any Cauchy slice $\Sigma$, there is an associated von Neumann algebra $\mathcal{A}[R]$ acting on this Hilbert space \cite{Haag:1992hx}.  Intuitively one should think of $\mathcal{A}[R]$ as the algebra of operators localized in the domain of dependence $D[R]$ of $R$.  We will not attempt to list all of the properties these operator algebras should obey, but two essential ones are that bosonic/fermionic operators in spacelike-separated regions should commute/anticommute, and that $\mathcal{A}[R]\subset \mathcal{A}[R']$ if $R\subset D[R']$. 
\item There are a set of operator-valued distributions, conventionally just called local operators, with the property that integrating such a local operator against a smooth test function with support only in $D[R]$ produces an element of $\mathcal{A}[R]$.\footnote{This isn't quite correct, because the operator we obtain this way might not be bounded, while elements of von Neumann algebras are bounded.  So what we should really do is take the hermitian and anti-hermitian parts of this smeared operator, and then either exponentiate them or use their spectral projection operators to get ``honest'' elements of $\mathcal{A}[R]$.}
\item More generally one can have surface operators, which are operator-valued distributions localized to a submanifold (possibly with boundary) of $\Sigma\times \mathbb{R}$ of non-maximal codimension.  These again can be smeared to obtain elements of $\mathcal{A}[R]$ provided that the support of the smearing lives only in $D[R]$.  
\item There is a local operator transforming in the symmetric tensor representation of the Lorentz group,  the stress tensor $T_{\mu\nu}$, which is covariantly conserved and has the property that any continuous isometry with Killing vector $\xi^\mu$ is generated on the Hilbert space by the $T_{\mu\nu}\xi^\nu$.  Its insertion into time-ordered expectation values is defined by the derivative of those expectation values with respect to the background metric:
\be
\lan T \mO_1(x_1,g)\ldots \mO_n(x_n,g) T^{\mu\nu}(x)\ran_g\equiv -i\frac{2}{\sqrt{-g(x)}} \frac{\delta}{\delta g_{\mu\nu}(x)}\lan T \mO_1(x_1,g)\ldots \mO_n(x_n,g)\ran_g.
\ee
Note that the derivative with respect to the metric can act on any metric-dependence in the operators $\mO_i(x_i,g)$, leading potentially to contact terms.
\ei
We want to be clear that this is not a complete list of axioms.  For example there should be axioms which imply that the local and surface operators generate the full operator algebra, and also that the vacuum cannot be annihilated by operators with compact support.  We have not included such axioms not because they are not important, but rather because we are not sure what their final forms will be and we do not want to imply that there are not additional axioms we don't know about.  

We emphasize that in this paper the word ``operator'' will \textit{always} means a map from a Hilbert space to itself.  Although this may seem like it should not need any explanation, it is becoming common to see the word used in situations where this is not the case.  For example one sometimes sees a Wilson loop wrapping a temporal circle called an operator, when more precisely it should be interpreted as a modification of the theory which changes both the Hilbert space and the Hamiltonian.  This tendency has arisen from an alternative axiomatic trend in quantum field theory which is based on formal path integrals on general manifolds, not necessarily of the form $\Sigma\times \mathbb{R}$, in which arbitrary functionals of the fields can be inserted, and one downplays any Hilbert space interpretation of the result. This approach has the advantage of being covariant, but the disadvantage of being tied to the Lagrangian formalism.  One can escape this reliance on having a Lagrangian by simply \textit{defining} a quantum field theory to be the list of all possible insertions and their expectation values on all possible backgrounds, but this surely will not be the most efficient way of encoding this information.  In particular such a definition will not include a priori the constraints that come from insisting that such expectation values \textit{do} have a Hilbert space interpretation when appropriate, in which many insertions do correspond to actual operators, so this needs to be imposed by hand.  In this paper the operator algebra is essential, so we will primarily use the algebraic approach outlined in the above bullet points.   We will however also occasionally use the formal path integral insertion point of view, especially in Lagrangian examples where it is most natural.

We will make frequent use of differential forms. There is still no universally standard convention for the basic operations on these, so we here describe ours. They coincide with those in \cite{Polchinski:1998rr} except for the sign of the Hodge star, which differs by a factor of $(-1)^{p(d-p)}$ and instead agrees with, eg, \cite{Nakahara:2003nw,Carroll:2004st}.  Differential forms are completely antisymmetric tensors, whose components thus obey
\be
\omega_{\mu_1\ldots \mu_p}=\omega_{[\mu_1\ldots \mu_p]},
\ee
where the brackets on the right-hand side denote a signed average over permutations of the indices:
\be
T_{[\mu_1\ldots \mu_p]}=\frac{1}{p!}\sum_{\pi\in S_p}s_\pi T_{\mu_{\pi(1)}\ldots \mu_{\pi(p)}},
\ee
where $S_p$ denotes the symmetric group on $p$ elements and $s_\pi$ is one if $\pi$ is even and minus one if $\pi$ is odd.
The wedge product of $\omega$ a $p$-form and $\sigma$ a $q$-form is defined as
\be
\left(\omega\wedge \sigma\right)_{\mu_1\ldots \mu_p\nu_1\ldots \nu_q}=\frac{(p+q)!}{p!q!}\omega_{[\mu_1\ldots \mu_p}\sigma_{\nu_1\ldots \nu_q]},
\ee
and the exterior derivative of $\omega$ is
\be
\left(d\omega\right)_{\mu_0\mu_1\ldots \mu_p}=(p+1)\partial_{[\mu_0}\omega_{\mu_1\ldots \mu_p]}.
\ee
The completely antisymmetric symbol $\hat{\epsilon}$ in $d$ dimensions is defined as
\be
\hat{\epsilon}=dx^1\wedge dx^2\wedge\ldots\wedge dx^d,  
\ee
while the $\epsilon$ tensor is defined as
\be
\epsilon=\sqrt{|g|}\hat{\epsilon}.  
\ee
In particular note that in Lorentzian signature we have $\epsilon^{0\ldots d-1}=-\frac{1}{\sqrt{|g|}}$.\footnote{We are of course using the vastly superior ``mostly-plus'' signature for the metric.} The integral of a $d$-form $\omega$ over a $d$-dimensional manifold is defined as
\be\label{intdef}
\int_M \omega=\frac{(-1)^s}{d!}\int d^dx\sqrt{|g|}\epsilon^{\mu_1\ldots \mu_d}\omega_{\mu_1\ldots\mu_d},
\ee
where $s$ is zero in Euclidean signature and one in Lorentzian signature. Contrary to appearances, the right hand side of \eqref{intdef} depends neither on the metric nor the signature, and moreover if $N$ is a $d+1$ manifold with boundary then we have Stokes theorem
\be
\int_N d\omega=\int_{\partial N}\omega.
\ee
Finally the Hodge star operation mapping a $p$-form to a $d-p$ form is defined as
\be
\left(\star \omega\right)_{\mu_{1}\ldots \mu_{d-p}}=\frac{1}{p!}\epsilon^{\nu_1\ldots \nu_p}_{\phantom{\nu_1\ldots \nu_p}\mu_1\ldots \mu_{d-p}}\omega_{\nu_1\ldots\nu_p}.
\ee
A few useful identities, with $\omega$ again a $p$-form and $\sigma$ a $q$-form, are
\begin{align}\nonumber
\omega\wedge\sigma&=(-1)^{pq}\sigma\wedge\omega\\\nonumber
d(\omega\wedge\sigma)&=d\omega\wedge\sigma+(-1)^p\omega\wedge d\sigma\\\nonumber
\epsilon_{\mu_1\ldots \mu_d}\epsilon^{\mu_1\ldots \mu_d}&=(-1)^s d!\\
\star\star\omega&=(-1)^{p(d-p)+s}\omega.
\end{align}

We will occasionally use Dirac fermions, for which we take the $\gamma$-matrices to obey
\be
\{\gamma^\mu,\gamma^\nu\}=2g^{\mu\nu} 
\ee
and define the Dirac conjugate to be
\be
\ol{\psi}=\psi^\dagger\gamma^0.
\ee
In even spacetime dimensions we define the chirality operator to be
\be
\gamma^{d+1}=i^{-d/2}\gamma^0\ldots \gamma^{d-1},
\ee
which e.g. is equal to $+1$ on left-moving spinors for $d=2$ and $+1$ on left-handed spinors for $d=4$.

In Yang-Mills theory we take the gauge field $A^a_\mu$ to be real, and the matrix generators $T_a$ of any representation of a compact Lie algebra to be hermitian.  The structure constants $C^c_{\phantom{c}ab}$ are defined via $[T_a,T_b]=iC^c_{\phantom{c}ab}T_c$,  The covariant derivative is $D_\mu=\partial_\mu-iA^a_\mu T_a$.  For logical clarity we will maintain a distinction between lowered indices in the adjoint representation and raised indices in its inverse-transpose, even though in the compact case these representations are unitarily equivalent.

We always assume that any group we discuss is a Lie group, meaning that the group is a smooth manifold and multiplication and inversion are smooth maps.  We have found that physicists are sometimes surprised to learn that this definition includes discrete groups such as $SL(2,\mathbb{Z})$ and $\mathbb{Z}_n$, which are zero-dimensional Lie groups.  In particular any finite group is a compact Lie group with the discrete topology.  Following standard physics parlance, we will refer to Lie groups with dimension zero as ``discrete'' and Lie groups with dimension greater than zero as ``continuous'', but we emphasize that multiplication and inversion are continuous (and in fact smooth) regardless of the dimension.   We throughout adopt a convention that representations of a Lie group on a Hilbert space must be continuous, so when we encounter homomorphisms from $G$ into the set of linear operators on Hilbert space which are not necessarily continuous we will just refer to them as homomorphisms (recall that a map $f$ from one group to another is a homomorphism if $f(g_1)f(g_2)=f(g_1g_2)$ for all $g_1,g_2$).  In appendix \ref{groupapp} we explain our group theory conventions in more detail, and briefly review those aspects of the theory of Lie groups and their representations which are necessary for our arguments. The results are mostly standard but some may not be familiar to all physics readers.

Finally we will always assume that in any CFT which we are discussing, the vacuum on $\mathbb{S}^{d-1}$ is normalizable and we can therefore use the state-operator correspondence.  We view this as necessary to produce reasonable low-energy particle physics in the dual theory of asymptotically-AdS quantum gravity.  
  
\section{Global symmetry}\label{globalsec}
What is a symmetry in quantum mechanics?  The definition most of us learn as undergraduates is that a system with Hilbert space $\mathcal{H}$ and Hamiltonian $H$ has a symmetry with group $G$ if there exist a set of distinct unitary operators $U(g)$ on $\mathcal{H}$, labeled by elements $g\in G$, which respect the group multiplication\footnote{One occasionally also encounters the more general multiplication law $U(g)U(g')=e^{i\alpha(g,g')}U(gg')$, which is described by saying that the symmetry is represented projectively on the Hilbert space. This possibility does not seem to be realized in an interesting way in quantum field theory on $\mathbb{R}^d$, we explain why in appendix \ref{projapp}.}
\be\label{repeq}
U(g)U(g')=U(gg'),
\ee
and which all commute with $H$. More abstractly, there is a faithful homomorphism $U$ from $G$ into the set of unitary operators on $\mathcal{H}$, such that $U(g)$ commutes with $H$ for any $g\in G$.  This definition however is deficient in two respects:
\bi
\item It is not general enough to include spacetime symmetries.  For example Lorentz boosts and time-reversal both do not commute with $H$, and the latter is represented with an antiunitary operator instead of a unitary one.  
\item In quantum field theory it is too general, since it includes operations which do not respect the local structure of the theory.  For example consider the ``$U(1)$ symmetry'' generated by the projection onto the 42nd eigenstate of $H$: this commutes with $H$, but acts very non-locally.  
\ei 
In this paper we will not discuss spacetime symmetries until section \ref{bigdiffsec}, so the first point is currently no trouble.  The second however is a serious problem, since in quantum field theory the symmetries which are interesting seem to always be those which respect locality. We therefore propose a definition of what it means to have a global symmetry in quantum field theory:\footnote{The idea of a non-Lagrangian definition of global symmetry along these lines goes back at least to \cite{Doplicher:1969tk,Doplicher:1969kp}, although those authors did not include condition (d) (neutrality of the stress tensor).    A Euclidean definition related to this one appeared more recently in \cite{Gaiotto:2014kfa}, but condition (c) (faithfulness) was not included, and the spacetime was not restricted to $\Rd$, as it must be if we wish global symmetries with gravitational 't Hooft anomalies to be included.  We comment further on the definition of \cite{Gaiotto:2014kfa} at the end of this subsection.  Also note that definition \ref{globaldef} applies only to quantum field theories, we give a modified definition for gravitational theories in section \ref{symsec} below.}
\begin{mydef}\label{globaldef}
A Lorentz-invariant quantum field theory in $d$ spacetime dimensions has a \textit{global symmetry with symmetry group G} if the following are true:
\bi
\item[(a)] If we study the theory on the spacetime manifold $\Rd$ with flat metric, with flat time slices $\Sigma_t\cong\Rdd$, then for each time slice $\Sigma_t$ there is a unitary homomorphism $U(g,\Sigma_t)$, not necessarily continuous, from $G$ to the set of unitary operators on the Hilbert space. 
\item[(b)] For any $g\in G$ and $R\subset \Sigma_t$, we have
\be
U^\dagger(g,\Sigma_t)\mathcal{A}[R]U(g,\Sigma_t)=\mathcal{A}[R],
\ee
where $\mathcal{A}[R]$ is the algebra of operators in $D[R]$.  Moreover if $R$ is bounded as a spatial region, then the map $f_U:G\times \AR\to\AR$ defined by $f(g,\mO)=U^\dagger(g,\Sigma_t)\mO U(g,\Sigma_t)$ has the property that its restriction to any uniformly bounded subset of $\AR$ is jointly continuous in the strong operator topology (see appendix \ref{contapp} for definitions of these terms, although we encourage most readers not to worry too much about continuity).    
\item[(c)] For any $g\in G$ not equal to the identity, there exists some local operator $\mO$ for which 
\be
U^\dagger(g,\Sigma_t)\mO(x)U(g,\Sigma_t)\neq \mO(x).
\ee  
\item[(d)]  For any $g\in G$ and $x\in \Rd$, we have
\be
U^\dagger(g,\Sigma_t)T_{\mu\nu}(x)U(g,\Sigma_t)=T_{\mu\nu}(x),
\ee
where $T_{\mu\nu}$ is the stress tensor of the theory.
\ei
\end{mydef}
We first observe that condition $(d)$ tells us that the $U(g,\Sigma_t)$ commute with the Hamiltonian and thus are independent of $t$, so from now on we will just call them $U(g,\Sigma)$.  In fact condition (d) tells us something much stronger, it tells us that for any $g\in G$, $U(g,\Sigma)$ is unchanged by \textit{arbitrary} continuous deformations of $\Sigma$.  It is therefore sometimes said that the $U(g,\Sigma)$ are topological operators.  Condition (b) tells us that the $U(g,\Sigma)$ give a linear action of $G$ on the set of local operators at each point, and moreover condition (d) tells us that this linear action can be taken to be identical at each point in $\Rd$.  Indeed if we choose a basis $\mO_n(0)$ for the set of local operators at the origin, we can use spacetime translations to extend this to a basis $\mO_n(x)$ at each point in $\Rd$.  We then have
\be\label{Dmap}
\mO'_n(x)\equiv U^\dagger(g,\Sigma)\mO_n(x)U(g,\Sigma)=\sum_m D_{nm}(g)\mO_m(x),
\ee
where $D(g)$ is independent of $x$.   Condition (c) tells us that $D(g)$ is nontrivial for all $g$ except the identity.  

We have so far not referred to $U(g,\Sigma)$ and $D(g)$ as representations of $G$.  The reason is that in our conventions any Lie group representation is required to be continuous (see appendix \ref{groupapp}), while we did not require $U(g,\Sigma)$ to be continuous and we required $D$ to be continuous in the strong operator topology only on uniformly-bounded subsets of $\AR$.  We have adopted only these relatively weak requirements because we want our definition of global symmetry to apply to spontaneously-broken global symmetries, and we will see soon that $U(g,\Sigma)$ is not necessarily continuous for a symmetry which is spontaneously broken. For unbroken symmetries however, meaning symmetries for which there is a ground state on which they act trivially, we show in appendix \ref{contapp} that the continuity requirement in condition (b) of definition \ref{globaldef} implies that $U(g,\Sigma)$ is indeed continuous, and thus gives a representation of $G$ on the Hilbert space.  Moreover we also show that in this case $D$ is continuous without any domain restriction in a different topology on $\AR$, which is defined by the two-point function in the ground state.  Thus in this topology $D$ does give a representation of $G$ on the set of local operators: in fact it is a unitary representation since the set of states obtained by acting on the invariant vacuum with $\mO_n(x)$ (smeared against a smooth test function of compact support) will transform in the inverse-transpose representation of $D$, which therefore must be unitary since $U(g,\Sigma)$ is.  We relegate further discussion of operator continuity to appendix \ref{contapp}, where we also give more motivation for the continuity assumption in condition (b).

To get some intuition for definition \ref{globaldef}, let's consider a few simple examples.  One example is the $\mathbb{Z}_2$ symmetry $\phi'=-\phi$ of the three dimensional real scalar theory with Lagrangian
\be\label{phi4}
S=-\frac{1}{2}\int d^3 x \left(\partial^\mu \phi \partial_\mu \phi+m^2 \phi^2+\frac{\lambda}{6}\phi^4\right).
\ee
Another example is the $U(N)$ symmetry $\phi'_i=\sum_j U_{ij}\phi_j$ of the three-dimensional theory of $N$ complex scalars $\phi_i$ with Lagrangian 
\be
S=-\int d^3 x \left(\partial^\mu \phi^*_i \partial_\mu \phi_i+m^2\phi^*_i \phi_i+\frac{\lambda}{6}(\phi^*_i \phi_i)^2\right).
\ee   
A more nontrivial example is the $U(1)$ symmetry generated by $B-L$, with $B$ baryon number and $L$ lepton number, in the standard model of particle of physics (without gravity). 

An example of something which is not included is the $U(1)$ gauge symmetry of quantum electrodynamics.  There are no local operators which are charged under it, contrary to (c), and in fact if we study the theory on a compact spatial manifold without boundary then the gauge symmetry acts trivially on the Hilbert space.  We discuss this in much more detail in section \ref{gaugesec}.  Another thing which is not included is the ``$\mathbb{Z}_N$ center symmetry'' of pure Yang-Mills theory with gauge group $SU(N)$ \cite{Polyakov:1978vu,tHooft:1977nqb}.  This is a symmetry under which only line operators are charged, so again it does not obey (c).  The modern understanding of center symmetry is that it is really a ``one-form symmetry'' in the sense of \cite{Gaiotto:2014kfa}, so we postpone further discussion to section \ref{psec} below.  As already mentioned, spacetime symmetries are also not included.  In a similar vein, the higher Kac-Moody symmetries in $1+1$ dimensional current algebra are also not included, since they have a nontrivial algebra with the stress tensor.  

Something which \textit{is} included is a global symmetry with an 't Hooft anomaly, such as the chiral phase rotation $\psi'=e^{i\gamma^5\theta}\psi$ of a massless Dirac Fermion in $3+1$ dimensions
\be\label{dirac4}
S=-i\int d^{4}x \ol{\psi}\slashed{\partial}\psi.
\ee  
This symmetry is broken if we turn on a background nonchiral $U(1)$ gauge field with $\int d^d x \sqrt{-g}F_{\alpha\beta}F_{\mu\nu} \epsilon^{\alpha\beta\mu\nu}\neq 0$, or a background metric with $\int d^d x \sqrt{-g}\epsilon^{\alpha\beta\mu\nu}R_{\alpha\beta}^{\phantom{\alpha\beta}\gamma\delta}R_{\mu\nu\gamma\delta}\neq 0$, but in our definition \ref{globaldef} we have turned on no background fields of any kind.\footnote{These particular 't Hooft anomalies cannot destroy the symmetry if the spacetime topology is $\mathbb{R}^4$ and the background fields vanish at infinity, since the integrals in question always vanish for topological reasons, but there are other 't Hooft anomalies which can.}  We will discuss 't Hooft anomalies in more detail in subsections \ref{anomalysec}-\ref{anomalysec2} below, but we note now that for applications to AdS/CFT it will be very convenient to introduce a notion of when a global symmetry extends to a more general spatial geometry $\Sigma$:

\begin{mydef}\label{symextend}
A global symmetry of a quantum field theory is \textit{preserved on a spatial geometry $\Sigma$} if, after quantizing the theory on $\Sigma$, there is a homomorphism $U(g,\Sigma)$ from $G$ into the set of unitary operators whose action by conjugation preserves the local algebras $\mathcal{A}[R]$, with the same continuity requirement as in definition \ref{globaldef}, as well as a basis $\mO_n(x)$ for the local operators at each point $x\in \Sigma\times\mathbb{R}$, such that $U(g,\Sigma)$ acts on the $\mO_n(x)$ with the same linear map $D$ that appeared in eq. \eqref{Dmap} for the theory on $\Rd$.\footnote{In general there are ambiguities in how to extend a flat space local operator to curved space, arising from the possibility of adding multiples of the curvature tensor.  Our $\mO_n(x)$ should be extensions of their flat space analogues up to these ambiguities, and our requirement that \eqref{Dmap} continues to hold on $\Sigma\times \mathbb{R}$ restricts them.} In particular this action is still faithful and preserves the stress tensor.
\end{mydef}
The $\Sigma$ we will predominantly consider is the sphere $\mathbb{S}^{d-1}$ with a round metric; for conformal field theories we will argue below that any global symmetry is preserved on this geometry since it is conformally flat.  In fact in this case $U(g,\Sigma)$ and $D(g)$ are equivalent due to the state-operator correspondence.  We postpone further discussion of which global symmetries are preserved in the presence of a background gauge field to section \ref{anomalysec}.

If the volume of $\Sigma$ is infinite, such as for $\Sigma=\Rdd$, we need to consider the possibility of spontaneous symmetry breaking.  It is sometimes said that if a global symmetry is spontaneously broken, the symmetry operators $U(g,\Sigma)$ do not exist (see eg a comment in section 10.4 of \cite{Weinberg:1995mt}).  Our point of view will be that in this situation we take the Hilbert space on $\Sigma$ to include a special kind of direct sum over the superselection sectors associated to any degenerate vacua, in which case the $U(g,\Sigma)$ do exist, and there are local operators which are charged under them as in eq. \eqref{Dmap}.\footnote{It is important here that our definition \ref{globaldef} excludes things like the higher Kac-Moody symmetries of 2D current algebra which do not commute with the stress tensor: these do not lead to degenerate vacua or superselection sectors even though the vacuum is not invariant.}  Our direct sum is special because we choose a nonstandard inner product on the vacuum space: if $b$ is the set of order parameters which label the degenerate vacua $|b\ran$, then we take
\be\label{ssbip}
\lan b|b'\ran=\begin{cases} 1 & b=b'\\ 0 & b\neq b' \end{cases}
\ee 
even if the order parameters are continuous.  For each $b$ there is a superselection sector spanned by states of the form
\be\label{ssstates}
\mO_{1}(x_1)\ldots \mO_{m}(x_m)|b\ran,
\ee
where the $\mO_{n}$ are local operators, each transforming in a represention $D_n$ of $G$.\footnote{In the presence of a ``long range gauge symmetry with dynamical charges'', introduced in definition \ref{gaugedef} below, we should also allow the $\mO_n$ to be line operators connecting infinity to itself or to a charged operator in the interior of $\Sigma$.} The full Hilbert space is then obtained from countable superpositions of such states which are normalizable in the inner product \eqref{ssbip}. States in different superselection sectors are always orthogonal.  The symmetry operators act as
\be\label{ssbUdef}
U(g)\mO_{1}(x_1)\ldots \mO_{m}(x_m)|b\ran=D_1(g^{-1})\mO_1(x_1)\ldots D_m(g^{-1})\mO_m(x_m)|gb\ran,
\ee 
which is clearly well-defined.  The infrared divergences which appear in perturbative computations of the matrix elements of the spontaneously broken charges, sometimes used to argue that $U(g,\Sigma)$ does not exist, are here properly interpreted as ensuring that $U(g,\Sigma)$ has zero matrix element between any two states in the same superselection sector.  These divergences do however also imply that when the symmetry which is spontaneously broken is continuous, meaning $G$ has positive dimension as a Lie group, then $U(g,\Sigma)$ is \textit{not} continuous as a map from $G$ to the set of unitary operators: no matter how close $g$ is to the identity, if it is not actually the identity then acting with $U(g,\Sigma)$ on any state $|\psi\ran$ in a given superselection sector gives another state which is orthogonal to $|\psi\ran$.  By contrast we do expect the action of the symmetry by conjugation on $\AR$ for bounded regions to be as continuous as it is in the unbroken case, since that action should not depend on whether or not the volume of $\Sigma$ is finite or infinite.  Thus we see that the continuity properties required in definition \ref{globaldef} are consistent with spontaneous symmetry breaking, which is therefore included (see appendix \ref{contapp} for more discussion of continuity).  In what follows we will mostly discuss unbroken global symmetries, since we will only consider compact $\Sigma$ in the boundary CFT, but we will argue that the global symmetries which are forbidden in the bulk include spontaneously broken ones (spontaneous global symmetry breaking is possible for quantum field theories in $AdS$ \cite{Inami:1985dj}, so ruling it out is nontrivial).  

\bfig
\includegraphics[height=4.5cm]{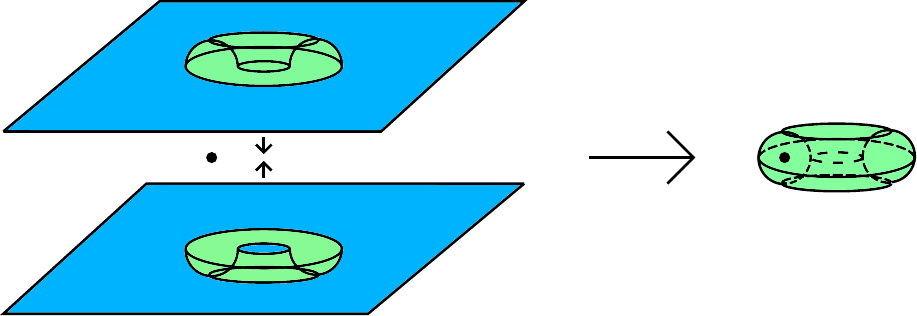}
\caption{Constructing a symmetry insertion on a torus in the path integral of a QFT on a spacetime that is topologically $\mathbb{R}^3$: the ``upper'' operator on the left hand side is a deformation of $U^\dagger(g,\mathbb{R}^2)$, while the ``lower'' operator is a deformation of $U(g,\mathbb{R}^2)$.  If we bring them together the blue sections cancel, leaving the green torus.  Since the $U(g,\mathbb{R}^2)$ commute with $T_{\mu\nu}$ they are topological, so it does not matter where we join them.   If there are no charged insertions inside the torus then we can further collapse it to nothing, while if a charged operator is inserted inside the torus, say an operator $\mO$ at the black dot in the figure, then the joint insertion amounts to inserting $U^\dagger(g,\mathbb{R}^{2})\mO U(g,\mathbb{R}^2)=D(g)\mO$ into the path integral.}\label{gluingfig}
\efig
Finally we note that in \cite{Gaiotto:2014kfa}, symmetries were defined not as operators on the Hilbert space associated to a Cauchy slice $\Sigma$, but instead as formal path integral insertions which should make sense on any codimension-one closed oriented submanifold.\footnote{We here adhere to the terminology explained in the introduction: ``path integral insertions'' are defined without reference to a Hilbert space formalism.  They can be sometimes be given Hilbert space interpretations as operators, and we will use that term only when an insertion can and is being given such an interpretation.}  We here briefly comment on how this relates to our definition \ref{globaldef}.  The basic idea is illustrated in figure \ref{gluingfig}: we can assemble such an insertion by using two of our $U(g,\Sigma)$ operators to surround whatever the surface in question encloses.  Instead of defining a single operator of the theory quantized on $\Sigma$, this instead defines a family of such operators, obtained by conjugating whatever operators are inserted in the interior of the surface by the symmetry.  In appendix \ref{closedsubapp} we explain in more detail how the construction of figure \ref{gluingfig} can be extended to any closed oriented codimension-one submanifold in $\mathbb{R}^d$.  

\subsection{Splittability}\label{splitsec}
When a global symmetry in quantum field theory is continuous, meaning that the symmetry group $G$ has dimension greater than zero as a Lie group, we usually expect the existence of a set of conserved currents $J_a^\mu$ transforming in the adjoint representation of $G$.  For Lagrangian theories this seems to follow from a local version of Noether's theorem \cite{Weinberg:1995mt,Polchinski:1998rq}. Indeed say that we define a continuous symmetry as a continuous family of local changes of variables 
\be
\phi'_i(x)=\phi_i(x)+\epsilon^a f_{a,i}(\phi(x),\partial \phi(x),\ldots)+O(\epsilon^2)
\ee
that leave the product of the path integral measure and action invariant
\be
\mathcal{D}\phi'e^{iS[\phi']}=\mathcal{D}\phi e^{iS[\phi]}.
\ee
If we now allow the group coordinates $\epsilon^a$ to be position dependent, then by locality we have
\be\label{currentdef}
\mathcal{D}\phi'e^{iS[\phi']}=\mathcal{D}\phi e^{iS[\phi]-i\int d^dx\sqrt{-g} J^\mu_a \partial_\mu \epsilon^a+O(\epsilon^2)}=\mathcal{D}\phi e^{iS[\phi]+i\int d^dx\sqrt{-g} \epsilon^a\nabla_\mu J^\mu_a+O(\epsilon^2)}
\ee
for some nonzero local functional $J^\mu_a$ of the fields.  In the second equality we have taken $\epsilon^a$ to vanish at any boundaries of the spacetime, justifying an integration by parts.  Integrating both sides of this equation over field space, and changing variables on the left hand side, we then find
\be
\int \mathcal{D}\phi e^{iS[\phi]}=\int\mathcal{D}\phi e^{iS[\phi]+i\int d^dx\sqrt{-g} \epsilon^a\nabla_\mu J^\mu_a+O(\epsilon^2)}
\ee
for arbitrary $\epsilon^a$,  which is possible only if $\nabla_\mu J^\mu_a=0$ as an operator equation so this establishes the existence of a conserved current.  

So far however no satisfactory non-Lagrangian formulation of this theorem has been found, nevermind proven.  There is however an obvious guess for what such a theorem might say:
\begin{conj}\textbf{Naive Noether Conjecture:} Any quantum field theory with a continuous global symmetry, as defined via definition \ref{globaldef}, has a conserved current whose integral infinitesimally generates that symmetry.\label{naiveN}
\end{conj}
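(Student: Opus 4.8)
The plan is to ``localize, then reassemble'': cut the topological operator $U(g,\Sigma)$ into pieces attached to small cells of a Cauchy slice, and then take a continuum limit to extract a local charge density. Fix the theory on $\Rd$ with a flat slice $\Sigma\cong\Rdd$ and take $g=\exp(\epsilon^a t_a)$ near the identity in $G$. Rather than working with $U(g,\Sigma)$ directly --- which need not even be strong-continuous once the symmetry is spontaneously broken, cf.\ the discussion after \eqref{ssbUdef} --- I would work with the automorphism $\alpha_g=f_U(g,\cdot)$ of each local algebra $\mathcal{A}[R]$, $R$ bounded, which \emph{is} continuous on uniformly bounded subsets by condition (b) of definition \ref{globaldef}. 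For $g$ near the identity this produces a family of derivations $\delta_a$ with $\alpha_g(\mO)=\mO+i\epsilon^a\,\delta_a(\mO)+O(\epsilon^2)$ on $\mathcal{A}[R]$; since every derivation of a von Neumann algebra is inner, each $\delta_a|_{\mathcal{A}[R]}$ is implemented by some Hermitian $Q_a^R\in\mathcal{A}[R]$, unique up to adding a central element. The goal is to glue the $Q_a^R$ as $R\to\Sigma$ into a single $Q_a=\int_\Sigma \star J_a$ for a local operator-valued distribution $J_a^\mu$ with $\nabla_\mu J_a^\mu=0$, which generates the symmetry.

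Next I would partition $\Sigma$ into small cells $R_i$ and try to show $U(g,\Sigma)=\prod_i V_i(g)$ with $V_i(g)$ supported in a slight thickening $R_i'$ --- the \emph{splittability} property being introduced in this subsection. Granting it, expanding the factors gives $Q_a=\sum_i q_{a,i}$ with $q_{a,i}$ localized near $R_i$ and agreeing with $Q_a^{R_i'}$ up to c-numbers. Refining the partition and dividing $q_{a,i}$ by the cell volume, one hopes the smeared operators converge to a local operator $J_a^0(x)$, the charge density; Lorentz invariance of the theory then promotes $J_a^0$ to a full vector $J_a^\mu$, and condition (d) of definition \ref{globaldef} --- which makes $U(g,\Sigma)$ topological, hence $Q_a$ slice-independent --- yields $\nabla_\mu J_a^\mu=0$ as an operator equation. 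That this current generates the symmetry follows from locality: only cells near $x$ can contribute to $[Q_a,\mO(x)]$, and matching to the infinitesimal version of \eqref{Dmap} fixes the normalization.

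The hard part --- and, I suspect, the place where Conjecture \ref{naiveN} is actually \emph{false} as stated --- is everything downstream of ``granting splittability.'' There is no axiom guaranteeing that a topological codimension-one operator can be cut into factors lying in the local algebras of a cellulation: the cut pieces may carry genuinely nonlocal edge data not contained in any $\mathcal{A}[R_i']$; the product $\prod_i V_i(g)$ may fail to converge; or, even if $Q_a$ exists and splits, the rescaled densities $q_{a,i}$ may have no distributional limit because the charge is stored ``at infinity'' rather than as the integral of a local density. One also needs a cocycle-type consistency condition so that the continuum limit is independent of the cellulation. My expectation is therefore that the correct statement replaces Conjecture \ref{naiveN} by a refinement in which splittability (or some surrogate) is imposed as a hypothesis --- which is presumably why it is singled out as a named notion in this subsection --- and that the authors go on to exhibit continuous global symmetries with no Noether current precisely in theories where splittability fails, e.g.\ theories whose would-be current is ``pure gauge at infinity'' or is subsumed into a higher-form structure.
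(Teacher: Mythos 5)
You cannot prove Conjecture \ref{naiveN}, because the paper's point in stating it is that it is \emph{false}: no proof is given anywhere, and section \ref{splitexsec} constructs explicit counterexamples. The sharpest one is pure gauge theory with gauge group $\mathbb{R}\times\mathbb{R}$, whose $U(1)$ symmetry rotating the two gauge fields satisfies definition \ref{globaldef}, yet the Noether-procedure current \eqref{Jeq} is not gauge invariant and no gauge-invariant improvement can exist: a genuine current would produce localized generators $U(g,R)$ on any manifold via \eqref{currentU}, but on $\Sigma=\mathbb{S}^1\times\mathbb{S}^{d-2}$ the symmetry is provably unsplittable, since the flux $\Phi_a(S)$ lies in the center of $\mathcal{A}[R]$ (being equal to $\Phi_a(S')$ in a spacelike-separated region) and yet carries the $a$ index the symmetry rotates. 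A second, interacting counterexample appears in section \ref{anomsplit}: the chiral $U(1)$ of massless QED with gauge group $\mathbb{R}$, which is splittable on \emph{every} manifold and still has no Noether current, because such a current would equally serve the $U(1)$-gauge-group theory (same local operators), whose surviving $\mathbb{Z}_{N_f}$ is unsplittable on $\mathbb{S}^2\times\mathbb{S}^1$. So the correct reading of the statement is as a conjecture set up to be refuted, and your closing suspicion is the right one; but the first two paragraphs of your proposal, read as a proof, cannot be repaired.

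On the specifics of where your skeleton breaks, your diagnosis is close to the paper's but differs in two ways worth noting. First, the step you flag as ``the rescaled densities converge to $J_a^0$'' is exactly where the paper shows failure \emph{even granting} splittability on $\Rd$: for the gauge-invariant localized generators \eqref{giUR}, the boundary dressing $C[\partial R]$ needed to cancel \eqref{Qinv} does not decouple as $R$ shrinks --- indeed $\wt{Q}(R)$ in \eqref{tQeq} scales to zero faster than the volume of $R$ --- which is precisely why the algebraic-QFT literature, assuming the split property on $\Rd$, never extracted a current from its $U(g,R)$. (Your ``every derivation of a von Neumann algebra is inner'' step is also not available as stated: the infinitesimal generator of a pointwise-continuous automorphism group is an unbounded, densely defined object, not an everywhere-defined derivation, so Sakai--Kadison does not apply.) Second, the repair you anticipate --- adding splittability of the symmetry as a hypothesis --- is weaker than what the paper conjectures is needed: the $\mathbb{R}$-gauge-group chiral example is splittable on all manifolds yet currentless, so the paper's proposed refinement is instead that the \emph{theory} obey the split property on every manifold, equivalently (conjecturally) that all topological surface operators generate compact $p$-form symmetries; the obstruction is ``unbreakable lines'' charged under a noncompact higher-form symmetry, matching your final guess about the current being subsumed into a higher-form structure.
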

No proof of this conjecture has ever been given, and in fact this is for a good reason: there are quantum field theories, and even Lagrangian quantum field theories, where this conjecture is false!  But is there something strange about these theories?  And moreover is there something analogous to the existence of Noether currents for discrete symmetries?  In this subsection and the following one we discuss these questions in some detail.\footnote{Readers who are primarily interested in quantum gravity may wish to simply take it on faith that the splittability we define momentarily holds for any global symmetry and proceed to subsection \ref{backgroundsec}, since the ensuing discussion is perhaps primarily of interest to quantum field theory experts.  A similar signpost there will suggest further omissions for casual readers.}

We begin with a definition:\footnote{The idea of this definition goes back to \cite{Doplicher:1982cv,Doplicher:1983if,Buchholz:1985ii}, although they didn't give it a name.}
\begin{mydef}\label{splitdef}
A global symmetry of a quantum field theory which is preserved on a spacetime $\mathbb{R}\times\Sigma$ is \textit{splittable on} $\Sigma$ if for every open spatial subregion $R\subset \Sigma$ and every $g\in G$ there is a unitary operator $U(g,R)$ such that we have 
\be\label{split1}
U^{\dagger}(g,R)\mO U(g,R)=\begin{cases} U^\dagger(g,\Sigma) \mO U(g,\Sigma) & \forall\mO\in \mathcal{A}[R]\\ \mO & \forall\mO \in \mathcal{A}[\mathrm{Int}(\Sigma-R)]\end{cases}.
\ee
We leave arbitrary how the $U(g,R)$ act on operators which are neither in $\mathcal{A}[R]$ nor $\mathcal{A}[\mathrm{Int}(\Sigma-R)]$, and in particular we do not restrict how they act on operators localized right on the boundary of $R$.  We however can and will always arrange that if $R_i$ are a finite disjoint set of open subregions of $\Sigma$ whose boundaries do not intersect, then
\be\label{split2}
\prod_i U(g,R_i)=U(g,\cup_i R_i).
\ee 
\end{mydef}   
\vspace{.5cm}
This definition is related to Noether currents as follows: if $J^\mu_a$ is a current for a global symmetry, with $G$ a compact connected Lie group, then since for any such group the exponential map is surjective, we can define operators
\be\label{currentU}
U\left(e^{i\epsilon^aT_a},R\right)\equiv e^{i\epsilon^a\int_R d^{d-1}x \sqrt{\gamma}n_\mu J^\mu_a}= e^{i\epsilon^a\int_R \star J_a},
\ee
which clearly obey the criteria \eqref{split1}, \eqref{split2}.  Thus a compact connected global symmetry with a Noether current is always splittable on any $\Sigma$ for which it is preserved.  Splittability however also can apply to discrete symmetries: for example in the Ising model, $U(-1,R)$ is the operator which flips all the spins in region $R$ and does nothing in the complement of $R$.  We have left what happens at the edges of the regions arbitrary because in quantum field theory it will be UV-sensitive, or in other words it will depend on precisely how we regulate the $U(g,R)$ at the edges.\footnote{To really get something well-defined in the continuum, we should fatten the location of the ambiguity in each $U(g,R)$ to a small open neighborhood of $\partial R$: this is what was done in \cite{Doplicher:1982cv,Doplicher:1983if,Buchholz:1985ii}, but to lighten the notation we will keep this implicit.} 

It is clear that if we can show that all global symmetries are splittable, we will have proven at least some kind of abstract version of Noether's theorem.  In fact this is precisely the context in which the notion of splittability was first introduced in the algebraic quantum field theory community \cite{Doplicher:1982cv,Doplicher:1983if,Buchholz:1985ii}.  We now revisit this issue from a more modern point of view.  We'll begin by giving a lattice  argument that all global symmetries are splittable, to help us identify the relevant issues for the continuum discussion that follows. We phrase this argument as a theorem, which shows that for finite tensor product systems, a unitary operator which acts locally on all local operators must itself be built out of local unitary operators:
\begin{thm}\label{latticethm}
Let $\mathcal{H}$ be a finite-dimensional Hilbert space that tensor factorizes as $\mathcal{H}=\otimes_i\mathcal{H}_i$, and let $U$ be a unitary operator on $\mathcal{H}$ with the property that for any tensor factor $\mathcal{H}_i$ and any operator $\mO_i$ which acts nontrivially only on $\mathcal{H}_i$, $\mO'_i\equiv U^\dagger \mO_i U$ also acts nontrivially only on $\mathcal{H}_i$.  Then $U=\prod_i U_i$, where each $U_i$ acts nontrivially only on $\mathcal{H}_i$.
\end{thm}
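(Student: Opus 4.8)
The plan is to restate the hypothesis in terms of operator algebras and then invoke the classical fact that every $*$-automorphism of a full matrix algebra is inner. For each $i$ let $\A_i\subset\mathcal{B}(\mathcal{H})$ be the algebra of operators acting nontrivially only on $\mathcal{H}_i$, i.e.\ $\A_i=\mathbf{1}\otimes\cdots\otimes\mathcal{B}(\mathcal{H}_i)\otimes\cdots\otimes\mathbf{1}$. These are mutually commuting subalgebras, each $*$-isomorphic to $\mathcal{B}(\mathcal{H}_i)$, they jointly generate $\mathcal{B}(\mathcal{H})$, and the commutant of $\A_i$ is the algebra generated by the remaining $\A_j$. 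The hypothesis on $U$ is precisely that $\mathrm{Ad}_U(\A_i)\subseteq\A_i$ for every $i$, where $\mathrm{Ad}_U(\mO)\equiv U^\dagger\mO U$. Since $\mathcal{H}$ is finite dimensional, $\mathrm{Ad}_U$ is an injective linear map that preserves dimension, so in fact $\mathrm{Ad}_U(\A_i)=\A_i$, and $\mathrm{Ad}_U$ restricts to a $*$-automorphism of each $\A_i$.

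Now I would use that $\mathcal{B}(\mathcal{H}_i)$ has only inner automorphisms (Skolem--Noether, or equivalently uniqueness up to unitary equivalence of the irreducible representation of a matrix algebra): there is a unitary $v_i\in\A_i$, unique up to a phase, with $U^\dagger\mO U=v_i^\dagger\mO v_i$ for all $\mO\in\A_i$. Because the $\A_i$ commute pairwise, so do the $v_i$, so $V\equiv v_1v_2\cdots v_n$ is a well-defined unitary, and for $\mO\in\A_i$ the factors $v_j$ with $j\ne i$ lie in the commutant of $\A_i$ and cancel in conjugation, giving $V^\dagger\mO V=v_i^\dagger\mO v_i$ as well. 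Setting $W\equiv UV^\dagger$, one computes $W^\dagger\mO W=V(U^\dagger\mO U)V^\dagger=V(v_i^\dagger\mO v_i)V^\dagger=\mO$ for every $\mO\in\A_i$ and every $i$. Since the $\A_i$ generate $\mathcal{B}(\mathcal{H})$, $W$ commutes with all of $\mathcal{B}(\mathcal{H})$, hence $W=\lambda\mathbf{1}$ with $|\lambda|=1$. Therefore $U=\lambda V=\lambda v_1v_2\cdots v_n$, and absorbing $\lambda$ into the first factor (replace $v_1$ by $\lambda v_1$, still an element of $\A_1$) exhibits $U$ as the asserted product of local unitaries.

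The argument is short and I do not expect a genuine obstacle; its one real external input is the innerness of automorphisms of matrix algebras, and two minor points deserve care. First, finite dimensionality is essential in upgrading $\mathrm{Ad}_U(\A_i)\subseteq\A_i$ to equality and hence to an automorphism — this is exactly where the naive continuum analogue breaks, motivating the more delicate split-property discussion that follows in the paper. Second, the phase ambiguity in each $v_i$ is real but harmless precisely because it is central, so one should not attempt to pin it down factor by factor. An essentially equivalent alternative is induction on $n$: group $\mathcal{H}=\mathcal{H}_1\otimes(\mathcal{H}_2\otimes\cdots\otimes\mathcal{H}_n)$, note that an automorphism preserving $\A_1$ also preserves its commutant, reduce to the two-factor case (where $\mathrm{Ad}_U|_{\A_1}=\mathrm{Ad}_{v_1}$ forces $Uv_1^\dagger$ into $\A_1'$, so $U=v_1\otimes u'$), and then restrict $u'$ to the remaining factors and apply the inductive hypothesis.
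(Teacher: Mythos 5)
Your proof is correct, and it takes a genuinely different route from the one in the paper. You recast the hypothesis as the statement that $\mathrm{Ad}_U$ restricts to a $*$-automorphism of each local factor $\mathcal{A}_i\cong\mathcal{B}(\mathcal{H}_i)$ (the upgrade from $\mathrm{Ad}_U(\mathcal{A}_i)\subseteq\mathcal{A}_i$ to equality via finite dimensionality is the right and necessary step), and then invoke Skolem--Noether innerness to produce local unitaries $v_i$, after which the comparison unitary $W=UV^\dagger$ commutes with the generating set $\cup_i\mathcal{A}_i$ and is therefore a phase; the phase is harmlessly absorbed into one factor. The paper's proof in appendix \ref{splitapp} instead avoids any appeal to automorphism theory: it doubles the system, applies $U$ to a maximally entangled state to form $|\phi\rangle$, shows that the locality hypothesis forces the correlations between $\hat{A}$ and $B\hat{B}$ to factorize, and then uses a rank/purification-uniqueness argument (Schmidt decomposition) to peel off the local unitaries one factor at a time. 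What your approach buys is brevity and a clean conceptual statement — the whole content is ``automorphisms of full matrix algebras are inner,'' and the argument visibly localizes where finite dimensionality (or more precisely type I-ness) enters, which dovetails with the paper's subsequent discussion of the split property as the continuum surrogate for a type I factorization. What the paper's approach buys is self-containedness at the level of elementary quantum information (no Skolem--Noether input) and a style of argument that generalizes naturally to the approximate/quantum-error-correction setting used elsewhere in the holographic discussion. Both are complete proofs of the theorem as stated.
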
 

\bfig
\includegraphics[height=3cm]{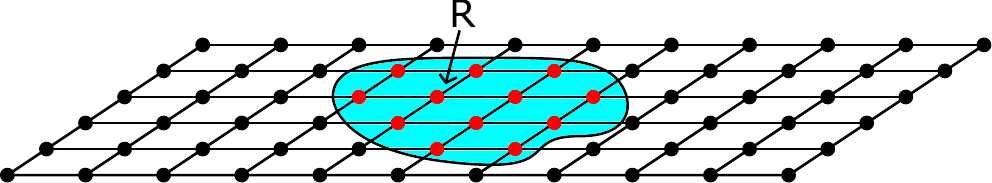}
\caption{Splittability of any global symmetry for a lattice theory.  Here each dot is a spin, so a spatial region $R$, shaded blue, corresponds to a subset of the spins, shaded red.  To produce a localized symmetry operator we take the product over the $U_i(g)$ associated to the red spins.}\label{spinfig}
\efig  
There is a nice ``information-theoretic'' proof of this theorem, but since the method is a bit far from the rest of this paper we relegate it to appendix \ref{splitapp}.  To see how this theorem relates to splittability, consider a spin system whose Hilbert space is the tensor product of a bunch of individual spins.  We can imagine the spins are arranged in a lattice, as in figure \ref{spinfig}.  By theorem \ref{latticethm}, any symmetry operator $U(g,\Sigma)$ which acts locally on the spins can be decomposed as $U(g,\Sigma)=\prod_i U_i(g)$, with $i$ labelling the spins and $U_i(g)$ acting nontrivially only on spin $i$.  So then we may simply define
\be
U(g,R)\equiv \prod_{i\in R}U_i(g),
\ee
which clearly has the property that it acts in the same way as $U(g,\Sigma)$ on operators with support only in $R$, while it acts trivially on operators with support only on the complement of $R$.  In figure \ref{spinfig}, the included tensor factors live at the red dots.  At least to the extent that this lattice model is a good model for quantum field theory, we should expect all symmetries to be splittable.  

In attempting to generalize theorem \ref{latticethm} to continuum quantum field theory, we immediately encounter the problem that the Hilbert space of a quantum field theory never has the tensor product structure assumed in theorem \ref{latticethm}: any finite-energy state will have an infinite amount of spatial entanglement between the fields in a region $R$ and those in its complement $\Sigma-R$.  This may seem decisive against proving the splittability of global symmetries along these lines, but in fact there is a standard axiom in algebraic quantum field theory which allows this lattice argument to be generalized to the continuum.  This axiom gives a clever way to extend the notion of a tensor product structure of the Hilbert space to continuum quantum field theory, and is given as follows \cite{Buchholz:1973bk,Buchholz:1986dy,Fewster:2016mzz}:
\begin{mydef}
A quantum field theory is said to have the \textit{split property on} $\Sigma$ if for any two open regions of bounded size $R$, $R'\subset \Sigma$ which obey $\mathrm{Closure}[R]\subset\mathrm{Interior}[R']$, there exists a von Neumann algebra $\mathcal{N}$, which is a type I factor, such that
\be
\mathcal{A}[R]\subset \mathcal{N} \subset \mathcal{A}[R'].
\ee
Here $\mathcal{A}[R]$, $\mathcal{A}[R']$ are the algebras of operators in $R$ and $R'$ respectively. 
\end{mydef}
A type I factor algebra, which is a von Neumann algebra with trivial center and containing a minimal projection, is always isomorphic to the set of all the operators on some Hilbert space (see eg. \cite{Jones}), so we can view the split property as saying that, although the Hilbert space does not factorize based on spatial regions (in fact the algebra $\mathcal{A}[R]$ is expected to be type III for any nontrivial $R$), by gradually ``thinning out'' the algebra between $R$ and $R'$ we can find a tensor factor whose operator algebra contains all the (bounded) operators on $R$ and none of the operators on the complement of $R'$.  Given a quantum field theory obeying the split property on $\Sigma$, it can be argued fairly straighforwardly that any global symmetry is splittable on $\Sigma$ \cite{Doplicher:1982cv,Doplicher:1983if,Buchholz:1985ii}, basically along the lines of theorem \ref{latticethm}.

Is the split property actually true in quantum field theory?  It has been shown explicitly in various free theories with $\Sigma=\Rdd$ \cite{Buchholz:1973bk,DAntoni:1983qsc,Buchholz:1987dr}, and also in certain interacting theories with $\Sigma=\mathbb{R}$ \cite{Summers:1982np}, and there are general arguments for it based on the notion that the energy spectrum of the theory quantized on $\Sigma=\mathbb{R}^{d-1}$ should be ``well-behaved'' in a technical sense which is called \textit{nuclearity} \cite{Buchholz:1986dy,Buchholz:1986bg}.  We are not aware of any quantum field theory that does not obey the split property on $\Sigma=\mathbb{R}^{d-1}$.  The situation is more subtle for quantum field theories on manifolds with nontrivial topology, we will see in the following section that there are reasonable quantum field theories which do \textit{not} obey the split property on more complicated spatial topologies.  And moreover we will see that in these theories we can indeed have symmetries which are not splittable on those topologies!  It may seem that a failure of splittability on nontrivial manifolds is of relatively obscure technical interest, but we emphasize that if the symmetry group is continuous, then this must imply the non-existence of a Noether current; if one existed we could use it to construct $U(g,R)$ for any region $R$ on any spatial manifold $\Sigma$ using equation \eqref{currentU}.  We believe that these observations are unknown in the algebraic quantum field theory literature, which has focused almost exclusively on spatial $\Rdd$ (see however \cite{Buchholz:2015epa,Buchholz:2016yqp,Buchholz:2018npi} for recent work which is somewhat related).   

Splittability on spatial $\Rdd$ is not quite sufficient for our purposes in AdS/CFT, where we will want to use it on spatial $ \mathbb{S}^{d-1}$.  We have not attempted to prove this splittability using the energetic arguments of \cite{Buchholz:1986dy,Buchholz:1986bg}, but based on our study of examples we expect that it should follow for $d>2$ from splittability on spatial $\Rdd$.  In conformal field theory however we can do better: there for $d\geq 2$ we can argue that a symmetry which is splittable on spatial $\mathbb{R}^{d-1}$ must always be splittable on $\mathbb{S}^{d-1}$.  This is because we can use the state-operator correspondence to explicitly define the matrix elements of $U(g,R)$ on $\mathbb{S}^{d-1}$ in terms of its matrix elements on $\mathbb{R}^{d-1}$.  This will be enough for our quantum gravity arguments below, but as splittability and Noether's theorem are interesting on their own as issues in quantum field theory, we will now study them a bit further, focusing on the question of what modification of the naive Noether conjecture \eqref{naiveN} would be necessary to obtain a true statement with no counterexamples.  We aim to motivate a general picture where non-pathological quantum field theories which do not obey the split property on some spatial manifold $\Sigma$ should be deformable to ones that do obey it for any $\Sigma$ by adding a finite number of arbitrarily massive degrees of freedom, and that in such theories the Noether conjecture should hold.

\subsection{Unsplittable theories and continuous symmetries without currents}\label{splitexsec}
How might we obtain a quantum field theory that does not obey the split property?  Any theory which is obtained from a lattice theory with a tensor product structure, like that in figure \ref{spinfig}, seems likely to obey the split property in the continuum limit.  But what if even in the lattice theory we do not have this tensor product structure?  For example we could have a theory whose Hilbert space is obtained by imposing local constraints on a tensor product theory, e.g. a lattice gauge theory.  We do not have a complete understanding of which lattice theories have continuum limits obeying the split property and which do not, nor for that matter do we expect that all continuum QFTs have lattice formulations, but with this motivation we can construct a few examples of unsplittable symmetries which clarify the issue and motivate the general picture we conjectured at the end of the previous subsection.  These examples may seem contrived, since they rely on noncompact gauge groups and/or decoupled free theories. In subsection \ref{anomsplit} we will give two interacting examples based on the ABJ anomaly, which basically work in the same way as our examples here. Unsplittable discrete global symmetries are easily obtained in theories with compact gauge group, we will already meet one in this subsection, but a noncompact gauge group seems hard to avoid if we want to produce an unsplittable continuous global symmetry.  We will comment on why this is so at the end of this subsection.

The simplest gauge theory with a continuous global symmetry is a pure gauge theory with gauge group $\mathbb{R}\times\mathbb{R}$:
\be
S=-\frac{1}{4}\int_M d^d x \sqrt{-g}F_{a\mu\nu}F_b^{\mu\nu}\delta^{ab}=-\frac{1}{2}\int_M F_a\wedge \star F_b \delta^{ab}. 
\ee
Here $a, b=1,2$, and there is a $U(1)$ global symmetry which rotates the two gauge fields into each other.  This theory provably obeys the split property on $\mathbb{R}^d$ \cite{Buchholz:1987dr}, but we will see that it does not on more general manifolds and moreover we will see that this symmetry is itself not splittable on those manifolds.  There must therefore be something wrong with the Noether current for this symmetry.  The Noether procedure outlined around equation \eqref{currentdef} gives a Noether current which in differential form notation is
\be\label{Jeq}
\star J=\epsilon^{ab} A_a\wedge \star F_b,
\ee
with
\be
\epsilon^{ab}=\begin{pmatrix} 0 & 1 \\ -1 & 0\end{pmatrix}.
\ee
We see however that under a gauge transformation
\be
A_a'=A_a+d\lambda_a,
\ee
we have 
\be\label{Jtransform}
\star J'=\star J+\epsilon^{ab}d\lambda_a\wedge\star F_b =\star J+d\left(\epsilon^{ab} \lambda_a\star F_b\right),
\ee
where in the second equality we have used the equation of motion $d\star F_a=0$.  The current constructed by the Noether procedure is not gauge-invariant!  It is however gauge-invariant up to a total derivative, so if we integrate it over a closed manifold $\Sigma$ we get a well-defined charge
\be
Q(\Sigma)\equiv \int_\Sigma \star J.
\ee
The gauge non-invariance of $J$ is a potential obstruction to any attempt to define localized symmetry operators $U(g,R)$. For example if we define a localized charge
\be
Q(R)\equiv \int_R \star J,
\ee
then apparently we have the gauge transformation
\be\label{Qinv}
Q(R)'=Q(R)+\epsilon^{ab}\int_{\partial R}\lambda_a\star F_b .
\ee
How are we to reconcile this with the known splittability \cite{Buchholz:1987dr} of this theory on $\mathbb{R}^d$?

One useful observation is that, although $Q(R)$ is not gauge invariant, its gauge non-invariance is restricted to an operator supported only at $\partial R$.  Our definition of splittability left it ambiguous how $Q(R)$ should act on operators right at $\partial R$, so we might hope that we can modify $Q(R)$ by a gauge non-invariant boundary operator in just such a way that we cancel the gauge non-invariance in equation \eqref{Qinv}.  We now argue that indeed this can be done provided that the boundary is connected, and more generally that it can be done provided that each connected component of the boundary is itself a boundary. Let us first consider the case where $\partial R$ is connected.  We may then define the non-local operator
\be\label{Idef}
I_a(x)\equiv \int_{\gamma_{x,x_0}}A_a,
\ee
where for each $x\in \partial R$ we have arbitrarily chosen a curve $\gamma_{x,x_0}$ in $\partial R$ which connects that point to a fixed reference point $x_0$.  This operator has gauge transformation
\be
I'_a=I_a+\lambda_a(x)-\lambda_a(x_0).
\ee
We may then easily see that the ``doubly-nonlocal'' boundary operator
\be
C[\partial R]\equiv \epsilon^{ab}\int_{\partial R}I_a\star F_b
\ee
has gauge transformation
\be\label{Cinv}
C'[\partial R]=C[\partial R]+\epsilon^{ab}\int_{\partial R} \lambda_a \star F_b ,
\ee
where we've used that
\be\label{starFzero}\epsilon^{ab}
\int_{\partial R}\lambda_a(x_0)\star F_b=\lambda_a(x_0)\epsilon^{ab}\int_R d \star F_b=0.
\ee
But \eqref{Cinv} is precisely what we need to cancel the gauge transformation in \eqref{Qinv}, so apparently the quantity
\be\label{tQeq}
\wt{Q}(R)\equiv Q(R)-C[\partial R]
\ee
is gauge invariant!  We may then define 
\be\label{giUR}
U(\theta,R)\equiv e^{i\theta \wt{Q}(R)},
\ee
which give a set of local symmetry generators which split the symmetry.  More generally, if each connected component of the boundary is itself a boundary, we can pick an $x_0$ for each component and \eqref{starFzero} will hold component by component.  In particular if $M$ has the property that \textit{every} closed $d-2$ manifold is the boundary of some $d-1$ manifold, or in other words the homology group $H_{d-2}(M)$ is trivial, then this symmetry will be splittable for any choice of $R$.  This is indeed the case for $\mathbb{R}^d$, so there is no tension with the proof of the split property there.\footnote{This is a bit subtle for $d=2$, since in order for a single point to be a boundary it needs to be attached to a line which goes off to infinity.}  Note also that for $M=\mathbb{R}\times\mathbb{S}^{d-1}$, which is our case of primary interest, we have $H_{d-2}(M)=0$ for $d>2$.

\bfig
\includegraphics[height=5cm]{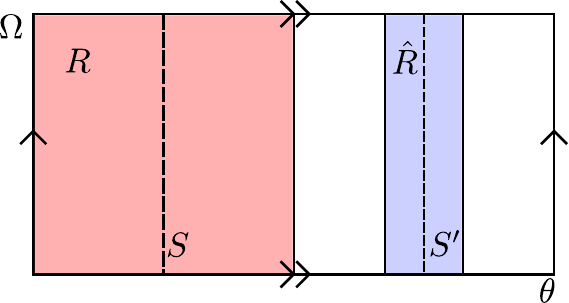}
\caption{A counterexample to the split property: electrodynamics on a spatial torus.  The flux operator through $S$ is equal to the flux operator through $S'$, but they live in spacelike-separated regions $R$ and $\hat{R}$. }\label{torusfig}
\efig
The reader may wonder why we did not first attempt to ``improve'' the current \eqref{Jeq}, by adding to $\star J$ a local gauge non-invariant total derivative whose gauge transformation would cancel the non-invariance of $\star J$.  It is easy to see however that there is no candidate which will succeed: such a term would need to have a gauge transformation involving $\lambda_a$ without any derivatives, but no local polynomial function of $A$ and $F$, or their derivatives, will have this property.  This indeed happens for a good reason: on more complicated manifolds this theory does not obey the split property, and the symmetry we have been considering is not splittable!  For concreteness consider quantizing this theory on spatial manifold $\Sigma =\mathbb{S}^1\times \mathbb{S}^{d-2}$, parametrized by $(\theta,\Omega)$, and consider the region $R$ given by $0<\theta<\pi/2$.  See figure \ref{torusfig} for the setup for $d=3$.  The algebra of this region includes the electric flux operator 
\be
\Phi_a(S)=\int_{S} \star F_a,
\ee 
where $S$ is the spatial $\mathbb{S}^{d-2}$ at $\theta=\pi/4$.  $\Phi_a(S)$ is a nontrivial operator since it does not commute with a Wilson loop that wraps the $\mathbb{S}^1$.  But in fact by Gauss's law, $d\star F_a=0$, $\Phi_a(S)$ depends only on the homology class of $S$: in particular since $S$ is homologous to the spatial $\mathbb{S}^{d-2}$ at $\theta=3\pi/4$, which we'll call $S'$, $\Phi_a(S)$ is also in the algebra of a region $\hat{R}$ which is spacelike-separated from $R$ (see figure \ref{torusfig}).   Therefore $\Phi_a(S)$ must commute with all elements of $\mathcal{A}[R]$, and thus must be in the center of $\mathcal{A}[R]$.  Now say that the split property held: for any region $R'$ whose interior contains the closure of $R$, we should be able to have the algebraic inclusion
\be\label{inclusion1}
\mathcal{A}[R]\subset \mathcal{N}\subset\mathcal{A}[R']
\ee   
with $\mathcal{N}$ some type I factor.  In particular consider $R'$ to be defined by $-\epsilon<\theta<\pi/2+\epsilon$ with say $\epsilon=.01$.  $\Phi_a(S)$ is an element of $\mathcal{A}[R]$, and thus an element of $\mathcal{N}$.  But since $R'$ is spacelike-separated from $\hat{R}$, $\Phi_a(S)$ is also in the center of $\mathcal{A}[R']$, and therefore by \eqref{inclusion1} must commute with everything in $\mathcal{N}$.  But since $\Phi_a(S)$ is nontrivial, this contradicts the notion that $\mathcal{N}$ is a type I factor: any factor has trivial center by definition.  Thus we cannot have \eqref{inclusion1}, so the split property fails.

A few comments are in order here.  First of all this argument for non-splittability holds also for pure $U(1)$ gauge theory, which thus also does not obey the split property on general manifolds.  Second, the trouble we found is consistent with our inability to define $U(g,R)$ for regions where $\partial R$ has connected components which are not themselves boundaries: indeed it is precisely such components which allow $\Phi_a(S)$ to be nontrivial.  Third, we note that not only does the split property \eqref{inclusion1} fail, it is clear that for the $\mathbb{R}\times\mathbb{R}$ gauge theory the $U(1)$ symmetry rotating the gauge fields really cannot be splittable on this geometry in the sense of definition \ref{splitdef}.  For if it were, then $U(g,R)$ would have to act nontrivially on the $a$ index of $\Phi_a(S)$, but this is impossible since $\Phi_a(S)\in \mathcal{A}[\hat{R}]$.  Therefore it indeed must be the case that no gauge-invariant current exists.  Finally we note that, although we had to go to nontrivial spatial topology to see a break down of splittability, this breakdown actually has an avatar even in the theory on spatial $\Rdd$.  Consider a circular Wilson loop in $\Rd$, which is surrounded by a surface with topology $\mathbb{S}^1\times \mathbb{S}^{d-2}$ on which we put a symmetry insertion, constructed as in figure \ref{gluingfig}.  For $d=3$, this would amount to routing a Wilson loop through the ``bagel'' which is bounded by the torus in figure \ref{gluingfig}.  This surface insertion \textit{is} splittable into the two pieces shown in figure \ref{gluingfig}, but it is \textit{not} splittable into two ``handles'' such as the shaded red region in figure \ref{torusfig} and its complement.  This non-splittability has no interpretation as an operator statement in the Hilbert space on $\Rdd$, but it is a nontrivial statement about the insertion.  

The reader may worry that this example of a non-splittable global symmetry is pathological since it has a noncompact gauge group.  But we note that all the same arguments apply to the $\mathbb{Z}_2$ global symmetry of a pure gauge theory with gauge group $U(1)\times U(1)$.\footnote{The reason that this theory no longer has a continuous global symmetry mixing the two gauge fields is that such a symmetry would not act locally on the Wilson loops, since it wouldn't respect charge quantization.  It therefore would violate part (b) of definition \ref{globaldef}, since it would map the Wilson loop out of $\mathcal{A}[R]$, where $R$ is a thin tube containing the Wilson loop.}  We no longer expect a current, but we still have a symmetry operator
\be
U(-1,\Sigma)\equiv e^{i\pi \epsilon^{ab}\int_\Sigma A_a \wedge\star F_b}
\ee
under which the exponentiated $U(1)$ electric flux 
\be
L_a(\theta,S) \equiv e^{i\theta \Phi_a(S)}
\ee
transforms via $L_1(\theta,S)\leftrightarrow L_2(\theta,S)$.  $U(-1,\Sigma)$ can still be split when $M$ has vanishing $H_{d-2}(M)$, but on $\Sigma=\mathbb{S}^1\times \mathbb{S}^{d-2}$ it cannot be split for the same reason as in the noncompact case: any $U(-1,R)$ on a spatial $\mathbb{S}^1\times \mathbb{S}^{d-2}$ would have to act both trivially and nontrivially on $L_a(\theta,S)=L_a(\theta,S')$. Unfortunately this is no longer a counterexample to the naive Noether conjecture \eqref{naiveN}, since the global symmetry is now discrete.  It also still involves two decoupled free theories: we can remove one of them if we instead consider the discrete symmetry $A'=-A$, also called charge conjugation, of one $U(1)$ gauge field, which is also not splittable for the same reasons.  We give an example which is not free in subsection \ref{anomsplit}.

\bfig
\includegraphics[height=4cm]{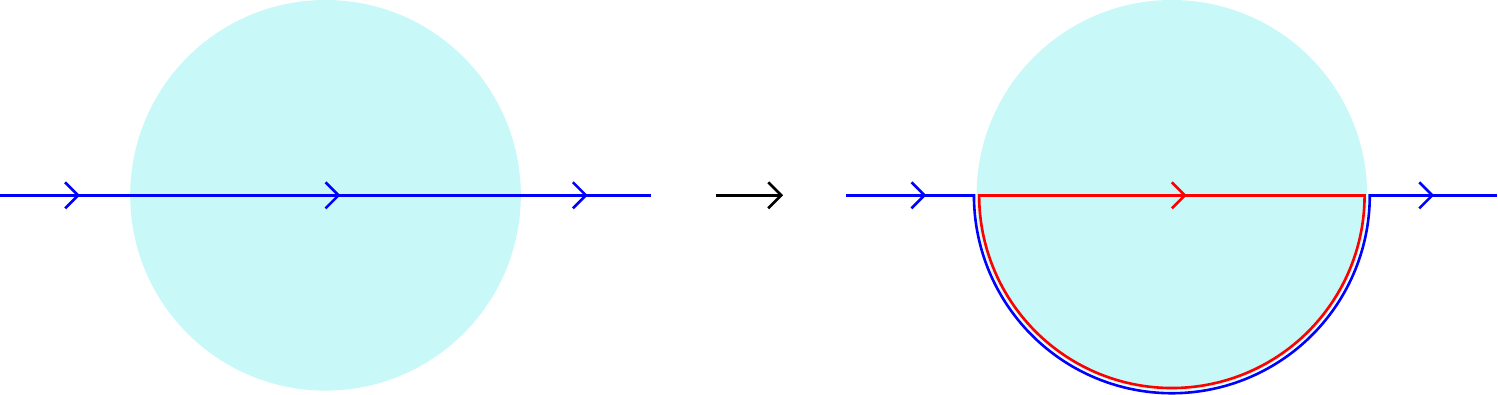}
\caption{Re-routing unbreakable lines.  Here we have a symmetry exchanging blue and red lines, and we can arrange for it to act locally in the shaded region by rerouting the blue line around the boundary of the region.  This is not possible however when the region has multiple boundary components which are not contractible, for example as in figure \ref{torusfig}.}\label{reroutefig}
\efig
The source of trouble in all these examples (and those of section \ref{anomsplit}) is that there are ``unbreakable lines'': line operators, here Wilson lines, which cannot have endpoints on local operators carrying gauge charge since none exist.  In more modern language, there is an exact one-form symmetry under which these lines are charged (we will discuss $p$-form symmetries in more detail in section \ref{psec}).  This notion of unbreakable lines gives us a new geometric interpretation of what our boundary modification \eqref{tQeq} of the charge in the $\mathbb{R}\times\mathbb{R}$ gauge theory (or the corresponding modification in the $U(1)\times U(1)$ gauge theory) is doing: it enables us to ``re-route'' Wilson lines around the boundary in a manner consistent with the unbreakable nature of the lines.  We illustrate this in figure \ref{reroutefig}.  The breakdown of splittability on manifolds with nontrivial $H_{d-2}(M)$ can then be understood as arising from an inability to perform this re-routing. 

It is interesting to consider to what extent the validity of the split property is a ``UV-sensitive'' property of a quantum field theory.  As a concrete example, we point out that our $U(1)\times U(1)$ gauge theory in $d$ spacetime dimensions can be obtained as the IR limit of two copies of a lattice version of the $\mathbb{CP}^{N-1}$ nonlinear $\sigma$-model \cite{Harlow:2015lma}.  This lattice theory has precisely the tensor product Hilbert structure shown in figure \ref{spinfig}, so we might expect that it should obey the split property.  So how did we get a theory in the IR that does not?  In fact what happened is that this lattice theory also has massive charged particles, whose masses can be small compared to the lattice energy but large compared to any other IR scale.  Once these massive charged particles are included, the Wilson lines are no longer unbreakable and a new possibility for constructing localized symmetry operators arises where we snip the ends of the Wilson lines using the charges.  We illustrate this in figure \ref{insertfig}.  This is possible no matter how heavy the charges are, and we only need a finite number of them.  So apparently our $U(1)\times U(1)$ counterexample to splittability can be fixed with a simple UV modification: we just add some heavy charges. This modification necessarily destroys the one-form symmetry which prevented the Wilson lines from being broken.  A similar fix does not seem to be possible for the $\mathbb{R}\times\mathbb{R}$ theory, which we after all expect to be more pathological.  Essentially the problem there is that the unbreakable lines are ``infinitely generated'': since the Wilson line can carry any real charge, cutting all these lines with a finite number of heavy fields is too much to ask for.  In one-form symmetry language, the one-form symmetry is noncompact.  More generally, we conjecture that in theories where the only topological surface operators are compact $p$-form symmetries, a finite UV modification which restores the split property on any manifold should be always be possible.

\bfig
\includegraphics[height=4cm]{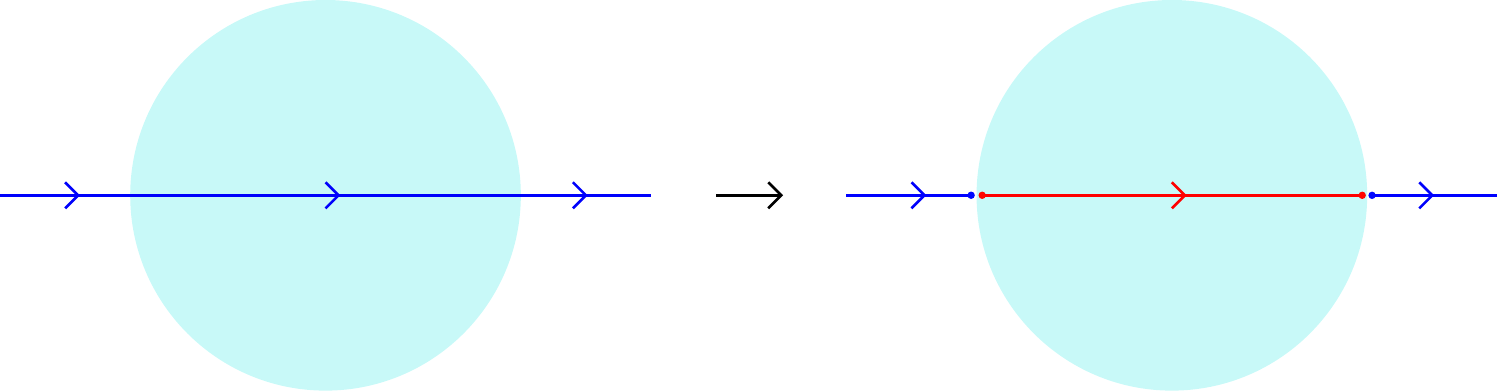}
\caption{Exchanging breakable line segments using charges.}\label{insertfig}
\efig
Finally we return to the question of when the naive Noether conjecture holds.  It is interesting to consider what happens if we try to extract a gauge-invariant current from the gauge-invariant $U(g,R)$ constructed in \eqref{giUR} in the $\mathbb{R}\times \mathbb{R}$ gauge theory on $\mathbb{R}^d$.  The obvious way to do this is to take $g\to 1$ and $R$ to be perturbatively small, and then attempt to extract $J^0$ from the part of $\log U(g,R)$ that scales with the volume of $R$ (see \cite{Carpi:1999yz,Morsella:2008pw} for rigorous attempts to do this in a few simpler theories). But this procedure actually fails in our example due to a non-decoupling of the boundary modification in this limit: this is why the algebraic quantum field theory literature was never able to actually extract a current from their $U(g,R)$, even though they assumed the split property on $\mathbb{R}^d$ \cite{Buchholz:1985ii}.  This failure arises in the following way: taking the exterior derivative of $I_a(x)$ with respect to $x^\mu$, we have
\be
dI_a=A_{a}+P_{a},
\ee
with
\be
P_{a,\mu}\equiv \int_{\gamma_{x,x_0}}ds \frac{d\gamma^\alpha}{ds}F_{a,\alpha\beta}v^\beta_\mu(s),
\ee
where $v^\beta_\mu$(s) is a somewhat unsual object with a tensor index $\beta$ at point $\gamma_{x,x_0}(s)$ and a tensor index $\mu$ at point $x$, which keeps track of how the curve $\gamma_{x,x_0}$ varies with $x$. We therefore have
\begin{align}\nonumber
\wt{Q}(R)=&\epsilon^{ab}\int_R \left(A_a\wedge \star F_b-d(I_a\wedge \star F_b)\right)\\\nonumber
=&\epsilon^{ab}\int_R \left(A_a\wedge \star F_b-dI_a\wedge \star F_b\right)\\
=&-\epsilon^{ab}\int_R P_{a}\wedge \star F_b,
\end{align}
which scales to zero faster than the volume as we shrink $R$.

We are thus led to the following suggestion: perhaps if we restrict to quantum field theories  which obey the split property on \textit{any} manifold, it is actually possible to construct a Noether current for any continuous global symmetry.  The boundary action in figure \ref{insertfig} seems less severe to us than the boundary action in figure \ref{reroutefig}, so we are optimistic that one might be able to show the necessarily decoupling.  More generally we expect that what is really needed is just that some UV modification of the theory is possible which restores the split property on all manifolds: the existence of the current cannot depend on such modifications since it is an object in the IR theory.  We therefore expect that the naive Noether conjecture should hold provided that all topological surface operators are associated to compact $p$-form global symmetries.  This then would explain why we have only been able to find counterexamples with noncompact gauge groups: it is only these which can lead to noncompact higher-form symmetries.  We view this line of thought as a promising avenue for at long last giving an abstract formulation of Noether's theorem, but we will not attempt this here.

\subsection{Background gauge fields}\label{backgroundsec}
Given a quantum field theory with a global symmetry, a natural operation to consider is turning on a background gauge field for that global symmetry.  One example of this which we have already discussed is studying the theory on a nontrivial spacetime geometry $\mathbb{R}\times\Sigma$, which can be interpreted as turning on a background gauge field for Poincare symmetry, a spacetime global symmetry.  We now discuss background gauge fields for internal global symmetries.\footnote{This section, and the following two, can be viewed as a further side discussion.  Holography-minded readers who are simply willing to accept that all CFT global symmetries are preserved on $\mathbb{R}\times\mathbb{S}^{d-1}$, and that it is possible to turn on topologically-nontrivial background gauge fields for global symmetries, may wish to skip ahead to section \ref{gaugesec}.}   We will see immediately that turning on a background gauge field for a continuous symmetry requires us to assume that a Noether current exists, which then implies that the symmetry must be splittable.  A condition slightly weaker than splittability might be sufficient for turning on a background gauge field for a discrete symmetry, but for simplicity we will just assume splittability regardless; after all we have just argued that in reasonable quantum field theories we can always achieve it by a short-distance modification of the theory. 

For a continuous global symmetry group $G$ with a set of Noether currents $J^\mu_a$, one way to turn on a background gauge field is to add to the action a term of the form
\be\label{connection1pot}
\delta S=\int_M d^d x \sqrt{-g} A_\mu^a(x)\left(J^\mu_a(x)+\ldots\right)=\int_M A^a\wedge \left(\star J_a+\ldots\right),
\ee
where the background gauge field $A_\mu^a(x)$ is an arbitrary real one-form with an index $a$, whose range equals the dimensionality of the Lie algebra $\mathfrak{g}$ of $G$.  ``$\ldots$'' denotes local terms that are higher order in $A_\mu^a$.  As in our discussion of extending flat-space operators to curved space, there is in general some ambiguity in how we choose these higher order terms.  Given such a choice however, we may then define an extension of the Noether current in the presence of a background gauge field:
\be\label{Jdef}
\wt{J}^{\mu}_a(x)\equiv \frac{\delta\left(\delta S\right)}{\delta A^a_\mu(x)}=J^\mu_a(x)+\ldots.
\ee
We can restate this procedure in a non-Lagrangian way as a definition of a new set of ``unnormalized expectation values in the presence of $A_\mu^a$'', given by\footnote{Here $T$ denotes time-ordering and $\lan \cdot\ran$ denotes the expectation value in the vacuum state of the undeformed theory on $M=\mathbb{R}\times \Sigma$.  In general an $i\epsilon$ prescription is necessary to get a well-defined expectation value.}
\be
\lan T \mO_1\ldots \mO_n\ran_A\equiv \lan T\mO_1\ldots \mO_n e^{i\delta S}\ran.
\ee
We will be especially interested in the unnormalized expectation value of the unit operator, usually called the partition function in the presence of the background gauge field $A$:
\be\label{Zdef}
Z[A]\equiv \lan 1\ran_A=\lan Te^{i\int_M d^d x \sqrt{-g} A_\mu^a(x)\left(J^\mu_a(x)+\ldots\right)}\ran.
\ee
It should be understood here that if we view $Z$ as a map to the complex numbers, its domain allows background gauge fields for all (internal) global symmetries of the theory.  We note also that it is often convenient to consider the formal Euclidean path integral version of this quantity, 
\be
Z[A]\equiv \lan e^{-\int_M d^d x \sqrt{-g} A_\mu^a(x)\left(J^\mu_a(x)+\ldots\right)}\ran,
\ee
where now $M$ is any Riemannian manifold, perhaps requiring a spin (or pin) structure if the theory has fermionic operators. 

Background gauge fields of the form \eqref{connection1pot} are not the most general kind of background gauge fields.  In particular if $G$ is discrete, then \eqref{connection1pot} is nonsensical.  The modern notion of a gauge field configuration is formalized as a \textit{connection on a principal bundle}.  The basic idea is that we cover the spacetime manifold $M$ with a collection of open patches $U_i$, on each of which we define a ``local gauge potential'', $A_{i,\mu}$, which is a one-form taking values in the Lie algebra $\mathfrak{g}$ of $G$.  If there is a single $U$ covering all of $M$, then we revert to \eqref{connection1pot}, where $A_\mu=A_\mu^a T_a$ with  $T_a$ some basis for $\mathfrak{g}$.  We then demand that for all intersections $U_i\cap U_j$, there exist ``transition functions''
\be
g_{ij}:U_i\cap U_j\rightarrow G,
\ee
obeying
\begin{align}\nonumber
g_{ji}&=g_{ij}^{-1}\\
g_{ij}g_{jk}|_{U_i\cap U_j \cap U_k}&=g_{ik}|_{U_i\cap U_j \cap U_k},\label{gtriple}
\end{align}
such that for any $i, j$ we have\footnote{If $G$ is a matrix group then this equation makes sense as written, otherwise we define $g_{ij}A_{j,\mu}(x)g_{ij}^{-1}$ to be the pushforward of $A_{j,\mu}(x)$, viewed as a vector field on $G$, by the adjoint map $Ad_g:h\mapsto g h g^{-1}$, and we define $ -\partial_\mu g_{ij}g_{ij}^{-1}$ to be the pullback by $g_{ij}^{-1}:U_{i}\cap U_j\mapsto G$ of the Maurer-Cartan form on $G$.}
\be\label{gtrans}
A_{i,\mu}=g_{ij}A_{j,\mu}g_{ij}^{-1}-i \partial_\mu g_{ij}g_{ij}^{-1}
\ee
in $U_i\cap U_j$.  For a discrete group we must have $A_{i,\mu}=0$ in all patches, so the data of the background gauge field is just the transition functions $g_{ij}$.

Two such collections of patches and local gauge potentials, $\left(U'_{i'},A'_{i',\mu}\right)$ and $\left(U_{i},A_{i,\mu}\right)$,  are said to be \textit{gauge equivalent} if their union is ``compatible'' in the sense that there exist an additional set of transition functions $g_{ij'}$ such that together with the $g_{ij}$ and $g_{i'j'}$ they obey \eqref{gtriple}, \eqref{gtrans} for all $ij$, $ij'$, $i'j'$ pairs.  An interesting special case of such an equivalence arises when we take the $U_i$ and $U_{i'}$ to coincide, in which case gauge equivalence means the existence of a set of local gauge transformations
\be
g_i:U_i\rightarrow G
\ee
such that
\begin{align}\nonumber
A'_{i\mu}&=g_i A_{i,\mu}g_i^{-1}-i \partial_\mu g_{i}g_i^{-1}\\
g'_{ij}&=g_ig_{ij}g_{j}^{-1}.\label{patchtransform}
\end{align}
This special case is important because in fact any fixed set of contractible $U_i$ which cover $M$ are sufficient to construct a representative of every equivalence class of background gauge fields on $M$ by choosing appropriate $g_{ij}$ and $A_{i,\mu}$.\footnote{This statement is not obvious, it follows from a nontrivial theorem that there can be no nontrivial fiber bundle over a contractible base \cite{steenrod1951topology}.}  In mathematical terms the transition functions $g_{ij}$ modulo gauge equivalence define a principal $G$ bundle over $M$, while the local gauge potentials $A_{i,\mu}$ modulo gauge equivalence define a connection on that bundle.  A background gauge field which is gauge equivalent to one defined using a single patch $U=M$ is called \textit{topologically trivial}. 

\bfig
\includegraphics[height=4cm]{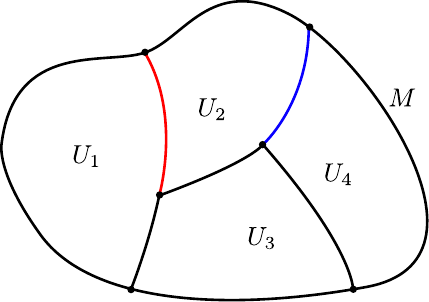}
\caption{Tiling the spacetime manifold with contractible patches, in order to turn on a general background gauge field.  The intersections $C_{ij}^{\{m\}}$ described in the text are the line segments between the dots, for example $C_{12}^{\{1\}}$ is shaded red and $C_{24}^{\{1\}}$ is shaded blue.}\label{tilefig}
\efig
Turning on a general background gauge field, possibly topologically nontrivial, for an internal global symmetry is a delicate process.  We are not aware of a standard discussion of how to do this for general $G$ in the literature, the closest we found is some comments in \cite{Gaiotto:2014kfa}.  Here we give a somewhat heuristic picture of how this can be done, expanding on the comments in \cite{Gaiotto:2014kfa}.  The basic idea is to cover $M$ with contractible patches, and then ``shrink'' the patches so that they give a tiling of $M$ via a set of \textit{closed} $U_i$ which overlap only at their boundaries.  This is illustrated in figure \ref{tilefig}.  We then define the partition function in the presence of a background gauge field (for simplicity giving the formula in Euclidean signature to avoid issues of time-ordering)
\be\label{genback}
Z[A]\equiv \lan e^{-\sum_i \int_{U_i}d^dx \sqrt{g}A_{i\mu}^a \left(J_a^\mu+\ldots\right)}\prod_{(ij)}\wt{U}_{ij}\ran,
\ee
where $A$ now stands in for the collection $\left(U_i,A_{i\mu}\right)$, $(ij)$ counts each $ij$ pair once, and the ``transition unitaries'' $\wt{U}_{ij}$ are defined via the following procedure.  First split each intersection $U_i\cap U_j$ into its connected components $C_{ij}^{\{m\}}$, on each of which we can write $g_{ij}$ as the product of a constant map $g_{ij}^{\{m\}}$ and a map whose target space is the identity component of $G$:
\be
g_{ij}(x)|_{C_{ij}^m}=g_{ij}^{\{m\}} e^{i\epsilon^a(x)T_a}.
\ee
We then define
\be\label{genback1}
\wt{U}_{ij}=\prod_m U\left(g_{ij}^{\{m\}},C_{ij}^{\{m\}}\right)\exp{\left(i \int_{C_{ij}^{\{m\}}}\epsilon^a \star J_a\right)},
\ee
where $U(g,R)$ is the codimension-one surface with boundary insertion guaranteed to exist by splittability of the global symmetry (which we here assume), and the normal vector used in defining the orientation of $C_{ij}^{\{m\}}$ is chosen to point from $i$ to $j$.\footnote{For continuous global symmetries, splittability is clearly necessary to turn on a background gauge field since a current is.  For discrete global symmetries it does not seem to be: a weaker sufficient assumption is that the junctions in figure \ref{tilefig} exist.  This follows from splittability, but is not obviously equivalent to it: due to the triple overlap condition \eqref{gtriple}, we only need junctions where the product of the $g_{ij}$ around the junction is the identity.}  The ambiguity of $U(g,R)$ at $\partial R$ means that there may be some ambiguity at the dots in figure \ref{tilefig}.   As a simple example of turning on a topologically nontrivial background gauge field, consider a theory with a $\mathbb{Z}_2$ global symmetry on the Euclidean spacetime manifold $\mathbb{S}^1\times \mathbb{R}^{d-1}$.  We can define a partition function in a nontrivial background gauge field for which there is a $-1$ holonomy around the $\mathbb{S}^1$ by evaluating the Euclidean path integral
\be
Z[A]=\lan U\left(-1,R_\theta\right)\ran,
\ee
where $R_\theta$ denotes the codimension-one submanifold at fixed angle $\theta$ on $\mathbb{S}^1$.

It is interesting to ask what happens to correlation functions of charged operators in the background defined by eq. \eqref{genback}: instead of being continuous functions on $M$, as we move from $U_i$ to $U_j$ they encounter $\wt{U}_{ij}$ and thus jump via\footnote{Note here that $i$ and $j$ label patches, the indices for the matrix multiplication in equation \eqref{Dmap} are here suppressed.}
\be
\mO_i=D\left(g_{ij}\right) \mO_j.
\ee
Geometrically this is described by saying that the operators are \textit{sections} of a vector bundle associated to the principal bundle defined by the $g_{ij}$.

\subsection{'t Hooft anomalies}\label{anomalysec}
We have now defined the partition function $Z[A]$ of a quantum field theory with a global symmetry group in the presence of an arbitrary background gauge field.  But there were two potential sources of ambiguity in this definition: the choice of higher order terms in equation \eqref{connection1pot}, and the choice of how the intersections of boundaries in \eqref{genback1}, shown as dots in figure \ref{tilefig}, are regulated.  It would be nice to have some sort of principle to restrict these choices, and in fact there is a very natural choice: we can try to arrange so that the partition function $Z[A]$ depends only on the gauge equivalence class of the background gauge fields $A$, not on their patch-wise construction.  It turns out however that sometimes this is not possible \cite{tHooft:1979rat,Frishman:1980dq,Coleman:1982yg,AlvarezGaume:1983ig}:\footnote{The term ``'t Hooft anomaly'' is a modern invention \cite{Kapustin:2014lwa}, to distinguish 't Hooft anomalies from related phenomena which arise when we attempt to make some of the background gauge fields dynamical in a theory with an 't Hooft anomaly \cite{Adler:1969gk,Bell:1969ts,Gross:1972pv}. 't Hooft has also done famous work with these related phenomena \cite{tHooft:1976rip}, so the name is a bit unfortunate.}
\begin{mydef}
A quantum field theory has an \textit{'t Hooft anomaly} if there is no choice of higher order terms in equation \eqref{connection1pot} and regulation of boundary intersections in equation \eqref{genback1} such that $Z[A]$ is a gauge-invariant functional of the background gauge fields for all global symmetries.
\end{mydef}
In this definition we also allow $A$ to include background gauge fields for spacetime symmetries, namely studying the theory on a nontrivial spacetime manifold $M$ with a nontrivial metric $g$. We can cast 't Hooft anomalies in a more conventional light when $G$ is continuous by considering the effect of infinitesimal local gauge transformations $A'_{i\mu}=A_{i\mu}+D_\mu \epsilon_i(x)$ on the partition function \eqref{genback}.  Choosing $\epsilon_i$ to vanish at the boundary of $U_i$, we see that invariance of $Z[A]$ requires
\be\label{DJeq}
D_\mu \wt{J}^{\mu}_a\equiv\partial_\mu \wt{J}^\mu_a+C^{b}_{\phantom{b}ac}A^c_\mu \wt{J}_b^\mu=0,
\ee
where $\wt{J}_a^\mu$ was defined in \eqref{Jdef}.  Moreover if $\epsilon_i$ does not vanish at the boundary of $U_i$ then it will combine with the gauge transformation of $\wt{U}_{ij}$ such that $Z[A]$ is still gauge-invariant, at least up to possible issues at the edges.  Thus \eqref{DJeq} is a necessary condition to avoid an 't Hooft anomaly.\footnote{That it is not sufficient can be seen by the existence of ``non-infinitesimal'' 't Hooft anomalies such as those in discrete symmetries or the Witten anomaly in the $SU(2)$ global symmetry of an odd number of Majorana doublets \cite{Witten:1982fp}.}  

We emphasize that the presence of an 't Hooft anomaly is \textit{not} an inconsistency of a quantum field theory; there are many respectable quantum field theories with 't Hooft anomalies. For example consider the chiral anomaly of a free complex Dirac Fermion in $1+1$ dimensional Minkowski space:
\be
S=-i\int d^2 x \ol{\psi}\slashed{\partial}\psi.
\ee
  This theory has two $U(1)$ global symmetries, $\psi'=e^{i\theta}\psi$ and $\psi'=e^{i\theta \gamma^3}\psi$, with conserved currents\footnote{Note that $v$ and $p$ here are labels, not indices.  They stand for ``vector'' and ``pseudovector''.  }
\begin{align}
J_v^\mu&=-\ol{\psi}\gamma^\mu \psi\\
J_p^\mu&=-\ol{\psi}\gamma^\mu \gamma^3\psi,
\end{align}
and we can easily turn on background gauge fields for both:
\be
S=-i\int d^2 x \ol{\psi}\gamma^\mu \left(\partial_\mu-iA_\mu^v-iA_\mu^p\gamma^3\right)\psi.
\ee
A simple Feynman diagram calculation shows that, using dimensional regularization, these currents obey
\begin{align}\nonumber
\partial_\mu J^{\mu}_v&=\partial_\mu \wt{J}^{\mu}_v=0\\
\partial_\mu J^\mu_p&=\partial_\mu \wt{J}^\mu_p=-\frac{1}{2\pi}\epsilon^{\mu\nu}F^v_{\mu\nu},\label{dimregJ}
\end{align}
where $\epsilon^{\mu\nu}$ is antisymmetric with $\epsilon^{01}=-1$ (since we are in Lorentzian signature), $F^v_{\mu\nu}=\partial_\mu A^v_\nu-\partial_\nu A^v_\mu$, and we have used that there is no distinction between tilded and untilded currents since the action is linear in $A^v$ and $A^p$.
This nonconversation of $\wt{J}^\mu_p$ could be removed by modifying the action to include a term 
\be\label{contact}
\delta S=-\frac{1}{\pi}\int d^2 x \epsilon^{\mu\nu}A^v_\mu A^p_\nu,
\ee
which is an example of changing the $\ldots$ terms in \eqref{connection1pot}, but now the current conservation equations become
\begin{align}\nonumber
\partial_\mu\wt{J}^\mu_v&=\partial_\mu \left(J_v^\mu-\frac{1}{\pi}\epsilon^{\mu\nu}A^p_\nu\right)=-\frac{1}{2\pi}\epsilon^{\mu\nu}F^p_{\mu\nu}\\\label{altJcon}
\partial_\mu\wt{J}^\mu_p&=\partial_\mu \left(J_p^\mu+\frac{1}{\pi}\epsilon^{\mu\nu}A^v_\nu\right)=0,
\end{align}
so we have saved $J_p$ only at the expense of $J_v$.  Thus this theory has an 't Hooft anomaly: in the presence of background gauge fields we cannot maintain the gauge invariance of the partition function.  Note that when $A^v=A^p=0$, our modification \eqref{contact} does not affect correlation functions at finite separation but it does change the contact terms in the two-point functions of the currents; this is one manifestation of the ``short-distance'' nature of 't Hooft anomalies.  Different choices of regulator lead to different results for these contact terms, and indeed the contact terms for two different regulators will differ only by what is obtained by adding some local term such as \eqref{contact} to the action \cite{Closset:2012vp}.  If we stick to our original choice of dimensional regularization, which led to \eqref{dimregJ}, then from \eqref{Zdef} we see that the partition function transforms in the following manner:
\be\label{2dZ}
Z[A^v+d\Lambda^v,A^p+d\Lambda^p]=e^{\frac{i}{2\pi}\int d^2 x \Lambda^p \epsilon^{\mu\nu}F^v_{\mu\nu}}Z[A^v,A^p].
\ee

't Hooft anomalies have many important implications.  Perhaps the most obvious is that in a theory with an 't Hooft anomaly, it is not possible to consistently make all of the background gauge fields dynamical \cite{Gross:1972pv}.  This would be accomplished by integrating $Z[A]$ over gauge field configurations, perhaps weighted by additional gauge-invariant local terms, but if $Z[A]$ is not gauge-invariant then this leads to real inconsistencies such as violations of unitarity.  For example in the standard model of particle physics, since we want to introduce dynamical gauge fields for the $\left(SU(3)\times SU(2)\times U(1)\right)/\mathbb{Z}_6$ global symmetry of the ``un-gauged'' theory of quarks and leptons, it is essential that there is no 't Hooft anomaly in this symmetry \cite{Weinberg:1996kr}.\footnote{The gauge group of the standard model is most conservatively taken to be $\left(SU(3)\times SU(2)\times U(1)\right)/\mathbb{Z}_6$, since this is the group which acts faithfully on the known quarks and leptons.  This is not widely appreciated, but the logic is similar to that by which we assume that the gauge group of electromagnetism is $U(1)$ instead of $\mathbb{R}$: otherwise the observed quantization of charge would look like a conspiracy.  Future discoveries of more charged particles in new representations could change this situation however, so one can also say that we do not yet really know the gauge group of the standard model (see \cite{Tong:2017oea} for a recent discussion that takes this point of view).  We discuss this more in section \ref{gaugetopsec} below.}  

A less severe consequence of 't Hooft anomalies is that in the presence of background gauge fields, a global symmetry may be broken even if the currents for those background gauge fields are neutral under the symmetry \cite{tHooft:1976rip}.  For example $J_v$ and $J_p$ are both neutral under both of the global symmetries they generate, but nonetheless \eqref{dimregJ} tells us that $J_p$ is not conserved in the presence of a background gauge field for $J_v$.  We can rewrite \eqref{dimregJ} using differential forms as
\be
d\star J_p=\frac{1}{\pi} F^v,
\ee
which seems to immediately imply that the vector $U(1)$ charge
\be
Q_p\equiv \int_\Sigma \star J_p
\ee
is not conserved in the presence of this background field.  The truth however is more complicated: locally we have $F^v=d A^v$, so the quantity
\be\label{modQ}
\hat{Q}_p\equiv \int_\Sigma \left(\star J_p -\frac{1}{\pi}A^v\right)
\ee 
acts in the same way on all operators but appears to be conserved.  Indeed $\star J_p-\frac{1}{\pi}A^v$ is precisely the alternative current $\wt{J}_p$ which appeared in \eqref{altJcon}, and which was indeed covariantly conserved.  It is not mutually local with $J_v$ unless we similarly modify $J_v$ as in \eqref{altJcon}, which would lead to a nonconservation of $J_v$, but it might not seem like there is any problem with the charge $\hat{Q}_p$ defined by equation \eqref{modQ}.  In fact there is a problem with $\hat{Q}_p$, but it does not appear until we allow $A^v$ to be topologically nontrivial \cite{tHooft:1976rip,tHooft:1986ooh}.  First recall that boundary conditions which require all gauge-invariant operators to go to zero at infinity in $\mathbb{R}^2$ allow us to interpret the spacetime as being topologically $\mathbb{S}^2$, which can support topologically nontrivial $U(1)$ gauge field configurations \cite{Coleman:1978ae}.  One family of such configurations is the Wu-Yang monopoles \cite{Wu:1975es}
\begin{align}\nonumber
A_N&=\frac{n}{2}(1-\cos \theta)d\phi  \qquad 0\leq\theta\leq \pi/2\\
A_S&=-\frac{n}{2}(1+\cos \theta)d\phi \qquad \pi/2 \leq \theta \leq \pi,\label{wymonopole}
\end{align}
where the ``northern'' and ``southern'' patches are related at the equator by the transition function 
\be\label{monopoleswitch}
g_{NS}=e^{in\phi}
\ee 
as in equation \eqref{gtrans}.  $n$ is required to be an integer in order for $g_{NS}:\mathbb{S}^1\to U(1)$ to be a smooth map: it counts the number of magnetic flux units through $\mathbb{S}^2$.  The key point is then that if we turn on a Wu-Yang monopole background for $A^v$, the charge $\hat{Q}_p$ really needs to be defined separately in the northern and southern patches. The transformation \eqref{monopoleswitch} then leads to a nonconservation
\be
\hat{Q}_{p,N}=\hat{Q}_{p,S}-2n
\ee 
as we move the charge operator from the southern to the northern hemisphere.  The symmetry operator
\be
U\left(e^{i\theta},\mathbb{S}^2\right)\equiv e^{i\theta \hat{Q}_p}
\ee
is therefore not conserved, violating condition (d) of our definition \ref{globaldef}, so the $U(1)$ pseudovector symmetry has indeed been explicitly broken by the background gauge field for the $U(1)$ vector symmetry.\footnote{This c-number nonconservation of $\hat{Q}_p$ may seem innocuous, but it has real consequences for the selection rules obeyed by correlation functions.  Indeed a vacuum expectation value in this background of a product of operators charged under the pseudovector $U(1)$ symmetry will vanish unless the sum of their charges is equal to $2n$, while this sum would have needed to be zero to get a nonvanishing expectation value if the symmetry had been preserved.}  Moreover note that if we make the vector gauge field $A^v$ dynamical, these configurations will be unavoidable and the pseudovector symmetry will be broken altogether: this is a two-dimensional analogue of 't Hooft's famous discovery that instantons destroy the apparent axial isospin symmetry $u'=e^{i\theta\gamma_5}u$, $d'=e^{i\theta \gamma_5 }d$ of massless quantum chromodynamics, as well as the independent baryon and lepton number symmetries of the standard model of particle physics  \cite{tHooft:1976rip,tHooft:1986ooh} ($B-L$ is still a symmetry).

In this paper our primary concern with 't Hooft anomalies is that we need to make sure that our discussion of CFT global symmetries is not corrupted by the fact that we mostly work on the spacetime $\mathbb{R}\times\mathbb{S}^{d-1}$, with a round metric on the spatial $\mathbb{S}^{d-1}$.  We can view this metric as a background gauge field for the CFT stress tensor, so we are asking if there can be 't Hooft anomalies where this background gauge field spoils the CFT global symmetries we consider.  It is certainly possible for a background metric to spoil a global symmetry, for example a single Dirac fermion in $(3+1)$  dimensions has a $U(1)$ global symmetry with current
\be
J^\mu_p=-\ol{\psi}\gamma^\mu \gamma^5 \psi,
\ee
which obeys (assuming we regulate to preserve conservation of the stress tensor) \cite{Delbourgo:1972xb}
\be
\nabla_\mu J_p^\mu\propto \epsilon^{\mu\nu\alpha\beta}R_{\mu\nu\sigma\rho}R_{\alpha\beta}^{\phantom{\alpha\beta}\sigma\rho}.\label{gravanom}
\ee
It is easy to see however that this particular anomaly vanishes on $\mathbb{R}\times\mathbb{S}^{d-1}$, or more generally on $\mathbb{R}\times \Sigma$ for any $\Sigma$ provided that the spatial metric on $\Sigma$ is time-independent and there are no cross terms.  In fact at least for $\mathbb{R}\times\mathbb{S}^{d-1}$, this observation holds for any global symmetry in any conformal field theory.    This follows because Euclidean $\mathbb{R}\times \mathbb{S}^{d-1}$ is Weyl equivalent to Euclidean $\mathbb{R}^d$, via
\be
d\tau^2+d\Omega_{d-1}^2=\frac{1}{r^2}\left(dr^2+r^2d\Omega_{d-1}^2\right)
\ee
with $r=e^\tau$.  We can then simply \textit{define} the CFT on $\mathbb{R}\times \mathbb{S}^{d-1}$ via the Weyl transformation\footnote{We thank Z. Komargodski for a useful discussion of this definition, see some relevant comments in \cite{Gomis:2015yaa}.  In particular note that we may not be able to arrange for this equation to hold at coincident points, but our argument does not require it to.} 
\be\label{weyl}
\lan \mO_1(x_1)\ldots \mO_n(x_n)\ran_{e^{2\omega}g_{\mu\nu}}=e^{-\Delta_1\omega(x_1)-\ldots - \Delta_n \omega(x_n)+A[g,\omega]}\lan\mO_1(x_1)\ldots \mO_n(x_n)\ran_{g_{\mu\nu}}.
\ee
Here the $\mO_i$ are primary operators at distinct points $x_i$; this equation reflects that we have renormalized them to be Weyl tensors.  $\lan\cdot\ran_{g_{\mu\nu}}$ denotes the Euclidean path integral with background metric $g_{\mu\nu}$, and the factor $A[g,\omega]$ represents the standard 't Hooft anomaly in Weyl symmetry.  For example in a $1+1$ dimensional  CFT with Virasoro central charge $c$, we have \cite{Polchinski:1998rq}
\be
A[g,\omega]=\frac{c}{24\pi}\int d^2 x \sqrt{g}\left(\omega R+g^{\mu\nu}\partial_\mu\omega \partial_\nu\omega \right).
\ee
The correlation functions on $\mathbb{R}\times \mathbb{S}^{d-1}$ thus obey all the same selection rules from global symmetries that they do in flat space, with the symmetry operators $U(g,\mathbb{S}^{d-1})$ defined to act on local operators using the same matrix \eqref{Dmap} as in flat space (the Weyl anomaly does not spoil this since it is a $c$-number).  These statements are preserved under analytic continuation to Lorentzian signature, so therefore no global symmetry in a CFT can be violated purely by putting the theory onto Lorentzian $\mathbb{R}\times \mathbb{S}^{d-1}$.

\subsection{ABJ anomalies and splittability}\label{anomsplit}
't Hooft anomalies can be used to generate additional examples of unsplittable symmetries in quantum field theory.  In particular we can generate counterexamples to the naive Noether conjecture which do not rely on free or decoupled theories, and which are thus perhaps of more physical interest.\footnote{To avoid confusion we  emphasize here that the presence of an 't Hooft anomaly in a symmetry does not \textit{imply} that that symmetry is unsplittable.  For example the $U(N)$ global symmetry we describe momentarily has an 't Hooft anomaly, but it has a perfectly good set of Noether currents \eqref{4dJ} and is therefore splittable on any manifold we like.  In condensed matter language, splittability of a symmetry is a different question from whether or not the symmetry is ``on-site''.  Our unsplittable symmetries do not arise until we make some subset of the background gauge fields dynamical.}  The two examples of unsplittable symmetries that we will discuss here arise from the 3+1 dimensional version of the chiral anomaly we discussed in the previous section \cite{Bell:1969ts,Adler:1969gk}.  We will also use this anomaly in the next subsection, so we first briefly recall how it works in some generality.\footnote{This is of course textbook material, we apologize for presenting it in some detail nonetheless.  We have found the textbook treatments of this subject to be unclear at best, and our perspective has some novelty.  Readers who make it to the end of this subsection will be rewarded with an improved interpretation of the venerable process $\pi^0\rightarrow \gamma \, \gamma$ in the standard model of particle physics.}

Consider the theory of $N$ free left-handed Weyl fermions $\psi_i$, with Lagrangian
\be
\mathcal{L}=-i\sum_{i=1}^N \ol{\psi}_i \slashed{\partial}P_L\psi_i,
\ee
where 
\be
P_L\equiv \frac{1+\gamma^5}{2}.
\ee
There is a $U(N)$ global symmetry rotating the $\psi_i$ amongst each other which has an 't Hooft anomaly. The currents for this symmetry are
\be\label{4dJ}
J_a^\mu=-\sum_{ij}\ol{\psi}_i \left(\gamma^\mu P_L\otimes (T_a)_{ij}\right)\psi_j,
\ee
where $(T_a)_{ij}$ are the Lie algebra matrices of $U(N)$, and if we regulate this theory in a way that treats all these currents equally then in the presence of background gauge fields $A_\mu^a$ we have the anomalous current conservation equation \cite{Weinberg:1996kr}\footnote{There are sign errors in the derivation of \eqref{4danom} in \cite{Weinberg:1996kr}, but since there are an even number the final result is correct.  Our final sign is the same as that in \cite{Weinberg:1996kr} even though our currents \eqref{4dJ} differ from his by a sign, because we have taken $\epsilon^{0123}=-1$.}
\be\label{4danom}
D_\mu J^\mu_a=-\frac{D_{abc}}{24\pi^2}\epsilon^{\lambda\rho\sigma\nu}\partial_\lambda A^b_\rho \partial_\sigma A_\nu^c+\ldots,
\ee
where 
\be\label{Ddef}
D_{abc}\equiv \frac{1}{2}\Tr\left(\{T_a,T_b\}T_c\right),
\ee
and ``$\ldots$'' denotes higher order terms in $A$ which can be determined by symmetry and the Wess-Zumino consistency conditions \cite{Weinberg:1996kr}.  We can then play the game of adding local terms to the action, analogous to eq. \eqref{contact} above, to see how much of the $U(N)$ symmetry we can restore.  The $D_{abc}$ are in general not zero, and it is not hard to see that we will not be able to restore the full $U(N)$ symmetry in the presence of arbitrary background gauge fields, hence the 't Hooft anomaly.  It does turn out however that for any triple of distinct currents with $D_{abc}\neq 0$, we can arrange so that only one of them has an anomalous contribution to its conservation equation from background gauge fields for other two. For triples where two of the currents are identical and $D_{aab}\neq 0$, we can pick whether $J_a^\mu$ gets an anomalous contribution to its conservation equation from $A^a_\mu$ and $A^b_\mu$ or $J_b^\mu$ gets an anomalous contribution to its conservation equation from $A^a_\mu$ and $A^a_\mu$. For triples where all three currents are identical and $D_{aaa}\neq 0$, there is no hope and $J_a^\mu$ cannot be conserved in the presence of a background gauge field for itself.  These choices can be made independently for each triple, since they correspond to adding different local terms to the action.  

The original example of the four-dimensional chiral anomaly is in the theory of a free massless Dirac fermion, with Lagrangian \eqref{dirac4}.  As in two dimensions, in $\mathbb{R}^4$ with no background fields this theory has two conserved currents:
\begin{align}\nonumber
J_v^\mu&\equiv-\ol{\psi}\gamma^\mu \psi\\
J_p^\mu&\equiv -\ol{\psi}\gamma^\mu \gamma^5 \psi.
\end{align}
We can view this Dirac fermion as two left-handed Weyl fermions, in 
which case the anomaly coefficients \eqref{Ddef} are given by $D_{vvv}=D_{vpp}=0$, $D_{vvp}=D_{ppp}=2$.  We will consider only background gauge fields for $J_v^\mu$, so the only relevant anomaly coefficient is $D_{vvp}$.  Since we will want to make these gauge fields dynamical, for consistency we must add local terms to the action to modify \eqref{4danom} so that $J_v^\mu$ is conserved. After doing so, we arrive at the standard ABJ anomaly \cite{Bell:1969ts,Adler:1969gk}
\be
\partial_\mu J^\mu_p=-\frac{1}{16\pi^2}\epsilon^{\mu\nu\alpha\beta}F_{\mu\nu}^v F_{\alpha\beta}^v,
\ee
or in differential form notation
\be\label{4dformanom}
d\star J_p=\frac{1}{4\pi^2}F^v\wedge F^v.
\ee

So far this is all similar to what happened with the chiral anomaly in $1+1$ dimensions, but now an interesting difference arises: in $3+1$ dimensions we claim that, despite the 't Hooft anomaly \eqref{4dformanom}, chiral symmetry is preserved in the presence of any background $A^v$ gauge field on $\mathbb{R}^4$!  The reason is simple: there are no topologically non-trivial $U(1)$ gauge field configurations on $\mathbb{S}^4$ (and thus $\mathbb{R}^4$), unlike $\mathbb{S}^2$ (and $\mathbb{R}^2$) where there are,
so the ``improved'' current
\be
\star\hat{J}_p\equiv \star J_p-\frac{1}{4\pi^2}A^v\wedge F^v
\ee
integrates to an ``improved'' charge
\be\label{chiralcharge}
\hat{Q}_p\equiv \int_{\mathbb{R}^3}\star \hat{J}_p,
\ee
which acts in the same way as $\int_\Sigma \star J_p$ on all local operators, but is conserved on $\mathbb{R}^4$ for any background gauge field $A^v$.  

At first we might therefore think that this chiral symmetry will persist even if we now make $A^v$ dynamical.  We will now see however that the truth is more subtle.  Once $A^v$ is dynamical, the charge \eqref{chiralcharge} will indeed continue to exist as a gauge-invariant operator (this is because there are no topologically non-trivial gauge transformations on $\mathbb{S}^3$ since $\pi_3(U(1))=0$), and it will commute with the stress tensor.  Moreover it manifestly seems to act locally on local operators, so it seems we have satisfied all of the criteria of definition \ref{globaldef} for a global symmetry.  In fact however the charge \eqref{chiralcharge} fails condition (2) of definition \ref{globaldef}: it does not preserve the local algebra $\mathcal{A}[R]$ for all regions $R\subset \mathbb{R}^3$.  The problem is the following: now that the gauge field is dynamical, we need to check if the charge \eqref{chiralcharge} acts locally on the new operators we can construct from it.\footnote{We thank Edward Witten for pointing out that the electromagnetic part of this charge has a simple interpretation: in free Maxwell theory it is proportional to the helicity.  Thus conservation of $\hat{Q}_p$ says that although chiral symmetry is explicitly broken, the chiral charge plus a multiple of the helicity is conserved.}  This will obviously be the case for operators which are locally constructed out of $A^v$, such as the field strength $F^v$ and the Wilson loops $e^{in\int_C A^v}$, but since the gauge group is $U(1)$ we also need to check if it acts locally on 't Hooft loops. We will now show that it doesn't.  

't Hooft loops are an additional set of line operators in $U(1)$ gauge theory in four spacetime dimensions, defined by removing a narrow tube out of the path integral around the closed line $C$ where the operator will be defined and imposing certain boundary conditions. This tube has boundary $\mathbb{S}^2\times \mathbb{S}^1$, and the 't Hooft loop is defined by requiring that at this boundary the gauge field on $\mathbb{S}^2$ is given by the Wu-Yang monopole \eqref{wymonopole} \cite{Kapustin:2005py}.  Since this may seem a bit abstract, we note that in free $U(1)$ gauge theory an 't Hooft loop on a contractible curve $C=\partial D$ can also be represented as
\be\label{thooftfree}
T_n(C)\equiv e^{\frac{2\pi in}{q^2}\int_D \star F}.
\ee
This may not look like a loop operator, but we note the obvious analogy to the Wilson loop:
\be\label{wilson}
W_m(C)\equiv e^{im\int_{\partial D} A}=e^{i m \int_D F}.
\ee
Indeed $n$ and $m$ must be integers precisely so that these two lines are mutually local, meaning that they commute at spacelike separation even if they are linked in space (this is one way of understanding Dirac quantization). 

The action of the charge \eqref{chiralcharge} on an 't Hooft line can be computed in several ways.  In free $U(1)$ Maxwell theory we may simply study the commutator of \eqref{chiralcharge} and \eqref{thooftfree}, which shows without too much difficulty that the would-be symmetry generated by \eqref{chiralcharge} mixes the 't Hooft line with an improperly quantized Wilson loop, which then must be understood as a surface operator on a disc $D$, as in the second equality in \eqref{wilson}, rather than a line operator on $\partial D$.  We will instead obtain this result using the boundary-condition definition of $T_n(C)$, since this argument will be correct also in interacting theories such as the one we are studying.  If we view the 't Hooft line as an insertion into the Euclidean path integral on $S^4$, we can compute the action of chiral symmetry on it by surrounding it with a symmetry insertion on $\mathbb{S}^2\times \mathbb{S}^1$, constructed as in figure \ref{gluingfig} by approaching the line from above and below by symmetry operators on $\mathbb{S}^3$.  If we remove the small tube $B^3\times \mathbb{S}^1$ surrounding the line from $\mathbb{S}^4$, the remaining space has topology $\mathbb{S}^2\times B^2$ (these are glued at their mutual boundary $\mathbb{S}^2\times \mathbb{S}^1$).  This space allows nontrivial $U(1)$ bundles, since we can put a Wu-Yang monopole on the $\mathbb{S}^2$, and indeed the boundary condition from the 't Hooft line tells us that we must do so.  We therefore need to split the remaining space into ``northern'' and ``southern'' regions with topology $B^2\times B^2$.  The are glued together on a spatial region $\mathbb{S}^1\times B^2$, which is the $3+1$ dimensional version of the shaded blue regions in figure \ref{gluingfig}.  The gauge fields in the two regions differ by
\be
A_N^v=A_S^v+n d\phi,
\ee
where $\phi$ is the angular coordinate on the $\mathbb{S}^1$ and $n$ is the strength of the 't Hooft line, so the difference in the charge approached from above or below contains a term 
\begin{align}\nonumber
\hat{Q}_{p,N}-\hat{Q}_{p,S}&\supset-\frac{n N_f}{4\pi^2}\int_{\mathbb{S}^1\times B^2}d\phi\wedge F^v\\
&=-\frac{nN_f}{2\pi}\int_{B^2}F^v.
\end{align}
Here for later convenience we have generalized to $N_f$ Dirac fermions instead of just one, so now $D_{pvv}=2N_f$, and in evaluating the integral we have used that $\int_{B^2} F^v$ is independent of $\phi$. The other terms in $\hat{Q}_{p,N}-\hat{Q}_{p,S}$ are integrals over the upper and lower pieces of the $\mathbb{S}^2\times \mathbb{S}^1$ surrounding the loop, and are those localized near it. Therefore we see that the symmetry transformed operator
\be
T_n'(C)=e^{-i\theta \hat{Q}}T_n(C) e^{i\theta \hat{Q}}
\ee
includes a potentially nonlocal factor
\be
e^{i\frac{n N_f }{2\pi}\theta\int_{B_2} F^v},
\ee
where $B_2$ is any disc whose boundary is $C$.  If $\frac{n N_f \theta}{2\pi}$ is an integer then this will be a Wilson loop on $C$ written as in \eqref{wilson}, but otherwise this will be a disc operator with nontrivial support throughout $B^2$.  Therefore we see that only the $\mathbb{Z}_{N_f}$ subgroup of the $U(1)$ chiral symmetry generated by $\hat{Q}_p$ actually gives a good global symmetry which acts locally on 't Hooft lines.  

What then does this have to do with splittability?  We will now argue that this remaining $\mathbb{Z}_{N_f}$ symmetry is not splittable on $\Sigma=\mathbb{S}^2\times\mathbb{S}^1$, giving us another example of an unsplittable symmetry.  The analysis is quite similar to our discussion of the $\mathbb{R}\times\mathbb{R}$ gauge theory in section \ref{splitexsec}, so we will be brief.  The basic point is that our ``improved'' charge $\hat{Q}_p$ is not gauge invariant if we restrict it to a spatial subregion $R$.  Indeed if we define
\be
\hat{Q}_p(R)\equiv \int_R\left(\star J_p-\frac{N_f}{4\pi^2}A^v\wedge F^v\right)
\ee 
we have the gauge transformation
\be
\hat{Q}'_p(R)=\hat{Q}_p(R)-\frac{N_f}{4\pi^2}\int_{\partial R}\lambda^v F^v.
\ee
We encourage the reader to compare this equation to equation \eqref{Qinv}: they are almost identical except that we have gotten rid of some indices and exchanged $F$ and $\star F$.  Therefore on $\Rd$, or more generally on any spacetime manifold $M$ with $H_{d-2}(M)=0$, we can define an ``further improved'' localized charge
\be
\wt{\hat{Q}}_p(R)\equiv \hat{Q}_p(R)+\frac{N_f}{4\pi^2}\int_{\partial R}I F,
\ee
which is gauge-invariant and which will act in the same way on operators in $R$ and its complement once we exponentiate to get an element of $\mathbb{Z}_{N_f}$.  Here $I$ is a Wilson line integrated from a reference point $x_0$ on each connected component of $\partial R$ to the integration point $x$, as in equation \eqref{Idef}.  As before, this gauge invariance requires each connected component of the boundary to be contractible, since otherwise there could be components where $\int F \neq 0$.  Inspired by our discussion of the $\mathbb{R}\times \mathbb{R}$ theory, we may then study this theory on $\Sigma=\mathbb{S}^2\times \mathbb{S}^1$.  We may then run the same argument before, with $\int_{S^2} F$ replacing $\int_{S^2} \star F$, to conclude that the split property does not hold and the $\mathbb{Z}_{N_f}$ global symmetry is not splittable.  The unbreakable line operators which are to blame are now the 't Hooft lines.  

Finally we observe that we can use a similar mechanism to generate another example of a quantum field theory with a continuous global symmetry that has no Noether current; this time the theory will not be free.  The idea is simple: we consider exactly the same theory we have been discussing so far, but now we take the gauge group to be $\mathbb{R}$ instead of $U(1)$.  There are no longer 't Hooft lines, so the full $U(1)$ chiral symmetry is now preserved.  Moreover since we now have $\int_S F=0$ for any submanifold $S$ whatsoever, this symmetry is splittable on \textit{any} manifold.  But it nonetheless doesn't have a Noether current!\footnote{Although chiral symmetry is now splittable on any manifold, the theory with gauge group $\mathbb{R}$ still does not obey the split property on $\mathbb{S}^2\times \mathbb{S}^1$; the unbreakable lines are now the Wilson lines of fractional charge.  It thus is not a counterexample to our conjecture that theories which obey the split property on all manifolds should obey the Noether conjecture.}  Why not?  Because if it did, then we could use this Noether current in the case with gauge group $U(1)$ as well, since the set of local operators for the $U(1)$ gauge theory and the $\mathbb{R}$ gauge theory are exactly the same (more on this in section \ref{gaugetopsec} below), and this would contradict the fact that the $\mathbb{Z}_{N_F}$ subgroup of chiral symmetry which is preserved in the $U(1)$ case is not splittable on $\mathbb{S}^2\times \mathbb{S}^1$.  We find this to be quite remarkable: the existence of a Noether current is obstructed by features of a \textit{different} quantum field theory!  Moreover in that theory, with gauge group $U(1)$, we have another remarkable feature: all correlation functions not involving 't Hooft lines, and all scattering matrix elements not involving magnetic monopoles (if there are any) obey with complete precision the selection rules of a $U(1)$ global symmetry, \textit{even though no such symmetry exists}.

This analysis has interesting implications for the interpretation of the decay $\pi^0\rightarrow \gamma \, \gamma$ in the standard model of particle physics.  The traditional explanation of this decay is that the symmetry $u'=e^{i\theta \gamma^5}u$, $d'=e^{-i\theta \gamma^5}d$ of QCD with massless up and down quarks is explicitly broken by electromagnetism due to the anomaly \eqref{4dformanom}, see eg \cite{Weinberg:1996kr}, but we at least were never satisfied with this explanation for the following reason: if the symmetry is explicitly broken by the anomaly, why does it have a Goldstone boson (the $\pi^0$) in the first place?  Shouldn't explicit breaking of the symmetry give a mass to the $\pi^0$ even when the up and down quarks are massless?  The resolution of this puzzle is the following: we can choose to interpret the gauge group of electromagnetism as $\mathbb{R}$, and if we do then we indeed have a genuine $U(1)$ global symmetry generated by a charge $\hat{Q}_p=\int_{\mathbb{R}^3}\star\hat{J}_p$, with $\hat{J}_p$ now defined by\footnote{The anomaly coefficient is $D_{pvv}$ is still two, since $3\left(2\left(\frac{2}{3}\right)^2-2\left(\frac{1}{3}\right)^2\right)=2$.}
\begin{align}\nonumber
J_p^\mu&\equiv-\ol{u}\gamma^\mu\gamma^5u+\ol{d}\gamma^\mu \gamma^5 d\\
\star\hat{J}_p&\equiv\star J_p-\frac{1}{4\pi^2}A^v\wedge F^v\label{UVcurrent}.
\end{align}
This symmetry is spontaneously broken by the dynamics of QCD, and so it has a Goldstone boson, the $\pi^0$.  This is clear in the effective action for the pion, 
\be\label{piact}
S=-\int_{\mathbb{R}^4}\left(\frac{1}{2}d \pi^0\wedge\star d\pi^0+\frac{1}{4\pi^2}\frac{\pi^0}{f_\pi}F\wedge F\right),
\ee
which has a global symmetry $\pi^{0\,\prime}=\pi^0+f_\pi \epsilon$.  The Noether current for this symmetry that we can derive from this low-energy action, as in  \eqref{currentdef}, is
\be
\star \hat{J}_p=f_\pi \star d\pi^0-\frac{1}{4\pi^2}A^v\wedge F^v,
\ee
which indeed is not gauge-invariant, and in precisely the same way as the ``UV'' description \eqref{UVcurrent} of this current. Thus the $\pi^0$ is indeed the Goldstone boson of a perfectly good global symmetry, it just isn't quite the putative global symmetry we started with. The explanation of its ``surprisingly large'' decay rate is \textit{not} that the symmetry for which it is the Goldstone boson is explicitly broken by the anomaly, instead it is that this symmetry does not have (or need) a gauge-invariant Noether current: it is a counterexample to the naive Noether conjecture \ref{naiveN}, and this is what allows the second term in the action \eqref{piact}.\footnote{We remind the reader that this second term is what leads to the decay $\pi_0\to\gamma \gamma$ once quark masses are added, when $m_u=m_d=0$ this decay is not allowed kinematically but we can use the coefficient of $\pi^0 F \wedge F$ as a stand-in.}  We may then observe that if we revert to viewing the gauge group of electromagnetism as $U(1)$, none of the above conclusions can change so they must be true there as well even though our improved chiral symmetry charge $\hat{Q}_p$ now acts badly on 't Hooft lines.  It is instructive to compare this to another possible global symmetry of QCD with two massless quarks, $u'=e^{i\theta\gamma^5}u$, $d'=e^{i\theta \gamma^5}d$.  Prior to gauging $SU(3)$, this is indeed a global symmetry, with an 't Hooft anomaly $d\star J\propto G\wedge G$ where $G$ is the background gluon field.  Once the gluons are dynamical, this anomaly causes instantons to break this symmetry explicitly, just as monopoles did for $1+1$ dimensional chiral symmetry in the previous subsection, and the ``would-be Goldstone boson'', the $\eta'$, is indeed massive \cite{tHooft:1976rip}.  The distinction between the two cases arises because $\pi_3(U(1))=0$ while $\pi_3(SU(3))=\mathbb{Z}$.

\subsection{Towards a classification of 't Hooft anomalies}\label{anomalysec2}
We have discussed background gauge fields and 't Hooft anomalies at some length now, and we already have everything we need for our AdS/CFT arguments in the following sections.  't Hooft anomalies are such a hot topic these days however that we feel it appropriate to make a few more comments which may be of more general interest.  These comments are motivated by occasional statements we have heard that the classification of SPT phases in \cite{Chen:2011pg} based on the machinery of \cite{Dijkgraaf:1989pz}, together with some mathematical results from \cite{Freed:2014iua,Freed:2014eja,Freed:2016rqq} (see also \cite{Kapustin:2014zva}), result in a classification of 't Hooft anomalies for internal symmetries.  We argue here that the truth is more subtle, pointing out several gaps in this would-be argument.  Two of these gaps lead to explicit counterexamples to the putative classification, and thus require additional assumptions to exclude them.   A third gap we suspect can be filled, and we suggest a strategy for doing so.   The gaps are the following:
\bi
\item Not all 't Hooft anomalies act by multiplying the partition function by a $c$-number.
\item Not all 't Hooft anomalies which act by a $c$-number have that $c$-number be a phase.
\item Even when the 't Hooft anomaly is phase-valued, it has not been shown that this phase can always be canceled by the gauge transformation of the classical action of a topological gauge theory in $d+1$ dimensions.  
\ei
What is really attempted in \cite{Dijkgraaf:1989pz,Chen:2011pg,Freed:2014iua,Freed:2014eja,Freed:2016rqq} is a classification of such $(d+1)$-dimensional classical topological gauge actions, so until these gaps are better understood it is not correct to say that 't Hooft anomalies have been classified.  In the rest of this section we discuss these questions in more detail; along the way we will also point out an obstruction to generalizing the topological analysis of chiral 't Hooft anomalies in \cite{AlvarezGaume:1983cs} to more general 't Hooft anomalies.  As this work was being completed, \cite{Cordova:2018cvg,Benini:2018reh} appeared, which study the first of the phenomena we mention here, operator-valued 't Hooft anomalies, in much more detail; we direct the reader there for more on this phenomenon.\footnote{In those papers the authors introduce new background gauge fields, which are in general higher-form fields, and then modify the definition of ``gauge transformation'' to include transformations of these new background gauge fields which are designed to cancel the operator-valued anomalies of the type we point out here.  They then prefer to use the terminology ``$n$-group global symmetry'' instead of ``operator-valued 't Hooft anomaly''.  In this language, $c$-number 't Hooft anomalies in $d$ spacetime dimensions are ``$d$-group global symmetries''.  We'll stick with ``'t Hooft anomaly'' here since we've been using it so far, but in the long run getting rid of the word ``anomaly'' in this context is probably a good idea.}

We begin by noting that all examples of 't Hooft anomalies that we discussed in the previous section have the special property that, although the partition function is not gauge invariant, this non-invariance is realized as a multiplication by a $c$-number functional of the background gauge fields and the gauge transformation (see equations \eqref{2dZ} and \eqref{weyl}).  It is not hard see however that more general 't Hooft anomalies are possible; we will call them \textit{operator-valued 't Hooft anomalies}.  They have appeared in some form already in \cite{Kapustin:2014zva}, but the example we give here should be more accessible to most physicists.  It is a chiral fermion theory in $3+1$ dimensions, with an $SU(2)$ global symmetry and a $U(1)$ gauge symmetry.  The matter fields consist of eight left-handed fermions, grouped into two $SU(2)$ doublets with $U(1)$ charge $+1$, and four $SU(2)$ singlets with $U(1)$ charge $-1$.\footnote{We have doubled the matter content of what might seem like the simplest example, to avoid an additional Witten anomaly in the $SU(2)$ symmetry \cite{Witten:1982fp} which may distract some readers.}  We can view both of these symmetries as subgroups of the $U(8)$ symmetry generated by the currents \eqref{4dJ}, but the rest of this $U(8)$ may or may not be broken by other interactions we won't discuss explicitly. Since the $U(1)$ symmetry is gauged, its current must be conserved to avoid inconsistencies. And indeed,
\be
D_{U(1)U(1)U(1)}=4(+1)^3+4(-1)^3=0.
\ee
This $U(1)$ conservation is also not broken by the gravitational anomaly \eqref{gravanom}, since $4(+1)+4(-1)=0$.  If we use indices $a$, $b$, etc to denote $SU(2)$ generators, with $T_a$ taken to be the Pauli matrices divided by 2, then we see that
\begin{align}\nonumber
D_{abc}&=0\\
D_{ab U(1)}&=2\Tr(T_a T_b)=\delta_{ab}.
\end{align}
Thus in the presence of a background $SU(2)$ gauge field, since we must preserve the conservation of the $U(1)$ current, we have no choice but to allow the $SU(2)$ global currents not to be conserved.  After adding an appropriate local term to the action to ensure conservation of the $U(1)$ current, we find that the $SU(2)$ currents obey
\be
D_\mu J_a^\mu=-\frac{1}{32\pi^2}\delta_{ab}\epsilon^{\lambda\rho\sigma\nu}\partial_\lambda A^b_\rho F_{\sigma\nu}^{U(1)}+\ldots
\ee
The key point here is that if the background $SU(2)$ gauge field $A_\mu^a$ is zero, the $SU(2)$ current is conserved.  So this theory indeed has $SU(2)$ global symmetry.  But once we turn on this background gauge field, the right hand side of the current conservation involves a dynamical operator, $F_{\sigma\nu}^{U(1)}$.  Thus, unlike in the 't Hooft anomalies we have considered so far, the partition function does not transform by a $c$-number rescaling under a background $SU(2)$ gauge transformation.  In such a situation we cannot cancel the anomaly by the gauge transformation of the classical action of a topological gauge theory in $d+1$ dimensions, essentially because that action would already need to contain a dynamical $U(1)$ gauge field.

Of course nothing stops us from simply restricting discussion to $c$-number 't Hooft anomalies.  In fact in the classification program based on \cite{Dijkgraaf:1989pz,Chen:2011pg,Freed:2014iua,Freed:2014eja,Freed:2016rqq}, it is further assumed that the $c$-number involved is always a phase.  This is certainly true for the $1+1$ and $3+1$ dimensional chiral anomalies \eqref{2dZ}, \eqref{4danom}, and more generally it is a rather standard property of chiral anomalies \cite{AlvarezGaume:1983ig}.  But again it is not always true, and in fact we have already met a counterexample: in Euclidean signature the Weyl anomaly \eqref{weyl} is real.  And indeed there has so far been no success in trying to cancel the Weyl anomaly with the gauge transformation of a topological action living in $d+1$ dimensions.\footnote{The Weyl anomaly \textit{can} be cancelled by a non-unitary gravitational action, one way to see this is that we know the ``right sign'' Einstein-Hilbert action can reproduce the Weyl anomaly in AdS/CFT \cite{Henningson:1998gx}, so the ``wrong sign'' Einstein-Hilbert action can cancel it.  It is not clear however whether such actions can be classified by some generalization of the machinery of \cite{Dijkgraaf:1989pz,Chen:2011pg,Freed:2014iua,Freed:2014eja,Freed:2016rqq}.} 

Nevertheless we can still proceed by further restricting to 't Hooft anomalies where under background gauge transformations the partition function is only multiplied by a phase.  We now give a general formulation of this problem.  As above will use the symbol $A$ to jointly denote a collection of $A_i$ and the $g_{ij}$ which glue them together, we will use the symbol $g$ to denote the collection of $g_i$ under which the $A_i$ and $g_{ij}$ transform via \eqref{patchtransform}, and we will write the action of $g$ on $A$ as $gA$.  This $A$ will include background gauge fields for all global symmetries, both continuous and discrete.  A phase-valued 't Hooft anomaly then says that the partition function of the theory as a functional of these background gauge fields obeys
\be\label{alphadef}
Z[gA]=e^{i\alpha(A,g)}Z[A].
\ee
Moreover it says that this phase cannot be removed by redefining $Z[A]$ by a local functional $\beta(A)$, via
\be
Z'[A]\equiv e^{i \beta(A)}Z[A].
\ee
Such a redefinition induces a transformation
\be
\alpha'(A,g)=\alpha(A,g)+\beta(gA)-\beta(A) \qquad\mathrm{mod} \, 2\pi,
\ee  
so we will have an anomaly if and only if there is no $\beta(A)$ such that
\be\label{trivialch}
\alpha(A,g)=\beta(A)-\beta(gA)\qquad\mathrm{mod} \, 2\pi.
\ee
The task of classifying possible phase-valued 't Hooft anomalies is thus the task of classifying phases $\alpha(A,g)$ modulo local functionals $\beta(A)$, which is a kind of exotic group cohomology.  We emphasize however that this is \textit{not} the group cohomology studied in \cite{Chen:2011pg}; we will comment on the relationship below.  

This group cohomology problem has an interesting relationship to the topology of fiber bundles \cite{AlvarezGaume:1983cs}.  This relationship works as follows.  Consider the space $\mathcal{G}$ of gauge transformations $g$ and the space $\mathcal{A}$ of gauge field configurations $A$.  We can view the partition function as a map
\be
Z:\mathcal{A}\rightarrow \mathbb{C},
\ee
or equivalently as a section of the trivial complex line bundle
\be
E\equiv \mathcal{A}\times \mathbb{C}.
\ee
We can then define an equivalence relation on $E$ via
\be\label{eqrel}
(A,z)\sim (gA,e^{i\alpha(A,g)}z),
\ee
and then construct a new bundle
\be
\wt{E}\equiv E/\sim,
\ee
which is a possibly nontrivial complex line bundle over  $\mathcal{A}/\mathcal{G}$, the set of gauge-equivalent classes of gauge field configurations.   
In fact the transformation \eqref{alphadef} tells us that we can also view the partition function $Z$ as a section of $\wt{E}$.  The interesting statement is then the following: if $\wt{E}$ is a nontrivial bundle, then $Z$ has a genuine 't Hooft anomaly.  The proof is simple: say that $Z$ did \textit{not} have an 't Hooft anomaly.  Then there must exist a local functional $\beta(A)$ obeying \eqref{trivialch}.  We may then consider a coordinate transformation on the bundle $E$ given by
\begin{align}\nonumber
A'&=A\\
z'&=e^{i \beta(A)}z,\label{bundletransform}
\end{align}
under which the equivalence relation \eqref{eqrel} becomes
\be\label{trivialeq}
(A,z')\sim (gA,z').
\ee
But this immediately tells us that
\be
\wt{E}=\mathcal{A}/\mathcal{G}\times \mathbb{C},
\ee
so $\wt{E}$ is trivial.  This argument shows that nontrivial line bundles over $\mathcal{A}/\mathcal{G}$ are related to potential 't Hooft anomalies.  And in fact in \cite{AlvarezGaume:1983cs} it was shown that indeed the partition function relevant for the 3+1 dimensional chiral anomaly \eqref{4danom} is a section of a nontrivial line bundle over $\mathcal{A}/\mathcal{G}$. Fiber bundle topology is an extremely well-studied subject, so this result seems to suggest that the relevant technology could be used to study general phase-valued 't Hooft anomalies. 

Unfortunately however there is a major problem in attempting to use the argument of the previous paragraph to classify 't Hooft anomalies.  This is that the result is \textit{not} an if and only if result.  We showed that a nontrivial bundle implies an anomaly, but we did not show that a trivial bundle implies no anomaly!  The problem lies with the coordinate transformation \eqref{bundletransform}. In doing this transformation, we used a $\beta(A)$ which was a local functional of $A$. But in trying to decide whether or not $\wt{E}$ is trivial, there is no such restriction on what coordinate transformations we may do: if we can achieve \eqref{trivialeq} with a nonlocal $\beta$, then the bundle is trivial even though there might still be an 't Hooft anomaly.  This observation leads immediately to a puzzle: if we allow $\beta$ to be nonlocal, then doesn't the logarithm of \eqref{alphadef} immediately tell us that $\beta(A)\equiv i \log Z(A)$ gives a nonlocal coordinate transformation which trivializes $\wt{E}$?  And if so then how were the authors of \cite{AlvarezGaume:1983cs} able to get a nontrivial bundle $\wt{E}$?  The resolution of this puzzle is that the problem with this $\beta$ is \textit{not} that it is nonlocal, it is that $Z[A]$, which for them was the square root of the determinant of a Dirac operator, has zeros at certain special values of $A$.  So then $i\log Z$ is not well-defined at those values, which prevents it from defining a good coordinate transformation on $E$.  

How then might we proceed in our goal to classify possible phase-valued 't Hooft anomalies?  In fact we have already stated the mathematical problem: we need to classify phases $\alpha$ modulo local functionals $\beta$.  The natural idea suggested by the topological arguments of the previous two paragraphs is to recast this as a generalization of the notion of a complex line bundle over $\mathcal{A}$, where only \textit{local} functionals of $A$ are allowed in coordinate transformations.  We will not attempt this here, but we have already mentioned several times the standard conjecture for what the answer is: any solution of this problem is always obtainable from the classical action of some topological gauge theory in $d+1$ dimensions \cite{Stora:1983ct,Zumino:1983ew,Manes:1985df,Faddeev:1985iz}.  Indeed the validity of this conjecture is taken as the starting point of the work of \cite{Freed:2014iua,Freed:2014eja,Freed:2016rqq}.  Let's review how this works for the $1+1$ dimensional chiral anomaly \eqref{2dZ}, which we can now describe as 
\be
\alpha(A^v,A^p;\Lambda^v,\Lambda^p)=-\frac{1}{\pi}\int_{\partial N} \Lambda^p F^v.
\ee
Here we have switched to differential form notation and assumed for simplicity that our spacetime manifold $M$ is the boundary of some three-dimensional manifold $N$.  The key observation is that the three-dimensional Chern-Simons-like action,
\be
S_3[A^v,A^p]\equiv\frac{4}{4\pi}\int_N A^p\wedge F^v,
\ee
has gauge transformation
\begin{align}
S_3[A^v+d\Lambda^v,A^p+d\Lambda^p]&=S_3[A^v,A^p]+\frac{1}{\pi}\int_N d(\Lambda^p F^v)\\
&=S_3[A^v,A^p]+\frac{1}{\pi}\int_{\partial N}\Lambda^p F^v,
\end{align}
so if we take the three-dimensional gauge fields to be extensions of the two-dimensional ones then the functional 
\be
\hat{Z}[A]\equiv Z[A]e^{iS_3(A)}
\ee
is gauge-invariant.  So although the anomaly cannot be canceled by a local term in $(1+1)$ dimensions, it \textit{can} be canceled by a local term in $(2+1)$ dimensions!  A similar construction is possible for the $(3+1)$ dimensional chiral anomaly, based on a five dimensional Chern-Simons-like action \cite{Stora:1983ct,Zumino:1983ew,Manes:1985df,Faddeev:1985iz}.  But now we come to the key question: is this relationship with $d+1$ dimensional topological actions a coincidence, or is it intrinsic to the nature of 't Hooft anomalies?  The conjecture just mentioned says that it is intrinsic, and certainly the fact that so far every phase-valued 't Hooft anomaly to be discovered fits into this framework speaks powerfully in favor of this conjecture.  But can it be proven?  We believe that the answer is yes.  One reason is that for infinitesimal anomalies it has indeed already been proven, by a careful study of the cohomology of the BRST operator \cite{Manes:1985df,DuboisViolette:1985jb,Brandt:1989gy,Dixon:1991wi,DuboisViolette:1992ye}.  But more generally the reason we believe so is that both sides of the conjecture can be precisely formulated as statements about group cohomology: the general classification of 't Hooft anomalies outlined below equation \eqref{alphadef} casts the question directly as a group cohomology problem, and the classification of topological actions studied in \cite{Dijkgraaf:1989pz,Chen:2011pg,Freed:2014iua,Freed:2014eja,Freed:2016rqq} essentially proceeds by reformulating the question again as a group cohomology problem.  In both cases the objects which appear or more or less the same: we need to define local functionals of background gauge fields and then study how they transform under gauge transformations, with appropriate identifications.  Given the strong experimental evidence for the conjecture, together with this plausible mathematical formulation, we expect that a proof is possible.  We will not however attempt it here.  
 
\section{Gauge symmetry}\label{gaugesec}
We now turn to the topic of gauge symmetry.  Gauge symmetry is ubiquitous in physics.  Our understanding of particle physics, general relativity, string theory, the fractional quantum hall effect, superconductivity, and more all rely on it.  And yet, paradoxically, we also say that ``gauge symmetry is merely a redundancy of description.''  How can a redundancy of description be so powerful?  In AdS/CFT this paradoxical situation is acutely instantiated  by the well-known adage ``a gauge symmetry in the bulk is dual to a global symmetry in the boundary.''  In the words of the master \cite{Witten:1998qj}, ``suppose the AdS theory has a gauge group $G$, \ldots Then in the scenario of (Maldacena), the group $G$ is a global symmetry of the conformal field theory on the boundary.'' How can a mere redundancy of description be dual to something as substantial as a global symmetry?  

In this section we develop machinery to address this question, introducing a notion of ``long-range gauge symmetry'' that we will eventually argue is really what should be understood as the gravity dual of a global symmetry.  To aid with intuition, we illustrate our definition using a general formulation of Hamiltonian lattice gauge theory for arbitrary compact gauge group $G$.  We then make some comments on the meaning of the topology of the gauge group and briefly discuss the possibility of nontrivial mixing between gauge and global symmetries.

\subsection{Definitions}
Roughly speaking, the traditional definition of a gauge symmetry in quantum field theory is that it is obtained by ``gauging'' a global symmetry, meaning that we begin with a quantum field theory with a global symmetry, introduce background gauge fields for that symmetry as in section \ref{backgroundsec}, and then make them dynamical by summing over them in the path integral (this procedure makes sense even if the theory to be gauged does not have a Lagrangian).  This definition is not quite consistent with our definition \ref{globaldef} of global symmetry however: there we required that global symmetries act faithfully on the set of local operators, while for gauge symmetries there should be no such requirement (otherwise we would exclude e.g. free Maxwell theory).\footnote{This statement applies in quantum field theory. One of our main goals in this paper is to establish conjecture \ref{allcharge}, which says that in quantum gravity there \textit{is} such a requirement!}  So in our language, the way to phrase this definition is to interpret the full (internal) global symmetry group $G$, which does act faithfully on the local operators, as the quotient of a possibly-larger ``extended'' symmetry group $\hat{G}$, which acts on the local operators in a not-necessarily faithful representation, by the kernel of that representation.  $\hat{G}$ is far from unique, but whichever choice we make we then choose a normal subgroup $H\subset \hat{G}$, and introduce background gauge fields for it.  We then check whether or not any 't Hooft anomalies prevent us from arranging for the partition function to depend only on the gauge equivalence classes of these background gauge fields: if not, then we may at last make them dynamical.\footnote{The question of what the global symmetry group is after doing this procedure is a very delicate one, involving not only the group-theoretic structure of how $H$ and $G$ fit into $\hat{G}$, but also the effects of any 't Hooft anomalies in $\hat{G}$ which might be activated (see \cite{Tachikawa:2017gyf} for one recent discussion).  We will not explore this question further except for a brief discussion in section \ref{mixsec} below, but we view it as ripe for additional work.}  

Although this definition is completely standard, it has the very serious problem that the same abstract quantum field theory can be obtained in this manner by gauging inequivalent extended global symmetry subgroups $H$ of inequivalent abstract quantum field theories.  For example the $U(1)$ Maxwell theory in 2+1 dimensions has an equivalent formulation as a free compact scalar with no gauge fields at all.  Much more nontrivially, the $\mathcal{N}=4$ super Yang-Mills theory with gauge group $SU(N)$ and gauge coupling $g$ is equivalent as an abstract quantum field theory to the $\mathcal{N}=4$ super Yang-Mills theory with coupling $\frac{4\pi}{g}$ and gauge group $SU(N)/\mathbb{Z}_N$ by $S$-duality \cite{Montonen:1977sn,Sen:1994fa,Vafa:1994tf,Kapustin:2005py}.  Given examples like these, it seems clear that there is no unique answer to the question ``what is the gauge group of abstract quantum field theory $X$?''  This is to be distinguished from the case of global symmetry, where the analogous question indeed has a unique answer given by definition \ref{globaldef}.

That said, there are certainly unambiguous physical phenomena which we typically \textit{associate} with gauge symmetry, such as massless gauge bosons, loop operators whose vacuum expectation values obey an area law scaling, asymptotic symmetry groups, and certain topological field theories such as the $\mathbb{Z}_2$ gauge theory that describes superconductivity.  The second of these has recently been formalized into the abstract notion of an unbroken one-form global symmetry \cite{Gaiotto:2014kfa}, which we will discuss more in section \ref{psec} below: it gives one way of defining confinement abstractly.  The others are all associated to gauge theories in what \cite{Fradkin:1978dv} called a ``free charge phase'': this means a phase which allows charged states of finite energy in infinite volume  (see also \cite{Banks:1979fi,Alford:1990fc} for related discussion).  For continuous gauge groups this is usually called a Coulomb phase, while for discrete gauge groups (or continuous gauge groups in $2+1$ dimensions with Chern-Simons terms) it is sometimes called a topological phase.  In \cite{Fradkin:1978dv} the notion of a free charge phase was introduced in the context of lattice gauge theory, which is a specific presentation of a quantum field theory.  As we just discussed, different lattice gauge theories might flow to the same abstract quantum field theory in the infrared.  But in fact the notion of a free charge phase can be rephrased using only abstract notions, which thus frees it of such ambiguities.   We now formalize this as a new definition:\footnote{In this paper we are primarily interested in spacetimes which are asymptotically-flat or asymptotically-$AdS$. This definition may need further refinement for more complicated spatial manifolds $\Sigma$, but for our purposes it is good enough.}
\begin{mydef}\label{gaugedef}
A quantum field theory on an infinite-volume spatial manifold $\Sigma$, with asymptotic boundary $\partial \Sigma$ and boundary conditions such that in any state the energy density vanishes as we approach $\partial \Sigma$, has a \textit{long-range gauge symmetry with gauge group $G$} (here $G$ is assumed compact) if the following are true:
\bi
\item[(1)] For each closed spatial curve $C$ in the interior of $\Sigma$, there exist a set of directed line operators $W_\alpha(C)$, the \textit{Wilson loops}, which are labeled by the finite-dimensional irreducible representations $\alpha$ of $G$.   Moreover for any curve $C$ which starts and ends at $\partial \Sigma$ there are a set of \textit{Wilson lines} $W_{\alpha,ij}(C)$, where $i$ and $j$ run over a range given by the representation dimension $d_\alpha$.   The orientations of Wilson loops and lines can be flipped via
\begin{align}\nonumber
W_\alpha(-C)&=W_\alpha^\dagger(C)\\
W_{\alpha,ij}(-C)&=W_{\alpha,ij}^\dagger(C),
\end{align}
where in the second of these ``$\dagger$'' denotes the adjoint operation on Hilbert space together with an exchange of the $ij$ indices, and the Wilson lines obey
\be
\sum_k W_{\alpha,ik}(-C)W_{\alpha,kj}(C)=\delta_{ij}.
\ee
A Wilson line can be turned into a Wilson loop by bringing the endpoints of $C$ together, tracing over $ij$, and then deforming $C$ into a closed loop in the interior of $\Sigma$.

\item[(2)] For every subregion $R$ of $\partial \Sigma$, and every $g\in G$, there is a unitary operator $U(g,R)$ on the Hilbert space which commutes with all operators supported only in the interior of $\Sigma$, and also with their boundary limits provided they have no support in $\partial R$, but which acts on any Wilson line $W_\alpha$ starting at point $x\in \partial\Sigma$ and ending at point $y\in \partial\Sigma$ as
\begin{align}
U^\dagger(g,R) W_{\alpha} U(g,R)=
\begin{cases}
D_{\alpha}(g)W_{\alpha} D_{\alpha}(g^{-1})& x,y\in R\\
W_{\alpha} D_{\alpha}(g^{-1})& x\in R,y\notin R\\
D_\alpha(g)W_\alpha & x\notin R, y\in R\\
W_\alpha & x,y\notin R
\end{cases},\label{Walg}
\end{align}
where we have suppressed the $ij$ representation indices using matrix notation.  When $R$ is a connected component of $\partial \Sigma$, we will refer to the $U(g,R)$ as \textit{asymptotic symmetry operators}. This name is justified by the observation that we have $[H,U(g,R)]=0$, since $H=\int_\Sigma d^{d-1}x\sqrt{g}\,T_{00}$ and $T_{00}$ is always either an operator in the interior of $\Sigma$ or the boundary limit of one.  In correlation functions the asymptotic symmetry operators will be topological except when they meet the endpoint of a Wilson line.  For arbitrary $R$ we will call the $U(g,R)$ the \textit{localized asymptotic symmetry operators}: these will be topological under deformations which in addition to not crossing Wilson line endpoints also fix $\partial R$.\footnote{Note that we are including the gauge-symmetry version of splittability in this definition. A weaker definition would ask for the $U(g,R)$ only when $R$ is a connected component of $\Sigma$, but we find our definition more convenient since it ensures that the $W_\alpha$ are nontrivial even if $\partial \Sigma$ has only one connected component, which otherwise we would need to implement with additional axioms.  We don't know of any examples of ``unsplittable long-range gauge symmetries'' which we would exclude this way.}

\item[(3)] The ground state is invariant under $U(g,\partial \Sigma)$, and moreover the theory allows finite-energy charged states in the sense that if we deform the Hamiltonian and the Hilbert space to include a background charge in representation $\alpha$ sitting at some definite point in space, there are states of finite energy which transform in that representation under $U(g,\partial \Sigma)$.  This Hilbert space and Hamiltonian are defined by the insertion of a temporal Wilson line in representation $\alpha$ into the path integral, we explain how to do this in detail for lattice gauge theory in the following subsection.  In AdS (our primary interest) there is a very concrete test: in the Euclidean thermal AdS space with metric
\be
ds^2=(1+r^2)d\tau^2+\frac{dr^2}{1+r^2}+r^2 d\Omega_{d-2}^2,
\ee
with $\tau$ periodicity $\beta$,
we study the quantity
\be
Z_\alpha(g,\beta)\equiv \lan W_\alpha(\mathbb{S}^1)U(g,\mathbb{S}^{d-2})\ran,
\ee
where the Wilson line is at $r=0$ and wraps the temporal circle, while the $\mathbb{S}^{d-2}$ is at spatial infinity.  This quantity has the interpretation of inserting the asymptotic symmetry operator $U(g,\mathbb{S}^{d-2})$ into the thermal trace over the modified Hilbert space with a background charge at $r=0$ in representation $\alpha$.
We then require that
\be\label{adscond}
\int dg \chi^*_\alpha(g)Z_\alpha(g,\beta)>0
\ee
for any $\alpha$ and large but finite $\beta$, where $dg$ is the Haar measure on $G$ and $\chi_\alpha(g)\equiv \Tr\left(D_\alpha(g)\right)$ is the character function on $G$ for representation $\alpha$.  By Schur orthogonality (see theorem \ref{schurthm}) this integral (or sum if $G$ is discrete) inserts a projection onto states in representation $\alpha$ in the thermal trace, so \eqref{adscond} is precisely requiring that there are such states with finite energy.\footnote{This test is more delicate in Minkowski space, since the thermal partition function is infrared divergent. One way to deal with this is to use AdS as a regulator, and then say that a Minkowski space theory obeys condition (3) if it does in AdS for any sufficiently large AdS radius.}
\ei 
\end{mydef}
This definition may seem like a lot to unpack, and indeed we will spend the rest of the section doing so.  We will motivate it in detail from a lattice point of view starting in the next subsection, but a few examples are in order now.  

The most obvious example is free Maxwell theory in Minkowski space with $d\geq 4$ spacetime dimensions, with action
\be\label{maxwell}
S=-\frac{1}{2q^2}\int F\wedge \star F.
\ee  
If we regulate space at some large radius, the variation of this action has a boundary term
\be\label{maxvar}
-q^{-2}\int_{\partial M} \delta A \wedge \star F,
\ee
which we can satisfy by choosing boundary conditions where the pullback of $A$ to $\partial M$ vanishes.  These boundary conditions are preserved only by gauge transformations which approach a constant on $\partial M$, and to obtain a theory where non-vanishing electric charge is possible we will quotient only by gauge transformations where this constant also vanishes: the transformations where it does not are the asymptotic symmetries.\footnote{These boundary conditions are the natural ones for a gauge field in $AdS$. In $3+1$ dimensional Minkowski space they are less natural because they set the magnetic flux density to zero at spatial infinity, and thus violate cluster decomposition if there are magnetic monopoles.  We can restore cluster decomposition by a Hilbert space direct sum over magnetic flux configurations, after which the long range gauge symmetry will actually be $U(1)\times U(1)$ since both Wilson and 't Hooft lines will be able to end at infinity.  Since our primary interest is $AdS$, we stick to the sector of vanishing magnetic flux, in which case only Wilson lines can end at infinity and the long-range gauge group is $U(1)$.}  The representations of $U(1)$ are labeled by integer charges, and the Wilson loops and lines  have the form
\be
W_n(C)=e^{in\int_C A+\ldots}.
\ee
Here ``$\ldots$'' represents a term proportional to the length of $C$ in cutoff units, with a coefficient which is chosen so that the expectation value of $W_n(C)$ is finite when $C$ has finite size in the continuum.  The localized asymptotic symmetry operators $U(g,R)$ are given by
\be
U(e^{i\theta},R)=\exp\left[\frac{i\theta}{q^2} \int_{R} \star F\right],
\ee
which is just the exponentiated electric flux through $R$ at spatial infinity.  With our choice of boundary conditions the Wilson lines are allowed to end at spatial infinity, and it is easy to see that together with the localized asymptotic symmetry operators they obey (1-2) from definition \ref{gaugedef}. Moreover since for $d\geq 4$ the electrostatic energy of a smeared point charge is finite they will also obey condition (3).  By contrast for $d=2,3$ the electrostatic energy of a (smeared) point charge is infinite, linearly for $d=2$ and logarithmically for $d=3$, so condition (3) will not be satisfied.\footnote{In $d=3$ this logarithmic divergence is sometimes confused by Polyakov's old observation  that in $U(1)$ lattice gauge theory in $2+1$ dimensions there are no photons and external charges feel a \textit{linear} potential \cite{Polyakov:1975rs}.  This however is an artifact of the lattice, the continuum $U(1)$ Maxwell theory has free photons and a logarithmic potential between external charges.  Condensed matter physicists sometimes give this continuum theory the rather silly name ``noncompact $U(1)$ Maxwell theory'', but $U(1)$ is (of course) still compact.  We \textit{could} study Maxwell theory with gauge group $\mathbb{R}$, but that is something different (see subsection \ref{gaugetopsec} below for more on the meaning of the topology of the gauge group).} Thus for $d=2,3$,  Maxwell theory does not have a long-range $U(1)$ gauge symmetry, while for $d\geq 4$ it does.

The statement that there is no long-range gauge symmetry in Maxwell theory for $d=3$ may sound surprising from a holographic point of view, since we certainly know examples of holographic CFTs in $1+1$ dimensions with $U(1)$ global symmetries.  In fact what happens in all such examples is that in the bulk we have not the pure Maxwell theory \eqref{maxwell}, but instead the Maxwell/Chern-Simons theory with action\footnote{Any solution of Maxwell-Chern Simons theory can be locally decomposed into $A=A_{flat}+\hat{A}$, with $A_{flat}$ a flat connection and $\hat{A}$ obeying $2\pi \star\hat{F}+kq^2 \hat{A}=0$, which is the equation for a vector boson with mass $\frac{|k|q^2}{2\pi}$.  In AdS the natural boundary conditions for Maxwell-Chern Simons theory are to set either the left-moving or right-moving part of the pullback of $A$ to the $AdS$ boundary to zero, and also to require the vanishing of the pullback of $\star F$ there \cite{Kraus:2006wn,Andrade:2011sx}. The latter condition keeps only the normalizable piece of $\hat{A}$, while the former chooses whether the current in the boundary CFT will be right-moving or left-moving.  To have a boundary current with both left- and right- moving parts, we need two gauge fields in the bulk \cite{Achucarro:1987vz,deBoer:1998kjm}.}
\be
S=-\int_M\left(\frac{1}{2q^2}F\wedge \star F+\frac{k}{4\pi}A\wedge F\right).
\ee
This theory \textit{does} have a long-range gauge symmetry: the logarithmic infrared divergence in the energy of a localized charge in Maxwell theory is regulated by the Chern-Simons term, allowing finite-energy states of nonzero asymptotic charge $\frac{k}{2\pi} \int_{\partial\Sigma}A$.  This example shows that at least in $d=3$, one can have a long-range $U(1)$ gauge symmetry without a massless photon.

\bfig
\includegraphics[height=4cm]{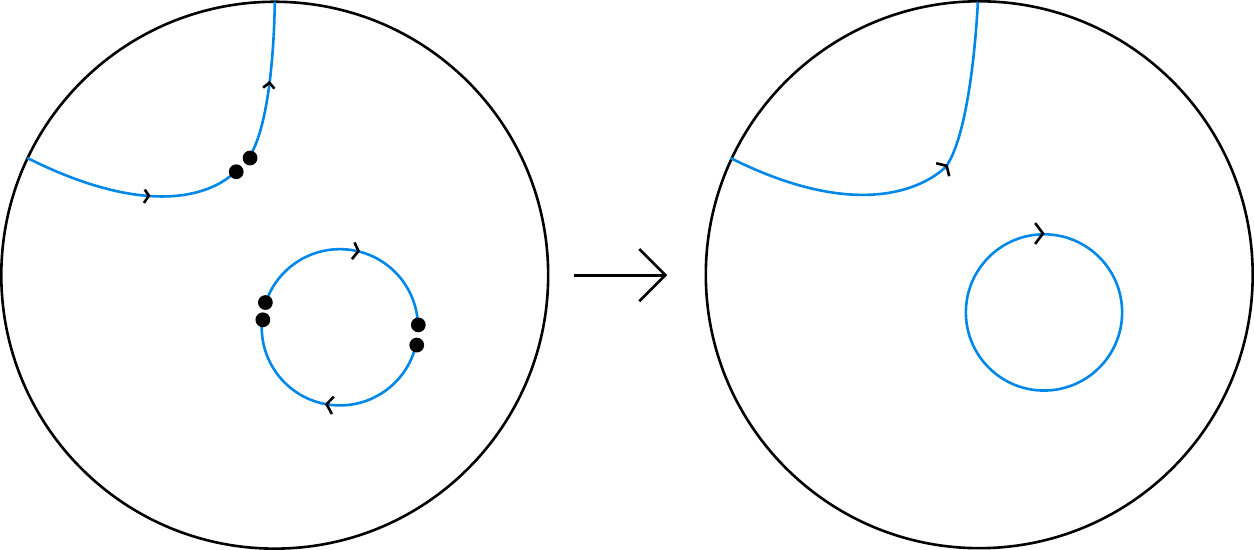}
\caption{Merging Wilson lines with interior endpoints to make boundary attached Wilson lines and Wilson loops.  See section three of \cite{Harlow:2015lma} for a quantitative illustration of how this merging works.  These gluing rules ensure that the ``meat'' of the lines and loops are all the same.} \label{wilsongluefig}
\efig
Our definition \ref{gaugedef} applies whether or not the gauge theory has ``dynamical charges'', which we define as follows:
\begin{mydef}\label{chargedef}
In a quantum field theory with a long-range gauge symmetry, we say the that there are \textit{dynamical charges in representation $\alpha$} if, in addition to the Wilson loops $W_\alpha$ and the boundary-attached  Wilson lines $W_{\alpha,ij}$, there are also Wilson lines labelled by $\alpha$  which have one or both endpoints on points in the interior of $\Sigma$; we call these interior endpoints ``charged operators in representation $\alpha$ ''.  These interior endpoints do not carry $G$ representation indices, but they do carry Lorentz indices which depend on the type of endpoint.\footnote{In Lagrangian gauge theories we can express these operators as the product with indices contracted of a gauge non-invariant Wilson line with an interior endpoint carrying an $\alpha$ represention index and a gauge non-invariant charged local operator at that endpoint carrying the conjugate index, hence our name for the interior endpoints, but we emphasize that it is only their combination which makes sense abstractly so that is what we define here.}  For Wilson lines with one endpoint in $\partial \Sigma$, we require that merging the interior endpoints of one such line and its conjugate gives a boundary-attached Wilson line $W_{\alpha,ij}$, while for Wilson lines with two endpoints in the interior of $\Sigma$ we require that  merging the conjugate endpoints of two such lines gives a Wilson loop in the same representation.  In both cases this merging requires a rescaling to get an operator with finite expectation value, see figure \ref{wilsongluefig} for an illustration and \cite{Harlow:2015lma} for more details on how the merging works.
\end{mydef}

The most obvious example of a theory with a long-range gauge symmetry with dynamical charges is obtained by adding some charged matter to the $d=4$ Maxwell theory \eqref{maxwell} in Minkowski space.\footnote{Strictly speaking this theory probably does not exist because of the Landau pole, but we can obtain it at low energies from some UV completion.} A more interesting example is quantum chromodynamics, which here we will define as an $SU(3)$ gauge theory with two massless Dirac fermions transforming in the fundamental representation of $SU(3)$, quantized in $AdS_4$ \cite{Aharony:2012jf}.  This theory has a dimensionless parameter, given by the strong coupling scale $\Lambda_{QCD}$ measured in units of the radius of curvature of $AdS_4$.  When this parameter is large the theory behaves as in Minkowski space: the quarks and gluons are confined into hadrons, and there are no finite energy states with nonzero color.\footnote{This still haven't been proven of course, but the conceptual, numerical, and experimental evidence is so overwhelming that we are happy to accept it as fact.}  There is therefore no long-range gauge symmetry.  As the parameter decreases however, eventually the quarks and gluons are liberated and the theory becomes perturbative \cite{Aharony:2012jf}.  Beyond this point the theory exhibits a long-range $SU(3)$ gauge symmetry with dynamical charges in the fundamental representation.  

This second example shows that a theory can have a long-range gauge symmetry in a background other than $\Rd$ even if it doesn't have it in $\Rd$. This may seem surprising, since we defined the existence of a global symmetry as a property of the theory on $\Rd$ which may or may not be preserved in other backgrounds.  The difference is that global symmetries have well-defined \textit{local} consequences: the local operators transform nontrivially and the stress tensor is invariant.  So ultimately we can study these on the simplest background, $\mathbb{R}^d$, and they are there or they aren't.  There is never a situation where a global symmetry is not present on $\mathbb{R}^d$ but is present somewhere else.   Long-range gauge symmetries, by contrast, are  properties of the \textit{phase} that the theory is in, via condition (3) in definition \ref{gaugedef}.  For example an observer of size $10^{-18}$ meters would look at QCD on $\mathbb{R}^4$ and see weakly coupled gluons, even though the theory is eventually confining and thus has no long-range gauge symmetry. Conversely an observer a theory with emergent gauge fields would look at short distances and see nothing resembling definition \ref{gaugedef}, even though at long distances there might be Wilson lines and massless gauge bosons.  That these two rather distinct notions are related via holographic duality, as we will see in more detail soon, is yet another manifestation of remarkable ``UV/IR connection'' \cite{Susskind:1998dq} of AdS/CFT .  

The reader may wonder why in condition (3) we have demanded that the ground state is invariant under the asymptotic symmetry, while in our definition \ref{globaldef} we took pains to include spontaneously broken global symmetries.  The reason is that unlike theories with spontaneously-broken global symmetries, gauge theories which in the Higgs phase do not really have any special properties which distinguish them abstractly from other quantum field theories. For example we will review in section \ref{phasesec} that in some cases there is not even a good distinction between a Higgs phase and a confining phase; they both are just some gapped system with no long-range gauge symmetry \cite{Fradkin:1978dv,Banks:1979fi}.  In AdS/CFT a bulk gauge theory in the Higgs phase is \textit{not} dual to a boundary theory with a spontaneously broken global symmetry: indeed the CFT is studied on spatial $\mathbb{S}^{d-1}$, so typically no spontaneous breaking of global symmetry is possible (there are certain topological exceptions, see footnote \ref{finiteVssb}).  

We will momentarily turn to the lattice to give a more systematic picture of definition \ref{gaugedef}, but first a technical aside.  We have found that our use in condition (3) of a temporal Wilson line to characterize the phase of QCD with fundamental quarks sometimes leads to confusion, since the more standard way of using a Wilson line to diagnose confinement, looking for an area-law scaling of the expectation value of the fundamental-representation Wilson loop, does not work when there are fundamental quarks \cite{Wilson:1974sk,Kogut:1974ag}.  The problem is that as we separate a pair of fundamental/anti-fundamental background color charges, the energy density in the color string between them will eventually pull a quark-antiquark pair out of the vacuum, which screens the charges and thus avoids the linear potential which would lead to an area law.  This problem also interferes with the recent ``unbroken one-form symmetry'' definition of confinement \cite{Gaiotto:2014kfa}, for basically the same reason.  It does \textit{not} however affect our condition (3), since by definition we are studying only states which transform nontrivially under $U(g,\partial \Sigma)$.  It is true that our temporal Wilson line might be screened by a dynamical charge nearby, but then there would need to be an unscreened dynamical charge elsewhere to ensure the state transforms correctly under $U(g,\partial\Sigma)$.  In a confining phase, the only way to avoid an infinite energy cost would be for this extra dynamical charge to be ``right at infinity''.  In Minkowski space we have excluded this by demanding that the energy density fall off at infinity in all states, while in AdS it is excluded automatically by the AdS potential, which assigns more and more energy to particles which are closer and closer to the boundary.  We illustrate this point in an exactly-soluble setting in subsection \ref{phasesec} below.

In the remainder of this section we will use lattice gauge theory to further motivate and analyze definition \ref{gaugedef}.  We will also make a few comments on the thorny question of the meaning of the topology of the gauge group, and briefly discuss a more general structure where global symmetries mix with long-range gauge symmetries.  Readers who are already satisfied with definition \ref{gaugedef}, and who feel no confusion about the difference between  $U(1)$ gauge theory and $\mathbb{R}$ gauge theory, or $SO(3)$ gauge theory and $SU(2)$ gauge theory, may wish to skip ahead to section \ref{symsec}.

\subsection{Hamiltonian lattice gauge theory for general compact groups}\label{latsec}
The details of definition \ref{gaugedef} may seem a bit arbitrary, so we now explain how they naturally arise in the framework of Hamiltonian lattice gauge theory \cite{Kogut:1974ag}.  Although this may seem like a detour, this framework has several very convenient features:
\bi
\item Lattice gauge theory may be defined for any compact Lie group $G$, discrete or continuous.  By contrast, many discrete gauge theories do not yet have simple continuum Lagrangian formulations.  Often the best one can do is start with a continuous gauge theory and then Higgs it to a discrete subgroup, but this includes a lot of extra machinery which is irrelevant for the discrete gauge theory. 
\item On the lattice, the topology of the gauge group is manifest from the beginning.  There can be no confusion between $SO(3)$ and $SU(2)$, or $U(1)$ and $\mathbb{R}$.  
\item The Hamiltonian formulation in particular is useful because it allows an explicit discussion of the Hilbert space and the structure of the operators which does not rely on knowing the Hamiltonian. Thus the operators we discuss should arise in any gauge theory, even if the Lagrangian has other terms (eg Chern-Simons or $\theta$ terms) beyond or instead of the standard Yang-Mills Lagrangian.
\item The phase structure of gauge theory is more clear on the lattice than in the continuum, and in particular in some limits it is exactly soluble.  This will enable us to illustrate the various possibilities in detail for the special case of gauge group $\mathbb{Z}_2$, where we will see explicitly that the phase boundary between allowing finite energy charges and not allowing them persists in the presence of dynamical charges.
\ei
We must however also acknowledge several shortcomings of the lattice approach:
\bi
\item It is not the continuum.  Although the structure we will see is consistent with our continuum expectations, and in particular with definition \ref{gaugedef}, in the end the lattice theory has a lot of extra ``short distance'' information which should all go off to infinite energy in the continuum limit.  We do not expect this to affect the phase structure, which is what we really care about, but ``expect'' and ``know'' are not the same thing.  
\item Our lattice presentation is still ultimately ``Lagrangian'': it makes reference to unphysical states, and uses a specific set of ``fundamental'' fields.  As we emphasized at the beginning of this section, different such presentations may flow to the same theory at long distances, and if we are not careful we might mislead ourselves about what to expect.  We will try to be careful.
\ei
With these comments out of the way, we now begin with the structure of Hamiltonian lattice gauge theory for arbitrary compact gauge group $G$.  

In mathematics the term ``lattice'' refers to a regular set of points in $\mathbb{R}^n$, but in lattice gauge theory it also includes a graph connecting those points. The vertices of this graph are called ``sites'', and each edge together with a choice of orientation is called a ``link''.  Links can be identified by a pair $(\vx,\vd)$, where $\vx$ is the starting point of the link and $\vd$ is the displacement vector to its endpoint.  The links $(\vx,\vd)$ and $(\vx+\vd,-\vd)$ describe the same edge with opposite orientations.  The basic idea of Hamiltonian lattice gauge theory is that each edge comes with a gauge field and each site comes with a gauge transformation which we quotient by, with any matter fields living on the sites.  The Hilbert space prior to imposing constraints is a tensor product
\be\label{bigH}
\mathcal{H}=\bigotimes_{e\in E}\mathcal{H}_e\bigotimes_{\vx\in X}\mathcal{H}_{\vx},
\ee
where $X$ is the set of sites, $E$ is the set of edges, each $\mathcal{H}_{\vx}$ is the Hilbert space of the matter fields at site $\vx$, and each $\mathcal{H}_e$ is a copy of the Hilbert space $\HG$ of a quantum-mechanical particle moving on the group manifold $G$.  $\HG$ is spanned by a set of states $|g\ran$, which are mutually orthogonal and normalized so that for any $g'$ we have
\be
\int dg \lan g'|g\ran=1,
\ee 
where $dg$ is the invariant Haar measure on $G$, normalized so that the volume of $G$ is one.  In particular if $G$ is discrete, then $\int dg$ is just a uniform average over group elements.  There are three natural families of operators on $\HG$:
\begin{align}\nonumber
W_{\alpha,ij}|g\ran&=D_{\alpha,ij}(g)|g\ran\\\nonumber
L_h|g\ran&=|hg\ran\\
R_h|g\ran&=|gh\ran.
\end{align}
Here $\alpha$ denotes an irreducible representation of $G$, $D_{\alpha,ij}(g)$ are the representation matrices of that representation, and $W_{\alpha,ij}$ is called the \textit{Wilson link in representation $\alpha$}.  $L_h$ and $R_h$ are called \textit{left and right multiplication operators}, if we view $U^\alpha_{ij}$ as analogous to the position operator in ordinary single-particle quantum mechanics then $L_h$ and $R_h$ are analogous to the momentum operator.  The hermiticity properties of these operators are 
\begin{align}\nonumber
W^\dagger_{\alpha,ij}&=W_{\alpha_*,ji}\\\nonumber
R_h^\dagger&=R_{h^{-1}}\\
L_h^\dagger&=L_{h^{-1}},
\end{align}
where as in definition \ref{gaugedef} we have taken $\dagger$ acting on $W_{\alpha,ij}$ to exchange $ij$ indices in addition to performing the Hilbert space adjoint. $\alpha_*$ is the conjugate representation of $\alpha$.  The algebra of these operators is determined by the following relations:
\begin{align}\nonumber
L_hL_{h'}&=L_{hh'}\\\nonumber
R_hR_{h'}&=R_{h'h}\\\nonumber
L_h R_{h'}&=R_{h'}L_h\\\nonumber
R_{h}^\dagger W_\alpha R_h&=W_\alpha D_\alpha(h)\\
L_h^\dagger W_\alpha L_h&=D_\alpha(h)W_\alpha,\label{gaugealg}
\end{align}
where in the last two equations we have suppressed representation indices using matrix multiplication.  This algebra is invariant under the exchange
\begin{align}\nonumber
L_h&\leftrightarrow R_{h^{-1}}\\
W_\alpha&\leftrightarrow W_\alpha^\dagger,\label{exchange}
\end{align}
and choosing a frame under \eqref{exchange} amounts to choosing an orientation for the edge.  To avoid confusion we will therefore always label Wilson links and left/right multiplication operators by links $(\vx,\vd)$ instead of edges, even though the operators for the two links corresponding to the same edge act on the same Hilbert space.

Gauge transformations are then defined to act at sites of the lattice, the action of a gauge transformation by a group element $g$ at site $\vx$ on the Hilbert space \eqref{bigH} is given by
\be\label{latticegauge}
U_{g}(\vx)\equiv \prod_{\vd} R_g^\dagger(\vx,\vd)V_g(\vx)=\prod_{\vd}L_g(\vx+\vd,-\vd)V_g(\vx),
\ee
where the product is over all $\delta$ such that the link exists and $V_g(x)$ is an additional unitary operator which implements the gauge transformation on any charged matter fields at site $\vx$.  Physical states are then required to be invariant under these transformations for arbitrary $g$ and $\vx$, with the possible exception of gauge transformations at boundary points as we discuss momentarily.  Under a general gauge transformation $\prod_{\vx}U_{g(\vx)}(\vx)$ the operators transform as
\begin{align}\nonumber
W_\alpha^\prime(\vx,\delta)&=D_\alpha(g(\vx+\vd))W_\alpha D_\alpha(g^{-1}(\vx))\\\nonumber
R^\prime_h(\vx,\vd)&=R_{g^{-1}(\vx)hg(\vx)}(\vx,\vd)\\\nonumber
L^\prime_h(\vx,\vd)&=L_{g^{-1}(\vx+\vd)hg(\vx+\vd)}(\vx,\vd)\\
\phi^\prime(\vx)&=D_{\alpha}(g(\vx))\phi(\vx),\label{latgaugetrans}
\end{align}  
where $\phi$ are matter fields transforming in representation $\alpha$ of $G$.  One obvious set of gauge-invariant operators are the Wilson loops 
\be
W_\alpha(C)\equiv\Tr\left(W_{\alpha}(\ell_N)\ldots W_{\alpha}(\ell_1)\right),
\ee
where $C$ is a closed curve consisting of the links $\ell_1,\ell_2,\ldots \ell_N$ in order.  If there are matter fields transforming in representation $\alpha$, then we also have gauge-invariant ``string'' operators
\be\label{stringop}
\phi_C(\vec{y},\vx)\equiv \phi^\dagger(\vec{y})W_\alpha(\ell_N)\ldots W_\alpha(\ell_1)\phi(\vx),
\ee
where now $C\equiv\{\ell_1,\ldots,\ell_N\}$ is a curve from point $\vx$ to point $\vec{y}$.

\bfig
\includegraphics[height=6cm]{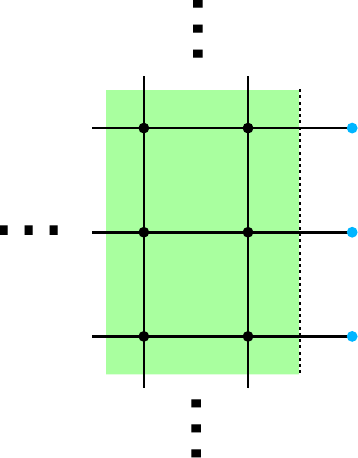}
\caption{Lattice points in the vicinity of a boundary: the blue dots are sites which are external to the green spatial region $R$, but which are endpoints of links which puncture its boundary $\partial R$.}\label{latticeboundaryfig}
\efig
We can also consider boundary conditions, in figure \ref{latticeboundaryfig} we illustrate a two-dimensional spatial lattice in the vicinity of a spatial boundary.  In constructing the Hilbert space, we need to decide whether or not we quotient by gauge transformations associated to the blue sites which are outside of the boundary but attached to links which pierce it.  If we do, then we are simply removing the degrees of freedom on these boundary-piercing links, so we are left with only the ``purely interior'' degrees of freedom.  In Maxwell theory this corresponds to setting $\star F$ to zero at the boundary, which is one way to satisfy the boundary term \eqref{maxvar} in the variation of the Maxwell action \eqref{maxwell}.  Alternatively if we do not quotient by the gauge transformations on the blue sites, in Maxwell theory this corresponds to setting the pullback of $A$ to the boundary to zero (note that there are no links connecting blue sites).  The latter boundary conditions are the natural ones in AdS/CFT, so we will adopt them here.   We then have three more interesting classes of gauge-invariant operators illustrated in figure \ref{bulkops}: 
\bi
\item \textit{Wilson lines}, defined by 
\be
W_{\alpha}(C)\equiv W_{\alpha}(\ell_N)\ldots W_{\alpha}(\ell_1),
\ee
where $C$ is a curve beginning with a link $\ell_1$ that pierces the boundary from the outside and ending with a link $\ell_N$ which pierces the boundary from the inside.
\item \textit{Wilson lines ending on charges}, which are defined similarly except that only one end pierces boundary;  the other is instead at a matter operator charged in either the same representation as the line or its conjugate representation, depending on the orientation.  For example if $\phi(\vec{x})$ is a scalar field in representation $\alpha$ at spatial point $\vx$, and $C\equiv\{\ell_1,\ldots \ell_N\}$ is a sequence of links connecting $\vx$ to the boundary, then
\be
\phi_C(\vx)\equiv W_\alpha(\ell_N)\ldots W_\alpha(\ell_1)\phi(\vx)
\ee
is a gauge-invariant operator.   
\item \textit{Localized asymptotic symmetries}, defined by
\be
U(g,R)\equiv \prod_{\ell\in R}L_{g}(\ell),
\ee
with $R$ a subset of the outward-pointing boundary-piercing links.  
\ei
\bfig
\includegraphics[height=5cm]{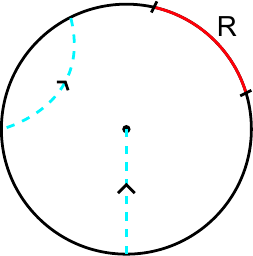}
\caption{Gauge invariant operators in the presence of a boundary.}\label{bulkops}
\efig
The reader can check using \eqref{gaugealg} that these operators have the properties described in conditions (1) and (2) of definition \ref{gaugedef}, and are also consistent with definition \ref{chargedef} if there is charged matter.  

To discuss condition (3) from definition \ref{gaugedef}, we need to introduce a Hamiltonian.  There is no unique choice of Hamiltonian, just as there is no unique choice of action, but one nice option is to take the Hamiltonian which is obtained from the standard Wilson action \cite{Wilson:1974sk} in the limit of continuous time \cite{Creutz:1976ch,Fradkin:1978th}.  In writing this Hamiltonian it will be convenient to allow Wilson lines in representations which are not irreducible: these are defined in the obvious way by direct sums of the Wilson lines in irreducible representations.  For convenience we will restrict to a cubic lattice, in which case there is a natural set of ``smallest loops'' called \textit{plaquettes}, and we will set the lattice spacing to unity.\footnote{More generally we can consider any lattice with the structure of a CW complex, see appendix \ref{stabilizerapp}.} The form of the Hamiltonian is different depending on whether the gauge group $G$ is discrete or continuous, for the continuous case we have the Kogut-Susskind Hamiltonian \cite{Kogut:1974ag}
\be\label{KSH}
H=\frac{g^2}{4}\sum_{\ell\in L}\sum_b P_b(\ell)P_b(\ell)-\frac{1}{g^2}\sum_{\gamma\in \Gamma}W_\alpha(\gamma).
\ee
Here $L$ is the set of (oriented) links, $\Gamma$ is the set of (oriented) plaquettes, $P_b$ is minus the Yang-Mills electric flux, defined by
\be
L_{e^{i\epsilon^b T_b}}\equiv e^{-i\epsilon^b P_b},
\ee
and $\alpha$ is a faithful but not necessarily irreducible representation of $G$.\footnote{We need to allow reducible representations because some compact groups do not have any faithful irreducible representations, two examples of such groups are $\mathbb{Z}_2\times \mathbb{Z}_2$ and $U(1)\times U(1)$.  By theorem \ref{liefaithfulthm}, a finite-dimensional faithful representation always exists for any compact Lie group.} The sum over plaquettes includes plaquettes which contain boundary-piercing links, in these plaquettes the Wilson line is defined to be the identity on links which are not part of the lattice.  We are here normalizing the Lie algebra generators in the representation $\alpha$ such that
\be
\Tr\left(T^{\{\alpha\}}_a T^{\{\alpha\}}_b\right)=\frac{1}{2}\delta_{ab},
\ee
so in the continuum limit this Hamiltonian matches onto the standard Yang-Mills Hamiltonian
\be
H=\int d^{d-1}x\left(\frac{g^2}{2} P_b^i P_b^i+\frac{1}{4g^2} F^b_{ij}F^{b,ij}\right),
\ee
with $P_b^i\equiv \frac{1}{g^2}F^{b,i0}$.  We note in passing that the Kogut-Susskind kinetic operator $\sum_a P_a P_a$ has a beautiful group-theoretic interpretation: for any compact Lie group, by Schur orthogonality and the Peter-Weyl theorem (see theorems \ref{schurthm} and \ref{pwthm}) the states
\be
|\alpha,ij\ran\equiv \frac{1}{\sqrt{d_\alpha}}\int dg D_{\alpha,ij}(g)|g\ran,
\ee
where $\alpha$ is any irreducible representation and $d_\alpha$ is its dimension, are an orthornomal basis for the Hilbert space $\mathcal{H}_G$ at each edge \cite{knapp2013lie}.  When $G$ is continuous, $\sum_a P_a P_a$ is then just the quadratic casimir of the Lie algebra representation associated to $\alpha$:
\be
\sum_aP_aP_a|\alpha,ij\ran=\sum_a T^{\{\alpha\}}_a T^{\{\alpha\}}_a |\alpha,ij\ran.
\ee

For discrete gauge groups, the continuous-time Wilson action instead leads to a Hamiltonian
\be\label{discreteham}
H=-\frac{g^2}{2}\sum_{\ell\in L} \sum_{h\in S}L_h(\ell)-\frac{1}{g^2}\sum_{\gamma\in \Gamma}W_\alpha(\gamma),
\ee
where $\alpha$ is again a faithful representation of $G$ and $S$ is the set of elements of $G$ which maximize the quantity $\Tr\left(D_\alpha(h)+D_\alpha(h^{-1})\right)$ as we vary over the set of group elements which are not the identity.  We describe how to obtain this somewhat unusual kinetic term in appendix \ref{gaugeapp}, we were unable to find it in the literature except in the special case $G=\mathbb{Z}_n$ \cite{Wegner:1984qt,Fradkin:1978th,Arakawa:2003ae}.

In either the discrete or continuous case, if we have scalar matter fields transforming in a representation $\beta$ of the gauge group then we should also add to the Hamiltonian a matter kinetic term 
\begin{align}\nonumber
H_{matter}=\sum_{\vec{x}}\Big(\pi(\vx)&\pi^\dagger(\vx)+m^2\phi^\dagger(\vx)\phi(\vx)\Big)\\\nonumber
-\frac{1}{2}\sum_{(\vx,\vec{\delta})\in L}\Big(&\phi^\dagger(\vx+\vec{\delta})W_\beta(\vx,\vec{\delta})\phi(\vx)+\phi^\dagger(\vx)W_\beta^\dagger(\vx,\vec{\delta})\phi(\vx+\vec{\delta})\\
&-\phi^\dagger(\vx+\vec{\delta})\phi(\vx+\vec{\delta})-\phi^\dagger(\vx)\phi(\vx)\Big).
\end{align}
Here the $\beta$ representation indices have been contracted in the obvious way.  If the matter fields themselves are also discrete, then a set of manipulations analogous to those for a discrete gauge field in appendix \ref{gaugeapp} will tell us what should replace $\pi\pi^\dagger$ in this Hamiltonian.  In fact the only example we will study in detail is an example of this type.

Finally we note that in this formalism we can introduce a temporal Wilson line in representation $\alpha$ which punctures our timeslice at site $\vx$, as required by condition (3) in definition \ref{gaugedef}, in the following manner.  We first extend the pre-constraint Hilbert space \eqref{bigH} by including a new tensor factor $\mathcal{H}_\alpha$ with Hilbert space dimensionality $d_\alpha$:
\be
\wt{\mathcal{H}}=\mathcal{H}\otimes \mathcal{H}_\alpha.
\ee
We then modify the gauge transformation \eqref{latticegauge} at site $\vx$ to be
\be\label{temporalw}
\wt{U}_g(\vx)\equiv U_g(\vx)D_\alpha(g),
\ee
where $D_\alpha(g)$ acts on our new tensor factor, and then instead of demanding physical states are invariant under $U_g(\vx)$ we instead demand that they are invariant under $\wt{U}_g(\vx)$.  The form of the Hamiltonian and the constraints away from $\vx$ are unmodified.  This illustrates clearly that temporal Wilson lines should \textit{not} be thought of as operators: they are modifications of the theory, and in particular introducing one changes the spectrum of Hamiltonian since different states become physical.

\subsection{Phases of gauge theory}\label{phasesec}
We now illustrate the notion of a long-range gauge symmetry in the simplest lattice gauge theory with charged matter: the $\mathbb{Z}_2$ gauge theory with a single discrete matter field $\wt{Z}=\pm 1$ transforming in the sign representation of $\mathbb{Z}_2$.  Since every element of $\mathbb{Z}_2$ is its own inverse, there is no meaning to the orientation of links.  It is therefore convenient to relabel the gauge field operators 
\begin{align}\nonumber
Z(e)&\equiv W_{sign}(\ell)=W_{sign}(-\ell)\\
X(e)&\equiv L_{-1}(\ell)=L_{-1}(-\ell),
\end{align}
so that we have the Pauli algebra $Z^2=X^2=1$, $ZX=-XZ$.  The matter fields are $\wt{Z}$ and its conjugate $\wt{X}$, which again obey the Pauli algebra.  Since we want the ground state to be invariant under gauge transformations, the natural boundary condition for the matter fields (analogous to $\phi=0$ in scalar electrodynamics) is to not include matter fields on the blue sites in figure \ref{latticeboundaryfig}.  The Hamiltonian is
\begin{align}\nonumber
H=&-g^2\sum_{e\in E}X(e)-\frac{1}{g^2}\sum_{\gamma\in\Gamma}Z(\gamma)\\
&-\lambda\sum_{\vx}\wt{X}(\vx)-\frac{1}{\lambda}\sum_{e\in E}\wt{Z}(e_+)Z(e)\wt{Z}(e_-),\label{Z2ham}
\end{align}
where $e_+$ and $e_-$ denote the two sites at the ends of $e$ and $Z(\gamma)=W_{sign}(\gamma)$, the sum over $\vx$ in the term proportional to $\lambda$ does not include the blue sites in figure \ref{latticeboundaryfig}, and the sum over $e$ in the term proportional to $1/\lambda$ does not include boundary-piercing links. The phase diagram of this model as a function of $\lambda$ and $g$ was studied in detail in \cite{Fradkin:1978dv} (see \cite{Banks:1979fi} for a similar analysis of the $U(1)$ case).  We here just review a few limits to illustrate the power of condition (3) in definition \ref{gaugedef} for characterizing this phase diagram.  In discussing the phase diagram it will sometimes be convenient go to the ``unitarity gauge'' $\wt{Z}=1$, after which the Hamiltonian can be expressed entirely in terms of the gauge degrees of freedom:
\be
H=-g^2\sum_{e\in E}X(e)-\frac{1}{g^2}\sum_{\gamma\in \Gamma}Z(\gamma)-\lambda \sum_{\vec{x}}\prod_{\delta}X(\vx,\delta)-\frac{1}{\lambda}\sum_{e\in E}Z(e).\label{Uham}
\ee

\bfig
\includegraphics[height=5cm]{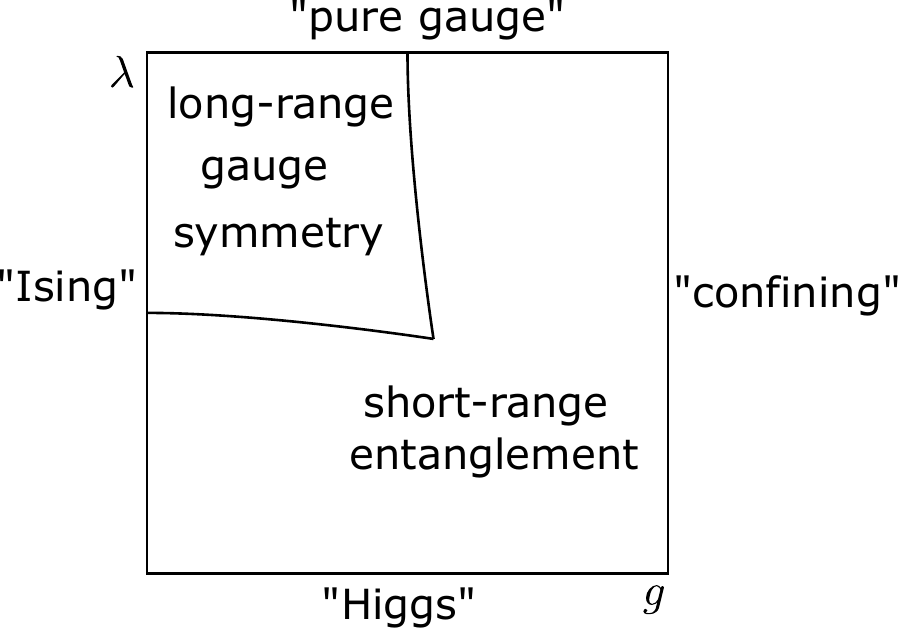}
\caption{The Fradkin-Shenker phase diagram of $\mathbb{Z}_2$ lattice gauge theory with a charged matter field for $d\geq 3$.  The presence of a long-range gauge symmetry is what distinguishes the ``topological'' or ``free-charge'' phase from the ``Higgs-confining'' phase, which has only short-range entanglement.  This phase boundary exists even though there is a matter field which is charged in the fundamental representation of the gauge group.}\label{z2phasefig}
\efig
We show the phase diagram for this model from \cite{Fradkin:1978dv} in figure \ref{z2phasefig}.  We can motivate it by considering a few limits:
\bi
\item \textbf{Large $g$, finite $\lambda$:} In this limit the Hamiltonian is dominated by $-g^2\sum_{e\in E} X(e)$.  The ground state therefore has $X=1$ on all links, which by the gauge constraint means that $\wt{X}=1$ on all sites.  As this is a product state, there is no long-range correlation.  In unitarity gauge we can reach all other eigenstates by acting with subsets of the $Z(e)$ on this state: each $Z(e)$ we act with creates a string with two charges at the endpoints, as in equation \eqref{stringop}, and the energy of any such eigenstate is just proportional to the length of all strings.  In this limit the theory is therefore in what we might call a ``confining phase'': a string which connects any finite point to infinity necessarily involves a linearly divergent energy, and without such a string we cannot have a state which is charged under $U(g,\partial \Sigma)$ unless we put a charge ``right next to the boundary'', but this is precisely what our insistence on restricting to states where $T_{00}$ decays at infinity (or just being in AdS) prevents.   In this limit we therefore have no long-range gauge symmetry, since we fail condition (3) of definition \ref{gaugedef}.

\item \textbf{Small $\lambda$, finite $g$:}   In this limit the unitarity-gauge Hamiltonian \eqref{Uham} is dominated by $-\frac{1}{\lambda}\sum_{e\in E}Z(e)$, so the ground state in unitarity gauge has $Z=1$ on all links except for the boundary-piercing ones.  This is again a product state, so there is no long-range correlation.  Excited states are produced by acting with $X(e)$, and the energy again scales with the number of $X(e)$ we act with.  Since the behavior of the boundary-piercing links differs from the rest of the space, the stress tensor does not go to zero at infinity and condition (3) of definition \ref{gaugedef} is violated.\footnote{This may seem artificial, what is really going on here is that in this limit it is more natural to instead choose boundary conditions where we have $\wt{Z}=1$ on the blue sites in figure \ref{latticeboundaryfig}, and where we then include the boundary-piercing links in the $1/\lambda$ term; we then just have $Z=1$ on all links in the ground state.  This state however is not invariant under the asymptotic symmetry, as we expect for the Higgs vacuum, so it still violates condition (3).}  This phase might be called the ``Higgs phase'', since $\lambda$ behaves like the inverse of the radius of the Higgs field in the Abelian Higgs model, but in fact one of the main points of \cite{Fradkin:1978dv} is that this phase is continuously connected to the previous one, after all the excitations are string-like in both cases, so calling one ``confining'' and the other ``Higgs'' is not really sensible: it is better to just say that both have short-range entanglement and no long-range gauge symmetry.

\item \textbf{Large $\lambda$, finite $g$:} In this limit the term $-\lambda \sum_{\vx}\wt{X}(\vx)$ just sets $\wt{X}=1$ everywhere, so the matter field drops out of the Gauss constraint and we are just left with pure $\mathbb{Z}_2$ lattice gauge theory. At large $g$ this is in the ``confining phase'' we discussed above, with $X=1$ on every link in the ground state. We discuss the small $g$ limit momentarily, but, for spacetime dimension $d\geq 3$, as we decrease $g$ one expects a phase transition at some finite value of the coupling \cite{Wegner:1984qt}.  

\item \textbf{Small $g$, finite $\lambda$:} In the strict $g=0$ limit, for $d\geq 3$ the plaquette term sets all $Z=1$ so the Hamiltonian \eqref{Z2ham} just becomes that of the quantum transverse field Ising model.  This again has a phase transition at some finite value of $\lambda$.  There is no gauge field left, so there is no long-range gauge symmetry.  This transition persists when $g$ is small but nonzero, at small $\lambda$ we should still be in the ``Higgs'' regime, but as $\lambda$ increases the Ising transition moves us to a different phase, which we now study.

\item \textbf{Small $g$, large $\lambda$:}  This is the fun regime. In unitarity gauge, the Hamiltonian becomes
\be\label{toric}
H=-\frac{1}{g^2}\sum_{\gamma\in \Gamma}Z(\gamma)-\lambda \sum_{\vec{x}}\prod_{\vec{\delta}}X(\vx,\vec{\delta}),
\ee
which is sometimes called the ``toric code'' Hamiltonian  \cite{Kitaev:1997wr}.  These terms couple different links together, so the ground state will not be a product state and there is the possibility of some kind of interesting long-range correlation.  In \cite{Kitaev:1997wr} it was pointed out that one way to characterize this long-range correlation is to study the theory on closed spatial manifolds with nontrivial topology.  On such manifolds, the hamiltonian \eqref{toric} has a nontrivial ground state degeneracy, which depends in an interesting way on the choice of manifold.  This certainly is not true for the trivial product ground states we found in the previous limits, which give a unique ground state on any manifold.  Indeed in this limit the space of zero energy states is precisely that of a nontrivial topological field theory, the pure $\mathbb{Z}_2$ gauge theory.  For our purposes however we are instead interested in the excited states of this theory in infinite volume, which are nontrivial even when the spatial topology is trivial.  To understand these excitations, we need to first understand the ground state.  As explained in \cite{Kitaev:1997wr}, the Hamiltonian \eqref{toric} is nicely understood using the machinery of stabilizer codes \cite{Gottesman:1997zz}.  We review this machinery briefly in appendix \ref{stabilizerapp}, where we use it to show that on a spatial cubic lattice with our choice of boundary conditions, the Hamiltonian \eqref{toric} has a unique ground state, on which $\prod_{\delta}X(\vx,\delta)$ and $Z(\gamma)$ both act as the identity for all $\gamma$ and $\vx$ (we also compute the ground state degeneracy for any lattice which discretizes a spatial $d-1$-manifold, with or without boundary, in terms of topological invariants of that manifold).  We may then ask how creating a charged excitation changes the energy.  For example we can act on this ground state with a line of $Z$ operators which extends from a boundary-piercing link to some finite point $\vx_0$ in the center of the lattice.  This operator clearly commutes with all $Z(\gamma)$, and in fact it commutes with almost all $\prod_{\vec{\delta}}X(\vx,\delta)$ as well.  The only term in the Hamiltonian \eqref{toric} it does not commute with is $\prod_{\vec{\delta}}X(\vx_0,\delta)$, which it anticommutes with instead.  Therefore acting with this operator on the ground state increases the energy by $2\lambda$, which obviously is finite even in infinite volume.  Thus this phase allows finite-energy charged excitations: in \cite{Fradkin:1978dv} it was called the ``free charge'' phase for this reason, we instead say that there is a $\mathbb{Z}_2$ long-range gauge symmetry.
\ei
Thus we see that condition (3) in definition \ref{gaugedef} is indeed sufficient to distinguish the two phases in diagram \ref{z2phasefig}, even though the Wilson loop has a perimeter scaling in both phases.\footnote{Note that we did not need to use a temporal Wilson line to check condition (3), since we could just directly use the dynamical charge $\wt{Z}$. The analysis would have been identical using a temporal Wilson line: given the modified constraint \eqref{temporalw}, we have a new set of gauge-invariant operators which are simply Wilson lines which connect the boundary to the location of the temporal Wilson line.  Their energetics work in the same way as Wilson lines which end on dynamical charges.}  On one side of the phase boundary there is a long-range gauge symmetry, while on the other side there isn't.

\subsection{Comments on the topology of the gauge group}\label{gaugetopsec}
In lattice gauge theory with no charged matter, the topology of the gauge group is explicitly included in the formulation of the theory.  This may at first seem to be in some tension with the fact that if $G$ and $G'$ are connected Lie groups with isomorphic Lie algebras, then for $d>2$ they have identical continuum Yang-Mills path integrals on $\mathbb{R}^d$.  In more detail, we can define the boundary conditions on the Yang-Mills field in $\mathbb{R}^d$ by conformally compactifying to $\mathbb{S}^d$.  $G$ and $G'$ have the same set of principal bundles over $\mathbb{S}^d$, as well as the same set of connections on those bundles, and therefore the sum over bundles and connections on those bundles is the same for $G$ and $G'$.\footnote{To see that the bundles are the same, note that $\mathbb{S}^d$ is constructed from the union of two balls, each of which is contractible and has boundary $\mathbb{S}^{d-1}$.  Principal $G$ bundles over $\mathbb{S}^d$ are therefore classified by  $\pi_{d-1}(G)$.  Since $G$ and $G'$ are connected and share a Lie algebra, they are each a quotient of the same connected simply-connected covering group $\wt{G}$ by some discrete central subgroup (see theorem \ref{liethm}).  Using basic properties of covering spaces we then have $\pi_{d-1}(G)=\pi_{d-1}(G')=\pi_{d-1}(\wt{G})$ for $d>2$ \cite{hatcher2002algebraic}.  Since connections on these bundles are Lie-algebra-valued one-forms, they will then clearly also coincide for $G$ and $G'$.}  The global information about the gauge group in the lattice theory is lost in the continuum limit because integrals over group variables on the edges of the lattice are dominated by group elements which are close to the identity.  But does this then mean that if $G$ and $G'$ have the same Lie Algebra, then pure Yang-Mills theory on $\mathbb{R}^d$ with gauge group $G$ is identical to pure Yang-Mills theory on $\mathbb{R}^d$ with gauge group $G'$?  This question was studied in detail in \cite{Aharony:2013hda}, where it was argued that in fact they are different.  We basically agree with their reasoning and their conclusion, but as our perspective is different in emphasis we now briefly present it.\footnote{If there are charged matter fields then the meaning of the topology of the gauge group is sometimes more obvious: for example an $SU(2)$ gauge theory with matter in the fundamental representation of $SU(2)$ cannot be viewed as an $SO(3)$ gauge theory, since the $SU(2)$ fundamental is not a representation of $SO(3)$.}  

The main point of \cite{Aharony:2013hda} was that, although the Yang-Mills path integral is identical on $\mathbb{R}^d$ for gauge group $G$ and gauge group $G'$, the set of line and surface operators is actually different.  What we want to emphasize here is that this statement is true \textit{despite} the fact that the Hilbert space and Hamiltonian of these theories on spatial $\Rdd$ are identical.  This may seem paradoxical: operators are just maps from Hilbert space to itself, so how can two theories with the same Hilbert space have different operators?  The resolution of this puzzle is that the operators exist either way, it is only their interpretation which is different.  This is possible because, as we reviewed in section \ref{notsec}, there is additional algebraic structure in quantum field theory beyond just the set  of all operators on Hilbert space.  Namely, for each spatial subregion $R$ we must have an associated subalgebra $\mathcal{A}[R]$ of the full set of operators.  Until we have decided which subalgebras are associated with which spatial regions, we have not fully specified a quantum field theory.  We now illustrate this for the simplest example: $G=\mathbb{R}$ and $G'=U(1)$.  

In fact we already discussed the difference between $\mathbb{R}$ and $U(1)$ gauge theory for $d=4$ in section \ref{anomsplit}: the $\mathbb{R}$ theory has more Wilson lines, since the representations of $\mathbb{R}$ are continuous, but the $U(1)$ theory has 't Hooft lines which the $\mathbb{R}$ theory lacks.  We now expand a bit more on this point.  On $\Rd$, neither $U(1)$ nor $\mathbb{R}$ have nontrivial bundles, so we may simply define the Hilbert space to be null space of the Gauss constraint in a Hilbert space spanned by a set of states labeled by spatial configurations of a one-form $A_\mu$. Acting on this Hilbert space we may consider the set of two-dimensional  operators
\be
W_\alpha(D)=e^{i\alpha \int_D F},
\ee
with $D$ a spatial disk, and the set of codimension-two operators
\be
T_\beta(B)=e^{\frac{2\pi i}{q^2}\beta\int_B \star F},
\ee
where $B$ is a $d-1$ dimensional spatial ball.  These operators are clearly gauge-invariant for any real $\alpha$ and $\beta$, and it would be silly to say that one or the other doesn't exist.  The nontrivial point however is that there are certain collections of special values of $\alpha$ and $\beta$ for which we can interpret the $W_\alpha$ as one-dimensional loop operators on $\partial D$ and the $T_\beta$ as $d-3$ dimensional closed surface operators on $\partial B$ without violating commutativity at spacelike separation: for $\alpha$ and $\beta$ in such a set, $W_\alpha$ and $T_\beta$ commute even if $\partial D$ and $\partial B$ are linked in space (see \cite{Buchholz:2015epa,Buchholz:2016yqp,Buchholz:2018npi} for related discussion).  The former are then referred to as Wilson lines and the latter as 't Hooft surfaces.  These sets are not all mutually compatible, so we need to make a definite choice which one to adopt.  The simplest such collection allows $\alpha$ to be an arbitrary number but requires $\beta$ to vanish: making this choice is equivalent to choosing the gauge group to be $\mathbb{R}$.  Another good choice is to take $\alpha$ and $\beta$ to both be integers: this is equivalent to choosing the gauge group to be $U(1)$ with coupling $q$.  More generally what we need is the Dirac quantization condition
\be
\alpha\beta\in \mathbb{Z}
\ee
for all allowed $\alpha$ and $\beta$: up to a rescaling of $q$ all other choices for the allowed set are equivalent to either $\beta=0$, $\alpha$ arbitrary or $\alpha,\beta\in \mathbb{Z}$.  

This discussion hopefully makes it clear that on $\Rd$ the distinction between $\mathbb{R}$ and $U(1)$, or more generally between $G$ and $G'$, is ``semantic''.  The reader may object that we should therefore instead just view the $G$ and $G'$ theories on $\Rd$ as being identical.  We disagree: as emphasized in \cite{Aharony:2013hda}, once we study these theories on other spacetime topologies they have different principal bundles and they really are different.  For example the spectrum of the Hamiltonian in the $\mathbb{R}$ gauge theory is continuous on a spatial torus, while in the $U(1)$ gauge theory it is discrete.  These distinctions arise because on more complicated topologies we can have loops and codimension-three closed surfaces which are not boundaries, so there can be Wilson loops and 't Hooft surfaces which cannot be realized as integrals of the field strength.  We view it is a major advantage of demanding the additional structure of a local net of operator algebras on $\Rd$ that it forces us to acknowledge the distinction between $U(1)$ and $\mathbb{R}$ \textit{without needing to go to other topologies}. 

For some readers this may still feel a bit abstract however: wouldn't it be better if we could just do an experiment?  For example in real quantum electrodynamics is the gauge group $U(1)$ or $\mathbb{R}$?  One possibility would be to argue that this question is semantic and therefore meaningless, but this is clearly false.  For example we might tomorrow observe a magnetic monopole, in which case we would immediately know the gauge group is $U(1)$.   Moreover if we are lucky, that monopole might have the minimal charge allowed by Dirac quantization (meaning $\beta=1$), in which case the set of allowed $\alpha$ and $\beta$ would be determined once and for all, as would the gauge group of electrodynamics.  Alternatively if we could convince ourselves we'd discovered a particle of charge $\sqrt{2}$, we would immediately know the gauge group is $\mathbb{R}$.\footnote{Convincing ourselves of this would probably be impossible, since we always measure charge with finite precision.  A version of this which is more practical would be discovering a heavy particle in the fundamental representation of $SU(3)$ color which was neutral under the electroweak $SU(2)\times U(1)$, which would immediately tell us that the gauge group of the standard model is $SU(3)\times SU(2)\times U(1)$ instead of $(SU(3)\times SU(2)\times U(1))/\mathbb{Z}_6$.}   Absent such discoveries, we are in a situation where indeed one might say that we do not know whether the gauge group of electrodynamics is $U(1)$ or $\mathbb{R}$.  As Bayesians however, it would be crazy to ignore the observational fact that the charges of the electron and proton are exact opposites to within one part in $10^{21}$ \cite{Patrignani:2016xqp}.  By far the most plausible explanation of this remarkable agreement is that the gauge group of electrodynamics is indeed $U(1)$, which presumably is why this is the terminology most people use.  

In fact one of the main goals of this paper is to argue that in quantum gravity dynamical objects exist carrying all charges allowed by the topology of the gauge group (conjecture \ref{allcharge}), which is precisely saying that in quantum gravity on $\Rd$ (or $AdS_d$) we will never be in the situation where the gauge group is ambiguous.  This is quite plausible also from the point of view that quantum gravity should include a sum over topologies, since on general topologies the gauge group is unambiguous.  Indeed our argument for conjecture \ref{allcharge} in AdS/CFT will be based on a refined version of this observation.

\subsection{Mixing of gauge and global symmetries}\label{mixsec}
There are interesting situations where global symmetries can combine with long-range gauge symmetries to make a more general kind of structure.\footnote{This section was inspired by a discussion with Thomas Dumitrescu.}  Rather then attempting a general discussion of this phenomenon, we will just give a simple example.  Namely, consider two free complex scalar fields in $d=4$ dimensions.  This theory has a $U(2)$ global symmetry.  We may then turn on a dynamical gauge field for the diagonal $U(1)$ subgroup.  What is global symmetry group of the resulting theory?  A first guess might be $SU(2)$, but this wrong because the central element 
\be 
g_c\equiv\begin{pmatrix}-1 & 0\\0 & -1\end{pmatrix}
\ee
is actually a long-range gauge transformation; it acts trivially on all local operators, violating condition (c) of definition \ref{globaldef}.  We might then guess $SU(2)/\mathbb{Z}_2$, but this group is not represented accurately on the full Hilbert space.   For example  the group element $\begin{pmatrix}i&0\\0&-i\end{pmatrix}$ squares to $g_c$, which is represented nontrivially on the Hilbert space as an element of the long-range gauge symmetry group, instead of squaring to the identity like it would in $SU(2)/\mathbb{Z}_2$.  One way of describing this situation is to say that the global symmetry group is indeed $SU(2)/\mathbb{Z}_2$, but that it is realized on the Hilbert space in the generalized kind of projective representation discussed in appendix \ref{projapp}, which allows the phase $\alpha$ from equation \eqref{projrep} to depend on the total electric charge.  This is one way to think about it, but we think a better description is to say that, rather than having a separate global symmetry and long-range gauge symmetry, the two are mixed together into a new kind of symmetry with symmetry group $U(2)$.  Clearly more could be said about this, but we leave it for future work.  We note now however that our argument against global symmetries in quantum gravity will rule out this possibility as well.

\section{Symmetries in holography}\label{symsec}
Having at last established our notions of global symmetry (definition \ref{globaldef}) and long-range gauge symmetry (definition \ref{gaugedef}) in 
quantum field theory, we are in a position to move on to quantum 
gravity and begin establishing conjectures \ref{nosym}-\ref{compact} in 
AdS/CFT.  Along the way we will also clarify the duality between global 
symmetries in the boundary theory and long-range gauge symmetries in the bulk.

\subsection{Global symmetries in perturbative quantum gravity}
To argue that there are no global symmetries in quantum gravity, we need to first acknowledge that our definition \ref{globaldef} of global symmetry, which is for quantum field theories, needs to be modified to deal with the following two issues:
\bi
\item General relativity has a long-range spacetime gauge symmetry, diffeomorphism invariance, which precludes the existence of any strictly local gauge-invariant operators.  Since condition (c) in definition \ref{globaldef} required global symmetries to act faithfully on the local operators, that definition becomes trivial. 
\item We do not yet have a complete bulk theory of quantum gravity, and our understanding based on effective field theory applies only in restricted situations.  Since we are trying to rule out \textit{exact} global symmetries, we need to say something about how they are defined in regimes which go beyond the validity of effective field theory.  
\ei
We postpone the second point to the next subsection, here we first address the question of how to define global symmetries in gravitational theories within the framework of effective field theory coupled perturbatively to gravity. 

We begin by recalling a few basic facts about the long-range diffeomorphism symmetry of gravity in asymptotically-AdS spacetime.  In any asymptotically-AdS spacetime, the geometry is required to approach the AdS metric\footnote{So far we have used $d$ to denote the spacetime dimension of whatever quantum field theory we are considering.  Since we now will be considering both the bulk gravity theory and its dual conformal field theory, we now adopt the standard convention that the boundary CFT has $d$ spacetime dimensions.}
\be
ds^2=-(r^2+1)dt^2+\frac{dr^2}{r^2+1}+r^2 d\Omega_{d-1}^2
\ee
at large $r$.  As in our discussion of $U(1)$ gauge theory below equation \eqref{maxvar}, we should only consider diffeomorphisms which preserve these boundary conditions, and moreover we should quotient only by those diffeomorphisms which become trivial at large $r$ \cite{Henneaux:1985tv}.  The diffeomorphisms which are nontrivial at large $r$ but nonetheless preserve the boundary conditions are precisely those which approach isometries of $AdS_{d+1}$, so the quotient of the set of diffeomorphisms which approach isometries by the set of diffeomorphims which become trivial is isomorphic to the group of $AdS_{d+1}$ isometries, $SO(d,2)$.\footnote{If there are fermions then this group is instead $Spin(d,2)$.  When $d=2$ the symmetry is enhanced to Virasoro symmetry, but we will not make use of this.}  Physical states and operators must both be invariant under diffeomorphisms which become trivial at infinity, but they will mostly transform in nontrivial representations of the quotient group $SO(d,2)$, which we will refer to as the \textit{asymptotic conformal symmetry}: it is a spacetime version of a long-range gauge symmetry.

It is clear that any strictly-local bulk operator will not be invariant under the set of diffeomorphisms which become trivial at infinity (unless it is topological, which is a situation we don't consider here).  To define a physical observable, we therefore need to introduce some gravitational analogue of the Wilson lines extending from the boundary to an interior point which we used to define operators carrying gauge charge in definition \ref{chargedef}.  In bulk effective field theory coupled perturbatively to gravity, we can construct such operators as ``gravitationally dressed'' versions of ordinary local operators.  The idea is to introduce a ``cutoff surface'' at some large but finite $r=r_c$, choose a point $x\equiv(r_c,t,\Omega)$ on this surface, fire a spatial geodesic into the bulk from $x$ of proper length 
\be
\ell\equiv\hat{\ell}+\log r_c,
\ee
and with tangent vector at $x$ of the form 
\be
\xi=-(r_c+\hat{\xi}^r/r_c)\partial_r+(\hat{\xi}^i/r_c^2) \partial_{i},
\ee
where the $i$ index runs over $t$ and $\Omega$, and then insert a local operator at the bulk endpoint $\mathring{x}$ of this geodesic.  In the limit $r_c\to\infty$ the quantities $\hat{\ell}$ and $\hat{\xi}^\mu$ are finite, and the choice of cutoff surface induces a residual conformal frame on the boundary.  If the operator we insert at the bulk endpoint is a scalar, then this construction defines a nonlocal operator which is invariant under diffeomorphisms which become trivial at infinity.  It is labelled by a boundary point $(t,\Omega)$, a renormalized tangent vector $\hat{\xi}^\mu$, and a renormalized geodesic distance $\hat{\ell}$.  We will refer to such an operator as a gravitationally-dressed scalar, and we illustrate one in the left diagram of figure \ref{conformalfig}.  If the local operator we insert in the bulk has tensor and/or spinor indices, then further dressing is necessary: the natural dressing, which we will adopt, is to pick the components of any such an operator in a frame which we parallel transport in from $x$ along the dressing geodesic. For example if $V^\mu$ is a vector field at the bulk endpoint $\mathring{x}(x,\hat{\ell},\hat{n})$ of a dressing geodesic, and $P_\mu^{\phantom{\mu}\nu}(\mathring{x},x)$ is the matrix which parallel transports a one-form along this geodesic from $x$ to $\mathring{x}$, then the operator
\be
\wt{V}^\mu(x,\hat{\ell},\hat{n})\equiv P_\nu^{\phantom{\nu}\mu}(\mathring{x},x)V^\nu(\mathring{x})
\ee
is invariant under diffeomorphisms which become trivial at the cutoff surface. Explicitly
\be\label{transport}
P_\nu^{\phantom{\nu}\mu}(\mathring{x},x)=\left(P \exp \left[\int_0^{\hat{\ell}+\log r_c}ds\frac{d\xi^\lambda}{ds}\Gamma_\lambda^T\right]\right)_\nu^{\phantom{\nu}\mu},
\ee
where $\Gamma_\lambda$ is the matrix with components $\left(\Gamma_\lambda\right)^\mu_{\phantom{\mu}\nu}\equiv \Gamma^\mu_{\lambda\nu}$, and $\xi^\mu$ is the tangent vector to our dressing geodesic, parameterized by proper length $s$,  so the resemblance to an ordinary Wilson line is quite clear. In particular under diffeomorphisms we have
\be\label{Ptrans}
P_{\phantom{\prime}\mu}^{\prime\phantom{\mu}\nu}(\mathring{x}^\prime,x^\prime)=\frac{\partial \mathring{x}^\alpha}{\partial \mathring{x}^{\prime\mu}}\frac{\partial x^{\prime \nu}}{\partial x^{\beta}}P_\alpha^{\phantom{\alpha}\beta}(\mathring{x},x),
\ee
so in defining $\wt{V}$ we have indeed traded in a ``bulk'' tensor index for a ``boundary'' one.  To all orders in perturbation theory around a fixed background, two operators constructed in this manner will commute if their dressing geodesics are spacelike-separated by a finite amount in that background.\footnote{The reader may consult \cite{Heemskerk:2012np,Kabat:2013wga,Almheiri:2014lwa,Donnelly:2015hta,Donnelly:2016rvo,Donnelly:2015taa,Giddings:2018umg} for more details on the algebra of these kinds of operators.}

\bfig
\includegraphics[height=6cm]{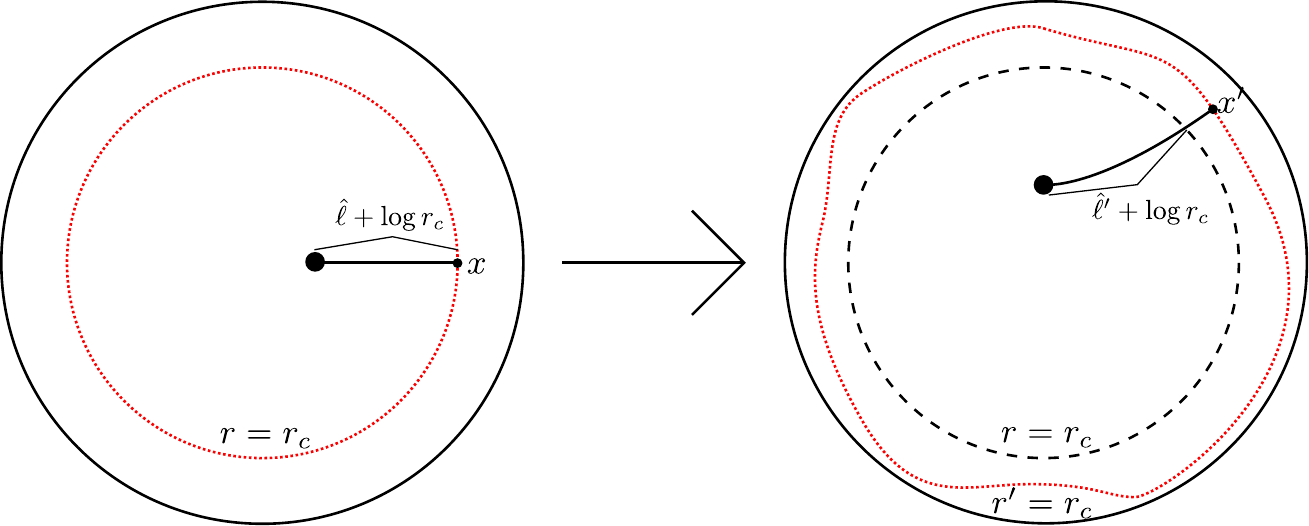}
\caption{The action of asymptotic conformal symmetry on a gravitationally-dressed local operator.  The transformation will in general change the cutoff surface to a new one, shown with red dots in the right diagram, so to define the transformed operator with respect to the old cutoff surface, shown with the black dashes, we need to change $\hat{\ell}$ and $\hat{\xi}$.  At finite $r_c$ there is also a change of $(t',\Omega')$ as we follow the geodesic in the right diagram from the new cutoff surface back to the old one, but this vanishes as $r_c\to\infty$.}\label{conformalfig}
\efig
It is instructive to consider the transformation properties of gravitationally-dressed local operators under the asymptotic conformal symmetry.  At first one might expect that this symmetry acts trivially on $\hat{\ell}$ and $\hat{\xi}$, since they are defined geometrically, but in fact it does not.  The reason is shown in figure \ref{conformalfig}: asymptotic conformal symmetries act nontrivially on the cutoff surface $r=r_c$, so acting on a dressed local operator with an asymptotic conformal symmetry sends it to an operator whose dressing geodesic is attached to a new cutoff surface.  We therefore need to change $\hat{\ell}$ and $\hat{\xi}$ to give the new location of the operator in terms of the old cutoff surface, since otherwise we would not be defining an action within a set of operators which are all defined in the same way.  We therefore have a transformation law
\be\label{conftrans}
\wt{\phi}_{a'}^\prime(t',\Omega',\hat{\ell}',\hat{\xi}')=D_{a'}^{\phantom{a'}a}\wt{\phi}_a(t,\Omega,\hat{\ell},\hat{\xi}),
\ee
where $a$ denotes a collection of Lorentz indices located at $x$,  $a'$ denotes the same collection at $x'$, and the matrix $D_{a'}^{\phantom{a]}a}$ is determined from the transformation \eqref{Ptrans} together with an additional parallel transport from the ``new'' cutoff surface back to the ``old'' one.\footnote{This business of rewriting things using the old cutoff surface is the holographic dual of the standard fact that in conformal field theory, each conformal transformation is a combination of a diffeomorphism with a Weyl transformation to return the metric to its original form (this is why for example a scalar can transform with a nontrivial conformal weight even though it is in a trivial Lorentz representation).}  Note that the transformation \eqref{conftrans} depends only on the geometry in the asymptotic region: as in electromagnetism, the identity component of the conformal group is generated by a set of local boundary integrals constructed by contracting the asymptotic Killing vectors with the \textit{boundary stress tensor} $T_{\mu\nu}$.  In AdS/CFT we can define $T_{\mu\nu}$ as simply being the CFT stress tensor, but it also has a bulk definition \cite{Balasubramanian:1999re} as the derivative of the bulk path integral with respect to the ``boundary'' metric
\be
\gamma^{\{boundary\}}_{\mu\nu}\equiv r_c^{-2} \gamma_{\mu\nu},
\ee  
where $\gamma_{\mu\nu}$ is the induced metric on the cutoff surface.

With these preliminaries out of the way, we can now give a definition of (internal) global symmetry with symmetry group $G$ in gravitational effective field theory in asymptotically-AdS space.   The basic idea is to define such a symmetry as a homomorphism from $G$ into the unitary operators on the Hilbert space which faithfully acts by conjugation on the set of gravitationally-dressed local operators, preserving the boundary point $x$, renormalized distance  $\hat{\ell}$, and renormalized tangent vector $\hat{\xi}$.  We moreover require that the symmetry operators commute with the boundary stress tensor $T_{\mu\nu}$, and therefore with the asymptotic conformal symmetry.  This definition however is not quite satisfactory, for two reasons.  First of all, in definition \ref{globaldef} we required global symmetries not just to act locally on local operators, but indeed to preserve the algebra $\mathcal{A}[R]$ of \textit{all} operators in any spatial region $R$.  In quantum field theories where all operators in $\mathcal{A}[R]$ are generated from local operators in $R$ this is automatic, but this not true in all quantum field theories; in fact we met several examples where it isn't in section \ref{globalsec}.  We can address this by requiring that global symmetries also act locally on ``gravitationally-dressed surface operators'', meaning operators where we insert a surface operator of arbitrary codimension onto a surface which is geometrically constructed starting from the end of a boundary-attached dressing geodesic.  ``Acting locally'' means that the operator is supported on the same surface before and after we act with the symmetry.  In particular this tells us that global symmetries must also act locally on operators which carry gauge charge, and are thus attached to the boundary by a dressing Wilson line. 

The other issue with the definition of the previous paragraph is that since we are now defining bulk global symmetries to act on gravitationally-dressed local operators, which are the same kind of objects which the asymptotic conformal symmetry acts on, we need to make sure that we have not accidentally included any of that symmetry as part of our definition of the global symmetry group.  Our requirements that global symmetries fix the boundary point $x$ to which any dressing geodesic is attached and commute with the boundary stress tensor dispense with most of the asymptotic conformal symmetry group.  But in fact there is a residual piece: in a theory with fermions, the $\mathbb{Z}_2$ fermion parity symmetry which acts as $+1$ on bosons and $-1$ on fermions is correctly understood as part of the asymptotic conformal symmetry group: it is a rotation by $2\pi$.  We therefore will include the following requirement for global symmetries in bulk effective field theory: given a global symmetry group $G$, for any nontrivial normal subgroup $H\subset G$ there must be two gravitationally dressed local operators which transform in the same representation of the asymptotic conformal group, but which transform in different representations of $H$. For example in the $\phi^4$ theory \eqref{phi4}, $\phi$ is a Lorentz scalar which is charged under the $\mathbb{Z}_2$ global symmetry while $\phi^2$ is a Lorentz scalar which is neutral. This requirement rules out the general possibility of a global symmetry for which the representation of any operator is determined by its Lorentz representation. Fermion parity is the only example of this that we know of, and any other would be very strongly constrained by locality.  But in any case it would not be independent of the asymptotic conformal symmetry, and so should be excluded.

\subsection{Global symmetries in non-perturbative quantum gravity}

We now turn to the question of how to define global symmetry in non-perturbative quantum gravity.  This is more difficult than  for the perturbative quantum gravity of the last section, since we need to come up with a precise property of a theory that we do not know how to describe in detail.  Once we move beyond bulk effective field theory, we are in the realm of operators which create black holes, modifications of the spatial topology, etc.  Clearly the less we need to assume about such operators the better.  On the other hand, in ruling out bulk global symmetries, which is our ultimate goal, we do not only want to discuss situations where the charged objects necessarily include low-energy effective field theory excitations of the vacuum.  For example what about a global symmetry under which the lightest charged states are black holes?   To rule out such a symmetry, we need to extend our notion of bulk local operator to include operators which create such states from the vacuum.

Let's first recall how ordinary gravitationally-dressed local operators in bulk effective field theory are embedded into the dual conformal field theory in AdS/CFT.  This subject has a long history \cite{Banks:1998dd,Polchinski:1999yd,Hamilton:2006az,Heemskerk:2012mn}, the modern understanding \cite{Almheiri:2014lwa}, recently reviewed in \cite{Harlow:2018fse}, is that bulk effective field theory operators should be viewed as logical operators on a protected subspace of the full CFT Hilbert space.  The details of this will not be important for us here, but the key point is that every bulk effective field theory operator has a limited domain of validity in the CFT, essentially consisting of those states where its dressing does not place it far behind the horizon of a black hole.  It is only in the limit where we pull such an operator all the way back to the boundary that this regime of validity extends to the full CFT Hilbert space.  We now generalize this idea to operators which create more complicated bulk objects via the following definition:\footnote{Readers who are only interested in ruling out global symmetries which act nontrivially on the fields in the low-energy effective action can skip definition \ref{quasilocaldef} and the ensuing subtleties.  In definition \ref{bulkglobaldef} they can replace ``quasilocal bulk operator'' by ``dressed local operator'', and the same contradiction still arises.}

\begin{mydef}\label{quasilocaldef}
A \textit{quasilocal bulk operator} in asymptotically-AdS quantum gravity, $\phi$, is an operator on the physical Hilbert space which has the property that there exists a maximal distance $L$ and a subspace $\mathcal{H}_{code}$ of the full non-perturbative Hilbert space such that:\footnote{This definition involves approximations defined using the Newton constant $G$, which is measured in AdS units.  For any fixed example of AdS/CFT this is just a number, and we have to live with the inherent imprecision of basing an approximation on the smallness of a finite number.  After all if it works for the fine structure constant, why shouldn't it work here?  Also, if there is a string scale which is parametrically lower than the Planck scale, then strictly speaking we should either use that scale in AdS units in our approximations or else upgrade effective field theory to effective string field theory.}
\bi
\item $\mathcal{H}_{code}$ contains the ground state.
\item The correlation functions of an $O(G^0)$ number of dressed low-energy bulk operators with renormalized distance $\hat{\ell}<L$ from the boundary and $O(G^0)$ time separation are well-described by low-energy bulk effective field theory for all states in $\mathcal{H}_{code}$ and to all orders in $G$.
\item Acting on the vacuum with $\phi$ an $O(G^0)$ number of times keeps us within $\mathcal{H}_{code}$, and there is a timeslice of the region attainable by operators with renormalized distance $\hat{\ell}<L$ on which the support of $\phi$ consists entirely of a gravitational Wilson line of the type defined in the previous subsection and a (possibly trivial) gauge Wilson line, lying on the same boundary-attached geodesic.  We sometimes say that $\phi$ is \textit{semiclassical} with respect to the operators in this region.  
\ei
\end{mydef}
This definition extends the idea of a dressed bulk local operator to an operator that affects a region of finite size in the bulk, up to the gravitational dressing which tells us where that region is and how the object created transforms under the asymptotic conformal symmetry, as well as now allowing a nontrivial gauge dressing.  The restriction to $\mathcal{H}_{code}$ ensures that we do not consider states where a huge central black hole reaches into the region $\hat{\ell}<L$.

\bfig
\includegraphics[height=7cm]{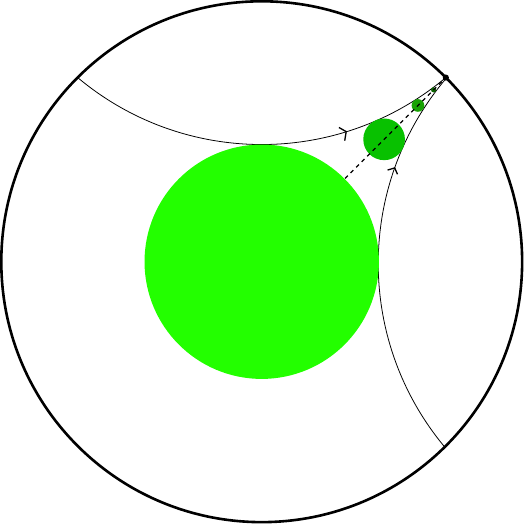}
\caption{Using an asymptotic conformal transformation to turn a quasilocal bulk operator into a boundary local operator.  The quasilocal bulk operator acts in a potentially complicated way in the bright green region in the center, and is connected to the boundary by the dashed gravitational/gauge Wilson line.  The appropriate one-parameter family of conformal transformations  ``focuses'' the operator towards the boundary endpoint of its dressing Wilson lines, and as it does so the region it affects, shown in progressively darker shades of green, gets smaller and smaller with respect to the boundary metric. States which are not in $\mathcal{H}_{code}$ for this operator get boosted off to infinite energy in the original conformal frame, so the final limiting operator is well-defined and local on the full CFT Hilbert space.}\label{zoomfig}
\efig
So far this is just a bulk quantum gravity definition, but we now make two assumptions about how bulk quasilocal operators fit into AdS/CFT:
\bi
\item[(1)] By acting with the asymptotic conformal symmetry on any bulk quasilocal operator $\phi$, and rescaling by a factor $r_c^\Delta$ for some $\Delta$, we can move all of its support to a point on the AdS boundary, in such a way that $\mathcal{H}_{code}$ can then be taken to be the full CFT Hilbert space and $\phi$ becomes a CFT local operator with conformal dimension $\Delta$. 
\item[(2)] Every CFT local operator of definite conformal dimension can be obtained from the limit of a bulk quasilocal operator in this way.
\ei
These are not assumptions which we can ``prove'' without a non-perturbative bulk description of quantum gravity, but they are quite plausible given the structure of AdS/CFT.\footnote{They \textit{can} be proven within non-perturbative models of the correspondence constructed using tensor networks, such as those of \cite{Pastawski:2015qua,Hayden:2016cfa}.}  The motivation for assumption (1) is shown in figure \ref{zoomfig}.  Assumption (2) is a kind of converse to assumption (1), roughly speaking it says that acting with any CFT local operator at boundary point $x$ creates a highly boosted bulk object which is localized near point $x$, even if that operator has very high conformal dimension.  We can justify this more carefully using the state-operator correspondence.  Indeed note that given any CFT local operator $\mO$ of definite scaling dimension, we can define a state of finite energy by inserting that operator at the south pole of the Euclidean path integral.  In the bulk this state describes an object of finite size, generically a black hole, sitting in the center of the spacetime.\footnote{There is an exception to this statement if the operator obeys some sort of differential equation in the boundary which causes the perturbation from the south pole to propagate up the side of the sphere instead of up into the center of the bulk.}  If we now act on this state with the conformal transformation shown in figure \ref{zoomfig}, the operator ``slides up'' the Euclidean sphere, as shown in figure \ref{sphereboostfig}, leading to a state which is produced by acting on the vacuum with the local operator $\mO$ at the equator.  We may then obtain the action of an associated quasilocal bulk operator on states other than the vacuum by defining it as the image of that local operator under the inverse of this conformal transformation, restricted to an appropriate $\mathcal{H}_{code}$ (strictly speaking we will also need to ``comb'' its gravitational and gauge dressing to all end at a single boundary point, but this can be done using only bulk low-energy effective field theory operators).
\bfig
\includegraphics[height=4cm]{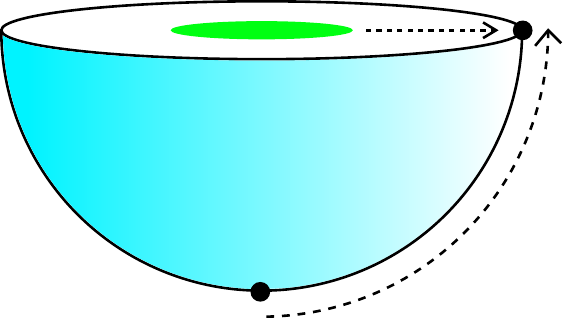}
\caption{The CFT dual of the conformal transformation in figure \ref{zoomfig}.  By the state-operator correspondence, any finite-energy state on the sphere is created by the insertion of a local operator at the bottom of the Euclidean path integral, and the conformal transformation in question just moves this operator up to the equator.}\label{sphereboostfig}
\efig

It is now at last time to give a definition of global symmetry in non-perturbative asymptotically-AdS quantum gravity.  Since we are ultimately trying to rule out the existence of such symmetries, our definition does not need to capture all features we might ideally like them to have: it is enough that those features it does capture already lead to a contradiction!  We therefore do not need to completely characterize the action of the global symmetry on all possible bulk operators, quasilocal bulk operators will basically be enough.  Here is our definition:  

\begin{mydef} \label{bulkglobaldef} A quantum gravity theory in asymptotically-$AdS$ space has a \textit{global symmetry with symmetry group $G$} if the following are true:
\bi
\item[(a)] There is a homomorphism $U(g,\partial\Sigma)$, not necessarily continuous, from $G$ into the set of unitary operators on the full diffeomorphism-invariant Hilbert space associated to any boundary time-slice $\partial \Sigma$.\footnote{In asymptotically-AdS quantum gravity, to get a Hilbert space we need to pick a boundary time slice.  A priori we are \textit{not} assuming that $U(g,\partial\Sigma)$ has support only at the boundary of the spacetime.}
\item[(b)] $U(g,\partial\Sigma)$ acts locally on the set of quasilocal bulk operators, meaning that if $\phi$ is a quasilocal bulk operator, then in the asymptotic region $\hat{\ell}<L$, $\phi$ and $U^\dagger (g,\partial\Sigma)\phi U(g,\partial\Sigma)$ both are dressed by the same gravitational Wilson line, and moreover if one is semiclassical with respect to all operators with $\hat{\ell}<L$, then so is the other with the same $L$.  
\item[(b')] $U(g,\partial \Sigma)$ acts within the algebra $\mathcal{A}[R]$ of operators in a boundary subregion $R\subset \partial \Sigma$, meaning that conjugating an element of $\mathcal{A}[R]$ by $U(g,\partial \Sigma)$ gives us another element of $\mathcal{A}[R]$.  Moreover it is continuous in the same sense as in condition (b) from definition \ref{globaldef}.
\item[(c)] $U(g,\partial \Sigma)$ acts faithfully on the set of quasilocal bulk operators which are gauge singlets, meaning that for all $g\in G$ there is a quasilocal bulk operator with no gauge Wilson line in the asymptotic region $\hat{\ell}<L$ which transforms nontrivially under $U(g,\partial \Sigma)$.
\item[(d)] For any normal subgroup $H\subset G$ containing at least two elements, there exist two gauge-singlet quasilocal bulk operators which transform in the same representation of the asymptotic conformal symmetry but different representations of $H$.

\item[(e)] $U(g,\partial \Sigma)$ commutes with the boundary stress tensor $T_{\mu\nu}$.
\ei
\end{mydef}
Note that conditions (a), (b'), and (e) apply throughout the CFT Hilbert space, while conditions (b), (c), (d) involve quasilocal bulk operators and thus only hold on the appropriate subspaces for those operators.  Conditions (a), (b'), and (e) are basically the AdS analogues of saying that the global symmetry preserves the (IR-safe version of the) S-matrix of quantum gravity in asymptotically-flat space, while (b), (c), and (d) say that the objects which carry the charge can live in the center of the bulk, not just at the boundary.  This definition essentially just upgrades that of the previous subsection, which applied to gravitationally-dressed local bulk operators in effective field theory, to one that applies to quasilocal bulk operators.  There are two notable points of departure however:
\bi
\item We have allowed quasilocal bulk operators to have nontrivial gauge dressing, since otherwise there would be local CFT operators which are not obtained as limits of quasilocal bulk operators.  In conditions (c) and (d) we then need to restrict to gauge-singlet quasilocal bulk operators, since these are the ones which become operators with compact support in the limit of vanishing gravitational coupling, and we want to recover definition \ref{globaldef} in that limit.  
\item Condition (b') may seem at first to follow from condition (b), and indeed for local operators in $R$ it does follow from the boundary limit of condition (b), together with an appropriate continuity assumption and also assumption (2) about quasilocal bulk operators. In general quantum field theories however there can be surface operators in the region $R$ which are not generated by the local operators in $R$, and we have not defined the ``quasilocal bulk surface operators'' of which these would be limits.  For example the closed Wilson loops in $\mathcal{N}=4$ Super Yang-Mills theory are limits of bulk operators which create closed strings.  To avoid the complexity of defining such operators, we have instead settled for condition (b'), which will already be enough to achieve a contradiction.  
\ei

We then have an immediate result:
\begin{thm}\label{boundarysym}
A global symmetry with symmetry group $G$ of a holographic asymptotically-$AdS$ quantum gravity theory is also a global symmetry with symmetry group $G$ of the dual conformal field theory.
\end{thm}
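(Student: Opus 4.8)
The plan is to observe that in AdS/CFT the diffeomorphism-invariant Hilbert space of the bulk gravity theory associated to a boundary time-slice $\partial\Sigma$ \emph{is} the Hilbert space of the dual CFT quantized on $\mathbb{S}^{d-1}$, so that the operators $U(g,\partial\Sigma)$ furnished by condition (a) of definition~\ref{bulkglobaldef} are already a family of unitaries on the CFT Hilbert space obeying $U(g,\partial\Sigma)U(g',\partial\Sigma)=U(gg',\partial\Sigma)$. What remains is to check the four conditions of definition~\ref{globaldef}. Since that definition is phrased for a theory on $\mathbb{R}^d$ while the holographic CFT is naturally quantized on $\mathbb{R}\times\mathbb{S}^{d-1}$, I would first invoke the Weyl-equivalence argument of section~\ref{anomalysec}: Euclidean $\mathbb{R}\times\mathbb{S}^{d-1}$ is Weyl equivalent to $\mathbb{R}^d$, the only anomaly of this Weyl transformation is a $c$-number, and the $U(g,\partial\Sigma)$ act on local operators through a position-independent linear map $D(g)$ which is unaffected by Weyl rescaling; hence the symmetry descends to one on $\mathbb{R}^d$ (equivalently, one may use the state-operator correspondence as in definition~\ref{symextend}). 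Conditions (b) and (d) of definition~\ref{bulkglobaldef} are not needed for this theorem; they enter only in the later derivation of a contradiction.

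Conditions (a) and (d) of definition~\ref{globaldef} are then immediate. The homomorphism property and unitarity are exactly condition (a) of definition~\ref{bulkglobaldef}. For neutrality of the stress tensor, condition (e) of definition~\ref{bulkglobaldef} says $U(g,\partial\Sigma)$ commutes with the boundary stress tensor $T_{\mu\nu}$, which in AdS/CFT is literally the CFT stress tensor \cite{Balasubramanian:1999re}; this also gives $[H,U(g,\partial\Sigma)]=0$, so the $U(g,\partial\Sigma)$ are $\Sigma$-independent and topological, matching the discussion after definition~\ref{globaldef}. For condition (b), condition (b') of definition~\ref{bulkglobaldef} states that conjugation by $U(g,\partial\Sigma)$ maps $\mathcal{A}[R]$ into itself for every boundary subregion $R$, with the continuity property demanded in definition~\ref{globaldef}(b); applying the same statement to $g^{-1}$ and composing shows the map is onto, hence an automorphism of $\mathcal{A}[R]$.

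The substantive step is condition (c), faithfulness on local operators, and this is where I expect the main subtlety. By condition (c) of definition~\ref{bulkglobaldef}, for each $g\neq e$ there is a \emph{gauge-singlet} quasilocal bulk operator $\phi$ with $U^\dagger(g,\partial\Sigma)\phi\, U(g,\partial\Sigma)\neq\phi$. I would then use bulk assumption (1): a suitable one-parameter family of asymptotic conformal transformations, together with a rescaling by $r_c^{\Delta}$, sends $\phi$ to a local CFT operator $\mathcal{O}_\phi$ of definite dimension $\Delta$, with $\mathcal{H}_{\mathrm{code}}$ enlarging to the whole CFT Hilbert space in the limit (figure~\ref{zoomfig}); because $\phi$ is a gauge singlet, the limit carries no Wilson line and is a genuine local operator rather than a line operator. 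Since $U(g,\partial\Sigma)$ commutes with the asymptotic conformal symmetry (condition (e)), conjugation by $U(g,\partial\Sigma)$ commutes with this ``focusing'' limit, so $U^\dagger(g,\partial\Sigma)\mathcal{O}_\phi\, U(g,\partial\Sigma)$ is the image of $U^\dagger(g,\partial\Sigma)\phi\, U(g,\partial\Sigma)$ under the same limit. The focusing map is injective on operators — it is implemented by unitary conjugation together with multiplication by a nonzero scalar — so a nonzero operator has nonzero image, and therefore $U^\dagger(g,\partial\Sigma)\mathcal{O}_\phi\, U(g,\partial\Sigma)\neq\mathcal{O}_\phi$. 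This establishes condition (c), and hence definition~\ref{globaldef} for the dual CFT.

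The main obstacle is thus not any single computation but the reliance on the non-perturbative inputs (1)--(2) about quasilocal bulk operators, and in particular the claims that the conformal-focusing limit exists, is injective on the operator algebra, and commutes with the symmetry action; these are physically well motivated (and provable in tensor-network models such as \cite{Pastawski:2015qua,Hayden:2016cfa}) but cannot be derived here without a complete bulk theory. A secondary point worth spelling out is that the difference $U^\dagger(g,\partial\Sigma)\phi\, U(g,\partial\Sigma)-\phi$ must not itself limit to the zero operator under focusing — a priori one might worry that a nonzero bulk operator could map to zero in the boundary limit — but this is precisely what injectivity of the focusing limit rules out. With these caveats in place, the bulk global symmetry group $G$ is also the global symmetry group of the dual conformal field theory.
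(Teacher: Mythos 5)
Your proposal is correct and follows essentially the same route as the paper's own (very brief) proof: conditions (a), (b'), (e) of definition \ref{bulkglobaldef} give conditions (a), (b), (d) of definition \ref{globaldef}, and condition (c) together with bulk assumption (1) (the conformal-focusing limit of a gauge-singlet quasilocal operator to a CFT local operator) gives faithfulness. You simply spell out details the paper leaves implicit, such as the Weyl map to $\mathbb{R}^d$ and the injectivity of the focusing limit.
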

\begin{proof}
We need to show that definition \ref{bulkglobaldef} implies definition \ref{globaldef} in the boundary CFT.  Conditions (a), (b'), and (e) from definition \ref{bulkglobaldef}  imply conditions (a), (b), and (d) from definition \ref{globaldef}, while condition (c) of \eqref{bulkglobaldef}, together with assumption (1) about bulk quasilocal operators, implies condition (c) of \eqref{globaldef}
\end{proof}
This already suggests that something is wrong with the notion of a bulk global symmetry, since in AdS/CFT we usually think that a boundary global symmetry should be dual to a (long-range) gauge symmetry in the bulk.  In fact this tension can be sharpened into a real contradiction, leading to a proof of conjecture \ref{nosym}, as we now explain.

\subsection{No global symmetries in quantum gravity} 

\bfig
\includegraphics[height=5cm]{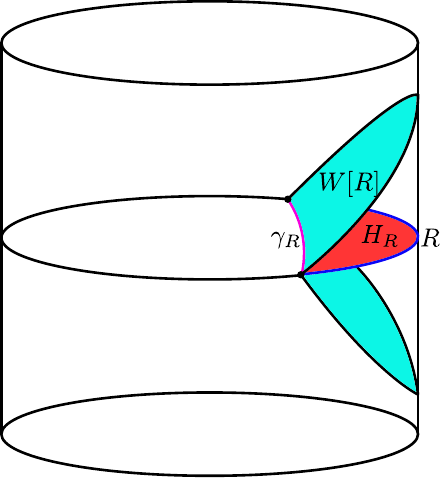}
\caption{The Hubeny-Rangamani-Takayanagi surface $\gamma_R$ is a bulk codimension-two surface of extremal area, obeying $\partial \gamma_R=\partial R$, and homologous to $R$ via a spatial surface $H_R$.  If there is more than one such surface, we pick the one of smallest area.  The entanglement wedge $W[R]$ is the bulk domain of dependence of $H_R$, here it is the spacetime region between the two codimension-one blue surfaces.  According to the leading-order Ryu-Takayangi formula, the von Neumann entropy of a CFT state on the subregion $R$ is equal to the area of $\gamma_R$ divided by $4G$.}\label{hrtfig}
\efig 
We will now argue that the existence of any global symmetry on the bulk side of AdS/CFT would be inconsistent with the local structure of the boundary conformal field theory.  The basic tool we will use is \textit{entanglement wedge reconstruction}, which is a recently-established property of the correspondence which says that there is a kind of ``sub-duality'' between any spatial subregion $R$ of the boundary CFT and a certain subregion of the bulk, the entanglement wedge of $R$ \cite{Czech:2012bh,Wall:2012uf,Headrick:2014cta,Dong:2016eik,Harlow:2016vwg}.  Giving a detailed explanation of this idea would take us too far afield, we refer the reader to \cite{Harlow:2018fse} for a recent overview, but the geometric definition of the entanglement wedge is given in figure \ref{hrtfig} (borrowed from \cite{Harlow:2018fse}).  What entanglement wedge reconstruction says is that on an appropriate code subspace, any bulk operator in $W[R]$ can be represented in the CFT by an operator with support only in $R$.  Therefore a boundary observer with access only to $R$ has complete information about what is going on in $W[R]$, but no information about what is going on in $W[R^c]$.  Just how small the code subspace needs to be for this statement to hold is a topic which is still being explored, see \cite{Harlow:2016vwg} for an optimistic outlook on this question, but at a minimum entanglement wedge reconstruction is expected to hold for any particular region $R$ in a code subspace where any black holes which are present are far outside of $W[R]$.  

We give two versions of our argument that there are no global symmetries.  The first assumes that global symmetries in conformal field theory on a spatial sphere are always splittable in the sense of definition \ref{splitdef}, while the second does not but instead requires us to consider more nontrivial bulk geometries which are under less control from the boundary point of view.  As explained in section \ref{splitsec}, splittability of global symmetries in conformal field theory on a spatial sphere follows from quite plausible axioms for quantum field theory, and intuitively is an expression of the local structure of the Hilbert space of quantum field theory on $\mathbb{R}^d$. 

\begin{thm}\label{noglobalthm}
No quantum gravity theory in asymptotically AdS space which has a global symmetry in the sense of definition \ref{bulkglobaldef} can be dual to a boundary conformal field theory.  
\end{thm}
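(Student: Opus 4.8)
The plan is to derive a contradiction from the assumed bulk global symmetry by pushing it through holography to the boundary, where it becomes incompatible with the local structure of the CFT Hilbert space; this is the first of the two arguments alluded to above, the one that assumes splittability. First I would apply Theorem \ref{boundarysym}: a bulk global symmetry $U(g,\partial\Sigma)$ in the sense of definition \ref{bulkglobaldef} descends to a global symmetry of the dual CFT with the same group $G$, acting faithfully on boundary local operators, preserving the boundary algebras $\mathcal{A}[R]$, and commuting with the stress tensor. Since the CFT lives on $\mathbb{R}\times\mathbb{S}^{d-1}$ and $\mathbb{S}^{d-1}$ is conformally flat, this symmetry is preserved on $\Sigma=\mathbb{S}^{d-1}$ (section \ref{anomalysec}); and, invoking the state-operator argument of section \ref{splitsec}, it is splittable on $\Sigma$ given that it is splittable on $\mathbb{R}^{d-1}$, which is the hypothesis of this version. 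Hence for any finite decomposition of $\Sigma$ into disjoint open regions $R_i$ with non-intersecting boundaries we have $U(g,\Sigma)=\prod_i U(g,R_i)$ with each $U(g,R_i)$ acting as $U(g,\Sigma)$ on $\mathcal{A}[R_i]$ and trivially on operators localized away from $\overline{R_i}$.

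Next I would use condition (c) of definition \ref{bulkglobaldef} to choose, for some $g\neq e$, a gauge-singlet quasilocal bulk operator $\phi$ with $U^\dagger(g,\partial\Sigma)\phi\, U(g,\partial\Sigma)\neq\phi$, whose ``meat'' sits near the center of a bulk time-slice and which is dressed to the boundary by a single gravitational Wilson line (condition (b) ensures the conjugated operator is dressed identically and stays semiclassical, with the same $L$). I would then take the decomposition $\{R_i\}$ fine enough that the center of the bulk lies in none of the entanglement wedges $W[R_i]$, equivalently in $W[R_i^{\,c}]$ for every $i$ (compare figures \ref{regionsfig} and \ref{hrtfig}), and keep $\phi$ spacelike-separated from neighborhoods of every $\partial R_i$. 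Fixing a single code subspace $\mathcal{H}_{code}$ containing the vacuum, on which all black holes stay far from each $W[R_i]$ and $\phi$ acts semiclassically, entanglement wedge reconstruction then represents $\phi$, as an operator on $\mathcal{H}_{code}$, by a CFT operator supported in $R_i^{\,c}$, for each $i$ separately; here one re-routes $\phi$'s gravitational (and trivial gauge) dressing through $R_i^{\,c}$ within the code subspace, using that $U(g,\partial\Sigma)$ acts trivially on this dressing because the symmetry is not gauged. Therefore $\phi$ commutes on $\mathcal{H}_{code}$ with every $U(g,R_i)$, hence with $U(g,\Sigma)=\prod_i U(g,R_i)$, so $U^\dagger(g,\Sigma)\phi\, U(g,\Sigma)=\phi$ there; but $\mathcal{H}_{code}$ contains the vacuum and $\phi$ creates a charged state from it, so the transformation must be nontrivial on $\mathcal{H}_{code}$ — a contradiction. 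Thus no such $G$ can exist, which is the claim. (I would remark that the second version of the argument trades the splittability hypothesis for a threading bulk geometry, at the price of a less boundary-controlled configuration.)

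The main obstacle I expect is the simultaneous bookkeeping in the middle step: one must produce one code subspace $\mathcal{H}_{code}$ and one choice of $\phi$ (including its dressing) for which entanglement wedge reconstruction into $R_i^{\,c}$ is valid for all $i$ at once, while $\phi$ still transforms nontrivially under $U(g,\Sigma)$. This rests on the geometric fact that the center of the bulk genuinely lies outside $\bigcup_i W[R_i]$ for sufficiently fine decompositions — plausible from the behavior of extremal surfaces anchored on small boundary regions, but needing care — and on the dressing subtlety flagged in the introduction, namely that the Wilson-line dressing of $\phi$ must commute with $U(g,\partial\Sigma)$, which holds since otherwise $G$ would have to be gauged and would fail condition (c). A secondary technical point is the continuity and edge-behavior issue appearing in condition (b') of definition \ref{bulkglobaldef} and in definition \ref{splitdef}, needed so that each $U(g,R_i)$ may be treated as genuinely localized in $R_i$ when commuted against $\phi$; this I would handle exactly as in definition \ref{splitdef}, keeping $\phi$ spacelike to neighborhoods of the $\partial R_i$ and leaving the action of $U(g,R_i)$ on operators right at $\partial R_i$ unspecified.
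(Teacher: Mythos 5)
Your overall strategy matches the paper's first argument (boundary symmetry via theorem \ref{boundarysym}, splittability on $\mathbb{S}^{d-1}$, entanglement wedge reconstruction), but the central step does not go through as written. You take a \emph{single} charged gauge-singlet quasilocal operator $\phi$ (condition (c)) and claim it commutes on $\mathcal{H}_{code}$ with every $U(g,R_i)$. The problem is the gravitational dressing: the Wilson line of $\phi$ must end at some boundary point, which necessarily lies in one of the $R_j$, so the dressed operator straddles $W[R_j]$ and $W[R_j^c]$ and belongs to neither $\mathcal{A}[R_j]$ nor $\mathcal{A}[R_j^c]$. Neither definition \ref{splitdef} nor entanglement wedge reconstruction then constrains how $U(g,R_j)$ acts on it (its action near the HRT surface and on straddling operators is left ``potentially complicated''), so ``$\phi$ commutes with $U(g,R_j)$'' is not justified. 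Your proposed fix --- re-routing the dressing into $R_i^c$ ``for each $i$ separately'' --- produces a \emph{different} operator $\phi_i$ for each $i$ (changing the dressing changes the operator); the per-$i$ commutation statements are about different operators and cannot be chained to conclude that one fixed operator commutes with the full product $\prod_i U(g,R_i)$, which is what the contradiction requires. The appeal to ``the dressing commutes with $U(g,\partial\Sigma)$'' does not help, because the operators you actually need to control are the localized $U(g,R_i)$, not the global one.

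This is exactly why the paper's proof invokes condition (d) of definition \ref{bulkglobaldef} rather than just condition (c): it takes \emph{two} gauge-singlet quasilocal operators transforming in the same representation of the asymptotic conformal symmetry (hence with \emph{identical} gravitational Wilson-line dressing) but in different representations of $G$, arranged as in figure \ref{regions2fig} so that the only part of either operator inside any $W[R_i]$ is that common dressing. Since the charge is built entirely from CFT operators in the $R_i$ (plus $U_{edge}$), and those can only see the dressing, which is the same for both operators, they cannot be assigned different $G$-representations --- contradicting (d) --- without ever having to determine how any $U(g,R_i)$ acts on the dressing. A secondary point: for a tiling of $\mathbb{S}^{d-1}$ the closures of the $R_i$ meet, so equation \eqref{split2} does not give $U(g,\Sigma)=\prod_i U(g,R_i)$ exactly; you need the edge factor $U_{edge}$ as in the paper, and the comparison-of-two-operators argument also handles that factor automatically, whereas your single-operator version would additionally need to control its action on $\phi$.
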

\begin{proof}
Say that we had a bulk theory with a global symmetry group $G$.  By condition (d) in definition \ref{bulkglobaldef}, there are two quasilocal bulk operators which transform identically under asymptotic conformal symmetry, but which transform in different representations of $G$.  We will show that this is inconsistent with entanglement wedge reconstruction.  

Indeed note that by theorem \ref{boundarysym}, the symmetry operators $U(g,\partial \Sigma)$ also give a global symmetry of the boundary CFT provided that one exists.  Say that we decompose the boundary Cauchy slice $\partial \Sigma$ as the closure of a union of $n$ disjoint open regions $R_i$.  By splittability, we have that
\be\label{Ueq}
U(g,\partial \Sigma)=U(g,R_1)\ldots U(g,R_n)U_{edge},
\ee
where $U_{edge}$ is a unitary operator which ``fixes up'' the arbitrary choices which are made in defining the $U(g,R_i)$; it has support only in a small neighborhood of the union of the boundaries of the $R_i$. Now consider the action of these $U(g,R_i)$: by definition each one implements the symmetry on all operators in the domain of dependence of $R_i$, while it does nothing in the domain of dependence of its complement $R_i^c$.  By entanglement wedge reconstruction, in the bulk $U(g,R_i)$ implements the global symmetry on all operators which are supported only in the interior of $W[R_i]$, does nothing to operators which are supported only in the interior of $W[R_i^c]$, and acts in a potentially complicated manner in a neighborhood of the HRT surface $\gamma_{R_i}$.  

\bfig
\includegraphics[height=5cm]{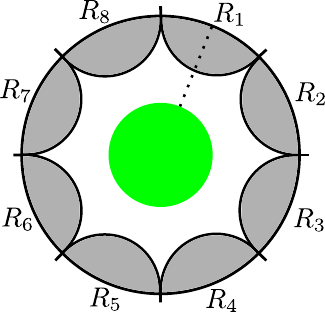}\caption{For any quasilocal bulk operator, we can always choose a large enough collection of small enough boundary regions that their entanglement wedges all lie in the ``semiclassical region'' of the code subspace for that operator.  Here we illustrate this for a bulk timeslice on which the gravitational dressing of the operator consists of a single gravitational Wilson line, indicated with the dotted line, and the entanglement wedges of the regions are shaded in grey.}\label{regions2fig}
\efig
The key point is that we can easily arrange for the two charged quasilocal bulk operators we are promised by condition (d) of definition \ref{bulkglobaldef} to be located such that their only support in the $W[R_i]$ is their gravitational Wilson lines.  The basic idea was already described in the introduction around figure \ref{regionsfig}, the precise version for quasilocal bulk operators is shown in figure \ref{regions2fig}.  But since via \eqref{Ueq} the charge is expressed entirely in terms of CFT operators with spatial support in regions whose entanglement wedges can access only the gravitational Wilson line parts of our quasilocal bulk operators, and since our two operators have identical gravitational Wilson lines since they transform in the same representation of the asymptotic conformal symmetry, there is no way for them to transform in different representations of our global symmetry.
\end{proof}
We emphasize that this contradiction arises already ``within the code subspace'', since to get into trouble we need only study quantities like
\be
\lan 0|\phi^\dagger U^\dagger(g,\partial \Sigma)\phi U(g,\partial\Sigma)|0\ran,
\ee
which involve only states obtained by acting in the vacuum with $\phi$, $U(g,\partial \Sigma)$, the $U(g,R_i)$, and $U_{edge}$.  $U(g,\partial \Sigma)$ should clearly preserve any reasonable code subspace, and since $U_{edge}$ has support only in a small neighborhood of $\cup_i \partial R_i$ we can take it to do so as well, at least in the vicinity of the time slice we consider in figure \ref{regions2fig}.  Arguing that the $U(g,R_i)$ individually can be taken to preserve the code subspace is a bit more subtle, but the idea, as already mentioned in the proof just given, is that since each one preserves all expectation values of operators supported in the interior of $D[R_i^c]$, and merely acts with the global symmetry on all expectation values of operators supported in the interior of $D[R_i]$, then it should preserve the semiclassical structure of the bulk everywhere away from a neighborhood of the HRT surface $\gamma_{R_i}$.  By smearing out the region of overlap near $R_i$, we can arrange for the energy created at the boundaries of the entanglement wedge to be finite: essentially we are just using entanglement wedge reconstruction to show that if they existed then bulk global symmetries would be splittable, at least if we take our bulk region to be an entanglement wedge.\footnote{When the symmetry group $G$ is continuous, it is not necessary to argue that $U_{edge}$ and $U(g,R_i)$ preserve the code subspace.  The reason is that we may then take the logarithm of \eqref{Ueq} to get an expression involving sums of charges, and then when we compute the commutator of the total charge with a quasilocal bulk operator $\phi$ we simply have a sum of commutators with boundary operators supported in regions whose entanglement wedges cannot reach the bulk endpoint of $\phi$, and which therefore must commute with it.  In quantum information theory this argument is called the ``Eastin-Knill theorem'' \cite{eastin2009restrictions}.  Without further assumptions it does not apply to discrete symmetry groups, which is why we have instead chosen to use special properties of holographic codes to argue that $U_{edge}$ and $U(g,R_i)$ can in fact be taken to preserve the code subspace without disrupting the semiclassical picture of the bulk away from the $\gamma_{R_i}$.}  

We note in passing that our proof of theorem \ref{noglobalthm} applies equally well to spontaneously-broken global symmetries in the bulk, since we did not assume anywhere that the vacuum was invariant.  It is amusing however to think about what such a global symmetry would have meant in the boundary CFT.  For simplicity consider the case of a spontaneously-broken $U(1)$ global symmetry in the bulk: there would be a massless Goldstone boson, which would be dual to a primary scalar operator of dimension $d$ in the boundary CFT.  The coefficient of this operator in the CFT action would set the symmetry-breaking expectation value for the Goldstone boson in the bulk, so the set of degenerate vacua would correspond to a continuous family of CFTs obtained by sourcing this operator with a finite coefficient: the operator would therefore need to be ``exactly marginal''.  Moreover the symmetry would ensure that in fact these CFTs were all isomorphic!  In more modern parlance, we would have a nontrivial conformal manifold on which all the CFTs were dual to each other.\footnote{\label{finiteVssb}This situation can also be described as spontaneous symmetry breaking in finite volume in the CFT.  This is often said to be impossible, but in fact there \textit{are} quantum field theories which exhibit spontaneous symmetry breaking in finite volume, at least in the sense of having exactly degenerate vacua related by the symmetry.  For example in $1+1$ electrodynamics with a $\theta$ term, 
\be
S=-\frac{1}{2q^2}\int F\wedge \star F-\frac{\theta}{2\pi}\int F,
\ee
at $\theta=\pi$ on a spatial circle the charge conjugation symmetry $F'=-F$ acts nontrivially on a pair of degenerate vacua \cite{Gaiotto:2017yup}.  We do not know of any examples in theories with non-topological local operators.}  We do not know of any examples of this, and find it rather implausible from the point of view of conformal perturbation theory, which is consistent with theorem \ref{noglobalthm}.

\bfig
\includegraphics[height=6cm]{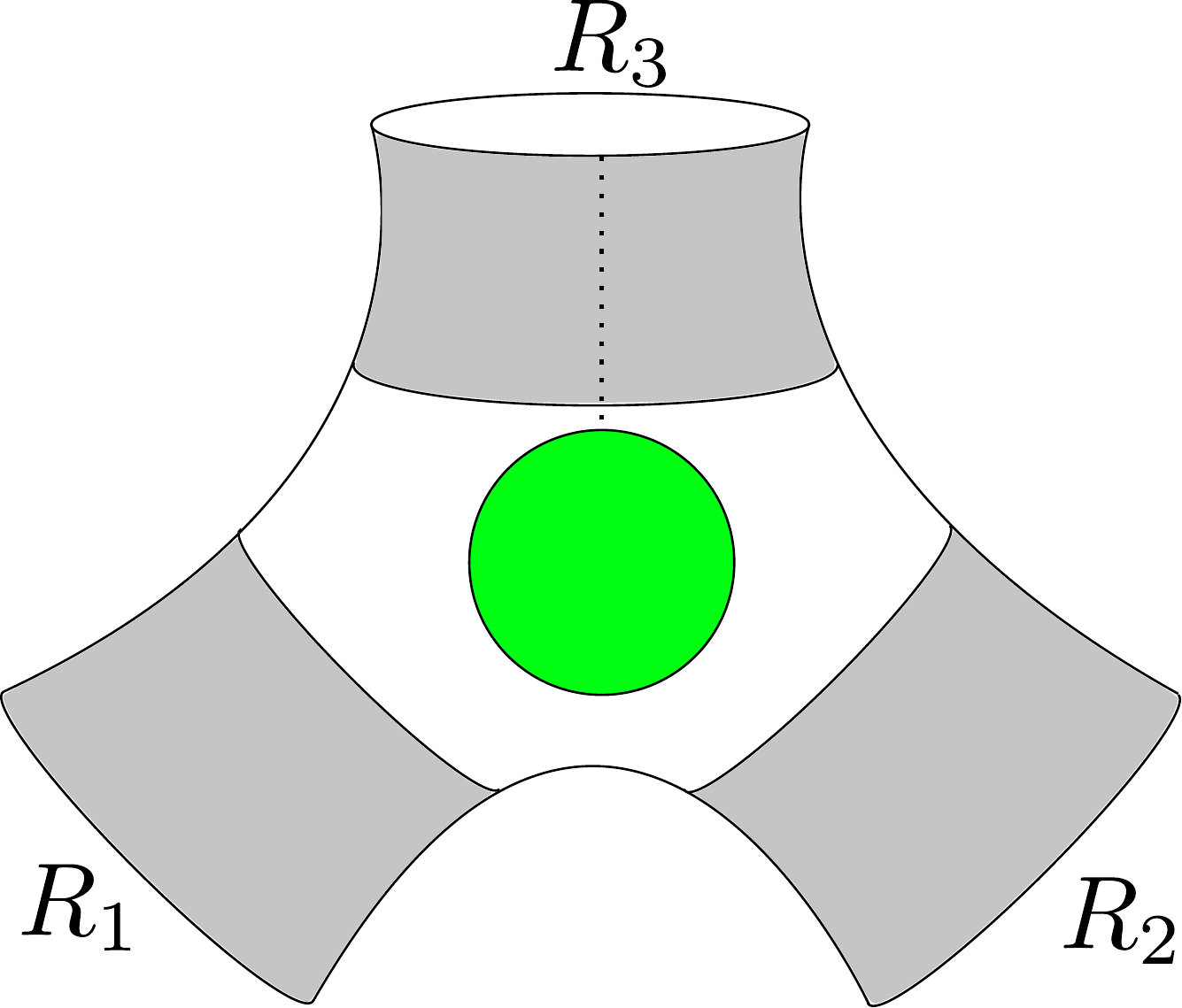}
\caption{A spatial slice of a three-exit wormhole for $d=2$.  The central region is not in the entanglement wedge of any one of the boundary components, but is in the entanglement wedge of any two.}\label{triple2fig}
\efig
Our second proof of theorem \ref{noglobalthm} proceeds on similar lines, except that instead of taking the $R_i$ to be $n$ disjoint subregions of a connected boundary as in figure \ref{regions2fig}, we instead take them to be connected components of a disconnected boundary.  Splittability of symmetries on these components is then automatic, since the Hilbert space of any quantum field theory on a disconnected space is always the tensor product of the Hilbert spaces of the connected components, so along the lines of theorem \ref{latticethm} any global symmetry in the boundary CFT can be decomposed as
\be
U(g,\partial\Sigma)=U(g,R_1)\ldots U(g,R_n),
\ee
without any need for a $U_{edge}$.  The idea is then to consider the action of this symmetry on states where the $n$ asymptotic regions are all connected in the bulk via a wormhole.  The AdS-Schwarzschild geometry is one such spacetime, which is dual to the thermofield double state
\be
|\psi_{tfd(\beta)}\ran\equiv \frac{1}{Z[\beta]^{1/2}}\sum_i e^{-E_i \beta/2}|i^\star\ran|i\ran
\ee
of the CFT on the disjoint union of two spheres for sufficiently small $\beta$ \cite{Maldacena:2001kr}, but for our purposes we need to consider geometries with $n\geq 3$.  There will then be an ``interior'' region which is not contained in the entanglement wedge of any one of the $R_i$, as shown for $n=3$ in figure \ref{triple2fig}, so we may again reach the same contradiction shown in figure \ref{regions2fig}.  This version of the argument has two appealing features: it dispenses with any assumption about splittability in the boundary CFT, and it makes the importance of black holes more apparent (black holes are implicitly present in any argument based on entanglement wedge reconstruction \cite{Almheiri:2014lwa}).  The main disadvantage however is that it is not immediately obvious that such configurations indeed exist as states in the Hilbert space of $n$ copies of the CFT on a spatial sphere, and it is also not immediately obvious that by taking $n$ to be large we can arrange for the interior region to be large enough to contain the object created by any particular quasilocal bulk operator.  Indeed no such construction has been worked out in complete detail, but in $d=2$ quite a lot is known and there is no sign of any obstruction.  Moreover there is no indication that any new obstruction will arise in higher dimensions.  We review the current status for $d=2$ in appendix \ref{wormholeapp}, and we suggest a region of moduli space which seems likely to satisfy all the necessary constraints.  

\subsection{Duality of gauge and global symmetries}\label{dualsec}
Having now established that global symmetries cannot exist in the bulk of AdS/CFT, one might then ask what a global symmetry of the boundary CFT is dual to in the bulk.  The traditional answer is a gauge symmetry \cite{Witten:1998qj}, but as we discussed in section \ref{gaugesec}, gauge symmetry in the conventional sense is too ambiguous of a notion to be dual to something as precise as a global symmetry.  We now argue that the correct statement is that a splittable global symmetry of the boundary CFT is dual to a long-range gauge symmetry in the bulk.  This proposal is clearly not subject to the contradiction of theorem \ref{noglobalthm}, since an operator which creates an object carrying gauge charge in the center of the bulk must have a Wilson line attaching it to the boundary, and this Wilson line will always enter the entanglement wedge of at least one of the $R_i$ in figure \ref{regions2fig} or figure \ref{triple2fig}.

We defined long-range gauge symmetries in quantum field theory via definition \ref{gaugedef}, to extend them to quantum gravity we just need to include gravitational dressing for the Wilson lines and loops and restrict them to appropriate code subspaces where that dressing does not place them far behind black hole horizons. Since the localized asymptotic symmetry operators $U(g,R)$ are supported only at the boundary, they will make sense on the full Hilbert space.  Moreover, as in assumption (b') from definition \ref{bulkglobaldef}, we will require the bulk long-range gauge symmetry $U(g,\partial \Sigma)$ to act within the local algebra $\mathcal{A}[R]$ for any boundary spatial region $R$; the motivation is again the idea that $\mathcal{A}[R]$ is generated by operators which are limits of quasilocal bulk operators, possibly also of the surface variety which we have not carefully defined, with any dressing Wilson lines ending in $R$.

We first argue that a long-range gauge symmetry in the bulk implies a splittable global symmetry in the boundary with the same symmetry group.  The obvious idea is to take the $U(g,R)$ of the bulk long-range gauge symmetry to be the $U(g,R)$ of a splittable boundary global symmetry.  We then need to establish that they obey conditions (b-d) of definition \ref{globaldef}, and also \eqref{split1}.  Condition (b) follows by the discussion at the end of the previous paragraph, and \eqref{split1} follows from the algebra \eqref{Walg} of the Wilson lines with the $U(g,R)$. Condition (d) follows because the boundary stress tensor $T_{\mu\nu}$ is the limit of the bulk metric, which is neutral under any (internal) long-range gauge symmetry (the metric would have to transform in a one-dimensional real unitary representation that preserves its signature, but there are no such representations).  The nontrivial step is to argue for condition (c), the faithfulness of the CFT global symmetry on the set of local operators.  Condition (3) in our definition \ref{gaugedef} of long-range global symmetry is clearly necessary for this to be possible, since a CFT operator transforming nontrivially under the global symmetry would be dual to a state of finite energy which is charged under the long-range gauge symmetry.  But just because charged states are allowed, this does not mean they exist.  In fact saying they do is basically the content of conjecture \ref{allcharge}!  Since establishing conjecture \ref{allcharge} is the main goal of the following section, we will here simply assume it, in which case by assumption there are charged states in all representations of the bulk gauge group, and therefore that group is represented faithfully on the set of local operators in the boundary CFT.

Conversely we now would also like to argue that a splittable global symmetry in the boundary CFT implies the existence of a long-range gauge symmetry in the bulk with the same symmetry group.  This argument is more difficult to make precise, since as part of it one would need to use special properties of the CFT which arise from it having a semiclassical holographic dual in the first place.  We have not had to deal with this so far because in proving theorem \ref{noglobalthm}, and also in the argument of the previous paragraph, we started in the bulk and went to the boundary.  What exactly the assumptions are on the CFT which lead to a semiclassical dual is not really a settled question, see \cite{Heemskerk:2009pn,Penedones:2010ue,Hartman:2014oaa,Maldacena:2015iua,Aharony:2016dwx} for a sampling of recent work and \cite{Harlow:2018fse} for a review of some aspects of the problem.  Here we will settle for arguing that \textit{if} a CFT has a semiclassical dual, then the $U(g,R)$ from a splittable global symmetry and the operators charged under that symmetry naturally give boundary conditions for reconstructing a bulk gauge field and bulk operators charged under it by solving the equations of motion derived from the assumed low-energy bulk Lagrangian radially inwards \cite{Heemskerk:2012np,Kabat:2012hp}.  

Indeed by the argument of theorem \ref{noglobalthm} the $U(g,R)$ operators must be localized on the boundary from the bulk point of view, and it is natural to identify them with the localized asymptotic global symmetry operators $U(g,R)$ from definition \ref{gaugedef}.  Their algebra with the charged boundary local operators whose existence is required by definition \ref{globaldef} is consistent with interpreting them as the boundary limits of  quasilocal bulk operators carrying gauge charge in the form of a boundary-attached Wilson line.  The existence of these charged boundary local operators also implies, via the state-operator correspondence, that in the bulk description there are states of finite energy which are charged under the long-range gauge symmetry, so condition (3) in definition \ref{gaugedef} is satisfied.  It is more nontrivial to evolve these boundary operators inwards to construct that the Wilson lines and Wilson loops with support in the bulk, how we do this depends on the low-energy bulk Lagrangian, and also on the topology of spacetime.  For example if the boundary global symmetry group is connected, we work near the vacuum, and the bulk effective action is dominated by the Yang-Mills term
\be
S=-\frac{1}{4q^2}\int d^{d+1}x\sqrt{-g}F_{\mu\nu}^a F^{\mu\nu}_a,
\ee
then at leading order in $q$, one can use the AdS/CFT dictionary to derive an expression of the form
\be\label{HKLLJ}
A_\mu^a(x)=\int dX K_{\mu\nu}^{ab}(x,X)J^\nu_b(X),
\ee
where $X$ is a boundary point, $x$ is a bulk point, $J^\nu_a$ is the Noether current of the boundary global symmetry, and $K_{\mu\nu}^{ab}$ is a c-number function.  This expression may then be systematically corrected to higher order in the interactions, producing a CFT representation of $A_\mu^a$ (in some gauge) which obeys the bulk equations of motion derived from the bulk effective Lagrangian to all orders in perturbation theory \cite{Kabat:2011rz,Kabat:2012av,Heemskerk:2012mn,Harlow:2018fse}.  A similar analysis should work in the presence of Chern-Simons terms, $\theta$ terms, etc.  Once we have $A_\mu^a$, we may then construct the desired Wilson lines and loops.  

The case where the gauge group is discrete is both simpler and more nontrivial: the equations of motion become easier to solve since at leading order the relevant line and surface operators are topological, but since we no longer have a Noether current there is no formula along the lines of \eqref{HKLLJ}.  What we need to do instead is reconstruct the charged matter fields, which do have representations similar to \eqref{HKLLJ}, and then use the fusing operation shown in figure \ref{wilsongluefig} to extract the Wilson lines and Wilson loops. It may seem surprising that the charged matter fields are necessary in the discrete case when they weren't in the continuous case, but we will momentarily see that, as first pointed out in \cite{Harlow:2015lma}, the charges are also necessary for reconstructing the bulk gauge field in the continuous case if the spacetime topology is nontrivial.\footnote{In situations where the charged operators in the boundary theory all have high scaling dimension, in the bulk we will need a version of the fusing of figure \ref{wilsongluefig} which makes sense for quasilocal bulk operators.  We will not attempt to say more about this, fortunately our arguments for conjectures \ref{allcharge}-\ref{compact} do not rely on this since we will only need the converse statement that a bulk long-range gauge symmetry implies a boundary global symmetry.}  

We close this section by noting that an alternative perspective on the relationship between the boundary global symmetries and bulk gauge symmetries is provided by the observation that by using the $U(g,R)$, together with the Noether current for the global symmetry in the continuous case, we can turn on a background gauge field in the CFT for the global symmetry as in section \ref{backgroundsec}.  This background gauge field is quite naturally interpreted as the fixed boundary value of a bulk gauge field \cite{Witten:2017hdv}, although to really see that this is correct we need to reconstruct the dynamical part of that gauge field, as just discussed.

\section{Completeness of gauge representations}\label{completenesssec}
We now turn to establishing conjecture \ref{allcharge}, which in AdS/CFT we can now state more precisely as claiming that whenever there is a long-range gauge symmetry in the bulk gravity theory, in the boundary CFT there are states in the Hilbert space on a spatial $\mathbb{S}^{d-1}$ which transform in all finite-dimensional irreducible representations of the global symmetry dual to that long-range gauge symmetry.  Before doing so, we need to first complete our argument from subsection \ref{dualsec} that a long-range gauge symmetry in the bulk indeed implies a global symmetry in the boundary with the same symmetry group:  in that argument we assumed that the asymptotic symmetry operators $U(g,\partial\Sigma)$ act faithfully on the set of boundary local operators rather than showing this.  We will show this in a moment, but first we point out that in fact establishing it is actually also sufficient to establish that there are states of the CFT on $\mathbb{S}^{d-1}$ transforming in all irreducible representations of the bulk gauge group.  This follows from two convenient facts about compact Lie groups (recall that we have defined long-range gauge symmetries to require the gauge group to be compact).  The first is theorem \ref{faithfulthm}, which says that any faithful unitary representation of a compact Lie group has a faithful subrepresentation which is finite-dimensional.  The second is theorem \ref{levythm}, which says that if $\rho$ is a finite-dimensional faithful representation of a compact Lie group $G$, then any finite-dimensional irreducible representation of $G$ appears in the direct sum decomposition of $\rho^{\otimes n}\otimes \rho^{*m}$ for some finite $n$ and $m$.  The idea is to apply these results to the action $D$ of $G$ on the set of local operators defined by equation \eqref{Dmap}.\footnote{In applying them we need to know that $D$ actually gives a good continuous representation of $G$.  Theorem \ref{Drepthm} tells us that this will be the case if the ground state on $\mathbb{S}^{d-1}$ is invariant, and the state operator correspondence tells us that it will be (the identity operator is always neutral).}  Indeed condition (c) of definition \ref{globaldef} and theorem \ref{faithfulthm} tell us that there is a finite subset of the local operators which transform in a faithful representation of $G$, and theorem \ref{levythm} then tells us that by acting with products of these operators and their hermitian conjugates on the vacuum, we can prepare states which transform in any irreducible representation of $G$. Thus to establish conjecture \ref{allcharge} in AdS/CFT, again invoking the state-operator correspondence,  we need only show that the long-range gauge symmetry acts faithfully on the Hilbert space of the CFT on $\mathbb{S}^{d-1}$.

\bfig
\includegraphics[height=5cm]{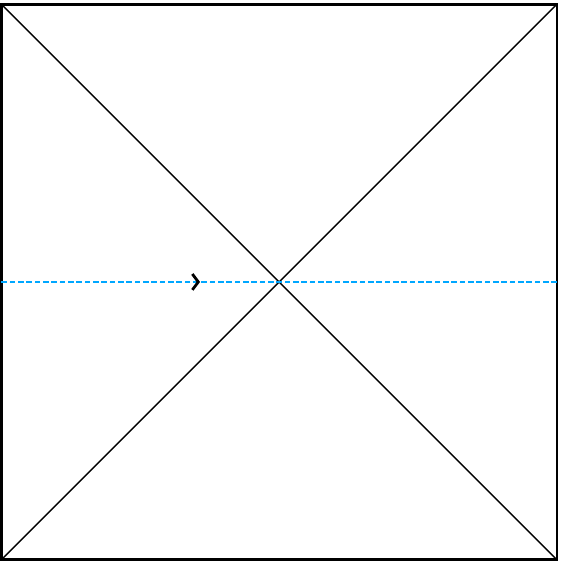}
\caption{A wormhole-threading Wilson line.}\label{threadfig}
\efig
The basic idea for establishing this faithful action appeared already in \cite{Harlow:2015lma} for the special case $G=U(1)$, we here extend it to arbitrary compact $G$.  We begin by noting that if we study a theory with a long-range gauge symmetry in the maximally extended Ads-Schwarzschild background, there are Wilson line operators which begin on one connected component of the spatial boundary and end on the other, threading the wormhole in between.  We illustrate such a Wilson line in figure \ref{threadfig}.  In any particular irreducible representation $\alpha$, the algebra of this Wilson line with the asymptotic symmetry operator on the ``right'' component of the spatial boundary, denoted $\Sigma_R$, is given by equation \eqref{Walg} to be
\be
U^\dagger(g,\Sigma_R)W_\alpha U(g,\Sigma_R)=D_\alpha (g)W_\alpha,
\ee 
where we have suppressed representation indices.  Using the conjugation properties of $W_\alpha$ given in definition \ref{gaugedef}, we then have
\be\label{Wilsontrans}
U^\dagger(g,\Sigma_R)W_\alpha U(g,\Sigma_R)W_\alpha^\dagger=D_\alpha (g).
\ee
Finally we note that in the dual CFT, $U(g,\Sigma_R)$ are nothing but the global symmetry operators $U(g,\mathbb{S}^{d-1})$ of the ``right'' CFT on $\mathbb{S}^{d-1}$, so we need only argue that $U(g,\Sigma_R)$ is nontrivial for all $g\in G$.  Indeed note that for any $g$ there is some irreducible representation $\alpha_g$ for which $D_{\alpha_g}(g)$ is nontrivial (see eg the proof of theorem \ref{liefaithfulthm}).  But then equation \eqref{Wilsontrans} with $\alpha=\alpha_g$ tells us that $U(g,\Sigma_R)$ must be nontrivial, since otherwise the Wilson lines on the left hand side would cancel each other and we would find $D_{\alpha_{g}}(g)$ to be the identity.  Therefore $U(g,\mathbb{S}^{d-1})$ faithfully represents the bulk gauge group, also establishing conjecture \ref{allcharge} by way of the argument in the previous paragraph.

Both this argument and our second argument for theorem \ref{noglobalthm} ultimately rest on the basic fact that the Hilbert space of any quantum field theory on a disconnected space tensor factorizes into a product over copies of the theory on each connected component: this is the ``UV information'' which AdS/CFT provides to us that goes beyond bulk effective field theory, as emphasized in \cite{Harlow:2015lma}.  Our first argument for theorem \ref{noglobalthm} also uses more or less the same idea, now couched in the notion that global symmetries should be always be splittable on a topologically trivial space.  

We now close this section by giving an alternative argument for conjecture \ref{allcharge} in the special case where the bulk gauge group $G$ is connected.  In this case the Lie algebra of $G$ is uniquely determined by the set of Noether currents $J^\mu_a$ in the boundary CFT, so the question is whether or not the boundary global symmetry group $G'$ differs from $G$ in its global topology (as discussed in section \ref{gaugetopsec} this difference \textit{is} physically meaningful).  More precisely, theorem \ref{liethm} tells us that $G$ and $G'$ are both quotients of the same connected simply-connected covering group $\wt{G}$ by discrete central subgroups $\Gamma$ and $\Gamma'$, and we would like to argue that $\Gamma=\Gamma'$. We should first recall what are the principles which define $\Gamma$ and $\Gamma'$: $\Gamma$ is identified by what set of topologically nontrivial gauge field configurations are summed over in the bulk, while $\Gamma'$ is identified by our requirement that boundary global symmetries act faithfully on the set of local operators.  The idea is then to note that $\Gamma$ also controls what kind of topologically nontrivial boundary conditions can be turned on for the bulk gauge field.  In the boundary theory these boundary conditions are just background gauge fields for $G'$, and which of these can be turned on is controlled by $\Gamma'$.  Therefore since these sets must coincide, we must have $\Gamma=\Gamma'$.    

To see this more concretely, we can study the boundary theory on spatial $\mathbb{S}^2\times X$, where $X$ is arbitrary.  We then consider possibly-nontrivial $G$ bundles on this space which are described by splitting $\mathbb{S}^2$ into hemispheres and gluing with a map $g:\mathbb{S}^1\to G$ at the equator, for example as in the Dirac/Wu-Yang monopole \eqref{wymonopole} for $G=U(1)$.  Such bundles are classified by $\pi_1(G)$, and studying the CFT in such a background is dual to studying the bulk in a sector of fixed nonzero magnetic charge.  Since $\wt{G}$ is simply-connected, all nontrivial elements of $\pi_1(G)$ lift to paths in $\wt{G}$ from the identity to a nontrivial element of $\Gamma$.  So clearly the larger $\Gamma$ is as a subgroup of $\wt{G}$, the more bundles are possible.  In the boundary CFT however there is a limit on how large $\Gamma$ can be: if we move a charged CFT operator around the equator of the $\mathbb{S}^2$, we want it to be single-valued in both its northern and southern representations (geometrically we want it to be a good section). This means that $\Gamma$ must lie in the kernel of $\wt{D}$, where $\wt{D}$ is the natural lift of the representation $D$ of $G'$ on the CFT local operators to a representation of $\wt{G}$ (any representation of $G'$ can be lifted in this manner).  Therefore we can get the largest set of background gauge fields by taking $\Gamma=\mathrm{Ker}(\wt{D})$, so we should identify $\wt{G}/\mathrm{Ker}(\wt{D})$ as the bulk gauge group.  But $\wt{G}/\mathrm{Ker}(\wt{D})$ is also precisely the quotient we would perform to obtain the group $G'$ which is represented faithfully on the set of CFT local operators, so we therefore have $\Gamma=\Gamma'$.  This argument is basically the CFT dual of Dirac quantization: the set of charged representations which exist in the boundary theory controls the set of which magnetic boundary conditions can be turned on.
 
\section{Compactness}\label{compsec}
We now turn to conjecture \ref{compact}, which we can now interpret more precisely as saying that all long-range gauge symmetries in quantum gravity are compact.  We are immediately confronted however with the inconvenient fact that in definition \ref{gaugedef} we \textit{defined} long-range gauge symmetries to be compact.  We did this for two reasons:
\bi
\item Finite-dimensional representations of compact Lie groups are always unitary (see theorem \ref{urepthm}), so the Wilson lines and loops have nice conjugation properties.
\item Our discussion of lattice gauge theory in section \ref{gaugesec} makes it clear that long-range gauge symmetry is possible with any compact gauge group, but for noncompact gauge groups this is far from clear.  For example the ordinary Yang-Mills kinetic term has negative modes if the Lie Algebra of the gauge group is not compact.
\ei
Rather then try to develop a general theory of what kinds of noncompact gauge groups are possible, we will instead proceed directly to the dual CFT.  Indeed we will argue any CFT which obeys a certain condition we introduce in a moment has the property that any noncompact global symmetry group must be a subgroup of a larger global symmetry group which is compact. The condition we will impose on CFTs is the following:
\begin{mydef}\label{finitegendef}
Let $S_0\equiv\{\mO_1,\mO_2,\ldots \mO_n\}$ be a finite subset of the primary operators in some conformal field theory, let $S_1$ denote the (usually infinite) set of primary operators such that for any element $\mO$ of $S_1$ there is a pair $\mO_i,\mO_j\in S_0$ such that $\mO$  appears with nonzero coefficient in their operator product expansion, let $S_2$ denote the set of operators which appear in the operator product expansion of some pair of operators in $S_1$, and so on.   We say that a conformal field theory is \textit{finitely generated} if 
\bi
\item For any $\Delta>0$ there is a finite number of primary operators with conformal dimension less than $\Delta$.
\item There exists a finite set $S_0$ of primary operators such that each primary operator of the theory appears in $S_N$ for some $N<\infty$.
\ei
\end{mydef}
Roughly speaking this condition formalizes the idea that there should be a finite number of fields in the path integral. For example free massless scalar field theory for $d>2$ is finitely generated since all of the primary operators are polynomials of $\phi$ and its derivatives.  From the bulk point of view, finite generation says that all objects can ultimately be built out of a finite number ingredients, which is quite plausible from the point of view that black hole entropy should be finite.  More carefully, say that we postulate that in a semiclassical bulk theory the types of bulk excitations should consist only of particle excitations, extended objects such as strings and $D$-branes, and black holes.  The spectrum of particle masses must be discrete with no accumulation points and bounded from above by the Planck mass, since if it were continuous or had accumulation points then renormalization would drive the strong coupling scale of gravity down to the AdS scale. The finiteness of the Bekenstein-Hawking entropy tells us that  black holes must also have a discrete spectrum with no accumulation points.  The extended objects are a little more subtle, but for $d>2$ the dynamics of AdS ensure that they also should have a discrete spectrum \cite{Seiberg:1999xz}.\footnote{We discuss the $d=2$ at the end of this section.} Therefore we expect that any holographic CFT with $d>2$ should be finitely generated.  In fact we can make the following conjecture, to which we are not aware of any counterexample:
\begin{conj}
Any conformal field theory in $d\geq 2$ with a discrete spectrum and a unique stress tensor is finitely generated, and any conformal field theory in $d>2$ with a unique stress tensor is finitely generated.  
\end{conj}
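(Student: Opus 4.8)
The plan is to establish separately the two clauses of Definition \ref{finitegendef} and to attack the second one by contradiction against crossing symmetry, using the uniqueness of the stress tensor as the ingredient that glues the theory together. For the \textbf{first} clause (finitely many primaries below any $\Delta$): in $d=2$ this is precisely the discreteness hypothesis, so there is nothing to do, and the conjecture's $d=2$ content is entirely in the second clause. In $d>2$, where discreteness is not assumed, I would try to derive it from unitarity together with uniqueness of $T$. The idea is that an accumulation point of the spectrum at some dimension $\Delta_*$ forces, through the conformal Ward identity (which fixes $\langle \mO\mO T\rangle$ in terms of $\Delta_{\mO}$ and $C_T$), infinitely many operators near $\Delta_*$ coupling to the \emph{same} stress tensor with OPE coefficients bounded below; feeding this into the Lorentzian inversion formula, or equivalently examining the thermal two-point function on $S^1\times S^{d-1}$, should produce a divergence — a non-normalizable thermal state, or an ill-defined equal-time commutator — contradicting the existence of a genuine Hilbert space. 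This sub-argument is already nontrivial; it is where Seiberg's observation \cite{Seiberg:1999xz} that AdS dynamics discretizes the spectrum of extended objects would be the heuristic to make rigorous, and the failure of that mechanism in $d=2$ (Liouville, noncompact sigma models) is exactly why discreteness must be imposed by hand there.

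For the \textbf{second} clause, suppose the theory is \emph{not} finitely generated. Using the first clause, order the primary dimensions $\Delta_1\le\Delta_2\le\cdots$ and construct inductively an infinite strictly increasing sequence $\delta_1<\delta_2<\cdots\to\infty$ together with primaries $G_k$ of dimension $\delta_k$ such that $G_k$ does not appear in the OPE of any pair of primaries of dimension $<\delta_k$; these are genuinely new generators. Each $G_k$ nonetheless couples to the stress tensor, $T\in G_k\times G_k$, with $\langle G_k G_k T\rangle$ fixed by the Ward identity, and uniqueness of $T$ guarantees this is one and the same operator for every $k$, so the $G_k$ cannot be hiding in decoupled sectors (which is also what rules out the would-be counterexamples built from tensor products or direct sums). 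The strategy is then to convert the ``newness'' of $G_k$ — the fact that it first appears only in OPEs involving an operator of dimension $\gtrsim\delta_k/2$ — into a constraint on a crossing equation, say for $\langle G_k G_k G_k G_k\rangle$ or for the mixed correlator $\langle G_k G_k T T\rangle$, and to show that no such configuration can be simultaneously consistent for all large $k$. In $d=2$ one can also bring in modular invariance of the torus partition function and Zhu's recursion for torus one-point functions of the $G_k$, though the free boson at irrational radius (finitely generated, yet with characters not spanning a finite-dimensional $SL(2,\mathbb{Z})$ representation) shows that modular invariance alone cannot be the whole story.

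The \textbf{main obstacle} is exactly this last step: deriving a contradiction from ``infinitely many irreducible generators of growing dimension.'' Morally it should follow from a bootstrap-type statement that in any fixed four-point function only finitely many operators below a given dimension can carry OPE coefficients bounded below by Ward identities — the crossing equations have no room for an unbounded tower of new, strongly $T$-coupled operators — but I do not know such a bound in the generality required, and I regard this as genuinely open rather than routine. A weaker route, covering only holographic CFTs, is to run the authors' bulk heuristic in reverse: assuming a semiclassical AdS dual, the bulk content below any energy consists of a discrete, bounded-above particle spectrum, finitely many species of extended objects (for $d>2$, by \cite{Seiberg:1999xz}), and black holes (discrete by finiteness of the Bekenstein-Hawking entropy), and one would argue that the dual boundary operators are OPE-generated by the duals of the finitely many elementary bulk objects; but making even this precise requires controlling how black-hole-creating operators factorize through lighter ones, and it would not touch non-holographic CFTs. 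I would therefore present the $d>2$ discreteness argument and the ``infinite tower obstructs crossing'' argument as the two pieces to be supplied, flagging the second as the crux of the whole problem.
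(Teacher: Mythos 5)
The first thing to be clear about is that the paper does not prove this statement: it appears only as a conjecture, and the sole support offered for it is the holographic heuristic you describe as your ``weaker route'' --- a semiclassical bulk should contain only particles (discrete, bounded-above spectrum), extended objects (discrete for $d>2$ by \cite{Seiberg:1999xz}), and black holes (discrete by finiteness of the Bekenstein--Hawking entropy), so a holographic CFT should be generated by the duals of finitely many elementary objects. There is therefore no paper proof to measure your proposal against, and your proposal, by your own account, is not a proof either: it is a research program whose two decisive steps are left open.

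Concretely, the gaps are the ones you flag, and they are genuine. (i) For $d>2$ you need to derive the first clause of definition \ref{finitegendef} from unitarity plus uniqueness of $T$ alone; the Ward-identity-plus-inversion-formula sketch does not currently yield this, because fixing $\langle \mO\mO T\rangle$ constrains a three-point coefficient, not the density of primaries near an accumulation point, and no divergence of the thermal partition function is forced without additional input. (Even in $d=2$, note that ``discrete spectrum'' does not by itself give the first clause: you also need no accumulation points and finite multiplicities below each $\Delta$, so ``there is nothing to do'' is slightly too quick.) (ii) The crux --- showing that an infinite tower of ``new'' generators $G_k$ of growing dimension is inconsistent with crossing --- has no known mechanism: crossing symmetry of any finite set of correlators is insensitive to the global statement that $G_k$ fails to appear in OPEs of lighter operators, and the Ward identity gives you the coupling of $G_k$ to $T$ but no lower bound on its appearance in light-light OPEs, which is what a contradiction would need. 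Uniqueness of $T$ plausibly excludes decoupled-sector counterexamples (infinite tensor products carry extra conserved spin-two currents), but turning that into a quantitative bootstrap obstruction is exactly the open problem. So the honest summary is: the statement remains a conjecture both in the paper and after your proposal, and your contribution is a reasonable (but unproven) strategy whose holographic fallback coincides with the paper's own motivation.
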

 In any event we can now give our argument for conjecture \ref{compact}, which we phrase as a theorem:
\begin{thm}\label{compactthm}
Let $G$ be a noncompact global symmetry of a finitely-generated conformal field theory.  Then there exists also a compact global symmetry $G'$ such that $G\subset G'$  
\end{thm}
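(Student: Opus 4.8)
The plan is to realize $G$ faithfully inside the unitary group of a finite-dimensional Hilbert space built from the generating operators, and then pass to the closure. By definition \ref{finitegendef} we may fix a finite set $S_0=\{\mO_1,\ldots,\mO_n\}$ of primaries generating the theory under the operator product expansion, with conformal dimensions $\Delta_1,\ldots,\Delta_k$. Quantize on $\mathbb{S}^{d-1}$ and let $W\subset\mathcal{H}$ be the span of all states of energy in $\{\Delta_1,\ldots,\Delta_k\}$; by the first clause of definition \ref{finitegendef} (finitely many primaries below any dimension) this is finite-dimensional. Since any global symmetry commutes with the stress tensor, hence with the dilatation operator $H$ on $\mathbb{S}^{d-1}$, each energy eigenspace is invariant under the unitaries $U(g,\mathbb{S}^{d-1})$, so $W$ is invariant, and restriction gives a homomorphism $\rho:G\to U(W)$ into a compact group.

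First I would show $\rho$ is injective. Let $N=\ker\rho$, so $N$ fixes $W$, and in particular $S_0$, pointwise. The key point is that the $N$-invariant local operators form a subspace closed under the operator product expansion: the vacuum is $N$-invariant and the action $D$ is unitary, so the three-point function of two $N$-invariant operators with a third depends only on the $N$-invariant part of the third; hence the OPE coefficient $C_{ab}{}^{c}$ vanishes whenever $\mO_a,\mO_b$ are $N$-invariant but $\mO_c$ is not, and descendants of $N$-invariant primaries are $N$-invariant since $N$ commutes with the conformal generators. Since $S_0$ lies in this subspace and generates the theory, iterating through the sets $S_1,S_2,\ldots$ of definition \ref{finitegendef} shows that every local operator is $N$-invariant, i.e.\ $N$ acts trivially on all local operators; by condition (c) of definition \ref{globaldef} this forces $N=\{e\}$.

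Now put $G':=\overline{\rho(G)}\subseteq U(W)$. This is a closed subgroup of a compact Lie group, hence by Cartan's closed-subgroup theorem it is itself a compact Lie group, and it contains $\rho(G)\cong G$, giving $G\subseteq G'$. It remains to promote $G'$ to an honest global symmetry in the sense of definition \ref{globaldef}, i.e.\ to extend the unitary action from $\rho(G)$ to all of $G'$ on the full $\mathcal{H}$. For this I would show that $U\circ\rho^{-1}:\rho(G)\to(\text{unitaries on }\mathcal{H})$ is continuous for the topology $G'$ induces on $\rho(G)$, equivalently that $\rho(g_m)\to e$ implies $U(g_m,\mathbb{S}^{d-1})\to\mathrm{id}$ strongly. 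Argue by induction along the filtration $\mathcal{H}_0\subseteq\mathcal{H}_1\subseteq\cdots$ of $\mathcal{H}$, where $\mathcal{H}_j$ is spanned by the states dual to operators obtainable from $S_0$ by at most $j$ OPE steps together with their conformal descendants: $\mathcal{H}_0\subseteq W$, so the base case is the hypothesis together with commutation with the conformal raising operators; and since a state in $\mathcal{H}_{j+1}$ is obtained as a (limit of a) product of operators whose states lie in $\mathcal{H}_j$, on which $U(g_m)$ already converges to $\mathrm{id}$, the convergence propagates through the OPE; finally $\|U(g_m)\|=1$ upgrades strong convergence on the dense union $\bigcup_j\mathcal{H}_j$ to all of $\mathcal{H}$. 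Granting this, $U(g',\mathbb{S}^{d-1})$ is defined for every $g'\in G'$ as a strong limit of unitaries, hence unitary; these operators form a homomorphism (condition (a)), act within the algebras $\mathcal{A}[R]$ with the required continuity since each $U(g,\mathbb{S}^{d-1})$ for $g\in G$ does and everything is taken in the strong operator topology (condition (b)), act faithfully on local operators because $G'\subseteq U(W)$ already acts faithfully on $W$ (condition (c)), and fix the stress tensor since the $U(g,\mathbb{S}^{d-1})$ for $g\in G$ do and $G'$ is their closure (condition (d)).

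The main obstacle is precisely the last step: making rigorous the claim that convergence of the operator action on a finite generating set propagates through the operator product expansion to strong convergence on the entire Hilbert space — in effect a continuity statement for the OPE that interchanges the short-distance limit with the group limit $m\to\infty$. Everything else is soft: the book-keeping of definition \ref{finitegendef}, vacuum invariance and unitarity of $D$ for the OPE-closure of the $N$-invariants, and Cartan's theorem for the compact Lie group structure of $G'$. (Note that this argument automatically shows that many noncompact groups, e.g.\ $SL(2,\mathbb{R})$, simply cannot occur as global symmetries of a finitely-generated CFT, since they admit no nontrivial finite-dimensional unitary representation and hence no injective $\rho$ into any $U(W)$.)
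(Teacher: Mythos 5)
Your first half coincides with the paper's proof: you extract a faithful finite-dimensional unitary representation $\rho$ of $G$ from the generating set $S_0$ (the paper does this at the level of the operators themselves, invoking theorem \ref{Drepthm} for unitarity, while you restrict $U(g,\mathbb{S}^{d-1})$ to a finite-dimensional energy eigenspace, which is the same move via the state-operator correspondence), and your kernel argument that $\ker\rho$ must act trivially on all local operators is the OPE-closure reasoning the paper uses implicitly; the passage to $G'=\overline{\rho(G)}$ and the closed-subgroup theorem are identical. The genuine gap is exactly the step you flag yourself: to make $G'$ a symmetry you try to prove that $\rho(g_m)\to e$ forces $U(g_m,\mathbb{S}^{d-1})\to\mathrm{id}$ strongly by inducting ``through the OPE'' along a filtration of $\mathcal{H}$, and you have no argument that the group limit $m\to\infty$ can be interchanged with the short-distance limit that defines the OPE. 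As written, the extension from $\rho(G)$ to its closure --- which is the entire content of the theorem beyond soft facts --- is asserted rather than proved.

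The paper avoids this infinite-dimensional continuity problem with a representation-theoretic lemma that your write-up is missing: by finite generation together with the selection rule that a nonvanishing three-point function $\lan\mO_1\mO_2\mO_3^\dagger\ran$ requires the representation of $\mO_3$ to appear in the tensor product of those of $\mO_1$ and $\mO_2$, \emph{every} primary transforms in a subrepresentation of some $\rho^{\otimes n}\otimes\rho^{*\otimes m}$. The matrix elements of such representations are polynomials in the entries of $\rho(g)$ and their conjugates, so they extend continuously from the dense subgroup $\rho(G)$ to $G'$; hence the action of $G'$ on each fixed local operator, and the $G'$ selection rules on each fixed correlation function, follow from purely finite-dimensional continuity, one operator or correlator at a time, with no operator limits on $\mathcal{H}$ ever taken. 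Note that this same lemma is what would repair your Hilbert-space formulation: once each fixed operator's transformation matrices converge (for the finite-dimensional reason above), strong convergence of $U(g_m,\mathbb{S}^{d-1})$ on all of $\mathcal{H}$ follows from cyclicity of the vacuum and $\|U(g_m,\mathbb{S}^{d-1})\|=1$, both of which you already invoke --- but the tensor-power decomposition, not an interchange of limits inside the OPE, is the missing idea that does the work.
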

\begin{proof}
Let $S_0=\{\mO_1,\ldots,\mO_n\}$ be the finite set of primary operators which generate all of the others.  There will always be some $\Delta$ such $\Delta_i<\Delta$ for all $i=1,\ldots,n$, and since the symmetry operators $U(g,\mathbb{S}^{d-1})$ commute with the stress tensor the $\mO_i$ must together be part of a finite-dimensional representation $\rho$ of $G$ (otherwise there would be infinitely many operators of dimension less than $\Delta$).  By theorem \ref{Drepthm} (generalized to unbounded operators as explained below the proof), the representation $\rho$ will be unitary. Since all local operators are generated by those in $S_0$, $\rho$ must also be faithful (by definition \ref{globaldef} the representation $D$ of $G$ on all local operators from equation \eqref{Dmap} is always faithful).  In particular $G$ is isomorphic to its image $\rho(G)$, which is a subgroup of $U(M)$ for some finite $M$.  The idea is then to notice that the closure of $\rho(G)$ in $U(M)$, $G'\equiv\ol{\rho(G)}$, is also a subgroup of $U(M)$.  In fact it is a closed subgroup, so since it is a closed subset of a compact space it is compact.  Moreover by theorem \ref{closedsgthm}, $G'$ is a Lie subgroup.  Now by finite generation any primary operator transforms in a representation of $G$ which appears in a finite tensor product of some copies of $\rho$ and its conjugate.\footnote{Note that if $\mO_3$ appears in the OPE of $\mO_1$ with $\mO_2$, then the three point function $\lan\mO_1\mO_2 \mO_3^\dagger\ran$ is nonzero.  This is only allowed by the global symmetry if the representation of $\mO_3$ appears in the direct sum decomposition of the tensor product of the representations of $\mO_1$ and $\mO_2$.}  Therefore by continuity it will also transform in a representation of $G'$, and the correlation functions of all local operators will obey the selection rules of $G'$ symmetry, not just those of $G$ symmetry.  Finally we note that $G'$ is by definition represented faithfully on the local operators, since distinct elements of $G'$ are automatically distinct in $U(M)$.
\end{proof}   
Since this argument is somewhat abstract, it is worthwhile discussing two simple examples.  The first example is a free scalar field with a noncompact target space in $d=2$: this has a noncompact global symmetry group, $\mathbb{R}$, but it is not finitely generated, both because $e^{i\alpha\phi}$ is a good primary operator with conformal dimension $\frac{\alpha^2}{4\pi}$ for any real $\alpha$, and because the three point function of such operators includes a delta function $\delta(\alpha_1+\alpha_2+\alpha_3)$.  The second example is two compact free scalars of equal radius, again in $d=2$.  This theory \textit{is} finitely generated, and the global symmetry group is $U(1)\times U(1)$, which is indeed compact.  We note however that it has an interesting noncompact subgroup consisting of the points $\theta_1=\lambda, \theta_2=\sqrt{2}\lambda$ in $U(1)\times U(1)$ for all real $\lambda$.  This subgroup is realized faithfully on the two-dimensional set of operators $(e^{i\phi_1},e^{i\phi_2})$, and its closure in $U(2)$ is indeed $U(1)\times U(1)$, consistent with theorem \ref{compactthm}.    

It is worth emphasizing that this second example illustrates the incompleteness of a certain argument that global symmetries must be compact which one sometimes hears.  This argument begins by requiring only the first point in definition \ref{finitegendef}, and then claiming that since there are no faithful finite-dimensional unitary  representations of noncompact groups, there cannot be a noncompact global symmetry.  This argument is correct for connected semisimple Lie groups, but it is wrong for general noncompact Lie groups. For example we just met a faithful finite-dimensional unitary representation of $\mathbb{R}$, given by $(e^{ix}, e^{i\sqrt{2}x})$.  Other noncompact groups also have faithful finite-dimensional unitary representations, for example there is a two-dimensional faithful unitary representation of $SL(2,\mathbb{Z})$.\footnote{This representation is generated by the diagonal matrix $(i,-i)$ and a matrix obtained by conjugating the diagonal matrix $(e^{i\pi/3},e^{-i\pi/3})$ by a generic element of $SU(2)$.  This is a representation of $SL(2,\mathbb{Z})$ because $SL(2,\mathbb{Z})$ is isomorphic to the free group on a generator $S$ of order four and a generator $ST$ of order six, with the identification $S^2=(ST)^3$, and the generic conjugation ensures there are no further relations.  We thank Yves de Cornulier for explaining this representation to us \cite{309050}.}  The correct general statements along these lines are theorems \ref{urepthm} and \ref{liefaithfulthm}, which say that all finite-dimensional representations of compact groups are unitary and that at least one of those is faithful.

Returning now to the $d=2$ case, there (and only there) it is possible for ``long strings'' near the boundary to lead to a bulk theory with a continuous spectrum \cite{Seiberg:1999xz,Maldacena:2000hw,Maldacena:2000kv,Maldacena:2001km}. The CFT dual of such a bulk theory therefore will not obey definition \ref{finitegendef}, since it will have a continuous spectrum of conformal dimensions, so theorem \ref{compactthm} does not apply.  In all known examples this happens because the boundary CFT includes massless scalar fields with a noncompact target space: in higher dimensions this does not lead to a continuous operator spectrum because the conformal curvature coupling $R\phi^2$ always lifts the flat direction due to the positive curvature of $\mathbb{S}^{d-1}$ for $d>2$.  We point out however that the first condition in definition \ref{finitegendef} was only used once in the proof of theorem \ref{compactthm}: to argue that the operators $S_0$ are part of a finite-dimensional representation of $G$.  If we replace this condition by simply \textit{requiring} that the operators in $S_0$ transform in a finite-dimensional representation of any global symmetry, then the proof of theorem \ref{compactthm} goes through as before and we get a version of theorem \ref{compactthm} which does not require a discrete spectrum of conformal dimensions with no accumulation points. For example in the boundary CFT dual to string theory on $AdS_3\times \mathbb{S}^3\times T^4$ with $NS$-$NS$ flux, long strings lead to a continuous spectrum but we expect that there is still a finite set of operators whose OPE recursively generates all of the other primaries.\footnote{It was shown in \cite{Maldacena:2001km} that the OPE of two short string operators generates long strings with winding number $w = 1$. For larger winding numbers, the selection rules proven in that paper show that the OPE of one short string operator and one long string operator with winding number $w$ can generate long strings with winding number at most $w+1$. Moreover, evidence has been given \cite{Ribault:2005ms,Giribet:2011xf} that such long strings with winding number are indeed generated. Therefore it seems reasonable to expect that all operators in the boundary CFT are generated iteratively from a finite set of the discrete short string operators.}  And indeed this theory has no noncompact global symmetries, and all bulk gauge fields are compact.  From this point of view, the culprit which allows the $d=2$ free noncompact scalar to have a noncompact global symmetry is not the continuous nature of the spectrum: it is the selection rule in the OPE which prevents us from obtaining all primaries starting from a finite set.    

\section{Spacetime symmetries}\label{bigdiffsec}

So far we have been primarily discussing internal global symmetries, which send the algebra of operators $\AR$ in any spacetime region $R$ into itself.  There are of course also spacetime global symmetries such as boosts and translations, which map $\AR$ to $\mathcal{A}[R']$ for some other region $R'$.  These are examples of the following general definition of global symmetry in quantum field theory:
\begin{mydef}\label{spacetimedef}
A quantum field theory on a spacetime $M$ with topology $\mathbb{R}\times \Sigma$ and metric $g_{\mu\nu}$ has a \textit{global symmetry with symmetry group $G$} if the following are true:
\bi
\item[(a)] There is a homomorphism $U(g,\Sigma)$ from $G$ into the set of unitary and antiunitary operators on the Hilbert space.  
\item[(b)] There is a smooth homomorphism $f_g$ from $G$ to the group of conformal isometries of $M$, meaning diffeomorphisms which preserve the metric $g_{\mu\nu}$ up to an overall position-dependent scalar factor (the group operation is composition, so we have $f_{g_1}\circ f_{g_2}=f_{g_1g_2})$, such that
\be
U^\dagger(g,\Sigma)\AR U(g,\Sigma)=\mathcal{A}[f_{g^{-1}}(R)].
\ee
As before, if $R$ is spatially bounded then this map is required to be continuous in the strong operator topology on any uniformly-bounded subset of $\AR$.
\item[(c)] For all $g$ other than the identity, there exists a local operator $\mO$ such that
\be
U^\dagger(g,\Sigma)\mO(x)U(g,\Sigma)\neq \mO(x).
\ee
\item[(d)] The stress tensor transforms as a conformal tensor, meaning that\footnote{The extra non-tensor factor in front here arises from the fact that the conformal transformations which are global symmetries are combinations of diffeomorphisms with Weyl transformations.  This is because we need to cancel the transformation of the metric; it is a background field and cannot transform under a global symmetry.   This factor is the identity for transformations which are genuine isometries, but for conformal transformations it is essential, for example to get the right scaling dimension for $T_{\mu\nu}$.}
\be
U(h,\Sigma)T_{\mu\nu}(x)U^\dagger(h,\Sigma)=\left(\det\partial f_{h} \sqrt{\frac{\det g(f_{h}(x))}{\det g(x)}}\right)^{\frac{d-2}{d}}\frac{\partial f_{h}^\alpha}{\partial x^\mu}\frac{\partial f_{h}^\beta}{\partial x^\nu}T_{\alpha \beta}(f_{h}(x)),
\ee
where we have used $h$ instead of $g$ for the element of $G$ to avoid confusion with the metric $g_{\mu\nu}$.   
\ei
\end{mydef}
These general global symmetry transformations act on local operators as
\be
U^\dagger(g,\Sigma)\mO_i(x)U(g,\Sigma)=\sum_j D_{ij}(g,x)\mO_j(f_{g^{-1}}(x)),
\ee
where $D$ obeys
\be
\sum_k D_{ij}(g_1,x)D_{jk}(g_2,f_{g_1^{-1}}(x))=D_{ik}(g_1g_2,x),
\ee
which can be thought of as an infinite-dimensional representation of $G$ with $x$ being another ``index''.  

Definition \ref{spacetimedef} reduces to our previous definition \ref{globaldef} of global symmetry if we take $M=\mathbb{R}^d$ with the usual flat metric and assume that all $f_g$ are the identity.  More generally we can always extract an ``internal subgroup'' from $G$ as follows:
\begin{mydef}
Given a global symmetry with symmetry group $G$, its \textit{internal part} is the global symmetry with symmetry group $G_I$ obtained by restricting to only those $g\in G$ such that $f_g$ is the identity.  
\end{mydef}
Since $G_I$ is the kernel of a homomorphism, it is always a closed normal subgroup of $G$.  Moreover if $M=\mathbb{R}^d$ then the internal part of any global symmetry will be a global symmetry of the theory in the sense of definition \ref{globaldef}. When definition \ref{spacetimedef} applies on a more general $M$ we can say that the symmetry is preserved on $M$ in the sense of definition \ref{symextend}.

At first it may seem that condition (d) in definition \ref{spacetimedef} is too strong, for example it implies that when $M=\mathbb{R}^d$ with flat metric, all elements of $G_I$ must commute with all translations, rotations, and boosts, as well as with dilations and special conformal transformations if the theory is conformally invariant.  In fact for elements of $G_I$ which are in the identity component of $G$, this follows from the Coleman-Mandula theorem and its various cousins, which basically say that if $G$ contains the Poincare group as a subgroup, then the Lie algebra of $G$ must be the direct sum of either the Poincare algebra or the conformal algebra with a finite-dimensional compact ``internal'' Lie algebra whose elements all commute with the Poincare/conformal generators \cite{Coleman:1967ad,Maldacena:2011jn,Alba:2015upa}.\footnote{We can also consider supersymmetries, which we have not included in definition \ref{spacetimedef}, which are constrained by an analogous theorem \cite{Haag:1974qh}.  Since supersymmetries are defined only at the level of the Lie algebra (we don't exponentiate them to get a group), the issues we discuss in this section do not arise.  Indeed the presence of the bulk gravitino ensures that any supersymmetry is always gauged in the bulk, so we will not discuss them further.}  Our next order of business in this section will be to extend this result from Lie algebras to Lie groups, establishing a kind of Coleman-Mandula theorem for disconnected groups, which we view as motivating (d) as the most general possibility.\footnote{In our argument we will assume that the internal symmetry group $G_I$ is compact, which in particular implies that the full symmetry group $G$ is finite-dimensional.  This excludes the Virasoro algebra and Kac-Moody current algebra in $d=2$.  These are natural to exclude, since in holography they work somewhat differently than the symmetries we study here.  For example the higher Virasoro and Kac-Moody currents do not give rise to new fields in the bulk, so the noncompact $G_I$ which arises is not dual to a long-range gauge symmetry with noncompact gauge group so there is no violation of conjecture \ref{compact}.}

We first review a few basic properties of the Poincare and conformal groups for $\mathbb{R}^d$, which we define to be isomorphic to $\mathbb{R}^d\rtimes OSpin(d-1,1)$ and $OSpin(d,2)$ respectively.  The former indicates a semidirect product of translations with the Lorentz group.  In both cases the ``$O$'' indicates that we have included both spatial and temporal reflections, and ``$Spin$'' indicates that fermion parity, defined as rotation by $2\pi$ about any axis, is represented nontrivially.  We can obtain the identity components by dropping the $O$'s, and if we quotient by fermion parity then ``$Spin$'' becomes ``$SO$''.  The Coleman-Mandula theorem and theorem \ref{liethm} then tell us that the identity component $G_0$ of $G$ must be a quotient of either $(\mathbb{R}^d\rtimes Spin(d-1,1))\times \wt{(G_I)_0}$ or $Spin(d,2)\times \wt{(G_I)_0}$ by a discrete central subgroup.  The only candidates for this subgroup are combinations of fermion parity with a discrete central subgroup of $\wt{(G_I)_0}$.  This combination does not need to be a product group, for example the theory of two free Dirac fermions with equal nonzero mass in $3+1$ dimensions has a $U(2)$ global symmetry mixing the fermions, but the product of fermion parity and the central element $\begin{pmatrix} -1 & 0 \\ 0 &-1\end{pmatrix}$ of $U(2)$ acts trivially on all states and thus should be quotiented by if we want a faithful representation.  

We can also consider elements of $G_I$ which are not in $G_0$.  We then have the following theorem:
\begin{thm}[Discrete Coleman-Mandula theorem] \label{DCMthm}Say that in a quantum field theory on $\Rd$ we have a global symmetry with a symmetry group $G$, which contains the identity component of the Poincare or conformal group, or one of their $\mathbb{Z}_2$ quotients by fermion parity, and say also that the internal subgroup $G_I$ of $G$ is compact and the Coleman-Mandula theorem applies.\footnote{In this theorem we do not impose condition (d) from definition \ref{spacetimedef}, since otherwise the result would be trivial.  The compactness of $G_I$ is motivated in the previous footnote.}  Then any element  of $G_I$ must commute with all elements of this identity component.  More prosaically, it must commute with translations, boosts, and rotations, as well as dilations and special conformal transformations if there are any.  
\end{thm}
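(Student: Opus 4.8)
The plan is to leverage the continuous Coleman-Mandula result, which we are granting, and upgrade it from the identity component to the full group by a conjugation/adjoint-action argument. Let $g \in G_I$ be an arbitrary element of the internal subgroup, and let $G_0$ denote the identity component of $G$ (which, by the continuous theorem plus theorem \ref{liethm}, is a quotient of $(\mathbb{R}^d\rtimes Spin(d-1,1))\times \widetilde{(G_I)_0}$ or $Spin(d,2)\times \widetilde{(G_I)_0}$ by a central discrete subgroup built from fermion parity). The first step is to observe that conjugation by $g$, namely $h \mapsto g h g^{-1}$, is a continuous automorphism of $G$ which fixes the identity, hence maps $G_0$ to itself. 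Composing with $f$ (the homomorphism to conformal isometries), since $f_g = \mathrm{id}$ by assumption that $g\in G_I$, we get $f_{ghg^{-1}} = f_h$ for all $h$. So conjugation by $g$ preserves the ``spacetime content'' of every element: it acts as an automorphism of $G_0$ that descends to the identity on the quotient $G_0/(G_I)_0$, i.e. it is an automorphism of $G_0$ fixing the image of the Poincaré (or conformal) subgroup pointwise modulo $(G_I)_0$.

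The second step is to pin down this automorphism using structure theory. An automorphism of $G_0$ which induces the identity on $G_0/(G_I)_0$ must send each Poincaré/conformal generator $P$ to $P \cdot c(P)$ with $c(P) \in (G_I)_0$; differentiating, the induced Lie algebra automorphism sends $\mathfrak{p} \ni X \mapsto X + \phi(X)$ with $\phi(X)$ in the Lie algebra $\mathfrak{g}_I$ of $(G_I)_0$. But the Coleman-Mandula theorem tells us the Lie algebra of $G_0$ is a \emph{direct sum} $\mathfrak{p} \oplus \mathfrak{g}_I$ with $[\mathfrak{p},\mathfrak{g}_I]=0$. A Lie algebra automorphism respecting this: since $\mathfrak{p}$ is (semi)simple-plus-translations and $\mathfrak{g}_I$ is compact, the only ideals are the obvious ones, and an automorphism fixing $\mathfrak{p}$ mod $\mathfrak{g}_I$ and the direct-sum structure must actually fix $\mathfrak{p}$ pointwise (the ``correction'' $\phi$ would have to be a Lie algebra homomorphism $\mathfrak{p}\to\mathfrak{g}_I$ compatible with brackets, and because $\mathfrak{p}$ has no nontrivial abelian quotient on the Lorentz/conformal part while the translation part is forced by the commutation relations with boosts to go to zero as well, $\phi=0$). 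Hence conjugation by $g$ is trivial on the Lie algebra of the Poincaré/conformal factor, so $g$ centralizes the identity component of that factor.

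The third step is to promote ``centralizes the identity component of the spacetime factor'' to ``centralizes all of $G_0$ that we care about.'' We have shown $ghg^{-1}=h$ for $h$ in the connected Poincaré/conformal subgroup; these are precisely translations, boosts, rotations, dilations, and special conformal transformations, which is the claim. (If one wants to include the $\mathbb{Z}_2$ fermion-parity quotient, note $g$ automatically commutes with fermion parity since that element is central in $G$.) I expect the \textbf{main obstacle} to be the rigor of step two: ruling out a nontrivial ``shear'' automorphism $X \mapsto X + \phi(X)$ requires carefully using that the translation generators and the semisimple Lorentz/conformal generators together leave no room for a nonzero $\mathfrak{p}$-equivariant map into the compact $\mathfrak{g}_I$ — one must check the translation sector separately (using $[\text{boost}, \text{translation}] = \text{translation}$ to force $\phi$ to vanish on translations once it vanishes on boosts) and the semisimple sector (using perfectness, $[\mathfrak{p}_{ss},\mathfrak{p}_{ss}]=\mathfrak{p}_{ss}$, together with $[\mathfrak{p},\mathfrak{g}_I]=0$, to force $\phi|_{\mathfrak{p}_{ss}}=0$). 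A secondary subtlety is the interplay with fermion parity inside the discrete central subgroup, but since fermion parity is central this does not actually obstruct anything; it just means the relevant automorphism group is slightly larger than $\mathrm{Inn}(G_0)$, without affecting the conclusion.
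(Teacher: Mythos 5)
Your proposal is correct in substance, and it is essentially the infinitesimal mirror of the paper's argument, so a comparison is worthwhile. The paper works entirely at the group level: for fixed $h\in G_I$ it studies the commutator map $\wt{g}_h(g)\equiv g^{-1}hgh^{-1}$, notes that it lands in $G_I\cap G_0$ by normality and continuity, uses the Coleman--Mandula commutativity of $G_I\cap G_0$ with $\hat{G}_0$ to show that $\wt{g}_h$ is a homomorphism from $\hat{G}_0$ into $G_I$, and then kills it by composing with a faithful finite-dimensional unitary representation of the compact $G_I$ (theorem \ref{liefaithfulthm}) and invoking the absence of nontrivial finite-dimensional unitary representations of $\hat{G}_0$ ($Spin(d-1,1)$ or $Spin(d,2)$ being noncompact and simple, with translations not commuting with the Lorentz part). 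Your route replaces $\wt{g}_h$ by its derivative: $\mathrm{Ad}_h$ restricted to $\mathfrak{p}$ is $X\mapsto X+\phi(X)$, the Coleman--Mandula relation $[\mathfrak{p},\mathfrak{g}_I]=0$ makes $\phi$ a Lie algebra homomorphism $\mathfrak{p}\to\mathfrak{g}_I$ (exactly the infinitesimal counterpart of the paper's homomorphism property of $\wt{g}_h$), and the vanishing of $\phi$ rests on the same noncompactness-versus-compactness input. What your version buys is that you never need theorem \ref{liefaithfulthm}: compactness of $\mathfrak{g}_I$ as a Lie algebra suffices. What the paper's version buys is that it is purely global: it treats the possibly disconnected element $h$ directly and does not need the extra (easy, but necessary) step of passing from triviality of $\mathrm{Ad}_h$ on $\mathfrak{p}$ to commutation with the full connected spacetime subgroup, which in your argument follows because that subgroup is generated by $\exp(\mathfrak{p})$.

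One correction to your step two: the reason you cite for $\phi|_{\mathfrak{p}_{ss}}=0$, namely perfectness or the absence of a nontrivial abelian quotient, is not sufficient on its own; perfectness only excludes homomorphisms with abelian image. The correct reason, and the one the paper uses at the group level, is that $\mathfrak{so}(d-1,1)$ and $\mathfrak{so}(d,2)$ are simple and noncompact: the kernel of $\phi|_{\mathfrak{p}_{ss}}$ is an ideal, so it is either everything or trivial, and triviality would embed an algebra with indefinite Killing form into the compact $\mathfrak{g}_I$, which carries an invariant positive-definite inner product --- a contradiction. With that fix, your treatment of the translation sector via $[\mathfrak{so}(d-1,1),\mathbb{R}^d]=\mathbb{R}^d$ is fine, and the fermion-parity quotient is indeed harmless since that element is central.
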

\begin{proof}
Consider $h\in G_I$ which is also in $G_n$, the $n$th connected compoment of $G$, and let $g$ be an element of the identity component of the Poincare/conformal group or its $\mathbb{Z}_2$ quotient by fermion parity, which for brevity we will call $\hat{G}_0$.  Since by definition $g\in G_0$, by continuity we must have $g^{-1} h g\in G_n$.  Therefore we must have
\be
g^{-1}h g=\wt{g}_h(g)h,
\ee
with $\wt{g}_h(g)\in G_0$.  We will argue that $\wt{g}_h(g)$ is the identity.  We first note that since $G_I$ is a normal subgroup, we must have $\wt{g}_h(g)\in G_I\cap G_0$.  As we just discussed, the Coleman-Mandula theorem therefore says that $\wt{g}_h(g)$ commutes with any element of $\hat{G}_0$.  We therefore have
\begin{align}\nonumber
\wt{g}_h(g_1)\wt{g}_h(g_2)&=g_1^{-1}h g_1 h^{-1}g_2^{-1}hg_2h^{-1}\\\nonumber
&=(g_1g_2)^{-1}h(g_1g_2)h^{-1}\\
&=\wt{g}_h(g_1g_2),
\end{align}
so $\wt{g}$ defines a homomorphism from $\hat{G}_0$ to $G_I$.  Finally we note that since $G_I$ is compact, by theorem \ref{liefaithfulthm} it has a faithful finite-dimensional representation $\rho$.  Therefore the composition $\rho\circ \wt{g}$ gives a finite-dimensional unitary representation of $\hat{G}_0$.  Any such representation must be trivial however, in the Poincare case because $Spin(d-1,1)$ is noncompact and simple and translations do not commute with it, while in the conformal case just because $Spin(d,2)$ is noncompact and simple.  Finally since $\rho$ is faithful, it must be that $\wt{g}_h(g)$ is the identity for all $g$, $h$.    
\end{proof}
We view this theorem as motivating condition (d) in definition \ref{spacetimedef}.  It is worth emphasizing that it does \textit{not} say that elements of $G_I$ must commute with spatial and temporal reflections, since these are not in the identity component of the Poincare/conformal groups.  In general the best we can say is that every element $g$ of $G$ can be written as
\be\label{gdecomp}
g=\hat{g}_0h,
\ee
where $\hat{g}_0$ is in the identity component of the Poincare/conformal group (or its $\mathbb{Z}_2$ quotient by fermion parity), and $h$ has the property that $f_h$ is either the identity, a reflection of a particular spatial direction, a time reversal, or a product of the two.\footnote{In even dimensions we can replace the spatial reflection by a simultaneous reflection of all spatial directions, usually called parity, but in odd dimensions this is just a rotation.  Therefore when working in arbitrary dimensions it is safer to talk about reflections in a single spatial direction, for example the natural generalization of the $CPT$ theorem to arbitrary dimensions is the $CRT$ theorem.}  Acting on elements of $G_I$ by conjugation, $h$ can induce a nontrivial outer automorphism of $G_I$ even if it includes a spatial or temporal reflection.  One simple example of this arises in the theory of a single free Dirac fermion in $3+1$ dimensions, with Lagrangian
\be
\mathcal{L}=-i\overline{\psi}\gamma^\mu\partial_\mu \psi.
\ee
The internal symmetry group $G_I$ for this theory is the $U(2)$ that rotates the two independent left-handed Weyl spinors contained in $\Psi$ into each other.  In particular this $U(2)$ includes the chiral rotation
\be\label{diracchiral}
\psi'=e^{i\theta\gamma^5}\psi
\ee
as the diagonal subgroup generated by the identity, fermion number as the subgroup generated by $\sigma_z$, and charge conjugation as the $\mathbb{Z}_2$ that exchanges the two left-handed fermions.  
This theory is also invariant under the parity transformation
\begin{align}\nonumber
(t',\vec{x}')&=(t,-\vec{x})\\
\psi'(t,\vec{x})&=i\gamma^0\psi(t,-\vec{x}).\label{parity}
\end{align}
The point is that this parity transformation does not commute with the chiral symmetry transformation \eqref{diracchiral}: if $R(\theta)$ and $P$ are the unitary operators implementing chiral symmetry and parity on the Hilbert space, then we have
\be
P^{-1}R(\theta)P=R(-\theta),
\ee  
which is the algebra of the nonabelian group $O(2)$.\footnote{One might try to modify our definition \eqref{parity} of parity by including an element of the $U(2)$ internal symmetry in hopes of obtaining something that commutes with chiral symmetry.  This however is impossible: chiral symmetry is in the center of $U(2)$.}
More complicated examples of this phenomenon have been studied in the particle physics literature \cite{Feruglio:2012cw,Holthausen:2012dk}, and it is also discussed using somewhat different terminology in section 2.C of \cite{Weinberg:1995mt}. 

It is also worth emphasizing that neither definition \ref{spacetimedef} nor theorem \ref{DCMthm} \textit{require} the existence of elements $g$ of $G$ whose associated $f_g$ involves any particular spatial or temporal reflection.  For example in the standard model of particle physics there are no global symmetries which reflect only time or only space (the CPT theorem ensures that there will always be a symmetry which reflects both). And moreover even if such elements exist, they may act on the operators in a nonstandard way.  For example if we look at only the first two generations of leptons and quarks in the standard model, parity and charge conjugation as conventionally defined are not symmetries but their product is.  

Having introduced our general definition \ref{spacetimedef} of global symmetries, we may now ask if our theorem \ref{noglobalthm}, which rules out internal global symmetries in the bulk of AdS/CFT, applies also to global symmetries for which $f_g$ can be nontrivial.  At first this seems like a rather silly question: general relativity is a diffeomorphism-invariant theory, so shouldn't any spacetime symmetries obviously need to be gauged?  In fact the truth is a bit more subtle.  The right statement is that to remove negative-norm modes of the graviton, it is only necessary that the \textit{identity component} of the diffeomorphism group be gauged \cite{Weinberg:1995mt}.  After all the other connected components might not even be symmetries, as happens in the standard model, and then we surely had better not gauge them!  But then this leads to an interesting question: say that our bulk theory is indeed invariant under diffeomorphisms which change the orientation of time and/or space: could these be global symmetries rather than gauge symmetries?  From the bulk point of view it is fairly subtle to decide this: ultimately it comes down to whether or not the gravitational path integral includes temporally and/or spatially unoriented manifolds (it includes them if these symmetries are gauged, but it doesn't if they aren't).  From the point of view of conjecture \ref{nosym} however, it would be rather surprising if there were such global symmetries in quantum gravity.  In fact there are not, and a slight generalization of theorem \ref{noglobalthm} suffices to establish it.  

Indeed note that if we study the boundary CFT on $\mathbb{R}\times \mathbb{S}^{d-1}$ (which is conformally flat so the results of this section apply), any global spacetime symmetry in the bulk would imply the existence of a global spacetime symmetry of the boundary CFT by the same argument as for theorem \ref{boundarysym}.  From equation \eqref{gdecomp} we see that every element of that boundary global symmetry group is the product of a conformal transformation which is continuously connected to the identity and a group element $h$ such that $f_h$ is either the identity, a time reversal, an antipodal mapping of $\mathbb{S}^{d-1}$, or a time reversal and an antipodal mapping.  We want to show that these global symmetries cannot arise from global symmetries in the bulk.  Decoupling of negative-norm graviton modes tells us that the identity component conformal transformation must be gauged, so we are then just left with $h$.  If $f_h$ is the identity then theorem \ref{noglobalthm} already gives us the desired contradiction.  Moreover if $f_h$ is a time-reversal, the argument for theorem \ref{noglobalthm} still works provided that we take the boundary time slice in figure \ref{regions2fig} to be at $t=0$.  Finally if $f_h$ involves an antipodal mapping of the sphere, we can still basically use the argument of theorem \ref{noglobalthm}, the only difference is that in figure \ref{regions2fig} we should combine pairs of regions which are on opposite sides of the sphere.  As long as the regions are small enough, the entanglement wedge of their union will just be the union of their entanglement wedges, so the contradiction still follows.  In both cases where $f_h$ is nontrivial there is no need for a discussion of quasilocal bulk operators: the metric itself is already not invariant so we can just use it.

Conjectures \ref{allcharge} and \ref{compact} do not at first seem to have meaningful analogues for spacetime symmetries, since spacetime symmetry groups are noncompact, but actually there is a fairly trivial generalization based on restricting to just the rotation subgroup $SO(d)\subset SO(d,2)$.  This group is of course compact, and the obvious extension of conjecture \ref{allcharge} says that there should be states in the bulk transforming in all irreducible representations of $SO(d)$ (or $Spin(d)$ if there are fermions).  In other words, there should be objects of all possible spins.  In fact this conjecture does indeed follow from a simple generalization of the argument of section \ref{completenesssec}.  Namely we consider gravitational Wilson lines of spin $j$ threading the throat of the AdS-Schwarzschild geometry from one side to the other, localized at some point $x\in \mathbb{S}^{d-2}$.  Under one-sided rotations which preserve $x$, this Wilson line will transform in the spin-$j$ representation of $SO(d-1)$ (or $Spin(d-1)$).  Since for any element of $SO(d)$ (or $Spin(d-1)$) we can pick an $x$ and $j$ such that that element is represented nontrivially on this Wilson line, we see that $SO(d)$ (or $Spin(d)$) must be represented faithfully on the one-boundary Hilbert space.  From here we then would like to use theorems \ref{faithfulthm} and \ref{levythm} to conclude that there must be states of all spin, but we need to be a little careful since now rotations can move the operators around.  This problem however is easily solved: we can simply act with all (smeared) operators at the north pole of $\mathbb{S}^{d-1}$, and then classify their representations with respect to the $SO(d-1)$ (or $Spin(d-1)$) subgroup which fixes the north pole.  Since we can obtain all tensor products of the faithful representation in this way, and since this subgroup is sufficient to diagnose the representation of $SO(d)$ (or $Spin(d)$),  we may indeed use theorems \ref{faithfulthm} and \ref{levythm} to conclude that there are states of all spin (all integers for $SO(d)$ and all half-integers for $Spin(d)$).

\section{\textit{p}-form symmetries}\label{psec}
In the last few years it has been understood that there is a powerful generalization of the global symmetries we have been discussing so far.  These new symmetries are variously called higher symmetries, gauge-like symmetries, $p$-form symmetries, or generalized global symmetries \cite{nussinov2009symmetry,Kapustin:2013uxa,Kapustin:2014gua}, \cite{Gaiotto:2014kfa}.  We will call them $p$-form global symmetries, since this name gives the most information about the symmetry being discussed.  Understanding $p$-form global symmetries begins with the observation that the ordinary global symmetries we have been discussing so far can be thought of as global symmetries which act on local operators: indeed condition (c) in definition \ref{globaldef} tells us that we can diagnose the full symmetry group just by looking at how local operators transform.  $p$-form global symmetries are defined as global symmetries which act nontrivially only on surface operators of dimension at least $p$, and which act faithfully on surface operators of dimension exactly $p$.  In this language, the global symmetries we have been discussing so far become zero-form symmetries.  It is natural to ask to what extent conjectures \ref{nosym}-\ref{compact} have generalizations to $p>0$, and to what extent we can use AdS/CFT to give arguments for those generalizations.  Answering these questions is the goal of this section.  We begin by discussing $p$-form global symmetries in more detail.

\subsection{\textit{p}-form global symmetries} 
It is perhaps easiest to introduce $p$-form global symmetries by generalizing the ``path integral insertion'' perspective on ordinary global symmetries described in and around figure \ref{gluingfig} \cite{Gaiotto:2014kfa}.  Recall that in that language, a global symmetry corresponds to a family of codimension-one insertions $U(g,\Sigma)$, where $g$ is any element of $G$ and $\Sigma$ is any closed oriented codimension-one surface in spacetime.  One then requires that these surface insertions obey the group algebra $U(g_1,\Sigma)U(g_2,\Sigma)=U(g_1g_2,\Sigma)$, and also that they are topological in the sense that away from other path integral insertions, $\Sigma$ can be freely deformed without changing the result of the path integral.  Finally one requires $\Sigma$ can also be continuously deformed past a local insertion $\mO(x)$, but perhaps at the price of a representation of $G$ acting on that local insertion.  For example if $\Sigma'$ contains $x$ in its interior while $\Sigma$ does not,\footnote{Here which side of a surface we call its interior is determined by its orientation, and flipping this orientation is equivalent to inverting $g$.} then in the path integral we have
\be
\lan \ldots \mO_i(x)U(g,\Sigma')\ran=\sum_j D_{ij}(g)\lan \ldots \mO_j(x)U(g,\Sigma)\ran,
\ee
where here ``$\ldots$'' denotes other insertions which do not interfere with the deformation between $\Sigma$ to $\Sigma'$.  This is a path integral representation of equation \eqref{Dmap}, and the matrix $D$ is the same matrix appearing there; in particular it is required to be faithful in the sense of being nontrivial for all $g$ other than the identity.  

\bfig
\includegraphics[height=3cm]{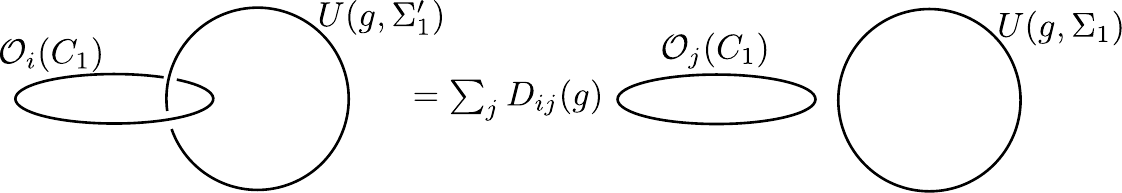}
\caption{A one-form global symmetry in $d=3$: linking a symmetry insertion $U(g,\Sigma_1')$ with a line insertion $\mO_i(C_1)$ acts on that line insertion with a representation of the (abelian) symmetry group $G$.}\label{pformfig}
\efig
$p$-form global symmetries are then defined analogously by requiring that there be a family of $(d-p-1)$-dimensional  insertions $U(g,\Sigma_{d-p-1})$, where again $g$ is any element of $G$ but now $\Sigma_{d-p-1}$ is any closed oriented $(d-p-1)$-dimensional surface in spacetime.  As before we demand the group algebra $U(g_1,\Sigma_{d-p-1})U(g_2,\Sigma_{d-p-1})=U(g_1g_2,\Sigma_{d-p-1})$ is satisfied, and also that $\Sigma_{d-p-1}$ can be freely deformed away from other path integral insertions.  When $p>0$, $\Sigma_{d-p-1}$ can always be deformed ``around'' any local operator without picking up a representation of $G$.  Moreover it can similarly be deformed around any surface operator of dimension less than $p$. This is not true however for a surface $C_p$ of dimension $p$, since it is possible for $C_p$ and $\Sigma_{d-p-1}$ to be linked nontrivially in spacetime.  One finally then requires that if $C_p$ and $\Sigma'_{d-p-1}$ are linked once (this counting includes the orientations of $C_p$ and $\Sigma_{d-p-1}$, and inverting $g$ is equivalent to flipping the orientation of $\Sigma'_{d-p-1}$), while  $\Sigma_{d-p-1}$ and $C_p$ are not linked, then in the path integral we have
\be
\lan \ldots \mO_i(C_p)U(g,\Sigma^\prime_{d-p-1})\ran=\sum_j D_{ij}(g)\lan \ldots \mO_j(x)U(g,\Sigma_{d-p-1})\ran,
\ee
where $\mO_i(C_p)$ is any surface operator on $C_p$ and $D_{ij}(g)$ is a representation of $G$.  We show an example for $p=1$ and $d=3$ in figure \ref{pformfig}.  As for zero-form symmetries, one requires that $D_{ij}(g)$ is nontrivial for all $g$ other than the identity.

One of the most fundamental distinctions between zero-form  global symmetries and $p$-form global symmetries with $p>0$ is that in the latter case the symmetry group $G$ must be abelian.  The reason is that if $\Sigma_{d-p-1}$ and $\Sigma'_{d-p-1}$ are two nearby surfaces of codimension $p+1$, they have no natural ordering.  Indeed in Lorentzian signature we can continuously deform them without intersection to exchange their time ordering.  In the limit where we bring the two surfaces together we must therefore have
\be
U(g_1,\Sigma_{d-p-1})U(g_2,\Sigma_{d-p-1})=U(g_2,\Sigma_{d-p-1})U(g_1,\Sigma_{d-p-1}).
\ee
Another important distinction is that in order for a $p$-form global symmetry to exist, there must be $p$-dimensional surface insertions which cannot be generated by insertions of lower dimensionality, since otherwise they would have to be neutral.  

Perhaps the most basic example of a theory with a $p$-form global symmetry with $p>0$ is free Maxwell theory, with gauge group $U(1)$.  This theory has a two-form conserved current 
\be
J_e\equiv \frac{1}{q^2}F,
\ee
which we can use to introduce the codimension-two symmetry insertions
\be\label{maxwelloneform}
U(e^{i\theta},\Sigma_{d-2})\equiv e^{i\theta \int_{\Sigma_{d-2}}\star J_e}.
\ee
These are nothing but the exponential of the integrated electric flux through $\Sigma_{d-2}$.  In section \ref{gaugesec} we studied these insertions at spatial infinity, where we used them to define long-range gauge symmetry, but the idea is now to consider them for arbitrary closed oriented $\Sigma_{d-2}$, and in particular to interpret them as the symmetry insertions for a one-form global symmetry with symmetry group $U(1)$.  They will be topological by the source-free Maxwell equation $d\star F=0$, but in order to make good on this interpretation we also need to say what are the line insertions which are charged under this one-form global symmetry.  Indeed the answer is obvious: they are the Wilson loops $W_n(C_1)$.  Since a Wilson line of charge $n$ represents the worldline of a background heavy particle of charge $n$, when $C_1$ is linked with $\Sigma_{d-2}$ the symmetry insertion $U(e^{i\theta},\Sigma_{d-2})$ will detect this charge and pick up a factor of $e^{in\theta}$ compared to when they are not linked. 

In fact free Maxwell theory with gauge group $U(1)$ also has another conserved current, the $(d-2)$-form current
\be
J_m\equiv \star F.
\ee
This leads to a second $p$-form global symmetry, this one with $p=d-3$ and symmetry insertions
\be
U(e^{i\theta},\Sigma_{2})\equiv e^{i\theta \int_{\Sigma_2}F}. 
\ee
The $(d-3)$-dimensional surface insertions charged under this symmetry are the 't Hooft surfaces defined by equation \eqref{thooftfree}.

Another example of a one-form symmetry arises in $SU(N)$ Yang-Mills theory with no matter fields.  This is the $\mathbb{Z}_N$ ``center symmetry'' of Polyakov and 't Hooft \cite{Polyakov:1978vu,tHooft:1977nqb}, whose symmetry insertion $U(e^{2\pi i n/N},\Sigma_{d-2})$ is defined to act as $e^{2\pi i n/N}$ on the Wilson loop in the fundamental representation of $SU(N)$.\footnote{This is not how center symmetry was originally described.  Instead one considered the set of gauge configurations of the theory in Euclidean signature with a temporal circle, and then considered the action on Wilson loops wrapping this circle of ``illegal gauge transformations'' which are not periodic around the loop.  This idea always seemed somewhat mysterious: why should we be allowed to consider gauge transformations which are not periodic?   And moreover, in defining a global symmetry why should we need to talk about gauge transformations at all?  Perhaps the main insight of \cite{Gaiotto:2014kfa} is that with the right definition, we don't!}  We can describe the symmetry insertions in this example more  concretely using the Hamiltonian lattice presentation of gauge theory we reviewed in section \ref{latsec}.  The basic idea for any gauge group $G$ is to consider operators of the form
\be\label{centersymops}
U(g,\Sigma_{d-2})\equiv\prod_{\ell\in \Sigma_{d-2}} L_g,
\ee
where the product is over the links which puncture any spatial $(d-2)$-dimensional surface $\Sigma_{d-2}$.  These operators will not however be invariant under the gauge transformations \eqref{latgaugetrans} unless $g$ is in the center $Z_G$ of $G$, so to get a good operator we need to restrict to $g\in Z_G$.  This is why the one-form global symmetry group of pure $SU(N)$ gauge theory is $\mathbb{Z}_N$, even though depending on the background there may be a full $SU(N)$ long-range gauge symmetry.  The latter is possible because we do not quotient by gauge transformations at spatial infinity, so the asymptotic symmetry operators do not need to be restricted to the center.

In our discussion of zero-form global symmetries in section \ref{globalsec}, we began with an algebraic definition, definition \ref{globaldef}, and from this we derived the path integral insertion point of view.  The reader may wonder why we have begun with the latter point of view here.  The reason is that if the spatial topology $\Sigma$ is simple, meaning that the homology group $H_p(\Sigma)$ is trivial, there can never be operators on the Hilbert space which are charged under a $p$-form global symmetry.  This is because on such a time-slice, any closed oriented $p$-dimensional surface $C_p$ will intersect any closed oriented  $(d-p-1)$-dimensional surface $\Sigma_{d-p-1}$ an equal number of times in opposite directions.  Nonetheless it will still be useful for us to give an algebraic definition of $p$-form global symmetries which generalizes definition \ref{globaldef} to $p>0$:
\begin{mydef}\label{globalpdef}
Let $\Sigma$ be a $(d-1)$-dimensional spatial manifold in which there is at least one closed oriented $p$-dimensional surface and one closed oriented $(d-p-1)$-dimensional surface which intersect each other exactly once.  We say that a quantum field theory on $M=\mathbb{R}\times \Sigma$ has a \textit{p-form global symmetry with (abelian) symmetry group G} if the following are true:
\bi
\item[(a)]  For any closed oriented $(d-p-1)$ surface $\Sigma_{d-p-1}\subset \Sigma$, there is a homomorphism $U(g,\Sigma_{d-p-1})$ from $G$ into the set of unitary operators on the Hilbert space of the theory quantized on $\Sigma$.  Moreover for any spatial region $R\subset \Sigma$ such that $\Sigma_{d-p-1}\subset R$, we have $U(g,\Sigma_{d-p-1})\subset \AR$.
\item[(b)]  For any such $\Sigma_{d-p-1}$, any $g\in G$, and any spatial region $R$, we have
\be
U^\dagger(g,\Sigma_{d-p-1})\AR U(g,\Sigma_{d-p-1})=\AR.
\ee
Moreover if $R$ is spatially bounded then the restriction of this map to any uniformly-bounded subset of $\AR$ is continuous in the strong operator topology.
\item[(c)] For any element $g$ of $G$ other than the identity, there is a $p$-dimensional surface operator $\mO$, a $p$-dimensional surface $C_p\subset \Sigma$, and a $(d-p-1)$-dimensional surface $\Sigma_{d-p-1}\subset \Sigma$ such that 
\be
U^\dagger(g,\Sigma_{d-p-1})\mO(C_p)U(g,\Sigma_{d-p-1})\neq \mO(C_p).
\ee 
\item[(d)]  For all $x\in \mathbb{R}\times\Sigma$, $g\in G$, and $\Sigma_{d-p-1}$, we have
\be
U^\dagger(g,\Sigma_{d-p-1})T_{\mu\nu}(x)U(g,\Sigma_{d-p-1})=T_{\mu\nu}(x).
\ee
\ei
\end{mydef}
Condition (d) implies that the symmetry operators $U(g,\Sigma_{d-p-1})$ are topological surface operators, and in fact it further implies that they commute with any $p'$-dimensional surface operator $\mO(C_{p'})$ with $p'<p$, since they can be continuously deformed around each other to change their time ordering.  Condition (d) also implies that the action of $G$ on the set of surface operators at $C_p$ defined by conjugation by $U(g,\Sigma_{d-p-1})$ is independent of small deformations of $C_p$ and $\Sigma_{d-p-1}$.  Indeed more is true: since their algebra is controlled entirely by the pieces of $U(g,\Sigma_{d-p-1})$ and $\mO(C_p)$ which are at the intersections of $\Sigma_{d-p-1}$ and $C_p$ (all other parts are spacelike separated), and since the symmetry group is abelian, we can pick a basis $\mO_i$ of surface operators such that their algebra with the $p$-form symmetry operators is given by
\be
U^\dagger(g,\Sigma_{d-p-1})\mO_i(C_p)U(g,\Sigma_{d-p-1})= D_i(g)^{n(C_p,\Sigma_{d-p-1})}\mO_i(C_p),\label{pDmap}
\ee
where $n(C_p,\Sigma_{d-p-1})$ is the intersection number of $C_p$ and $\Sigma_{d-p-1}$ and $D_i(g)$ is a homomorphism from $G$ into $U(1)$.

$p$-form global symmetries have many very interesting physical applications.  The basic idea is to use their existence, and whether or not they are spontaneously broken, in an extension of the Landau paradigm of characterizing the phases of many-body quantum systems by their symmetry structure \cite{Gaiotto:2014kfa},\cite{Yoshida:2015cia},\cite{Gaiotto:2017yup,Lake:2018dqm}.  One can also work out a transport theory of higher-form charges, for example leading to a new and much more satisfactory conceptual understanding of magnetohydrodynamics \cite{Grozdanov:2016tdf}.  Unfortunately describing these developments further here would take us too far afield.

\subsection{\textit{p}-form gauge symmetries}
Although $p$-form global symmetries were defined only recently, in an amusing twist of fate the $p$-form gauge symmetries which appear once we ``gauge'' them have been studied for decades \cite{Kalb:1974yc}.  The situation is especially simple when we gauge a $p$-form global symmetry which has symmetry group $\mathbb{R}$.  We then expect a $(p+1)$-form current $J_{p+1}$, for which we can first turn on a background $(p+1)$-form gauge field $A_{p+1}$ via a term 
\be
\delta S=\int_M A_{p+1}\wedge \star J_{p+1}
\ee
in the action.  We may then check if the partition function, possibly after some renormalization, is invariant under background gauge transformations
\be
A'_{p+1}=A_{p+1}+d\Lambda_{p},
\ee
where $\Lambda_p$ is an arbitrary $p$-form. If it is not invariant then we can say that the $p$-form global symmetry we started with has an 't Hooft anomaly, and we can proceed no further.  If it is invariant, then we are free to introduce a kinetic term and make $A_{p+1}$ a dynamical field, leading to a dynamical $(p+1)$-form gauge field.  A typical kinetic term one adds is
\be\label{pkinetic}
S=-\frac{1}{2q^2}\int_M F_{p+2}\wedge \star F_{p+2},
\ee
where $F_{p+2}=d A_{p+1}$, and one may also add various Chern-Simons and $\theta$-type terms.     We can also introduce a ``Wilson surface'' functional
\be\label{wsurface}
W_\alpha[\Sigma_{p+1}]=e^{i\alpha \int_{\Sigma_p} A_{p+1}},
\ee
which is gauge-invariant if $\partial \Sigma_{p+1}=0$ and otherwise transforms as
\be
W_\alpha'[\Sigma_{p+1}]=e^{i\int_{\partial \Sigma_{p+1}}\Lambda_p}W_\alpha[\Sigma_{p+1}].
\ee
There is also a gauge-invariant ``electric flux'' functional
\be
\Phi(\Sigma_{d-p-1})\equiv \frac{1}{q^2}\int_{\Sigma_{d-p-2}} \star F_{p+2},
\ee 
where in preparation for holograpy we have taken the spacetime dimension to be $d+1$.

The situation is not so simple for symmetry groups other than $\mathbb{R}$.  The reason is that it is then sometimes possible to turn on topologically-nontrivial background gauge field configurations which require more than one patch to describe.  In section \ref{backgroundsec} we reviewed how to do this for zero-form global symmetries using the idea of a connection on a principal bundle.  The generalization of this idea to $p$-form global symmetries is not straightforward: one immediately encounters the problem that the transition functions $g_{ij}:U_i\cap U_j\to G$ of an abelian principal bundle can be used to define a closed one-form $-i\partial_\mu g_{ij} g_{ij}^{-1}$ for use in the transformation of the gauge field, but there is no obvious way to use them to make a closed $(p+1)$-form for use in the transformation of $A_{p+1}$.  If one asks a mathematician how to solve this problem (we've asked several), one is usually told that the answer involves various types of abstract nonsense such as $n$-categories, stacks, and gerbes (see eg \cite{Baez:2010ya} for a relatively gentle introduction to this point of view, and also \cite{Johnson:2002tc}).  Although these ideas are indeed sometimes useful, a more plebeian approach is possible and we now say a little about how it works.  

For simplicity we will first describe the case where the $p$-form symmetry group is $U(1)$. The basic idea is that to describe a background gauge field for a $p$-form global symmetry, in addition to $p+1$-form gauge fields in each patch $U_i$ and $p$-form transition functions in each double overlap $U_i\cap U_j$, we need additional transition functions in higher multiple intersections which are differential forms of lower degree \cite{Alvarez:1984es}.  More concretely, in each $k$-tuple intersection $U_{i_1}\cap\ldots \cap U_{i_k}$ we require the existence of a $(p+2-k)$-form $A_{i_1\ldots i_k}$ such that all such forms are related by the following recursive formula:\footnote{Up to notational differences, this formula generalizes equations 4.3-4.5 of \cite{Alvarez:1984es} to arbitrary $k$ (and fixes some wrong signs in 4.5).  It is instructive to check the self-consistency of this formula under taking the the exterior derivative of both sides, in the cohomological language of \cite{Alvarez:1984es} this amounts to showing that the ``co-boundary operator'' $\delta$ is nilpotent.  We emphasize that the $i$ indices label patches, they are not tensor indices.}
\be\label{overlaprec}
dA_{i_1\ldots i_{k+1}}=\frac{1}{k!}\sum_{\pi\in S_{k+1}}s_\pi A_{i_{\pi(1)}\ldots i_{\pi(k)}}.
\ee
Here $S_{k+1}$ denotes the permutation group on $k+1$ elements, and $s_\pi$ is one if $\pi$ is even and minus one if $\pi$ is odd.  This formula is valid for $k=1,2,\ldots, p+1$, with $A_{i}$ being the $p+1$-form gauge field in each patch and all the others being transition functions.  $A_{i_1\ldots i_{p+2}}$ is a scalar, and thus can't be related to the exterior derivative of something, but we instead require that
\be\label{cocycle}
\frac{1}{(p+2)!}\sum_{\pi\in S_{k+2}} s_\pi A_{i_{\pi(1)}\ldots i_{\pi(p+2)}}=2\pi n \qquad \qquad (n\in \mathbb{Z}).
\ee
It is instructive to consider the case $p=0$, in which case this sequence of forms truncates at $k=2$ (double overlaps), and \eqref{overlaprec} and \eqref{cocycle} just give
\begin{align}\nonumber
A_i-A_j&=dA_{ij}\\
A_{ij}+A_{jk}+A_{ki}&=2\pi n.
\end{align}
If we define $g_{ij}\equiv e^{i A_{ij}}$, then these are precisely the transformation rules \eqref{gtrans}, \eqref{gtriple} for a connection on a $U(1)$ principal bundle.  We can consider also the $p=1$ case, where \eqref{overlaprec} and \eqref{cocycle} now give
\begin{align}\nonumber
A_i-A_j&=dA_{ij}\\\nonumber
A_{ij}+A_{jk}+A_{ki}&=dA_{ijk}\\
A_{ijk}-A_{ijl}-A_{jkl}-A_{kil}&=2\pi n.
\end{align}
We may again interpret the $A_{ijk}$ as arising from $U(1)$ group elements $g_{ijk}=e^{iA_{ijk}}$ obeying a quadruple intersection rule, and indeed we can give a similar interpretation to equation \eqref{cocycle} for any $p$.  

These additional transition functions are needed to generalize the Wilson surface functional \eqref{wsurface} to multiple patches.  We first remind the reader that some use of the transition functions is necessary even to define ordinary Wilson lines when  the curve on which they are supported intersects multiple patches, for example the Wilson loop $W_{\alpha}[C]$ of a closed curve $C$ which passes through patches $U_1,U_2, \ldots,U_n,U_1$, in this order and possibly with repetitions, is given by
\be\label{wilsonpatch}
W_\alpha(C)=\Tr \left(D_\alpha(g_{1n}(x_1))Pe^{i\int_{x_n}^{x_1} A_n^\alpha} \ldots D_\alpha(g_{32}(x_3))Pe^{i\int_{x_2}^{x_3}A_2^\alpha}D_\alpha(g_{21}(x_2))P e^{i\int_{x_1}^{x_2} A_1^\alpha}\right),
\ee
where $C$ has been broken up into a line segment from a point $x_1$ in $U_n\cap U_1$ to a point $x_2$ in $U_1\cap U_2$, a line segment from $x_2$ to a point $x_3$ in $U_2\cap U_3$, and so on. The insertions of $D_\alpha(g_{i+1,i}(x_{i+1}))$ are essential to get an answer which is invariant under  gauge transformations and does not depend on the choice of patches. For a $U(1)$ $p=1$ gauge field the formula analogous to this one is
\be
W_n[\Sigma]=\exp \left(in\sum_i\int_{\Sigma_i}A_i-in\sum_{<ij>}\int_{\Sigma_{ij}}A_{ij}-in\sum_{<ijk>}\int_{\Sigma_{ijk}}A_{ijk}\right),
\ee 
where we choose a triangulation $\Sigma_i$ of $\Sigma$ such that each $\Sigma_i$ is contained in $U_i$.  $\Sigma_{ij}$ is the shared boundary between $\Sigma_{i}$ and $\Sigma_{j}$, with the orientation chosen to point from $i$ to $j$, and $\Sigma_{ijk}$ is a shared point between $\Sigma_i$, $\Sigma_j$, and $\Sigma_k$ whose orientation is chosen so that $ijk$ go clockwise around.  It is straightforward, although a bit tedious, to see that the terms involving $A_{ijk}$, and also the condition \eqref{cocycle}, are necessary for this object to be independent of the choice of patches \cite{Alvarez:1984es}.

Generalizing these results to Abelian groups other than $U(1)$ is simplified by the fact that every compact Abelian Lie group is just a product of $U(1)$ and $\mathbb{Z}_n$ factors.  To describe the $\mathbb{Z}_n$ case, we may begin with the $U(1)$ construction and then restrict the $A_{i_1\ldots i_k}$ such that the parallel transport of any closed surface operator, implemented by a Wilson surface with two identical boundaries of opposite orientation, always just results in a multiplication of the surface operator by an element in the image of the $\mathbb{Z}_n$-representation of that operator.  For example this requires $dA_i=0$, and also that $e^{iA_{i_1\ldots i_{p+2}}}\in \mathbb{Z}_n$.

This discussion has been somewhat sketchy, so we note in passing that on the lattice there is a natural generalization of the Wilson formulation of ordinary gauge theory which defines dynamical $p+1$-form gauge fields with any abelian gauge group in a very elegant manner \cite{Villain:1974ir,Savit:1977fw,Orland:1981ku,Lipstein:2014vca,Johnston:2014ofa}. For simplicity we will describe the Euclidean version, the Hamiltonian version is constructed on similar lines.  The basic idea for a cubic lattice in Euclidean spacetime of arbitrary dimension\footnote{This definition generalizes immediately to an arbitrary CW-complex, where $f_{p}$, $f_{p+1}$, and $f_{p+2}$ below are $p$, $p+1$, and $p+2$ -cells respectively.} is to assign to each ``minimal'' face $f_{p+1}$ of dimension $p+1$ a group element $g(f_{p+1})$.  Gauge transformations are defined as assignments of group elements to each ``minimal'' face $f_{p}$ of dimension $p$, and they act on $g(f_{p+1})$ as
\be
g'(f_{p+1})=g(f_{p+1})\prod_{f_p\in \partial f_{p+1}}g(f_p),
\ee
with the orientations of the $f_p$ taken to be outward.  The Wilson surface functional in any irreducible representation $\alpha$ on any $(p+1)$-dimensional surface $\Sigma$ is defined as
\be
W_n[\Sigma]\equiv \prod_{f_{p+1}\in \Sigma}D_\alpha(g(f_{p+1})),
\ee
which is gauge-invariant if $\Sigma$ is closed.  Given a $(p+2)$-dimensional minimal face $f_{p+2}$ and a representation $\alpha$ of $G$, we can also define a gauge-invariant ``plaquette'' functional
\be
W_\alpha[f_{p+2}]=\prod_{f_{p+1}\in \partial f_{p+2}}D_\alpha(g(f_{p+1})),
\ee
in terms of which we can write the Euclidean action 
\be
S=-\frac{1}{2q^2}\sum_{f_{p+2}}W_\alpha(f_{p+2}).
\ee
Here both orientations of $f_{p+2}$ are included in the sum, and $\alpha$ is a faithful representation of $G$ (if $\alpha$ is not irreducible then we have to sum over its irreducible components).  When $G=U(1)$ this reproduces equation \eqref{pkinetic} in the continuum limit.  Note in particular that in the continuum limit we have
\be
W_n(f_{p+2})=e^{i\int_{f_{p+2}} F_{p+2}},
\ee
so the action is unchanged if we locally take $F_{p+2}\to F_{p+2}+2\pi n$.  This means that configurations with $\int F_{p+2}=2\pi n$ will survive in the continuum limit, and thus that topologically nontrivial $(p+1)$-form gauge field configurations of the type we just discussed will be included, with nary a gerbe in sight!

We can define a notion of ``long-range $p$-form gauge symmetry'' in a manner analogous to ordinary the ordinary long-range gauge symmetry of section \ref{gaugesec}.  In a $d+1$-dimensional spacetime with time slice $\Sigma$ and asymptotic spatial boundary $\partial \Sigma$, we can assign asymptotic symmetry operators $U(g,\Sigma_{d-p-1})$ to any closed $(d-p-1)$-surface in $\partial \Sigma$, 
which in the $U(1)$ case are defined by
\be
U(e^{i\theta},\Sigma_{d-p-1})=e^{\frac{i\theta}{q^2} \int_{\Sigma_{d-p-1}}\star F_{p+2}}.
\ee
More generally in the Hamiltonian lattice they are defined as
\be
U(g,\Sigma_{d-p-1})\equiv \prod_{f_{p+1}\perp \Sigma_{d-p-1}}L_g(f_{p+1}),
\ee
where the product is over spatial $(p+1)$-dimensional lattice faces which puncture $\Sigma_{d-p-1}$ at the spatial boundary.  In the natural the boundary conditions analogous to those of figure \ref{latticeboundaryfig}, which require gauge transformations to vanish at the spatial boundary, spatial Wilson surfaces may end at this boundary and their end-surfaces (which are $p$-dimensional surfaces) will transform under the asymptotic symmetry transformations just as in \eqref{pDmap}.  The objects which are charged under this long-range symmetry are $p$-branes, meaning objects with a $(p+1)$-dimensional world volume, and to have a long-range $p$-form gauge symmetry we further require that the theory allows such objects to exist with finite energy, provided that the have finite spatial volume.  We illustrate these ideas more concretely in the following subsection.

\subsection{\textit{p}-form symmetries and holography}\label{pformholsec}
We now discuss the analogues of conjectures \ref{nosym}-\ref{compact} for $p$-form symmetries.  The obvious generalizations turn out to be the correct ones: there are no $p$-form global symmetries in the bulk, for any long-range $p$-form gauge symmetry with gauge group $G$ there are objects ($p$-branes) which transform in all irreducible representations of $G$, and under plausible assumptions $G$ must be compact.  The basic idea of this section is to consider holographic CFTs on the spatial manifold $\mathbb{T}^p\times\mathbb{S}^{d-p-1}$, wrap objects which carry $p$-form symmetry charge on $\mathbb{T}^p$, and then dimensionally reduce on this $\mathbb{T}^p$.  We will then be able to apply the same arguments as before in the remaining dimensions, establishing the $p$-form generalizations of conjectures \ref{nosym}-\ref{compact}.  Our arguments will be less detailed than they were for zero-form symmetries, for example we will not explicitly discuss the issue of gravitational dressing.

The basic problem we need to solve is that ordinary asymptotically-AdS geometries have boundary $\mathbb{R}\times\mathbb{S}^{d-1}$, not $\mathbb{R}\times\mathbb{T}^p\times\mathbb{S}^{d-p-1}$, so we need to come up with new solutions of the Einstein equation with negative cosmological constant that \textit{do} have boundary $\mathbb{R}\times\mathbb{T}^p\times\mathbb{S}^{d-p-1}$.  We can consider an ansatz of the form
\be
ds^2=-\alpha(r)dt^2+\frac{dr^2}{\alpha(r)\beta(r)}+e^{\gamma(r)}dx_p^2+r^2 d\Omega_{d-p-1}^2,
\ee
where $dx_p^2$ is the flat metric on a square spatial torus and the asymptotic boundary is at $r\to\infty$.  There are two interesting classes of solutions of this type.  In the first class, the functions $\alpha$, $\beta$, and $e^\gamma$ are strictly positive for all $r\geq 0$, and the $\mathbb{S}^{d-p-1}$ contracts at $r=0$.  For sufficiently large $\mathbb{T}^p$, the ground state of a holographic CFT on spatial $\mathbb{T}^p\times \mathbb{S}^{d-p-1}$ should be dual to such a geometry.  In fact a unique such solution does exist, as we explain in appendix \ref{spheretorapp}, and we will refer to it as the \textit{vacuum solution}.   The spatial topology of the vacuum solution is $\mathbb{T}^p\times B^{d-p}$, where $B^{d-p}$ is the solid ball in $d-p$ dimensions.  In the second class of solutions we have $\alpha(r_s)=0$ for some $r_s>0$, with $\alpha$, $\beta$ and $e^\gamma$ strictly positive for $r>r_s$.  These types of solutions give a generalization of the AdS-Schwarzshild solution to a wormhole whose bifurcate horizon has topology $\mathbb{T}^p\times \mathbb{S}^{d-p-1}$, so we will refer to them as \textit{wormhole solutions}.  Wormhole solutions do indeed exist, with one for each value of $r_s$, and we describe them in more detail in appendix \ref{spheretorapp}. Their spatial topology is $\mathbb{T}^p\times\mathbb{R}\times \mathbb{S}^{d-p-1}$, and they should be dual to the thermofield double state of two copies of the CFT on $\mathbb{T}^p\times \mathbb{S}^{d-p-1}$ at sufficiently high temperature.  As we lower the temperature, there should be a Hawking-Page-like transition to two copies of the vacuum solution, with thermofield-double entanglement between the particles on the two copies.  Both the vacuum and the wormhole solutions were constructed in \cite{Copsey:2006br} for the special case $d=4$, $p=1$, so our analysis in appendix \ref{spheretorapp} can be viewed as generalizing those results to arbitrary $d$ and $p$.

Let's first argue that there are no $p$-form global symmetries in the bulk.  We will take the boundary theory to be on spatial $\mathbb{T}^p\times \mathbb{S}^{d-p-1}$, with large enough $\mathbb{T}^p$ that the ground state is described in the bulk by the vacuum solution.  Now say that there were a $p$-form global symmetry in the bulk.  This would mean that for any $(d-p)$-dimensional surface $\Sigma_{d-p}$ in the bulk, we could define symmetry operators $U(g,\Sigma_{d-p})$ under which surface operators $\mO(C_p)$, with $C_p$ a $p$-dimensional surface which intersects $\Sigma_{d-p}$ nontrivially, would transform.\footnote{One might worry that $U(g,\Sigma_{d-p})$ should only be well-defined on states where the bulk geometry has surfaces $C_p$ which are not contractible and surfaces $\Sigma_{d-p}$ which intersect them nontrivially.  Note however that in states where this is not the case, we may simply define $U(g,\Sigma_{d-p})$ to act as the identity.  These words may not seem like they should be precise nonperturbatively, where topology-changing amplitudes are possible, but if there were indeed an exact $p$-form global symmetry then it would have to set to zero any amplitudes which would change the topology in a way which violated the symmetry.}  Our goal is to reproduce the situation of figure \ref{regions2fig}, with an extra $\mathbb{T}^p$ coming along for the ride.  For the same reasons as discussed around definition \ref{bulkglobaldef}, in the boundary CFT we expect conjugation by $U(g,\Sigma_{d-p})$ to preserve $\AR$ for any boundary spatial region $R$.  The idea is then that we can therefore use splittability to write $U(g,\Sigma_{d-p})$ in the CFT as a product of an appropriate $U_{edge}$ with a set of operators $U(g,\mathbb{T}^p\times R_i)$, where the $R_i$ are a tiling of the boundary $\mathbb{S}^{d-p-1}$ and each $U(g,\mathbb{T}^p\times R_i)$ is a unitary element of $\mathcal{A}[\mathbb{T}^p\times R_i]$ whose action on elements of $\mathcal{A}[\mathbb{T}^p\times R_i]$ by conjugation is identical to that of $U(g,\Sigma_{d-p})$, just as in equation \eqref{Ueq}.\footnote{The reader may worry about our application of splittability here, since the boundary now contains a $\mathbb{T}^p$ on which unbreakable surface operators can wrap.  And even worse, our $p$-form global symmetry ensures there \textit{will} be such surfaces.  But in fact we are not doing any split on $\mathbb{T}^p$, we are splitting only on $\mathbb{S}^{d-p-1}$, which we should be able to split as long as $p<d-2$.  And even when $p=d-2$, we expect splittability can be restored by adding some heavy degrees of freedom to the boundary theory (at the cost of breaking the $p$-form symmetry in the UV).}  We can choose $\Sigma_{d-p}$ so that its intersection with the boundary is $\mathbb{S}^{d-p-1}$, in which case in the bulk $U(g,\Sigma_{d-p})$ acts on operators which create $p$-branes wrapping $\mathbb{T}^p$.  For example in the vacuum solution, we can take $\Sigma_{d-p}$ to be the set of points $t=x_p=0$, which is spanned by the radial direction and the coordinates on $\mathbb{S}^{d-p-1}$ and thus has topology $B^{d-p}$.  We therefore have all the ingredients of the setup of figure \ref{regions2fig}: if there were a $p$-form global symmetry, then there would be an operator which creates a charged $p$-brane wrapping $\mathbb{T}^p$ at a point in the center of the spatial $B^{d-p}$ in the vacuum solution, but the algebra of this operator with the $U(g,\mathbb{T}^p \times R_i)$ would have to be trivial by entanglement wedge reconstruction.  This contradicts the operator being charged under the $p$-form global symmetry in the first place, so there couldn't have been such a symmetry.  

The natural interpretation of this contradiction is that we should instead consider long-range $p$-form gauge symmetries in the bulk, since then an operator $\mO(C_p)$ which creates a charged brane wrapping $\mathbb{T}^p$ must be dressed by a Wilson surface $W_{\alpha}(C_{p+1})$ whose surface $C_{p+1}$ wraps $\mathbb{T}^p$ and also sweeps out a radial curve in $B^{d-p}$ from the location of the brane to the boundary $\mathbb{S}^{d-p-1}$.  The asymptotic $p$-form symmetry operators $U(g,\Sigma_{p-d-1})$ should then be interpreted as the symmetry operators of a \textit{boundary} $p$-form global symmetry a la definition \eqref{globalpdef}.  For the convenience of the reader we indicate the support of these various objects in the following table:
\begin{center}
\begin{tabular}{l|c c c}
& $r$ & $\mathbb{T}^p$ & $\mathbb{S}^{d-p-1}$\\\hline
$\mO(C_p)$ &  & x &  \\
$W_\alpha(C_{p+1})$ & x & x & \\
$U(g,\Sigma_{d-p-1})$ & & & x
\end{tabular}
\end{center}

We can use the same idea of dimensional reduction on $\mathbb{T}^p$ to also rerun the argument of section \ref{completenesssec} for the presence of states in all irreducible representations of a long-range $p$-form gauge symmetry with (compact) gauge group $G$ in the bulk.  Namely we may look at Wilson surfaces in the wormhole solution which wrap $\mathbb{T}^p$ and also sweep out a radial curve from one asymptotic boundary to the other, just as in figure \ref{threadfig}.  These Wilson surfaces are charged under the $p$-form asymptotic symmetry operators $U(g,\mathbb{S}^{d-p-1}_R)$, where $\mathbb{S}^{d-p-1}_R$ is the spatial sphere in the ``right'' asymptotic boundary, and by varying the representation of the Wilson surface we can again conclude that $U(g,\mathbb{S}^{d-p-1}_R)$ is nontrivial for all $g$ other than the identity.  We would now like to use theorems \ref{faithfulthm} and \ref{levythm} to show that this implies that there must be states transforming in all irreducible representations of $G$, but in order to be able to use the tensor product in the construction of theorem \ref{levythm} we need to make use of a generalization of the state-operator correspondence to surface operators (we can multiply two operators to get another operator transforming in the product of the representations of the first two, but we can't multiply two states and stay in the same Hilbert space!)  The idea is to use the Euclidean CFT path integral on $\mathbb{T}^p\times B^{d-p}$, with metric
\be\label{pstateop}
ds^2=dx_p^2+dr^2+r^2 d\Omega_{d-p-1}^2
\ee
and $r\in[0,R]$, to generate states of the CFT on $\mathbb{T}^p\times \mathbb{S}^{d-p-1}$.  If we do this with no insertions, we get a state which is neutral under conjugation by any $p$-form global symmetry operator $U(g,\mathbb{S}^{d-p-1})$.  If we insert a $p$-dimensional surface operator wrapping $\mathbb{T}^{p}$ at a definite point in $B^{d-p}$, then we get a state which transforms under $U(g,\mathbb{S}^{d-p-1})$ in the same representation as that surface operator does, while if we insert two of them at different points on $B^{d-p}$, then we get a state which transforms in the tensor product representation.  Conversely if we are given a state on $\mathbb{T}^p\times \mathbb{S}^{d-p-1}$, then by evolving it to small $r$ we can construct a $p$-dimensional surface insertion which gives that state when evolved back to $r=R$.\footnote{Note that unlike in the ordinary state-operator correspondence, evolution in $r$ is not part of the conformal symmetry group.  This means that the conformal transformation properties of the states and operators considered here will not be related in a nice way, which is why such a correspondence is usually not considered.
See \cite{Belin:2018jtf} for more discussion on this.  For our purposes this does not matter, since we only care about transformations under $p$-form global symmetries and these will be the same.}   Therefore the faithful action of $U(g,\mathbb{S}^{d-p-1})$ on the Hilbert space of the CFT on $\mathbb{T}^p\times \mathbb{S}^{d-p-1}$ does indeed imply a faithful action on the $p$-dimensional surface operators, which we may then multiply (at different points) to construct states in arbitrary finite-dimensional irreducible representations of the $p$-form global symmetry group $G$ using theorem \ref{levythm}.  These are also dual to $p$-dimensional surface operators, which can then be interpreted as creating the bulk $p$-branes which carry whichever finite-dimensional irreducible representation of $G$ we like.  

Finally we note that this state-operator correspondence for surface operators can also be used to establish a version of theorem \ref{compactthm} for $p$-form global symmetries: if we assume that the set of $p$-dimensional surface operators is finitely generated, meaning that the spectrum of the CFT on $\mathbb{T}^p\times\mathbb{S}^{d-p-1}$ is discrete and there is a finite set of surface operators at $r=0$ in the geometry \eqref{pstateop} whose operator product expansion recursively generates all the other ones, then any noncompact $p$-form global symmetry must be a subgroup of a compact one.  As before there is a subtlety for $d=p+2$, since there can be ``long branes'' near infinity which cause the spectrum on $\mathbb{T}^p\times\mathbb{S}^{d-p-1}$ to be continuous, in which case we need to additionally assume that the set of surface operators which generate the rest transform in a finite-dimensional representation of any $p$-form global symmetry.  This subtlety is not merely academic, in fact it potentially arises in all simple models of holography which are constructed from the near-horizon limit of a stack of BPS $(d-1)$-branes. For example $\mathcal{N}=4$ super Yang-Mills theory in $d=4$ on spatial $\mathbb{T}^2\times \mathbb{S}^{1}$ has a continuous spectrum due to D3 branes near the boundary which wrap $\mathbb{T}^2\times \mathbb{S}^{1}$ \cite{Maldacena:2001km}.  In this example there are no two-form global symmetries to discuss, but in other examples there might be. 

\subsection{Relationships between the conjectures?}
So far we have given independent arguments for conjectures \ref{nosym}-\ref{compact} (and their $p$-form generalizations), but in principle they might not be logically independent.  In fact in some cases there are simple relationships between them \cite{Banks:2010zn}, we here discuss these relationships and point out their limitations.  

The first potential relationship arises from the observation that for some gauge groups there is a close connection between the existence of a one-form global symmetry and the absence of matter fields charged under those gauge groups.  For example in $U(1)$ Maxwell theory with no dynamical electric charges, we have a $U(1)$ one-form global symmetry with symmetry operators \eqref{maxwelloneform}.  Similarly in $SU(N)$ gauge theory with no fundamental quarks, $\mathbb{Z}_N$ center symmetry is a one-form global symmetry.  One might hope to use these examples as motivation to give a general argument that a violation of conjecture \ref{allcharge} necessarily leads to the existence of a one-form global symmetry, and thus a violation of the one-form version of conjecture \ref{nosym}.  In other words one might argue that the one-form version of conjecture \ref{nosym} implies the zero-form version of conjecture \ref{allcharge}.  Unfortunately this idea does not work in general, these examples rely on special properties of the groups and representations involved.  Indeed consider an arbitrary gauge group $G$, under which matter fields transform in a representation $\Phi$.  We might like to use the kernel of $\Phi$ as a candidate for a one-form global symmetry, as we did in the above examples.  But in general this kernel will not lie in the center of $G$, and when it does not then we cannot use it to define a one-form symmetry (the candidate one-form symmetry operators \eqref{centersymops} would not be gauge-invariant).  We can realize a nontrivial one-form global symmetry only if the intersection of the kernel of $\Phi$ with the center of $G$ is nontrivial, but this will not always be the case.  One simple counterexample is a discrete gauge theory with gauge group $S_4$ (the permutation group on four elements), with a single matter field which transforms in the sign representation of $S_4$.  The kernel of this representation is the set of even permutations, but the center of $S_4$ is trivial so none of them can be used to create a one-form global symmetry.  

There is also an argument that in some cases conjectures \ref{nosym} and \ref{allcharge} together imply conjecture \ref{compact} \cite{Banks:2010zn}.  The idea is that if we had a noncompact gauge symmetry for which there were matter fields transforming in all irreducible representations, then there would also need to be a global symmetry.  For example say that there were a global symmetry with symmetry group $\mathbb{R}$. By conjecture \ref{allcharge} there would need to be a particle $a$ of charge one and a particle $b$ of charge $\sqrt{2}$.  But then any Lagrangian built out of polynomials of the fields for these charges would also have to be invariant under a global symmetry for which $a$ was neutral and $b$ had charge $\sqrt{2}$.  This argument is reminiscent of our proof of theorem \ref{compactthm}, for which it gave some inspiration, but it has several problems as stated.  The first is the explicit reference to a Lagrangian built out of polynomials of fundamental fields: it is far from clear that all quantum field theories can be constructed this way. Secondly, our arguments for conjecture \ref{allcharge} \textit{assume} the gauge group to be compact, without this there is no particular reason to expect all finite-dimensional irreducible representations to be realized.  Thirdly, it is not clear that this argument generalizes to noncompact groups other than $\mathbb{R}$.  And finally, even if we do consider $\mathbb{R}$, do accept the existence of the particles $a$ and $b$, and do accept the Lagrangian argument, it could be that the symmetry where $a$ is neutral and $b$ has charge $\sqrt{2}$ is \textit{also} gauged.  This is exactly what happened in our $U(1)\times U(1)$ example discussed below theorem \ref{compactthm}.  Our argument for theorem \ref{compactthm} avoids the first problem by using the operator product expansion instead of a Lagrangian, the second problem by using finite-generatedness instead of conjecture \ref{allcharge}, the third problem by working with arbitrary groups, and the fourth by showing not that a noncompact gauge symmetry in the bulk would lead to a bulk global symmetry, but instead that it must fit into a larger bulk gauge symmetry which is compact.

\section{Weak gravity from emergent gauge fields}\label{wgcsec}
There is a set of proposals, called \textit{weak gravity conjectures}, which attempt to generalize conjectures \ref{nosym}-\ref{allcharge}, the absence of global symmetries in quantum gravity and the presence of objects carrying all allowed long-range gauge charges, to some kind of lower bound on how weak (long-range) $U(1)$ gauge couplings can be \cite{ArkaniHamed:2006dz,Cheung:2014vva,Heidenreich:2015nta,Heidenreich:2016aqi,Ooguri:2016pdq,Cheung:2018cwt}.  These proposals typically involve asserting the existence of some object or objects whose $U(1)$ gauge charge $Q$ and mass $M$ obey (in $d\geq 4$ spacetime dimensions)
\be\label{wgc}
Q^2\geq \frac{8\pi(d-3)}{d-2} G M^2,
\ee
where $G$ is Newton's constant and the $O(1)$ constant comes from the charge-to-mass ratio of an extremal Reissner-Nordstrom black hole.  Often there are additional restrictions on the properties of the object(s), and rules about when saturation of the inequality counts as success.  

In \cite{Cheung:2014vva} a nontrivial proposal was given by Cheung and Remmen for a generalization of the inequality \eqref{wgc} to the case of multiple $U(1)$ gauge groups. First define
\be
C_d\equiv \sqrt{\frac{d-2}{8\pi(d-3)}}.
\ee
If there is a $U(1)^k$ long-range gauge symmetry, and if we label types of object by $i$, then for each $i$ we can define a vector in $\mathbb{R}^k$ by
\be
\vec{z}_i\equiv C_d\frac{\vec{Q}_{i}}{M_i\sqrt{G}},
\ee 
where $\vec{Q}_i$ is the vector which gives the charges of the $i$th type of object under $U(1)^k$.  The idea of \cite{Cheung:2014vva} is then that the right generalization of \eqref{wgc} is a requirement that the convex hull of all physically realized $z_i$ in $\mathbb{R}^k$ must contain the unit ball, again perhaps with further restrictions on which objects count and when saturation is acceptable.

The reason why even for $k=1$  there are many weak gravity conjectures is that there is no single nontrivial version of the conjecture for which there is a convincing general argument.  The closest one gets to a starting point for such an argument is a proposal for a principle that non-supersymmetric extremal black holes of any mass should not be stable \cite{ArkaniHamed:2006dz,Ooguri:2016pdq} (see \cite{Fisher:2017dbc,Cheung:2018cwt,Hamada:2018dde} for some other recent efforts). Unlike in our discussion of black holes and continuous global symmetries in the introduction however, here there is no known reason why such stability would be problematic.  Moreover it is not clear exactly what form of the conjecture should follow from this principle, for example are the objects obeying \eqref{wgc} or its convex hull generalization allowed to be black holes?  Here we also will not give a precise formulation (or proof) of a weak gravity conjecture.  We will instead just observe that one recent attempt \cite{Harlow:2015lma} to give a real quantum-gravity motivation for equation \eqref{wgc} also reproduces in a nice way the convex hull condition of \cite{Cheung:2014vva}.

The proposal of \cite{Harlow:2015lma} is to take seriously the factorization of the two-boundary gravitational system in AdS/CFT, specifically along the lines of arguing that any gauge constraints in the bulk must be emergent, and see what this emergence says about equation \eqref{wgc}.  This idea has not yet led to a general explanation of a weak gravity conjecture, but it does turn out that in simple models of an emergent $U(1)$ gauge field, a version of equation \eqref{wgc} is always satisfied \cite{Harlow:2015lma}.  In particular in the lattice version of the $\CPN$ nonlinear-$\sigma$ model with lattice spacing $1/\Lambda$, at large $N$ and for appropriate values of the coupling, there is an emergent $U(1)$ gauge field in the infrared, together with $N$ scalars of charge one and mass $m$, and for $d>4$ the low-energy gauge coupling is given by 
\be
\frac{1}{q^2}=N\Lambda^{d-4}.
\ee
Here the overall constant is non-universal, so we have just chosen it to be one.  For $d=4$, we instead have
\be
\frac{1}{q^2}=\frac{N}{12\pi^2}\log \left(\Lambda/m\right),
\ee
where the mass of the charged scalars cuts off an infrared divergence and the coefficient of the logarithm is universal.  The point is then that if we perturbatively couple this model to gravity, the charged scalars also generate an effective Newton constant
\be
\frac{1}{G}=N\Lambda^{d-2},
\ee
which we can use to test equation \eqref{wgc}.\footnote{There could also be a bare Newton's constant, but as long as it is positive then this only drives the overall Newton's constant to be smaller, making \eqref{wgc} easier to satisfy.  The primary consequence of the gauge field being emergent is that there is \textit{not} a large bare Maxwell term in the effective action at the cutoff scale.}  The idea is that in order for this analysis (presented in more detail in \cite{Harlow:2015lma}) to make sense, we need the mass of the scalars to be small in cutoff units:
\be
m^2\ll \Lambda^2,
\ee
so in particular we should have $m^2<C_d^2 \Lambda^2$.  
But we may then use our UV expressions for $1/q^2$ and $1/G$ to obtain (for simplicity working in $d>4$)
\be
m^2<C_d^2\Lambda^2=C_d^2\frac{q^2}{G},
\ee
which is precisely \eqref{wgc}.  

We will now extend this analysis to $k$ copies of the $\CPN$ model, each with its own value $N_i$ of $N$ and each with its own mass $m_i$ for the charged scalars.  There is an emergent $U(1)^k$ gauge symmetry in the infrared, and the gauge couplings are given (working in $d>4$ for simplicity) by
\be\label{qcpn}
\frac{1}{q_i^2}=N_i\Lambda^{d-4}.
\ee
Once we couple to gravity there is also an effective Newton constant
\be\label{Gcpn}
\frac{1}{G}=\sum_i N_i \Lambda^{d-2},  
\ee
with the sum appearing in \eqref{Gcpn} but not in \eqref{qcpn} because each set of $N_i$ scalars couples only to its own $U(1)$ gauge field but they all couple to gravity.  These equations can be combined to give 
\be
\Lambda^{-2}=G\sum_j \frac{1}{q_j^2},
\ee
and we now require that
\be\label{mineq}
m_i^2<C_d^2 \Lambda^2=\frac{C_d^2}{G\sum_j\frac{1}{q_j^2}}.
\ee
In this theory the Cheung-Remmen convex hull condition tells us that we need
\be
\sum_i\left(\lambda_i \frac{C_d q_i}{\sqrt{G}m_i}\right)^2\geq 1
\ee
for all $0\leq\lambda_i\leq 1$ obeying $\sum_{i=1}^k \lambda_i=1$.  We can use \eqref{mineq} term by term in this sum, leading to
\be\label{hullwgc}
\sum_{i,j}\left(\lambda_i\frac{q_i}{q_j}\right)^2\geq 1,
\ee 
which we claim is indeed true for all $\lambda_i$ for any collection of $q_i$.   The argument begins by defining 
\be
x_i\equiv \frac{q_i^{-2}}{\sum_j q_j^{-2}}
\ee
and
\be
f(\lambda)=\sum_i \frac{\lambda_i^2}{x_i},
\ee
in terms of which \eqref{hullwgc} becomes $f(\lambda)\geq 1$.  We may then observe that $f$ is a strictly convex function of the $\lambda_i$, since for all $\lambda_i\neq \lambda_i'$ and $s\in (0,1)$ we have
\begin{align}\nonumber
f(s\lambda+(1-s)\lambda')=&s f(\lambda)+(1-s)f(\lambda')-s(1-s)\sum_i (\lambda_i-\lambda_i')^2/x_i\\
&< s f(\lambda)+(1-s)f(\lambda').
\end{align}
The set of allowed $\lambda_i$ is convex, so any critical point of $f$ in this set will be a unique global minimum.  By taking the derivative with respect to $\lambda_i$, constrained by $\sum_i\lambda_i=1$, one easily sees that in fact there is a (unique) critical point at $\lambda_i=x_i$, where indeed we have $f=1$.  Thus the Cheung-Remmen convex hull condition holds in this many-parameter example of a set of emergent gauge fields coupled to gravity; we view this as evidence supporting the idea that the right motivation for whatever is the correct version of the weak gravity conjecture involves viewing the bulk gauge field as emergent.

\paragraph{Acknowledgments}
We thank Tom Banks, Thomas Dumitrescu, Zohar Komargodski, Nati Seiberg, and Sasha Zhiboedov for many useful discussions on the issues in this paper.  We also thank Nima Arkani-Hamed, Chris Beem, Mu-Chun Chen, Clay Cordova, Simeon Hellerman, Gary Horowitz, Ethan Lake, Hong Liu, Roberto Longo, Juan Maldacena, Greg Moore, Andy Strominger, Raman Sundrum, Wati Taylor, and Edward Witten for useful discussions.

We thank 
the Aspen Center for Physics, which is supported by
the National Science Foundation grant PHY-1607611, 
the Harvard Center for the Fundamental Laws of Nature, 
the Institute for Advanced Study,
the Kavli Institute for Theoretical Physics, 
the Okinawa Institute of Science and Technology Graduate School,
the Perimeter Institute, 
the Simons Center for Geometry and Physics, 
the Yukawa Institute of Fundamental Physics, 
for their hospitality during various stages of this work.
DH also thanks
the Kavli Institute for Physics and Mathematics of the Universe and the Maryland Center for Fundamental Physics for hospitality, and Alexander Huabo Yu Harlow for creating a stimulating environment while this work was being completed.

DH is supported by the US Department of Energy grants DE-SC0018944 and DE-SC0019127, the Simons foundation as a member of the
{\it It from Qubit} collaboration, and the MIT department of physics.
HO is supported in part by
U.S.\ Department of Energy grant DE-SC0011632,
by the World Premier International Research Center Initiative,
MEXT, Japan,
by JSPS Grant-in-Aid for Scientific Research C-26400240,
and by JSPS Grant-in-Aid for Scientific Research on Innovative Areas
15H05895.

\appendix

\section{Group theory}\label{groupapp}
In this appendix we briefly review some standard aspects of Lie group theory which are necessary for our work, but which may not be common knowledge for all physicists.  For many more details see eg \cite{lee2001introduction,knapp2013lie}, our discussion of representation theory largely follows \cite{knapp2013lie}.
\subsection{General structure of Lie groups}
A \textit{Lie group} is a group $G$ which is also a smooth manifold, and for which multiplication and inversion are smooth maps in that smooth structure. A vector field $X$ on $G$ is called \textit{left-invariant} if for any $h$ in $G$ it is preserved by the pushforward of the map $L_h:g\mapsto hg$.  The set of left-invariant vector fields forms a real vector space $\mathfrak{g}$, called the \textit{Lie algebra} of $G$, whose dimensionality equals that of the manifold, and which is closed under taking vector field commutators (abstractly a Lie algebra is a vector space with a bracket operation which is antisymmetric and obeys the Jacobi identity).  If $G$ has dimension zero as a manifold, then $\mathfrak{g}$ is empty.  There are then two classic results:
\begin{thm}[Closed subgroup theorem]\label{closedsgthm}
Let $G$ be a Lie group, and $H\subset G$ a subgroup of $G$ which is topologically closed.  Then $H$ is an embedded submanifold, and thus is itself a Lie group.  
\end{thm}
\begin{thm}[Lie group-Lie algebra correspondence]\label{liethm}
Let $\mathfrak{g}$ be an abstract real Lie algebra.  There exists a unique (up to isomorphism) connected simply-connected Lie group $\wt{G}$ whose Lie algebra is isomorphic to $\mathfrak{g}$.  Moreover any other connected Lie group $G$ whose Lie algebra is isomorphic to $\mathfrak{g}$ is itself isomorphic to a quotient of $\wt{G}$ by a discrete central subgroup $\Gamma\subset \wt{G}$.  More generally, any Lie group $G$ with a given Lie algebra is an extension of one of the connected ones by a discrete ``component'' group $C$,  meaning that there is a surjective homomorphism from $G$ to $C$ which sends each connected component of $G$ to a distinct element of $C$,\footnote{Mathematicians like to describe this situation by saying that there is a \textit{short exact sequence} $1\to G_0 \to G \to C \to 1$, where each arrow denotes a homomorphism and the kernel of each homomorphism is the image of the previous one.  In this sequence the other three homomorphisms are trivial inclusions and projections.} and that therefore $G_0\cong G/C$, where $G_0$ is the identity component of $G$.
\end{thm}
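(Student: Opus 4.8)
The plan is to establish the three assertions in sequence: (i) existence and uniqueness of the connected, simply-connected Lie group $\wt{G}$ integrating a given abstract real Lie algebra $\mathfrak{g}$; (ii) the realization of an arbitrary connected Lie group $G$ with $\mathrm{Lie}(G)\cong\mathfrak{g}$ as a central quotient $\wt{G}/\Gamma$; and (iii) the component extension for a general, possibly disconnected, Lie group.

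For existence in (i), I would first invoke Ado's theorem: every finite-dimensional real Lie algebra admits a faithful finite-dimensional representation, i.e. an injective Lie algebra homomorphism $\mathfrak{g}\hookrightarrow\mathfrak{gl}(n,\mathbb{R})$. Its image is a Lie subalgebra, and the standard integration of Lie subalgebras to connected immersed subgroups — obtained by applying the Frobenius theorem to the left-invariant distribution on $GL(n,\mathbb{R})$ determined by $\mathfrak{g}$ — produces a connected Lie group $H\subset GL(n,\mathbb{R})$ with $\mathrm{Lie}(H)\cong\mathfrak{g}$. Passing to the universal cover $\wt{G}$ of $H$ gives a connected, simply-connected Lie group (the universal cover of a connected Lie group is again a Lie group, with the covering projection a homomorphism once the basepoint over $e\in H$ is declared the identity) whose Lie algebra is still $\mathfrak{g}$, since the covering map is a local diffeomorphism. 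For uniqueness, I would use Lie's second theorem — a Lie algebra homomorphism out of the Lie algebra of a simply-connected group integrates uniquely to a group homomorphism, itself proved by the same Frobenius argument applied to the graph of the homomorphism inside the direct sum of the two Lie algebras. Given two such groups $\wt{G}_1,\wt{G}_2$ and a Lie algebra isomorphism $\varphi$, I would integrate $\varphi$ and $\varphi^{-1}$ to homomorphisms $\Phi,\Psi$; since a homomorphism of connected groups is determined by its differential, $\Psi\circ\Phi$ and $\Phi\circ\Psi$ are both the identity, so $\wt{G}_1\cong\wt{G}_2$.

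For (ii), I would take the universal covering map $\pi:\wt{G}\to G$ of the connected group $G$; it is a homomorphism with bijective differential, so $\mathrm{Lie}(\wt{G})\cong\mathfrak{g}$ and, $\wt{G}$ being simply connected, it is the group from (i). The kernel $\Gamma:=\ker\pi$ is discrete and normal, and a discrete normal subgroup of a connected group is automatically central: for fixed $\gamma\in\Gamma$ the continuous map $g\mapsto g\gamma g^{-1}$ has connected domain and discrete image, hence is constant, equal to $\gamma$. Thus $G\cong\wt{G}/\Gamma$. For (iii), I would note that the identity component $G_0$ of any Lie group is open (manifolds are locally connected) and normal ($gG_0g^{-1}$ is a connected subgroup containing $e$, hence lies in $G_0$, and symmetrically); its cosets are precisely the connected components of $G$, so $C:=G/G_0$ is a discrete group and the quotient map $G\to C$ is a surjective homomorphism carrying each component to a distinct element, fitting $G$ into the exact sequence $1\to G_0\to G\to C\to 1$.

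\textbf{The main obstacle} is Ado's theorem, the one genuinely deep input; its proof is purely algebraic but lengthy, reducing via the Levi decomposition to the nilpotent and semisimple cases and handling the nilpotent part through the universal enveloping algebra. Everything else — the Frobenius integration of subalgebras and of homomorphism graphs, uniqueness of a group homomorphism with prescribed differential, and the covering-space facts for topological groups — is standard and can be quoted from \cite{lee2001introduction,knapp2013lie}. An alternative existence argument that avoids Ado builds a local Lie group from the Baker--Campbell--Hausdorff series and then globalizes it, but the globalization is of comparable difficulty, so I would keep Ado as the backbone.
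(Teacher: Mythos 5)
Your proposal is correct and follows essentially the same route the paper intends: the paper does not prove this theorem itself but defers to \cite{lee2001introduction,knapp2013lie}, explicitly naming Frobenius's theorem, vector flows, and Ado's theorem as the key inputs, which are exactly the ingredients you assemble (Ado plus Frobenius integration for existence, Lie's second theorem for uniqueness of $\wt{G}$, and the covering-map/discrete-central-kernel and identity-component arguments for the quotient and extension statements). No gaps beyond the standard deep input (Ado) that you correctly flag and that the paper likewise quotes from the references.
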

The proofs of these theorems use standard geometric techniques (vector flows, Frobenius's theorem, etc), they are nicely explained in \cite{lee2001introduction} (Ado's theorem is also needed, which is proven in \cite{knapp2013lie}).  We will give the proof of one further result which we will need below:
\begin{thm}\label{connectedgenthm}
Let $G$ be a connected Lie group, and $H\subset G$ be a subgroup which contains an open neighborhood $U$ of the identity in $G$.  Then $H=G$. 
\end{thm}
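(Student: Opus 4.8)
The plan is to show that the subgroup $H$, which contains an open neighborhood $U$ of the identity, actually contains a whole neighborhood of \emph{every} point of $G$, hence is open, and then to use connectedness of $G$ to conclude $H = G$. First I would observe that without loss of generality we may take $U$ to be symmetric, i.e. $U = U^{-1}$: since inversion is a homeomorphism fixing the identity, $U \cap U^{-1}$ is again an open neighborhood of the identity contained in $H$, so replace $U$ by it. Next, for any $h \in H$, left translation $L_h : g \mapsto hg$ is a homeomorphism of $G$, so $hU$ is an open neighborhood of $h$; and since $h \in H$ and $U \subset H$ and $H$ is closed under multiplication, $hU \subset H$. Therefore every point of $H$ has an open neighborhood contained in $H$, which means $H$ is open in $G$.

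The second step is to show $H$ is also closed. The standard trick is to write $G$ as a disjoint union of cosets of $H$: the complement $G \setminus H = \bigcup_{g \notin H} gH$ is a union of sets each of which is open (by the same left-translation argument applied to the open set $H$), hence $G \setminus H$ is open, so $H$ is closed. Thus $H$ is a nonempty subset of $G$ which is both open and closed; since $G$ is connected, $H = G$. This completes the argument.

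I do not expect any serious obstacle here — the proof is a routine application of the homogeneity of topological groups (left translations are homeomorphisms) together with connectedness. The only point requiring a moment's care is making the neighborhood symmetric so that products and inverses of elements of $U$ stay in $H$, but as noted this is immediate. It is worth remarking for the reader that this result is exactly what is used, for instance, to show that a connected Lie group is generated by any neighborhood of the identity (equivalently, by the image of the exponential map on a neighborhood of $0$ in $\mathfrak{g}$), which is the form in which it gets applied elsewhere in the paper.
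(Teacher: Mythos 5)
Your proof is correct and follows essentially the same route as the paper's: show $H$ is open (via the translates $hU\subset H$) and closed (the paper checks directly that $gU\cap H=\emptyset$ for $g\notin H$, while you decompose the complement into open cosets of $H$ — the same idea), then invoke connectedness. One small remark: replacing $U$ by $U\cap U^{-1}$ is unnecessary, since $H$ is already a subgroup and hence closed under products and inverses of elements of $U$; no symmetry of $U$ is ever needed.
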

\begin{proof}
We will show that $H$ is both open and closed: since $G$ is connected, this implies $H=G$.  $H$ is open because for any $h\in H$, the set $hU$ is open in $G$, it contains $h$, and it is contained in $H$.  Therefore $H=\bigcup_{h\in H} (hU)$.  $H$ is closed because if $g\notin H$, then we also have $gU\cap H=\emptyset$.  Indeed if we had $gu=h$ for some $u\in U$ and $h\in H$, then we would have $g=hu^{-1}$, and thus $g\in H$.  Therefore we have $G-H=\bigcup_{g\notin H}gU$.
\end{proof}

\subsection{Representation theory of compact Lie groups}\label{repsubsec}
A \textit{representation} of a Lie group $G$ on a complex vector space $V$ is a homomorphism $\rho$ from $G$ into the set of linear operators on $V$, for which the map $\Phi_\rho:G\times V\to V$ defined by $\Phi_\rho(g,v)=\rho(g)v$ is jointly continuous.\footnote{In the main text we used ``physics'' notation where the components of the representation matrices for a representation $\rho$ in some specific basis for $V$ are denoted $D_{\rho,ij}(g)$. In this appendix we simplify things by just using $\rho(g)$ to refer to the abstract operators.}  If $\rho$ is injective then it is said to be \textit{faithful}.  $\rho$ is said to be \textit{unitary} if $V$ admits an inner product with respect to which $\rho(g)$ is unitary for any $g$, and is said to be \textit{finite-dimensional} if $V$ is finite-dimensional.   The \textit{kernel} of $\rho$, denoted $\mathrm{Ker}(\rho)$, is the set of elements of $G$ which are mapped to the identity operator on $V$.  $\mathrm{Ker}(\rho)$ is always a closed normal subgoup of $G$, and $\rho$ is faithful if and only if $\mathrm{Ker}(\rho)=\{e\}$.  A subspace $S\subset V$ is called \textit{invariant} if $\rho(G)S=S$, and $\rho$ is said to be \textit{irreducible} if the only closed invariant subspaces are $V$ itself and $0$.  By the closed subgroup theorem  any finite-dimensional representation of a Lie group $G$ is automatically smooth, which is why we only required $\rho$ to be continuous, and actually by theorem \ref{infinitedecthm} below the same is true for infinite-dimensional unitary representations if $G$ is compact.

The representations of a general Lie group $G$ can be quite sophisticated, but if $G$ is compact and $\rho$ is either unitary or finite-dimensional then there is a simple theory of all representations which can be derived from the existence of the invariant Haar measure $dg$ on $G$. Indeed there is a simple theorem relating these two conditions:
\begin{thm}\label{urepthm}
Let $G$ be a compact Lie group, and $\rho$ be a finite-dimensional representation of $G$.  Then $\rho$ is unitary.  
\end{thm}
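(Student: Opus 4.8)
The plan is to use the standard averaging (or "unitarian") trick: given an arbitrary finite-dimensional representation $\rho$ on a complex vector space $V$, pick any inner product $\langle\cdot,\cdot\rangle_0$ on $V$, then average it over the group against the Haar measure to produce a $G$-invariant inner product. Concretely, I would define
\be
\langle v,w\rangle\equiv\int_G \langle \rho(g)v,\rho(g)w\rangle_0\,dg,
\ee
where $dg$ is the normalized Haar measure on $G$ whose existence and bi-invariance we take as given for compact Lie groups. The integrand is continuous in $g$ because $\rho$ is continuous (part of the definition of a representation) and $V$ is finite-dimensional, so the integral is well-defined, and since $G$ is compact it converges without any further hypotheses.

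The key steps, in order, are: (1) check that $\langle\cdot,\cdot\rangle$ is again a Hermitian inner product — sesquilinearity is immediate from linearity of the integral, and positive-definiteness follows because $\langle v,v\rangle=\int_G\|\rho(g)v\|_0^2\,dg$ is an integral of a nonnegative continuous function which is strictly positive at $g=e$ (as $\rho(e)=\mathrm{id}$ and $\|v\|_0^2>0$ for $v\neq 0$), hence the integral is strictly positive; (2) verify invariance: for any $h\in G$,
\be
\langle \rho(h)v,\rho(h)w\rangle=\int_G\langle\rho(gh)v,\rho(gh)w\rangle_0\,dg=\int_G\langle\rho(g')v,\rho(g')w\rangle_0\,dg'=\langle v,w\rangle,
\ee
using right-invariance of the Haar measure under the substitution $g'=gh$; (3) conclude that each $\rho(h)$ preserves $\langle\cdot,\cdot\rangle$, i.e.\ is unitary with respect to this inner product, which is exactly the claim.

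I do not anticipate a genuine obstacle here, since this is one of the oldest and most robust arguments in representation theory; the only point requiring a little care is the justification that the Haar integral of a continuous $V$-valued (or $\mathbb{C}$-valued, after fixing $v,w$) function on a compact group exists and that the substitution $g\mapsto gh$ is measure-preserving. Both are standard facts about the Haar measure on compact Lie groups, which the paper is already assuming throughout (the Haar measure appears in definition \ref{gaugedef} and the Peter–Weyl discussion). One could phrase the whole thing slightly more invariantly by saying the positive-definite Hermitian forms on $V$ form a convex cone on which $G$ acts, and the averaged form is a fixed point, but the explicit integral is cleaner to write. If desired I would also remark at the end that the same argument applied verbatim to an infinite-dimensional \emph{unitary} representation is vacuous, which is consistent with the need for the separate hypothesis of unitarity in the infinite-dimensional results (theorems \ref{faithfulthm} and \ref{infinitedecthm}) cited later.
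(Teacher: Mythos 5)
Your proposal is correct and is exactly the paper's argument: average an arbitrary inner product on $V$ over $G$ with the normalized Haar measure and observe that each $\rho(h)$ preserves the averaged form. The extra details you supply (positive-definiteness via continuity at $g=e$, invariance via the substitution $g\mapsto gh$) are just the routine verifications the paper leaves to the reader.
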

\begin{proof}
Let $(v,v')_0$ be any inner product on $V$.\footnote{Recall that an inner product on a complex vector space $V$ is a map $(\,,\,):V\times V\to \mathbb{C}$ which is linear in the second argument, obeys $(v,v')^*=(v',v)$ for any $v,v'$, and for which $(v,v)\geq 0$ for any $v$, with equality only if $v=0$.  These conditions imply that an inner product is antilinear in the first argument.  Mathematicians usually instead take the first argument to be linear and the second to be antilinear, but our choice is closer to bra-ket notation.}  We may then define a new inner product
\be
(v,v')\equiv \int dg(\rho(g)v,\rho(g)v')_0,
\ee
which is easily shown to be an inner product with respect to which $\rho(g)$ is unitary for any $g$.  
\end{proof}
For a unitary representation, the orthogonal complement of an invariant subspace is also invariant.  Therefore theorem \ref{urepthm} shows that any finite-dimensional representation of a compact Lie group can be decomposed into a direct sum of irreducible representations.  We next establish a famous technical lemma, which we then use to prove perhaps the most remarkable feature of the representation theory of compact groups: the Schur orthogonality relations.
\begin{thm}[Schur's lemma] \label{schurlem} Let $\alpha$ and $\alpha'$ be finite-dimensional irreducible representations of a group $G$ on $V$ and $V'$ respectively (here $G$ can be an arbitrary group and we assume no continuity properties of $\alpha$ and $\alpha'$).  If $L:V\to V'$ is a linear map obeying $\alpha'(g)L=L\alpha(g)$ for all $g\in G$, then either $L$ is a bijection or $L=0$.  Moreover if $\alpha=\alpha'$ and $V=V'$, then $L$ is a multiple of the identity.
\end{thm}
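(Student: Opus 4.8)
The plan is to prove Schur's lemma in two parts, handling the general bijection-or-zero statement first and then the special case. For the first part, suppose $L:V\to V'$ is a nonzero intertwiner, i.e. $\alpha'(g)L=L\alpha(g)$ for all $g$. First I would check that $\mathrm{Ker}(L)$ is an invariant subspace of $V$: if $v\in\mathrm{Ker}(L)$ then $L\alpha(g)v=\alpha'(g)Lv=0$, so $\alpha(g)v\in\mathrm{Ker}(L)$. Since $\alpha$ is irreducible, $\mathrm{Ker}(L)$ is either $V$ or $\{0\}$; the former would force $L=0$, contrary to assumption, so $\mathrm{Ker}(L)=\{0\}$ and $L$ is injective. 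Next I would check that $\mathrm{Im}(L)$ is an invariant subspace of $V'$: for $Lv\in\mathrm{Im}(L)$, $\alpha'(g)Lv=L\alpha(g)v\in\mathrm{Im}(L)$. By irreducibility of $\alpha'$, $\mathrm{Im}(L)$ is either $V'$ or $\{0\}$; since $L\neq 0$ it is all of $V'$, so $L$ is surjective. Hence $L$ is a bijection. Note no topology or compactness is used here, consistent with the parenthetical remark in the statement; I only need that $V$ and $V'$ are the carrier spaces of irreducible representations in the purely algebraic sense.

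For the second part, take $\alpha=\alpha'$, $V=V'$, and $L:V\to V$ an intertwiner. Here I would invoke that $V$ is a finite-dimensional complex vector space, so $L$ has an eigenvalue $\lambda\in\mathbb{C}$ (the characteristic polynomial has a root). Then consider $L-\lambda I$: it is also an intertwiner, since $\alpha(g)(L-\lambda I)=\alpha(g)L-\lambda\alpha(g)=L\alpha(g)-\lambda\alpha(g)=(L-\lambda I)\alpha(g)$. But $L-\lambda I$ has nontrivial kernel by construction (the $\lambda$-eigenspace), so it is not a bijection, and by the first part it must be zero. Therefore $L=\lambda I$, a multiple of the identity.

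I expect the main obstacle to be essentially bookkeeping rather than conceptual: making sure the invariance arguments for $\mathrm{Ker}(L)$ and $\mathrm{Im}(L)$ are stated cleanly, and being careful that the second part genuinely requires both finite-dimensionality (to get an eigenvalue) and the ground field being $\mathbb{C}$ (algebraic closure). Since the statement explicitly restricts to finite-dimensional representations and makes no continuity assumptions, there is nothing delicate about convergence, Haar measure, or unitarizability here — those tools enter only in the Schur \emph{orthogonality} relations that presumably follow, not in the lemma itself. I would keep the write-up short, flagging explicitly where each hypothesis (irreducibility of $\alpha$, irreducibility of $\alpha'$, finite-dimensionality, complex scalars) is used.
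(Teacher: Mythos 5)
Your proof is correct and follows essentially the same route as the paper's: kernel and image are invariant subspaces, irreducibility forces injectivity and surjectivity, and for $\alpha=\alpha'$ one subtracts an eigenvalue times the identity and applies the first part. Your explicit flagging that the eigenvalue step needs finite-dimensionality and complex scalars is a fine (and slightly cleaner) touch, since taking an arbitrary eigenvalue avoids the paper's unnecessary insistence that it be nonzero, but the argument is the same.
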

\begin{proof}
It is easy to see that the kernel and image of $L$ are invariant subspaces of $V$ and $V'$ respectively. Irreducibility of $\alpha$ implies that the kernel of $L$ is either $0$ or $V$: if it is $V$ then $L=0$, while if it is $0$ then $L$ is injective. If $L$ is injective, then irreducibility of $\alpha'$ implies that its image must $V'$, so $L$ is surjective.  In the case $\alpha=\alpha'$ and $V=V'$, since $L$ is finite-dimensional if it is not equal to zero then it has a nonzero eigenvalue $\lambda$. We may then consider the operator $\hat{L}\equiv L-\lambda I$, which again is a linear map which commutes with $\alpha$.  But it is not injective so it must be zero.
\end{proof}
\begin{thm}[Schur orthogonality relations]\label{schurthm}
Let $\alpha$ and $\alpha'$ be irreducible finite-dimensional representations of a compact Lie group $G$ on the vector spaces $V$ and $V'$, which are inequivalent in the sense that there is no invertible linear map $L:V\to V'$ such that $L\alpha(g)=\alpha'(g)L$ for any $g\in G$.  Then in the inner products for which $\alpha$ and $\alpha'$ are unitary we have
\begin{align}\label{schur1}
\int dg (u',\alpha'(g)v')^*(u,\alpha(g)v)&=0 \qquad\qquad\qquad \forall u,v\in V, u',v'\in V'\\\label{schur2}
\int dg (u',\alpha(g)v')^*(u,\alpha(g)v)&=\frac{(u,u')(v,v')^*}{\mathrm{dim}(V)} \qquad \forall u,v,u',v'\in V.
\end{align}
Choosing orthonormal bases for $V$ and $V'$ and reverting to physics notation, we have
\begin{align}\label{schur1phys}
\int dg D^*_{\alpha' i' j'}(g)D_{\alpha,ij}(g)&=0\\\label{schur2phys}
\int dg D^*_{\alpha,i'j'}(g)D_{\alpha,ij}(g)&=d_\alpha^{-1}\delta_{ii'}\delta_{jj'}.
\end{align}
\end{thm}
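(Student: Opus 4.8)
The plan is to derive both relations \eqref{schur1}--\eqref{schur2} (equivalently \eqref{schur1phys}--\eqref{schur2phys}) from Schur's lemma \ref{schurlem} by the standard averaging trick: given any linear map between the representation spaces, average it over $G$ with the Haar measure to produce an intertwiner, then invoke Schur's lemma to identify that intertwiner. First I would fix the inner products on $V$ and $V'$ for which $\alpha$ and $\alpha'$ are unitary; these exist by theorem \ref{urepthm}. Given arbitrary vectors $u,u'$ (in the appropriate spaces) I would consider the rank-one map $A:V\to V'$ built from $u$ and $u'$ --- concretely $A(w)=(u,w)\,u'$ --- and define the averaged operator
\be
\bar{A}\equiv \int dg\, \alpha'(g)\, A\, \alpha(g)^{-1}.
\ee
Using left- and right-invariance of the Haar measure one checks immediately that $\alpha'(h)\bar{A}=\bar{A}\alpha(h)$ for all $h\in G$, so $\bar{A}$ is an intertwiner. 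The integrand's matrix element against test vectors $v,v'$ reproduces, after using unitarity to rewrite $\alpha(g)^{-1}=\alpha(g)^\dagger$, exactly the integrals appearing on the left-hand sides of \eqref{schur1} and \eqref{schur2}.

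For the off-diagonal case \eqref{schur1}: since $\alpha$ and $\alpha'$ are inequivalent and irreducible, Schur's lemma forces $\bar{A}=0$, and pairing this with $v',v$ gives \eqref{schur1} directly. For the diagonal case \eqref{schur2}, with $\alpha=\alpha'$ on $V=V'$, Schur's lemma forces $\bar{A}=\lambda(u,u')\, I$ for some scalar depending linearly on the data; one fixes the constant by taking the trace of $\bar{A}$, using that $\Tr\bigl(\alpha(g)A\alpha(g)^{-1}\bigr)=\Tr A=(u,u')$ is $g$-independent and that $\int dg=1$, which yields $\lambda=1/\dim(V)$. Pairing with $v,v'$ and being careful about which argument of the inner product is antilinear then gives \eqref{schur2}. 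Finally, choosing orthonormal bases, writing $(e_i,\alpha(g)e_j)=D_{\alpha,ij}(g)$, and specializing $u,v,u',v'$ to basis vectors translates \eqref{schur1}--\eqref{schur2} into the component form \eqref{schur1phys}--\eqref{schur2phys}, with $d_\alpha=\dim(V)$.

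The only genuine subtlety --- and the step I would be most careful about --- is bookkeeping of complex conjugations and the physics-versus-mathematics convention for which slot of the inner product is linear (flagged already in the footnote to theorem \ref{urepthm}): getting the $\bar{A}$ construction to line up with the conjugated integrand in \eqref{schur1}--\eqref{schur2} requires choosing $A$ as an antilinear-in-one-slot rank-one operator and then transporting that antilinearity correctly through the averaging. There is no analytic difficulty: compactness guarantees the Haar integral converges (the integrand is continuous and $G$ has finite volume), and finite-dimensionality of $\alpha,\alpha'$ means Schur's lemma applies verbatim. So the proof is essentially a careful unwinding of definitions once the averaging operator is in hand.
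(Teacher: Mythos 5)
Your proof is correct and follows essentially the same route as the paper: both construct an intertwiner whose matrix elements are the Schur integrals (your averaged rank-one operator $\bar{A}$ is exactly the map the paper defines directly via $(v',L_{u,u'}v)\equiv\int dg\,(\cdots)$), apply Schur's lemma \ref{schurlem}, and fix the constant in the diagonal case by a trace computation, which is the same as the paper's summing of $u=u'$ over an orthonormal basis using unitarity.
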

\begin{proof}
Given any $u\in V$, $u'\in V'$, we can define a map $L_{u,u'}:V\to V'$ via 
\be
(v',L_{u,u'}v)\equiv \int dg (u',\alpha'(g)v')^*(u,\alpha(g)v).
\ee
It is straightforward to verify that $\alpha'L_{u,u'}=L_{u,u'}\alpha$ using the invariance of the Haar measure, so by theorem \ref{schurlem} $L_{u,u'}$ must either be a bijection or be zero.  Moreover it cannot be a bijection since $\alpha$ and $\alpha'$ are inequivalent, so it must be zero, establishing equation \eqref{schur1}.  To establish equation \eqref{schur2}, we can similarly define maps $L_{u,u'}:V\to V$ and $L_{v,v'}:V\to V'$ via
\be
(v',L_{u,u'}v)\equiv (u,L_{v,v'}u')\equiv\int dg (u',\alpha(g)v')^*(u,\alpha(g)v).
\ee
By the invariance of the Haar measure these maps both commute with $\alpha(g)$ for any $g$, so by theorem \ref{schurlem} they both must be multiples of the identity on $V$.  This establishes equation \eqref{schur2} up to an overall constant, which we may then fix by taking $u=u'$ and summing $u$ over an orthonormal basis for $V$ using the unitarity of $\alpha$.
\end{proof}
The Schur orthogonality relations immediately imply the orthogonality of the characters $\chi_\alpha(g)\equiv \Tr \alpha(g)$ of inequivalent finite-dimensional irreducible representations, as well as the fact that $\int dg \chi_\alpha^* (g)\chi_\rho(g)$ counts the number of times a finite-dimensional irreducible representation $\alpha$ appears in the direct-sum decomposition of an arbitrary finite-dimensional representation $\rho$.  They can be interpreted as saying that the rescaled set of matrix coefficients $\sqrt{d_\alpha}D_{\alpha,ij}(g)$ give a set of orthonormal states in the Hilbert space $L^2(G)$ of square-normalizable complex-valued functions on $G$.  In fact they are an orthonormal basis:
\begin{thm}[Peter-Weyl theorem]\label{pwthm}
Let $G$ be a compact Lie group.  Then the rescaled matrix coefficients $\sqrt{d_\alpha}D_{\alpha,ij}(g)$ for all finite-dimensional irreducible representations give an orthonormal basis for $L^2(G)$.
\end{thm}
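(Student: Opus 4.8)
The plan is to prove completeness of the orthonormal system $\{\sqrt{d_\alpha}D_{\alpha,ij}\}$ in $L^2(G)$ by showing that the closure of their span, call it $\mathcal{R}$ (the space of ``matrix coefficients'' or ``representative functions''), is all of $L^2(G)$. The orthonormality itself is already given by theorem \ref{schurthm}, so the content is density. The standard and cleanest route uses the spectral theory of a compact self-adjoint operator built from the convolution action, so I would proceed as follows.

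First I would set up the left (or two-sided) regular representation of $G$ on $L^2(G)$, $(L_h f)(g) = f(h^{-1}g)$, which is unitary by invariance of the Haar measure, and note that $\mathcal{R}$ is an invariant subspace; hence so is its orthogonal complement $\mathcal{R}^\perp$. The goal becomes: $\mathcal{R}^\perp = 0$. Suppose not, and pick $0\neq f \in \mathcal{R}^\perp$. Choose a continuous, real, symmetric ($\phi(g)=\overline{\phi(g^{-1})}$) bump function $\phi$ supported near the identity with $\int \phi\,dg = 1$ and such that convolution $T_\phi: f \mapsto \phi * f$ moves $f$ by less than $\|f\|/2$ in $L^2$ (possible by uniform continuity arguments, using that translation is continuous in $L^2$); then $T_\phi f \neq 0$ and $T_\phi f$ still lies in $\mathcal{R}^\perp$ because $\mathcal{R}^\perp$ is $L$-invariant and convolution is an average of translates. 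The key analytic input is that $T_\phi$ is a compact, self-adjoint operator on $L^2(G)$ (compactness because its kernel $\phi(g h^{-1})$ is continuous on the compact space $G\times G$, hence Hilbert-Schmidt; self-adjointness because of the symmetry condition on $\phi$). By the spectral theorem for compact self-adjoint operators, $T_\phi$ has a nonzero eigenvalue $\lambda$ with finite-dimensional eigenspace $E_\lambda$, and since $T_\phi f \ne 0$ we can arrange (by a further approximation choosing $\phi$ appropriately, or by decomposing $T_\phi f$ into eigencomponents) that $E_\lambda \cap \mathcal{R}^\perp \neq 0$.

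Next I would observe that $T_\phi$ commutes with the left regular representation $L_h$ for all $h$, because convolution on the left commutes with left translation. Hence each eigenspace $E_\lambda$ is a finite-dimensional $L$-invariant subspace of $L^2(G)$. By theorem \ref{urepthm} the restriction of $L$ to $E_\lambda$ is a finite-dimensional unitary representation, so it decomposes into irreducibles; but the matrix coefficients of any finite-dimensional representation (in particular of $L|_{E_\lambda}$, evaluated against the function and a delta-like functional at the identity) are precisely functions in $\mathcal{R}$. Concretely, for $v \in E_\lambda$ the function $g \mapsto (L_g v)(e)$ — interpreted carefully since elements of $E_\lambda$ are actually continuous after applying $T_\phi$ — is a matrix coefficient of a finite-dimensional representation and equals $v$ up to the evaluation. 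This forces $E_\lambda \subset \mathcal{R}$, contradicting $E_\lambda \cap \mathcal{R}^\perp \neq 0$. Therefore $\mathcal{R}^\perp = 0$ and the $\sqrt{d_\alpha}D_{\alpha,ij}$ span a dense subspace; combined with orthonormality from theorem \ref{schurthm}, they form an orthonormal basis.

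The main obstacle, and the step needing the most care, is the regularity bookkeeping: elements of $L^2(G)$ are only equivalence classes of functions, so one cannot literally ``evaluate at $e$'' until one has smoothed by $T_\phi$, and one must check that $T_\phi f$ is genuinely continuous and that the eigenfunctions of $T_\phi$ (being in the range of $T_\phi$ when $\lambda \ne 0$) are continuous. This is what lets one identify eigenspaces with honest spaces of representative functions. A secondary technical point is justifying the approximation step — that one can choose $\phi$ so that $T_\phi f$ is nonzero and has a component in a nonzero eigenspace intersecting $\mathcal{R}^\perp$ — which follows from $T_\phi f \to f$ as the support of $\phi$ shrinks (again using $L^2$-continuity of translation) together with the spectral decomposition of the compact operator $T_\phi$. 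Everything else (compactness of $T_\phi$ via Hilbert-Schmidt, commutation with $L_h$, reduction of finite-dimensional unitary representations) is routine given the results already established in the excerpt.
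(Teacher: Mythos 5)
The paper does not actually prove this theorem: it states it and defers to Knapp, saying the proof ``is an exercise in functional analysis.'' Your proposal is precisely that exercise --- the standard argument via a symmetric approximate-identity convolution operator $T_\phi$, its compactness (Hilbert--Schmidt, continuous kernel $\phi(xy^{-1})$ on the compact space $G\times G$) and self-adjointness, the spectral theorem giving a nonzero finite-dimensional eigenspace inside the invariant subspace $\mathcal{R}^\perp$, complete reducibility of the restricted left regular representation via theorem \ref{urepthm}, and the identification of such a finite-dimensional invariant space of continuous functions with a space of matrix coefficients --- and it is correct in outline, with the genuine technical pressure points (regularity of $T_\phi f$, the approximation step ensuring $E_\lambda\cap\mathcal{R}^\perp\neq 0$, which is cleanest if you simply restrict $T_\phi$ to the invariant subspace $\mathcal{R}^\perp$ and apply the spectral theorem there) correctly flagged. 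One small wrinkle worth tightening: for continuous $v$ in the finite-dimensional $L$-invariant eigenspace one has $v(g)=(L_{g^{-1}}v)(e)$, and implementing evaluation at $e$ by Riesz on that finite-dimensional space gives $v(g)=\langle w, L_{g^{-1}}v\rangle$, i.e.\ a matrix coefficient of $L|_{E_\lambda}$ evaluated at $g^{-1}$ rather than $g$; this is harmless because coefficients composed with inversion are coefficients of the contragredient representation, which is again a sum of irreducibles, so one still lands in $\mathcal{R}$ (alternatively, run the whole argument with the right regular representation to avoid the inversion). Also note your $g\mapsto (L_g v)(e)$ is $v(g^{-1})$, not $v(g)$ --- same fix applies.
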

The proof of this theorem is an exercise in functional analysis and can be found in \cite{knapp2013lie}, presenting it here would be too much of a digression.  

We emphasize that so far in this subsection all results have been essentially topological, and have not actually used the smoothness in the definition of the Lie group $G$.  Indeed theorems \ref{urepthm}-\ref{pwthm} are also true if multiplication and inversion are only taken to be continuous and the topology of $G$ is only required to be compact and Hausdorff, since these are sufficient for the existence of the Haar measure.  When we do assume that $G$ is a Lie group however we then have the following remarkable result:
\begin{thm}\label{liefaithfulthm}
Any compact Lie group $G$ has a faithful finite-dimensional unitary representation, and thus is isomorphic to a closed subgroup of $U(n)$ for some $n$.
\end{thm}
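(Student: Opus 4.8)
The plan is to build a faithful finite-dimensional unitary representation of a compact Lie group $G$ by patching together enough finite-dimensional irreducibles that their combined kernels intersect only in the identity. First I would invoke the Peter-Weyl theorem (theorem \ref{pwthm}): the matrix coefficients of finite-dimensional irreducible representations span $L^2(G)$, and in particular for any $g\neq e$ there must be \emph{some} finite-dimensional irreducible representation $\alpha$ with $\alpha(g)\neq I$ — otherwise every matrix coefficient, hence every $L^2$ function, would be invariant under left-translation by $g$, forcing $g=e$. Equivalently, $\bigcap_\alpha \mathrm{Ker}(\alpha)=\{e\}$ where the intersection runs over all finite-dimensional irreducible representations.

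Next I would use the manifold structure of $G$ to cut this down to a finite family. For each finite-dimensional irreducible representation $\alpha$, $\mathrm{Ker}(\alpha)$ is a closed subgroup of $G$, hence by the closed subgroup theorem (theorem \ref{closedsgthm}) an embedded Lie subgroup. Consider the family of closed subgroups obtained by taking finite intersections $K_{\alpha_1\ldots\alpha_m}=\bigcap_{i=1}^m \mathrm{Ker}(\alpha_i)$. Each is a Lie subgroup, so has a well-defined dimension and a finite number of connected components. One can choose finitely many $\alpha_i$ so that $K\equiv\bigcap_i\mathrm{Ker}(\alpha_i)$ has minimal dimension and, among those, the minimal number of connected components: adding any further $\mathrm{Ker}(\beta)$ cannot shrink $K$ at all, which combined with $\bigcap_{\text{all }\alpha}\mathrm{Ker}(\alpha)=\{e\}$ forces $K=\{e\}$. (Here I would be slightly careful: the dimension-then-components argument shows the descending sequence of closed subgroups stabilizes after finitely many steps, and the stable value must be the full intersection $\{e\}$.) Then $\rho\equiv \alpha_1\oplus\cdots\oplus\alpha_k$ is a finite-dimensional representation with $\mathrm{Ker}(\rho)=\bigcap_i\mathrm{Ker}(\alpha_i)=\{e\}$, i.e. faithful.

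Finally, by theorem \ref{urepthm} this finite-dimensional representation $\rho$ of the compact group $G$ is unitary, so choosing an inner product making it unitary and an orthonormal basis exhibits $\rho$ as an injective continuous homomorphism $G\to U(n)$ with $n=\dim\rho$. Since $\rho$ is injective and $G$ is compact while $U(n)$ is Hausdorff, $\rho$ is a homeomorphism onto its image, which is therefore a compact — hence closed — subgroup of $U(n)$; by theorem \ref{closedsgthm} it is a Lie subgroup, and it is isomorphic to $G$ as a Lie group. This completes the proof.

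The main obstacle is the reduction from the infinite family of all finite-dimensional irreducibles (supplied by Peter-Weyl) to a finite subcollection whose kernels already intersect trivially. The clean way to handle it is exactly the Noetherian-type argument above using that closed subgroups of a Lie group are Lie subgroups with integer dimension and finitely many components, so a strictly descending chain of such intersections must terminate; I would make sure to state this termination argument carefully rather than wave at "compactness of $G$," since it is really the local finiteness of the Lie-subgroup lattice that does the work. Everything else — the Peter-Weyl input, unitarizability, and the closed-image argument — is immediate from the theorems already available in this appendix.
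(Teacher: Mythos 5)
Your proposal is correct and follows essentially the same route as the paper: Peter--Weyl to separate any $g\neq e$ by some finite-dimensional irreducible, the closed subgroup theorem plus the fact that a same-dimension closed subgroup contains the identity component (theorem \ref{connectedgenthm}) to drive a finite descent through kernels, and theorem \ref{urepthm} for unitarity. The only cosmetic difference is that you organize the descent as a lexicographic minimization over $(\dim,\#\text{components})$ of finite intersections of kernels, whereas the paper runs an explicit dimension-reducing induction by picking witnesses $g_i$ in identity components and then finishes the zero-dimensional residual group by summing $\alpha_g$ over its finitely many elements.
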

\begin{proof}
The proof begins with the observation that by the Peter-Weyl theorem \ref{pwthm}, for any $g\in G$ we can find a finite-dimensional irreducible representation $\alpha_g$ for which $\alpha_g(g)$ is not the identity (otherwise we could never approximate a function on $G$ which takes different values at $e$ and $g$).  We may first consider the case where $G$ is discrete, so its identity component $G_0$ consists of only the identity.  $G$ is therefore finite, and we can construct a faithful representation via $\oplus_{g\in G}\,\alpha_g$.  Alternatively say that there exists a $g_1\neq e$ in $G_0$: then $G_1\equiv\mathrm{ker}(\alpha_{g_1})$ is a closed  subgroup of $G$, so by the closed subgroup theorem \ref{closedsgthm} it is a Lie subgroup whose dimensionality is at most that of $G$.  In fact its dimensionality must be strictly less than that of $G$, since if they were equal then by theorem \ref{connectedgenthm} we would have $(G_1)_0=G_0$, which contradicts the fact that $\alpha_{g_1}(g_1)$ is not the identity.  Now say that $G_1$ is zero-dimensional: then as before we see that $\alpha_{g_1}\oplus_{g\in G_1}\alpha_g$ is a faithful finite-dimensional representation of $G$.  Alternatively if $G_1$ has positive dimension then we have $g_2\in (G_1)_0$ such that $g_2\neq e$, so we can take $G_2\equiv \mathrm{ker}(\alpha_{g_1}\oplus \alpha_{g_2})$, which again will be a closed subgroup of $G_1$ of dimension strictly less than that of $G_1$.  Continuing in this way we eventually reach a $G_n$ which is zero-dimensional, and we may then take $\alpha\equiv \alpha_{g_1}\oplus \ldots \oplus\alpha_{g_n}\oplus_{g\in G_n}\alpha_g$, which will be faithful.  It is unitary by theorem \ref{urepthm}.
\end{proof}
Thus we see that the structure theory of compact Lie groups and their finite-dimensional representations is quite well understood.  In fact their unitary infinite-dimensional representations are also understandable along similar lines, we now note two results in this direction.  
\begin{thm}\label{infinitedecthm}
Let $\rho$ be a unitary representation of a compact Lie group $G$ on a Hilbert space $V$.  Then $\rho$ is the direct sum of a set of finite-dimensional irreducible representations.  
\end{thm}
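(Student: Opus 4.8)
The plan is to mimic the proof of the Peter--Weyl theorem, using the Haar measure to produce finite-dimensional invariant subspaces and then exhausting $V$ by a maximality (Zorn's lemma) argument. First I would observe that it suffices to show that \emph{every} nonzero unitary representation $\rho$ of a compact Lie group $G$ on a Hilbert space $V$ contains a nonzero finite-dimensional invariant subspace; given this, one applies Zorn's lemma to the collection of sets of mutually orthogonal finite-dimensional irreducible subrepresentations, ordered by inclusion. A maximal such set $\{V_i\}$ has the property that $W\equiv \overline{\bigoplus_i V_i}$ is invariant and closed, hence so is its orthogonal complement $W^\perp$ (here unitarity is essential, exactly as in the remark following theorem \ref{urepthm}); if $W^\perp$ were nonzero it would contain a nonzero finite-dimensional invariant subspace, which we could decompose into irreducibles and adjoin to $\{V_i\}$, contradicting maximality. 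Therefore $W^\perp=0$ and $V=\overline{\bigoplus_i V_i}$, which is the claimed decomposition.

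The heart of the matter is thus producing a single nonzero finite-dimensional invariant subspace. The standard device is to average a rank-one operator against $G$. Pick any unit vector $v\in V$ and let $P_v$ be the orthogonal projection onto $\mathbb{C}v$; define
\be
T\equiv \int_G \rho(g) P_v \rho(g)^\dagger \, dg,
\ee
where the integral converges in the weak (hence strong) operator topology because the integrand is uniformly bounded in norm by $1$ and $G$ is compact with a finite Haar measure. One checks using left-invariance of $dg$ that $\rho(h) T = T \rho(h)$ for all $h\in G$, so $T$ commutes with the representation. Moreover $T$ is self-adjoint, positive, and nonzero since $\langle v, Tv\rangle = \int_G |\langle v,\rho(g)v\rangle|^2\,dg \geq \text{(value near $g=e$)} > 0$ by continuity of $\Phi_\rho$. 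The key additional input is that $T$ is a \emph{compact} operator: it is a norm limit of finite-rank operators, being an integral (a limit of Riemann sums, each a finite average of the rank-one operators $\rho(g_k)P_v\rho(g_k)^\dagger$) of uniformly bounded rank-one operators over a compact parameter space; more carefully, one uses that $g\mapsto \rho(g)P_v\rho(g)^\dagger$ is norm-continuous so the Riemann sums converge in operator norm. By the spectral theorem for compact self-adjoint operators, $T$ has a nonzero eigenvalue $\lambda$ whose eigenspace $V_\lambda$ is finite-dimensional, and since $T$ commutes with $\rho$, $V_\lambda$ is invariant. Decomposing $V_\lambda$ (a finite-dimensional unitary representation) into irreducibles via theorem \ref{urepthm} completes the step.

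The main obstacle I anticipate is the compactness of $T$, more specifically establishing the norm-continuity of $g \mapsto \rho(g)P_v\rho(g)^\dagger$, which is what upgrades weak-operator-topology convergence of the Riemann sums to norm convergence and hence lets us conclude $T$ is a norm limit of finite-rank operators. Strong continuity of $g\mapsto \rho(g)$ (which follows from joint continuity of $\Phi_\rho$) gives strong continuity of the integrand, but norm continuity of a product of three strongly-continuous factors requires a small argument exploiting that $P_v$ is fixed and rank one: for $w$ in the unit ball, $\|(\rho(g)P_v\rho(g)^\dagger - \rho(g')P_v\rho(g')^\dagger)w\| \leq \|\rho(g)v\|\,|\langle \rho(g)v - \rho(g')v, w\rangle| + \|\langle \rho(g')v,w\rangle\|\,\|\rho(g)v-\rho(g')v\| \leq 2\|\rho(g)v-\rho(g')v\|$, which is uniform in $w$ and small for $g$ near $g'$ by strong continuity. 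Everything else is bookkeeping. As the excerpt notes, this proof is really an exercise in functional analysis, so in the paper it would be reasonable to present the $T$-averaging construction and the Zorn's lemma reduction and then cite \cite{knapp2013lie} for the compact-operator spectral input, rather than grinding through all the estimates.
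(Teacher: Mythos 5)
Your proof is correct, but it takes a different route from the paper's. The paper does not argue via an averaging operator at all: it deduces Theorem \ref{infinitedecthm} from the Peter--Weyl theorem (Theorem \ref{pwthm}, quoted but not proven, with a citation to Knapp), by showing that no nonzero vector can be orthogonal to the closed span of all finite-dimensional invariant subspaces. You instead prove the key input from scratch: the compact, positive, intertwining operator $T=\int_G \rho(g)P_v\rho(g)^\dagger\,dg$ has a nonzero eigenvalue with finite-dimensional, invariant eigenspace, and Zorn's lemma then exhausts $V$. This is essentially the functional-analytic core of the Peter--Weyl theorem itself, so in effect you have inlined the proof of the needed portion of Theorem \ref{pwthm} rather than invoking it; your argument is self-contained (only strong continuity of $\rho$, which the paper's continuity requirement on $\Phi_\rho$ supplies, plus the spectral theorem for compact self-adjoint operators), while the paper's route is shorter given that Peter--Weyl is already on the table and is in the spirit of the paper's policy of citing standard functional analysis. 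All the delicate points in your argument check out: the intertwining property follows from left-invariance of Haar measure, positivity and $\langle v,Tv\rangle>0$ are immediate, and your rank-one estimate $\|\rho(g)P_v\rho(g)^\dagger-\rho(g')P_v\rho(g')^\dagger\|\le 2\|\rho(g)v-\rho(g')v\|$ correctly upgrades strong continuity to norm continuity of the integrand, so the Riemann sums converge in operator norm and $T$ is compact. Two cosmetic remarks: the parenthetical ``weak (hence strong) operator topology'' convergence is unnecessary once you have norm convergence of the Riemann sums (and weak does not imply strong in general), and $\|\langle\rho(g')v,w\rangle\|$ in your estimate should be an absolute value; neither affects the argument.
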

The proof of this theorem uses the Peter-Weyl theorem to show that there cannot be any elements of $V$ which are orthogonal to the direct sum of all finite-dimensional invariant subspaces, see \cite{knapp2013lie} for the proof.  We then also have
\begin{thm}\label{faithfulthm}
Let $\rho$ be a faithful unitary representation of a compact Lie group $G$ on a Hilbert space $V$.  Then there is a finite-dimensional invariant subspace of $V$ on which $\rho$ also acts faithfully,  so $\rho$ has a finite-dimensional subrepresentation which is also faithful.
\end{thm}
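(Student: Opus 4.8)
The plan is to combine the decomposition theorem \ref{infinitedecthm} with a descending-dimension argument for Lie subgroups, using theorems \ref{closedsgthm} and \ref{connectedgenthm} (this is the same circle of ideas as in the proof of theorem \ref{liefaithfulthm}). By theorem \ref{infinitedecthm}, since $G$ is compact we may write $V=\bigoplus_{i\in I}V_i$ where each $V_i$ is a finite-dimensional invariant subspace on which $\rho$ acts irreducibly. For each $i$ set $K_i\equiv\mathrm{Ker}(\rho|_{V_i})$, a closed normal subgroup of $G$, hence a Lie subgroup by theorem \ref{closedsgthm}. Faithfulness of $\rho$ gives $\bigcap_{i\in I}K_i=\{e\}$. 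The goal is to find a \emph{finite} subset $S\subset I$ with $\bigcap_{i\in S}K_i=\{e\}$, since then $W\equiv\bigoplus_{i\in S}V_i$ is a finite-dimensional invariant subspace whose kernel is exactly $\bigcap_{i\in S}K_i=\{e\}$, so $\rho|_W$ is faithful (and unitary, inheriting the inner product from $V$).

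First I would look at the collection of all subgroups $H_F\equiv\bigcap_{i\in F}K_i$ with $F\subset I$ finite. Each $H_F$ is a closed Lie subgroup, and $H_F\subseteq H_{F'}$ when $F\supseteq F'$, so $\dim H_F$ is a non-increasing, non-negative integer-valued function of $F$ bounded by $\dim G$. Let $d_0$ be its minimum value, attained by some $H_0\equiv H_{F_0}$. The key claim is $d_0=0$. For any $j\in I$, the group $H_0\cap K_j=H_{F_0\cup\{j\}}$ is a finite intersection, so minimality forces $\dim(H_0\cap K_j)\geq d_0$; but $H_0\cap K_j$ is a closed subgroup of $H_0$, so $\dim(H_0\cap K_j)\leq d_0$, giving equality. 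The inclusion $H_0\cap K_j\hookrightarrow H_0$ is then an injective immersion between manifolds of equal dimension, hence open, so $H_0\cap K_j$ contains an identity neighbourhood of $H_0$, and in particular of the connected group $(H_0)_0$; by theorem \ref{connectedgenthm} applied to $(H_0)_0$ we get $(H_0)_0\subseteq H_0\cap K_j\subseteq K_j$. Since this holds for every $j\in I$, $(H_0)_0\subseteq\bigcap_{i\in I}K_i=\{e\}$, so $d_0=\dim(H_0)_0=0$.

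With $d_0=0$ in hand, $H_0$ is a zero-dimensional closed subgroup of the compact group $G$, hence discrete and compact, hence finite: $H_0=\{e,h_1,\dots,h_m\}$. Because $\bigcap_{i\in I}K_i=\{e\}$, for each $\ell$ there is $j_\ell\in I$ with $h_\ell\notin K_{j_\ell}$, and then $S\equiv F_0\cup\{j_1,\dots,j_m\}$ is finite with $\bigcap_{i\in S}K_i=H_0\cap K_{j_1}\cap\dots\cap K_{j_m}=\{e\}$, completing the argument. The main obstacle is precisely the middle step — showing that the minimal-dimension finite intersection is already trivial in dimension — which is where the smooth (Lie) structure, as opposed to mere topological-group structure, is genuinely used, through the equidimensional-immersion-is-open fact and theorem \ref{connectedgenthm}. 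Everything else is bookkeeping; one should just double-check the routine facts that a closed subgroup of a Lie group is an embedded submanifold of dimension at most the ambient one, and that a discrete closed subset of a compact space is finite.
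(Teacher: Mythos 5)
Your proof is correct, and it is essentially the paper's argument: both decompose $V$ into finite-dimensional irreducibles via theorem \ref{infinitedecthm}, use faithfulness to get witnesses among the constituents, and drive the kernel of a finite sub-sum down to the trivial group using theorems \ref{closedsgthm} and \ref{connectedgenthm}, finishing off the residual finite kernel with finitely many additional summands. The only difference is organizational — you take a minimal-dimension finite intersection of the kernels $K_i$ and show it is zero-dimensional in one stroke, while the paper (deferring to the proof of theorem \ref{liefaithfulthm}) builds the same descent recursively by choosing elements in the identity component of the current kernel — so the two are interchangeable.
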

\begin{proof}
The faithfulness of $\rho$ ensures that for any element $g\in G$ there is a finite-dimensional irreducible representation $\alpha_g$ appearing in the direct sum decomposition promised by theorem \ref{infinitedecthm} for which $\alpha_g(g)$ is not the identity.  The remainder of the proof is identical to that of theorem \ref{liefaithfulthm}.
\end{proof}
The last result we will need relates arbitrary irreducible representations of a compact group to any particular faithful finite-dimensional one \cite{Levy:2003my}:
\begin{thm}\label{levythm}
Let $G$ be a compact Lie group, $\rho$ be a faithful finite-dimensional representation of $G$, and $\rho^*$ be its conjugate representation.  Then for any finite-dimensional irreducible representation $\alpha$ of $G$ there exist nonnegative integers $n$ and $m$ such that $\alpha$ appears in the direct sum decomposition of the tensor-product $\rho^{\otimes n} \otimes \rho^{*\otimes m}$.
\end{thm}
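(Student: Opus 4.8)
The plan is to use character theory together with the Peter-Weyl theorem. Let $\rho$ be the given faithful finite-dimensional representation of the compact Lie group $G$, with character $\chi\equiv\chi_\rho$, and let $\alpha$ be an arbitrary finite-dimensional irreducible representation with character $\chi_\alpha$. The number of times $\alpha$ appears in $\rho^{\otimes n}\otimes\rho^{*\otimes m}$ is, by the Schur orthogonality relations (theorem \ref{schurthm}) and the fact that characters multiply under tensor product and conjugate under dualization,
\be
N_{n,m}\equiv \int dg\,\chi_\alpha^*(g)\,\chi(g)^n\,\overline{\chi(g)}^m,
\ee
so it suffices to show that $N_{n,m}>0$ for some nonnegative integers $n,m$. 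Suppose for contradiction that $N_{n,m}=0$ for all $n,m\geq 0$, i.e. that $\chi_\alpha$ is orthogonal in $L^2(G)$ to every function of the form $\chi^n\overline{\chi}^m$. By linearity $\chi_\alpha$ is then orthogonal to the linear span $\mathcal{P}$ of all such monomials, which is exactly the algebra of polynomials in the two functions $\chi$ and $\overline{\chi}$ on $G$ (including the constants, from $n=m=0$).

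The key step is to argue that $\mathcal{P}$ is dense in $C(G)$ in the uniform norm, which will force $\int dg\,\chi_\alpha^*\,f=0$ for all $f\in C(G)$, hence $\chi_\alpha\equiv 0$ as an element of $L^2(G)$ — impossible since $\chi_\alpha(e)=d_\alpha\geq 1$ and $\chi_\alpha$ is continuous. To get the density I would invoke the Stone-Weierstrass theorem: $\mathcal{P}$ is a subalgebra of $C(G)$ (closed under products by construction), it contains the constants, and it is closed under complex conjugation since $\overline{\chi^n\overline{\chi}^m}=\overline{\chi}^n\chi^m\in\mathcal{P}$. The only nontrivial hypothesis of Stone-Weierstrass left to check is that $\mathcal{P}$ \emph{separates points} of $G$: given $g\neq h$ in $G$, we need some element of $\mathcal{P}$ taking different values at $g$ and $h$. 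It is enough that $\chi$ itself, together with its translates under the group action on $\rho$, separate points — but more directly, since $\rho$ is \emph{faithful}, $\rho(g)\neq\rho(h)$, so the matrix entries $D_{\rho,ij}$ separate $g$ and $h$; and the matrix entries of $\rho^{\otimes n}\otimes\rho^{*\otimes m}$ are products of entries of $\rho$ and $\overline{\rho}$. The cleanest route is therefore to run the Stone-Weierstrass argument not with the scalar algebra $\mathcal{P}$ but with the \emph{matrix-coefficient} algebra $\mathcal{M}$ generated by all entries of $\rho$ and $\rho^*$: this is a conjugation-closed, unital subalgebra of $C(G)$ that separates points by faithfulness of $\rho$, hence is dense; then by Peter-Weyl (theorem \ref{pwthm}) the matrix coefficients of $\alpha$ lie in the $L^2$-closure of $\mathcal{M}$, and since the matrix coefficients of $\alpha$ are orthogonal to all matrix coefficients of any irreducible not equivalent to $\alpha$, $\alpha$ must actually occur inside some $\rho^{\otimes n}\otimes\rho^{*\otimes m}$.

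I expect the main obstacle to be purely expository rather than mathematical: deciding whether to phrase the argument at the level of scalar characters (clean statement, but point-separation needs a small remark that $\chi$ being faithful-ish is not literally ``$\rho$ faithful'') or at the level of the matrix-coefficient algebra (point-separation is immediate from faithfulness, but one then has to quote Peter-Weyl to pass from ``dense in $C(G)$'' to ``$\alpha$ appears as a subrepresentation''). I would take the matrix-coefficient route, since faithfulness of $\rho$ is exactly the hypothesis that makes Stone-Weierstrass applicable, and Peter-Weyl is already available in the excerpt as theorem \ref{pwthm}. The remaining details — that tensor products of representations have characters that multiply, that the $L^2$-orthogonality of inequivalent irreducibles (theorem \ref{schurthm}) lets one read off multiplicities, and that a continuous function orthogonal to a uniformly dense set vanishes — are all routine and would be compressed to a sentence each. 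A citation to \cite{Levy:2003my} can be retained for readers who want the argument spelled out in full.
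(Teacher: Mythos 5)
Your proposal is correct, but it takes a genuinely different route from the paper. The paper's proof stays entirely inside character theory: it forms the auxiliary representation $\left(1\oplus\rho\oplus\rho^*\oplus\rho\otimes\rho^*\right)^{\otimes n}$, whose character is $|1+\chi_\rho(g)|^{2n}$, observes that faithfulness of $\rho$ forces the maximum value $1+d_\rho$ of $|1+\chi_\rho|$ to be attained only at $g=e$, and then uses this concentration to show that $\int dg\,\chi_\alpha(g)\,|1+\chi_\rho(g)|^{2n}$ (normalized, it tends to $d_\alpha>0$) is strictly positive for large enough $n$; Schur orthogonality then says $\alpha$ occurs in that tensor power, hence in some $\rho^{\otimes a}\otimes\rho^{*\otimes b}$. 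Your argument instead proves the statement by contradiction via Stone--Weierstrass applied to the matrix-coefficient algebra generated by the entries of $\rho$ and $\rho^*$: faithfulness gives point separation, hence uniform (and so $L^2$) density, and orthogonality of the nonvanishing matrix coefficients of $\alpha$ to a dense subspace is absurd. You also correctly flagged the one real trap in the scalar version — $\chi$ and $\overline{\chi}$ are class functions and need not separate points, so the naive character-algebra Stone--Weierstrass fails — and chose the matrix-coefficient fix, which is the standard repair. What each approach buys: the paper's argument is self-contained given theorem \ref{schurthm} alone (no Stone--Weierstrass, no appeal to density, and it exhibits directly that large tensor powers of the single representation $1\oplus\rho\oplus\rho^*\oplus\rho\otimes\rho^*$ suffice), while yours is the classical functional-analytic proof, avoids the concentration estimate, and makes the role of faithfulness (point separation) maximally transparent, at the cost of importing Stone--Weierstrass and being non-constructive about $n,m$. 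One minor redundancy: once you have uniform density of the matrix-coefficient algebra in $C(G)$, you do not also need Peter--Weyl to pass to $L^2$ density — density of $C(G)$ in $L^2(G)$ for Haar measure already does it.
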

\begin{proof}
Consider the representation 
\be
\rho_n\equiv\left(1\oplus \rho\oplus \rho^*\oplus\rho\otimes\rho^*\right)^{\otimes n}.
\ee
It has character
\be
\chi_n(g)\equiv \Tr\rho_n(g)=|1+\chi_\rho(g)|^{2n},
\ee
where  $\chi_\rho(g)\equiv \Tr \rho(g)$ is the character of $\rho$.  By Schur orthogonality we can count the number of times any irreducible representation $\alpha$ appears in the direct sum decomposition of $\rho_n$ by
\be
\int_G dg \chi_\alpha(g)|1+\chi_\rho(g)|^{2n}.
\ee
The quantity $|1+\chi_\rho(g)|$ obeys
\be
0\leq |1+\chi_\rho(g)|\leq 1+d_\rho,
\ee
with the maximum attained only when $g=e$ since $\rho$ is faithful.  But then we have
\be
\lim_{n\to\infty}\frac{\int_G dg\chi_\alpha|1+\chi_\rho(g)|^{2n}}{\int_G dg |1+\chi_\rho(g)|^{2n}}=d_\alpha,
\ee
so at some sufficiently large $n$ we must have
\be
\int_G dg \chi_\alpha(g)|1+\chi_\rho(g)|^{2n}>0.
\ee
\end{proof}

If $G$ is connected, much more is known about its representation theory, and indeed both the connected compact Lie groups and their finite-dimensional irreducible representations have been classified long ago using semisimple theory.  In this paper however we have striven to treat discrete and continuous groups on equal footing, so we will stop our review here.

\section{Projective representations}\label{projapp}
In this appendix we discuss the possibility of extending our definition of global symmetry to include projective representations of the symmetry on Hilbert space, where the multiplication rule \eqref{repeq} would be generalized to include a phase
\be \label{projrep}
U(g,\Sigma)U(g',\Sigma)=e^{i\alpha(g,g')} U(gg',\Sigma).  
\ee
We now argue that in quantum field theory on $\mathbb{R}^d$, any such phase can be removed by a redefinition of the $U(g,\Sigma)$.  We first consider the situation where the symmetry is unbroken: then there is an invariant vacuum state, on which the symmetry can at most act with a phase
\be
U(g,\Sigma)|0\ran=e^{if(g)}|0\ran.  
\ee  
But if we act on this state with $U(g,\Sigma)U(g',\Sigma)$, we immediately discover that we must have
\be
\alpha(g,g')=f(g)+f(g')-f(gg') \qquad \mathrm{mod} \, 2\pi.
\ee
We may then define ``improved'' symmetry operators
\be
\wt{U}(g,\Sigma)\equiv e^{-if(g)}U(g,\Sigma),
\ee
which act in the same way on the local operators but now obey \eqref{projrep} with $\alpha=0$.  Thus in any quantum mechanical system, nontrivial projective representations are only possible if there is no invariant state: in other words the symmetry must be spontaneously broken.  There are indeed quantum mechanical systems where a spontaneously broken global symmetry is represented projectively in a nontrivial way, see appendix D of \cite{Gaiotto:2017yup} for an example, but we now argue that in quantum field theory this is impossible.  

The reason is that in quantum field theory on $\mathbb{R}^d$, spontaneously broken global symmetries (as we have defined them) always lead to the superselection structure described around equation \eqref{ssstates}.  Since the operators always transform in non-projective representations of the symmetry (the phase $\alpha$ cancels when act on operators by conjugation), and since we can get to all states by acting with operators that do not change the superselection sector on the degenerate vacua, any projectiveness on the states can arise only from phases in the action of the symmetry on the degenerate vacuum states:
\be\label{ssbproj}
U(g)|b\ran=e^{if(g,b)}|gb\ran.
\ee  
Strictly speaking to have a genuine projective representation we should not allow $f$ to depend on $b$, but we have allowed this since in any case it will not help: such phases can again be removed by the redefinition
\be
\wt{U}(g)\equiv U(g) e^{-if(g,B_i)},
\ee
where $B_i$ are the operators which diagnose which superselection sector a state is in.  Since the $B_i$ commute with all local operators, this modification has no effect on the action of the symmetry on local operators.  Thus the $\wt{U}(g)$ give a non-projective representation of the symmetry on the Hilbert space.\footnote{More precisely since we have defined representations to be continuous, it gives a homomorphism from G to the unitary operators on the Hilbert space which may or may not be continuous.}

In equation \eqref{ssbproj} we considered a kind of generalized projective representation, where instead of respecting the group multiplication law up to a $c$-number phase we respect it up to a nontrivial unitary operator which commutes with all of the local operators. One might ask if there are other examples of this kind of thing, where the unitary operator depends on something other than degenerate vacuum data.  In a quantum field theory where all states can be obtained by acting on a single ground state with local operators, there can be no nontrivial unitary operator which commutes with all of the local operators.  There are two ways we could try to relax this assumption in the hopes of getting something interesting.  The first is to have multiple ground states, each of which has on top of it a superselection sector built by acting with local operators.  This is the case we just considered, and we saw that allowing the unitary operators to depend on the superselection sector data did not lead to anything worthwhile.  The second possibility is to consider theories where not all states can be obtained by acting on the ground state(s) with local operators.  The only possibility we are aware of is to have a theory with a ``long range gauge symmetry with dynamical charges'', a notion we define in section \ref{gaugesec}.  It basically means that there is a weakly-interacting gauge field and operators charged under the associated gauge symmetry, which must be attached to infinity by Wilson lines to be gauge-invariant.  The gauge symmetry is then represented nontrivially on the Hilbert space, in what is sometimes called an asymptotic symmetry group, and since this can be understood as being realized by a surface operator at infinity it will give a set of nontrivial unitary operators that commute with all local operators but act nontrivially on the endpoints of Wilson lines. We could therefore imagine trying to use these long-range gauge symmetries as generalizations of the phases $e^{i\alpha(g,g')}$ in a projective representation of the global symmetry.  Indeed in section \ref{mixsec} we give a concrete example of a theory that realizes this phenomenon, and in a way in which the unitary cannot be removed by redefining the symmetry operators.  One might then wish to say that this is a genuine projective representation of the global symmetry, but as we explain in section \eqref{mixsec} we find it more natural to instead say that it is a mixing of the global symmetry with a long-range gauge symmetry.  Therefore we are not aware of any situation in quantum field theory where the most natural description of the symmetry structure is to say that a global symmetry is represented projectively on the Hilbert space.

\section{Continuity of symmetry operators}\label{contapp}
In this appendix we discuss the continuity of the action of global symmetries in quantum field theory, both on the Hilbert space and on the algebra $\AR$ of bounded operators in a bounded spatial region $R$.  

First some definitions.  Let $V$ be a Hilbert space, which we will always endow with the standard topology induced by the Hilbert space norm
\be
||v||\equiv \sqrt{(v,v)}.
\ee
 One says that a linear operator $\mO$ on $V$ is \textit{bounded} if there exists a real constant $C$ such that $||\mO v||<C||v||$ for all $v\in V$, and we will denote by $\BV$ the set of bounded operators on $V$.  We will say that a subset $M\subset \BV$ is \textit{uniformly bounded} if there exists a single real constant $C$ such that $||\mO v||<C||v||$ for all $v\in V$ and $\mO\in M$.  The \textit{operator norm} $||\mO||$ of any bounded operator $\mO$ is the smallest real constant $C$ such that $||\mO v||\leq C||v||$ for all $v\in V$.  

To discuss the continuity of maps to and from $\BV$, we need to give it a topology.  There are several possibilities.  One obvious one is the \textit{norm topology}, which has as a basis the set of balls 
\be
B_\epsilon(\mO_0)\equiv \{\mO\in \BV \,\Big|\,||\mO-\mO_0||<\epsilon\},
\ee
with $\mO_0\in \BV$ and $\epsilon>0$.  This topology however is much too strong for our purposes.   For example in the norm topology the $U(1)$ global symmetry $\phi'=e^{i\theta }\phi$ of a free complex scalar field has symmetry operators $U(g,\Sigma)$ which are not continuous, since there are states of arbitrarily large charge in the Hilbert space.  A topology which is better suited is the \textit{strong operator topology}, which has as a basis the set of finite intersections of balls of the form
\be
B_\epsilon(\mO_0,v_0)\equiv \{\mO\in \BV \,\Big|\,||(\mO-\mO_0)v_0||<\epsilon\},
\ee
with $\mO_0\in \BV$, $v_0\in V$, and $\epsilon>0$.  This topology is sometimes also called the topology of pointwise convergence, since a sequence $\mO_n$ of operators converges to an operator $\mO$ in the strong operator topology if and only if $\mO_nv\to\mO v$ for any $v\in V$.  Similarly, if $X$ is a topological space then a map $f:X\to \BV$ is continuous in the strong operator topology if and only if the map $f_v:X\to V$ defined by $f_v(x)=f(x)v$ is continuous for any fixed $v\in V$.

In discussing the continuity of symmetries, there are two maps whose continuity properties we are interested in.  The symmetry operators $U(g,\Sigma)$ directly define a map
\be
U:G\to \BV,
\ee
and also induce an associated map
\be
f_U:G\times \AR \to \AR
\ee
for $R$ any spatial region, defined by $f_U(g,\mO)=U^\dagger(g)\mO U(g)$. As a warmup, we first establish the following theorem
\begin{thm}
Let $V$ be a Hilbert space, $G$ a Lie group, and $U$ a map from $G$ to $\BV$ for which $U(g)$ is unitary for all $g\in G$.  Then the map $\Phi_U:G\times V \to V$  defined by $\Phi_U(g,v)=U(g)v$ is jointly continuous if and only if $U$ is continuous in the strong operator topology.  In particular, if $U$ is a homomorphism which is strongly continuous then it is a representation of $G$ on $V$ in the sense of subsection \ref{repsubsec}.
\end{thm}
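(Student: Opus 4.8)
The plan is to prove the two directions of the equivalence separately, and then to observe that the ``in particular'' clause is immediate from the definition of a representation. The core of the argument is an $\epsilon/3$-type estimate that exploits unitarity of the $U(g)$.

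First I would prove the easy direction: if $\Phi_U$ is jointly continuous, then $U$ is strongly continuous. This is essentially by definition. Fix $v\in V$; the map $g\mapsto U(g)v$ is the composition of the inclusion $g\mapsto (g,v)$ of $G$ into $G\times V$ with $\Phi_U$, and the inclusion is continuous, so the composition is continuous. Since this holds for every fixed $v$, the map $U:G\to\BV$ is continuous in the strong operator topology by the characterization of strong continuity recalled just above the theorem statement.

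Next I would prove the substantive direction: if $U$ is strongly continuous, then $\Phi_U$ is jointly continuous. Fix a point $(g_0,v_0)\in G\times V$ and $\epsilon>0$. For a nearby point $(g,v)$ I would write
\be
\|U(g)v-U(g_0)v_0\| \le \|U(g)v - U(g)v_0\| + \|U(g)v_0 - U(g_0)v_0\|.
\ee
For the first term, unitarity of $U(g)$ gives $\|U(g)v-U(g)v_0\|=\|v-v_0\|$, which is less than $\epsilon/2$ whenever $v$ is within $\epsilon/2$ of $v_0$ in the Hilbert space norm. For the second term, strong continuity of $U$ at $g_0$ applied to the fixed vector $v_0$ provides a neighborhood $N$ of $g_0$ in $G$ such that $\|U(g)v_0-U(g_0)v_0\|<\epsilon/2$ for all $g\in N$. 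Then on the open neighborhood $N\times\{v:\|v-v_0\|<\epsilon/2\}$ of $(g_0,v_0)$ we have $\|\Phi_U(g,v)-\Phi_U(g_0,v_0)\|<\epsilon$, establishing joint continuity at $(g_0,v_0)$; since the point was arbitrary, $\Phi_U$ is jointly continuous. Finally, for the last sentence: if $U$ is in addition a homomorphism, then by the first part $\Phi_U$ is jointly continuous, which is exactly the joint-continuity requirement in the definition of a representation in subsection \ref{repsubsec} (with $\rho=U$), so $U$ is a representation of $G$ on $V$.

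I do not expect any serious obstacle here; the only thing to be careful about is that the strong operator topology is genuinely weaker than the norm topology, so the argument must never try to bound $\|U(g)-U(g_0)\|$ in operator norm — the whole point of inserting $U(g)$ rather than $U(g_0)$ in the split above, and of using unitarity to convert $\|U(g)v-U(g)v_0\|$ into $\|v-v_0\|$, is precisely to stay within the strong topology. One should also note that unitarity (rather than mere boundedness) is essential: without a uniform bound on $\|U(g)\|$ near $g_0$ the first term could not be controlled, which is the same phenomenon that makes the norm topology inappropriate for symmetry operators acting on sectors of unbounded charge.
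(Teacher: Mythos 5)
Your proof is correct and follows essentially the same route as the paper's: the easy direction by fixing the vector, and the converse by a triangle-inequality split in which unitarity handles the vector variation (converting $\|U(g)v-U(g)v_0\|$ into $\|v-v_0\|$) and strong continuity at the fixed vector handles the group variation. The paper phrases the converse as openness of $\Phi_U^{-1}$ of norm balls rather than pointwise $\epsilon$--$\delta$ continuity, but that is only a cosmetic difference.
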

\begin{proof}
If $\Phi_U$ is jointly continuous, then strong continuity of $U$ follows immediately from fixing the second argument.  To establish the converse, we need to show that for any ball 
\be
B_\epsilon(v_0)\equiv \{v\in V \,\Big|\, ||v-v_0||<\epsilon\}
\ee
in $V$, $\Phi_U^{-1}(B_\epsilon(v_0))$ is open in $G\times V$.  We can do this by showing that any point $(g,v)$ in $\Phi_U^{-1}(B_\epsilon(v_0))$ is contained in an open set $S\times B_\delta(v)$, with $S$ open in $G$, which is itself contained in $\Phi_U^{-1}(B_\epsilon(v_0))$.  We therefore want to show that
\be
||U(g')v'-v_0||<\epsilon \qquad \qquad \forall g'\in S, v'\in B_\delta(v).
\ee
This follows because by the triangle inequality and the unitary invariance of the Hilbert space norm we have
\begin{align}\nonumber
||U(g')v'-v_0||&\leq ||U(g')(v'-v)||+||(U(g')-U(g))v||+||U(g)v-v_0||\\
&=||(v'-v)||+||(U(g')-U(g))v||+||U(g)v-v_0||.
\end{align}
The third term on the second line is less than $\epsilon$ since $(g,v)$ is in $\Phi_U^{-1}(B_\epsilon(v_0))$, and using our freedom to choose $\delta$ and $S$ and the strong continuity of $U$ we can make the first and second terms as small we like.  Therefore we can arrange for the sum of all three to be less than $\epsilon$.    
\end{proof}
This theorem tells us that in quantum field theory $U(g,\Sigma)$ will be strongly continuous if and only if its action on the Hilbert space gives a continuous representation of $G$.  We saw in the beginning of section \ref{globalsec} that if $G$ is continuous as a Lie group, meaning its dimension as a manifold is greater than zero, then if it is spontaneously broken the $U(g,\Sigma)$ defined by equation \eqref{ssbUdef} may \textit{not} be strongly continuous, since elements of $g$ which are arbitrarily close to the identity still send one ground state to another which is orthogonal.  If the symmetry is unbroken however, then we take it as a natural postulate that $U$ will indeed be strongly continuous.  For example in the free complex scalar example, any particular normalizable state will be acted on continuously even though there are states with arbitrary large charge.  More generally the idea is that if the vacuum is invariant, then any particular excited state should only differ from the vacuum in a finite region and by a finite amount of excitation so it should only transform in a representation of limited complexity.    We now use the idea that $U$ should be strongly continuous for unbroken symmetries to motivate the continuity clause in condition (b) of our definition \ref{globaldef} of global symmetry.  
\begin{thm}\label{fcontthm}
Let $V$ be a Hilbert space, $G$ a Lie group, and $U$ a strongly continuous map from $G$ to the unitary operators on $V$.  Then the restriction to any uniformly bounded subset $M$ of $\BV$ of the map $f_U:G\times \BV \to \BV$ defined by $f_U(g,\mO)=U^\dagger(g)\mO U(g)$ is strongly continuous.
\end{thm}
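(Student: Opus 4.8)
The plan is to reduce joint strong continuity of $f_U$ on a uniformly bounded set $M$ to the already-established strong continuity of $U$, using the $\epsilon/3$ (here really $\epsilon/(3C)$) trick together with the uniform bound. Fix $(g_0,\mO_0)\in G\times M$, a vector $v_0\in V$, and $\epsilon>0$; let $C$ be a uniform bound for $M$, so $\|\mO v\|\le C\|v\|$ for all $\mO\in M$ and $v\in V$. We want an open neighborhood of $(g_0,\mO_0)$ in $G\times M$ (with $M$ carrying the subspace strong operator topology) whose image under $f_U$ lands in the strong-topology ball $B_\epsilon\big(f_U(g_0,\mO_0)v_0\big)$. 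The first step is to write
\be
U^\dagger(g)\mO U(g)v_0 - U^\dagger(g_0)\mO_0 U(g_0)v_0
\ee
as a telescoping sum of three differences: one coming from changing $U(g)v_0$ to $U(g_0)v_0$ inside the rightmost slot, one from changing $\mO$ to $\mO_0$, and one from changing the leftmost $U^\dagger(g)$ to $U^\dagger(g_0)$.

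More precisely, set $w_0 \equiv U(g_0)v_0$ and $u_0 \equiv \mO_0 U(g_0) v_0$. Then by the triangle inequality and unitarity of $U(g)$ (which preserves the Hilbert space norm, so $\|U^\dagger(g)x\| = \|x\|$), we bound the above difference in norm by
\begin{align}\nonumber
&\big\|U^\dagger(g)\mO\big(U(g)v_0 - U(g_0)v_0\big)\big\|
+ \big\|U^\dagger(g)(\mO-\mO_0)U(g_0)v_0\big\|
+ \big\|\big(U^\dagger(g)-U^\dagger(g_0)\big)u_0\big\|\\
&\quad\le C\,\big\|\big(U(g)-U(g_0)\big)v_0\big\|
+ \big\|(\mO-\mO_0)w_0\big\|
+ \big\|\big(U^\dagger(g)-U^\dagger(g_0)\big)u_0\big\|.
\end{align}
Here the first term used the uniform bound $\|\mO x\|\le C\|x\|$ and $\|U^\dagger(g)y\|=\|y\|$; the second used $\|U^\dagger(g)y\|=\|y\|$; the third is already in the desired form. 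Now each of the three terms can be made smaller than $\epsilon/3$: the first by strong continuity of $U$ at $g_0$ applied to the fixed vector $v_0$, the second by choosing $\mO$ in the basic strong-topology neighborhood $\{\mO\in M : \|(\mO-\mO_0)w_0\| < \epsilon/3\}$ of $\mO_0$, and the third by strong continuity of $U^\dagger$ at $g_0$ applied to the fixed vector $u_0$ --- note $U^\dagger(g)=U(g^{-1})$ and inversion is continuous on $G$, so $g\mapsto U^\dagger(g)$ is strongly continuous as well. Intersecting the two resulting open neighborhoods of $g_0$ in $G$ with the one strong-topology neighborhood of $\mO_0$ in $M$ gives the required product neighborhood, proving joint strong continuity.

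The only mildly delicate point --- and the one I would flag as the ``main obstacle,'' though it is more a bookkeeping subtlety than a real difficulty --- is that the second term's neighborhood must be chosen \emph{before} we know the final $g$, yet $w_0 = U(g_0)v_0$ depends only on $g_0$, not on $g$; this is why the telescoping was arranged so that $\mO - \mO_0$ is evaluated on the \emph{fixed} vector $w_0$ rather than on the moving vector $U(g)v_0$. Had we instead tested $\mO - \mO_0$ on $U(g)v_0$, the neighborhood in $M$ would have to depend on $g$ and joint continuity would fail; the uniform boundedness of $M$ is precisely what lets us absorb the $g$-dependence entirely into the \emph{first} term via the constant $C$. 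Everything else is the standard $\epsilon/3$ argument, and no smoothness of $G$ is needed --- only that $G$ is a topological group and $U$ is strongly continuous, exactly as in the hypothesis. I would then close by remarking (as the paper's surrounding discussion anticipates) that this is the sense in which the continuity clause of condition (b) in Definition \ref{globaldef} is the natural one, and that it is automatic once the vacuum is invariant.
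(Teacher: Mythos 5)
Your proof is correct and takes essentially the same route as the paper's: the paper phrases it as openness of preimages of the subbasic balls $B_\epsilon(\mO_0,v_0)$, which produces a four-term telescoping around a general point of the preimage, but the estimates are the same as yours — unitarity to strip off the left factor, the uniform bound $C$ to absorb the $g$-dependence of the moving vector $(U(g)-U(g_0))v_0$, a strong neighborhood in $M$ tested on the fixed vector $U(g_0)v_0$, and strong continuity for the remaining terms. The one repair I would ask for is your justification that $g\mapsto U^\dagger(g)$ is strongly continuous: the theorem does not assume $U$ is a homomorphism, so you cannot write $U^\dagger(g)=U(g^{-1})$; instead use unitarity directly, $\|(U^\dagger(g)-U^\dagger(g_0))u_0\|=\|(U(g_0)-U(g))U^\dagger(g_0)u_0\|$, which is small by strong continuity of $U$ applied to the fixed vector $U^\dagger(g_0)u_0$.
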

\begin{proof}
We will show that for any ball $B_\epsilon(\mO_0,v_0)$ in $\BV$, $f_U^{-1}(B_\epsilon(\mO_0,v_0))\cap (G\times M)$ is open in $G\times M$.  We can do this by showing that for any $(g,\mO)\in f_U^{-1}(B_\epsilon(\mO_0,v_0))\cap (G\times M)$, there is an open set $S\subset G$ containing $g$ and a ball $B_\delta(\mO,\hat{v})$ such that $S\times (B_\delta(\mO,\hat{v})\cap M)\subset f_U^{-1}(B_\epsilon(\mO_0,v_0))\cap (G\times M)$.  In other words for any $\epsilon$, $\mO_0$, and $v_0$, we want to pick $S$, $\delta$, and $\hat{v}$ such that
\be
||\left(U^\dagger(g')\mO' U(g')-\mO_0\right)v_0||<\epsilon \qquad \forall g'\in S, \mO'\in B_\delta(\mO,\hat{v})\cap M.
\ee
By the triangle inequality and the unitary invariance of the Hilbert space norm we have
\begin{align}\nonumber
||\left(U^\dagger(g')\mO' U(g')-\mO_0\right)v_0||\leq &||U^\dagger(g')\mO'(U(g')-U(g))v_0||+||U^\dagger(g')(\mO'-\mO)U(g)v_0||\\\nonumber
&+||(U^\dagger(g')-U^\dagger(g))\mO U(g)v_0||+||(U^\dagger(g)\mO U(g)-\mO_0)v_0||\\\nonumber
=&||\mO'(U(g')-U(g))v_0||+||(\mO'-\mO)U(g)v_0||\\
&+||(U^\dagger(g')-U^\dagger(g))\mO U(g)v_0||+||(U^\dagger(g)\mO U(g)-\mO_0)v_0||.\label{fcontreq}
\end{align}
The fourth term on the right hand side is less than $\epsilon$ since $(g,\mO)$ is in $f_U^{-1}(B_\epsilon(\mO_0,v_0))$, the third term can be made as small as we like using the strong continuity of $U$ and the boundedness of $\mO$, the second term can be made as small as we like by choosing $\hat{v}=U(g)v_0$ and taking $\delta$ to be small, and the first term can be taken to be arbitrarily small by using the strong continuity of $U$ together with the uniform boundedness of $M$.  Therefore for small enough $S$ and $\delta$ we can arrange for the whole right hand side to be less than $\epsilon$.  
\end{proof}
Thus we see that strong continuity on any uniformly bounded subset of $\AR$ is the right continuity requirement on $f_U$ for an unbroken global symmetry.  In fact we claim that if the region $R$ is bounded in size, then this should also be the right requirement even if the symmetry is spontaneously broken, since this should not affect the transformation of operators in a finite region, hence our inclusion of it in condition (b) of definition \ref{globaldef}.  It is worth emphasizing that without the restriction to uniformly bounded subsets the theorem would not apply, since the first term in the right hand side of equation \eqref{fcontreq} would not be bounded since there are elements $\mO'$ of any open ball $B_\delta(\mO,\hat{v})$ with arbitrarily large norm.

We can also consider what strong continuity of $f_U$ on uniformly bounded subsets implies in the converse direction about the continuity of $U$.  In general it does not imply anything, which is good since for spontaneously broken symmetries we sometimes do not want $U$ to be continuous.  But if we \textit{assume} that the symmetry is unbroken, by which we mean that there is an invariant ground state $\Omega\in V$, then we have the following theorem:
\begin{thm}\label{Ucontthm}
Let V be a Hilbert space, $G$ a Lie group, $\AR$ a subalgebra of $\BV$, and $U$ a map from $G$ to the unitary operators on $V$ such that the restriction to any uniformly bounded subset $M$ of $\AR$ of the map $f_U:G\times \BV\to\BV$ defined by $f_U(g,\mO)=U^\dagger(g)\mO U(g)$ is strongly continuous.  Moreover let there exist a state $\Omega\in V$ which is cyclic with respect to $\AR$,\footnote{A state $\Omega\in V$ is \textit{cyclic} with respect to a subalgebra $\AR\subset \BV$ if the set of states $\mO \Omega$, with $\mO\in \AR$, are dense in $V$. It is \textit{separating} if there is no $\mO\in \AR$ such that $\mO\Omega=0$.  In quantum field theory the Reeh-Schlieder theorem tells us that both of these properties hold for the ground state when $\AR$ is the algebra of operators in a bounded region (see eg \cite{Witten:2018zxz}).} and which is also invariant in the sense that $U(g)\Omega=\Omega$ for all $g\in G$.  Then $U$ is strongly continuous.
\end{thm}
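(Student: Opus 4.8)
The plan is to prove that $g\mapsto U^\dagger(g)v$ is strongly continuous — which, for unitary-valued maps, is equivalent to strong continuity of $U$ itself — by combining the invariance and cyclicity of $\Omega$ with the already-assumed strong continuity of $f_U$ on uniformly bounded sets, and then bootstrapping from a dense subspace using unitarity. The point is that invariance of $\Omega$ lets one rewrite the orbit of a vector of the form $\mO\Omega$ entirely in terms of the conjugation action $f_U$, so that the hypothesis, applied merely to singletons, already does the work.

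First I would treat vectors in the subspace $\mathcal{D}\equiv\{\mO\Omega:\mO\in\AR\}$, which is dense in $V$ because $\Omega$ is cyclic. For $v=\mO\Omega$, using $U(g)\Omega=\Omega$ we get
\be
U^\dagger(g)\,\mO\,\Omega \;=\; U^\dagger(g)\,\mO\,U(g)\,\Omega \;=\; f_U(g,\mO)\,\Omega .
\ee
The singleton $\{\mO\}$ is trivially a uniformly bounded subset of $\AR$, so by hypothesis the restriction of $f_U$ to $G\times\{\mO\}$ — equivalently, the map $g\mapsto f_U(g,\mO)$ — is strongly continuous, hence $g\mapsto f_U(g,\mO)\Omega$ is a continuous map $G\to V$. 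Therefore $g\mapsto U^\dagger(g)v$ is continuous for every $v\in\mathcal{D}$ (note that no use of group inversion is needed at this stage, nor the separating property of $\Omega$ — only cyclicity).

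Next I would extend to all of $V$ by the standard density-plus-equiboundedness argument: each $U^\dagger(g)$ is unitary, so the family $\{U^\dagger(g)\}_{g\in G}$ has operator norm uniformly equal to one. Given $v\in V$, $g_0\in G$, and $\epsilon>0$, choose $w\in\mathcal{D}$ with $\|v-w\|<\epsilon/3$ (density of $\mathcal{D}$) and a neighborhood $S\ni g_0$ with $\|U^\dagger(g)w-U^\dagger(g_0)w\|<\epsilon/3$ for $g\in S$ (by the previous step); then the triangle inequality and unitary invariance of the Hilbert-space norm give $\|U^\dagger(g)v-U^\dagger(g_0)v\|\le\|v-w\|+\|U^\dagger(g)w-U^\dagger(g_0)w\|+\|w-v\|<\epsilon$ on $S$. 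Hence $g\mapsto U^\dagger(g)$ is strongly continuous. Finally, from the identity $U(g)^{-1}-U(g_0)^{-1}=U(g)^{-1}\bigl(U(g_0)-U(g)\bigr)U(g_0)^{-1}$ and unitary invariance of the norm one sees that strong continuity of $g\mapsto U^\dagger(g)=U(g)^{-1}$ is equivalent to strong continuity of $g\mapsto U(g)$, which is the claim. (If one takes $U$ to be a homomorphism, as is the intent throughout the paper, this last step is immediate, since then $U^\dagger(g)=U(g^{-1})$ and one simply precomposes with the continuous inversion map of the Lie group $G$.)

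I do not expect a serious obstacle here: the only substantive step is the first one, namely recognizing that invariance of $\Omega$ collapses $U^\dagger(g)\mO\Omega$ to $f_U(g,\mO)\Omega$, after which the continuity hypothesis applied to the singleton $\{\mO\}$ controls the orbit of every vector in a dense subspace; the rest is routine functional analysis. The mild care required is only in recording the equivalence of strong continuity for $U$ and for $U^\dagger$ when $U$ is not assumed to be a homomorphism, which the displayed operator identity handles cleanly.
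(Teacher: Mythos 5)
Your proof is correct and follows essentially the same route as the paper's: use cyclicity to reduce to vectors $\mO\Omega$, use invariance of $\Omega$ to collapse the orbit to the conjugation action $f_U(\cdot,\mO)\Omega$ (with a singleton as the uniformly bounded set), and finish with unitarity plus a density/triangle-inequality argument. The only difference is organizational and mildly tidier: by establishing strong continuity of $U^\dagger$ first and converting via the operator identity, you avoid the paper's implicit use of $U(g)=U^\dagger(g^{-1})$ and precomposition with group inversion, which strictly speaking invokes a homomorphism property not stated in the hypotheses.
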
 
\begin{proof}
We want to show that for any $\epsilon>0$, $v_0\in V$, $\mO_0\in \BV$, we have that $U^{-1}(B_\epsilon(\mO_0,v_0))$ is open in $G$.  We do this by showing that for any $g$ such that $U(g)\in B_{\epsilon}(\mO_0,v_0)$, there is a neighborhood $S$ of $g$ in $G$ such that $U(S)$ is also contained in $B_{\epsilon}(\mO_0,v_0)$.  In other words we want
\be
||(U(g')-\mO_0)v_0||<\epsilon \qquad \forall g'\in S.
\ee
We first note that by the cyclicity of $\Omega$, we have
\be
v_0=\wt{O}\Omega+\wt{v}
\ee
for some $\wt{O}\in \AR$, with the norm of $\wt{v}$ being as small as we like.  From the triangle inequality and the invariance of $\Omega$ we then have
\begin{align}\nonumber
||(U(g')-\mO_0)v_0||\leq&||\left(U^\dagger(g'^{-1})\wt{\mO}U(g'^{-1})-U^\dagger(g^{-1})\wt{\mO}U(g^{-1})\right)\Omega||+||U(g')\wt{v}||\\
&+||U(g)\wt{v}||+||(U(g)-\mO_0)v_0||.
\end{align}
The fourth term will be less than $\epsilon$ since $U(g)$ is in $B_{\epsilon}(\mO_0,v_0)$, by cyclicity we can take $||U(g)\wt{v}||=||U(g')\wt{v}||=||\wt{v}||$ as small we like, and since $\wt{O}$ will always be part of some uniformly-bounded subset of $\AR$ the first term can be made arbitrarily small using the joint strong continuity of $f_U$ on uniformly-bounded subsets.  Therefore the sum of all three terms can be taken to be less than $\epsilon$.
\end{proof}
Thus we can be reassured that our continuity requirement in condition (b) of definition \ref{globaldef} is not too weak.  

Finally we point out that if we do have an invariant ground state which is both cyclic and separating with respect to $\AR$, then actually there is a different topology in which the situation is even nicer.  This topology is defined by noting that we can actually use the state $\Omega$ to define an inner product on $\AR$ via
\be
(\mO_1,\mO_2)_\Omega \equiv (\mO_1\Omega,\mO_2\Omega),
\ee
which gives $\AR$ the structure of a Hilbert space. Here $(\cdot,\cdot)$ is the usual Hilbert space inner product on $V$, and $(\cdot,\cdot)_\Omega$ is a good inner product on $\AR$ because $(\mO,\mO)_\Omega\geq 0$, with equality only when $\mO=0$ due to the fact that $\Omega$ is separating with respect to $\AR$.  We may then use this inner product to define an alternative topology on $\AR$, which we call the \textit{vacuum topology}, using as a basis the balls $B_\epsilon(\mO_0,\Omega)$.  Since these are a subset of the balls used in defining the strong operator topology, this topology is weaker than the strong operator topology.  We then have the following theorem:
\begin{thm}\label{Drepthm}
Let V be a Hilbert space, $G$ a Lie group, $\AR$ a subalgebra of $\BV$, and $U$ a map from $G$ to the unitary operators on $V$ such that the restriction to any uniformly bounded subset $M$ of $\AR$ of the map $f_U:G\times \BV\to\BV$ defined by $f_U(g,\mO)=U^\dagger(g)\mO U(g)$ is strongly continuous.  Moreover let there exist a state $\Omega\in V$ which is cyclic and separating with respect to $\AR$, and which is also invariant in the sense that $U(g)\Omega=\Omega$ for all $g\in G$.  Then the restriction to $\AR$ of $f_U$ is jointly continuous in vacuum topology on $\AR$, without any uniform-boundedness requirement, and in particular if $U$ is a homomorphism then $f_U$ gives a representation of $G$ on the Hilbert space $\AR$ with inner product $(\cdot,\cdot)_\Omega$.  Moreover this representation is unitary.
\end{thm}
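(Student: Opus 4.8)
The plan is to transport the whole question onto the Hilbert space $V$ via the canonical map $\iota:\AR\to V$ defined by $\iota(\mO)=\mO\Omega$. Since $\Omega$ is separating for $\AR$ this map is injective, and $(\iota\mO,\iota\mO')=(\mO\Omega,\mO'\Omega)=(\mO,\mO')_\Omega$ shows it is an isometry of the inner-product space $(\AR,(\cdot,\cdot)_\Omega)$ onto $\AR\Omega\subset V$, which is dense because $\Omega$ is cyclic; hence the completion $\widehat{\AR}$ is unitarily identified with $V$. The key identity is that, by invariance of $\Omega$,
\be
\iota\bigl(f_U(g,\mO)\bigr)=U^\dagger(g)\,\mO\,U(g)\,\Omega=U^\dagger(g)\,\mO\,\Omega=U^\dagger(g)\,\iota(\mO),
\ee
so under $\iota$ the operator $\mO\mapsto f_U(g,\mO)$ on $\AR$ is just the restriction to $\AR\Omega$ of the unitary $U^\dagger(g)=U(g)^{-1}$, which (since $f_U(g,\cdot)$ is a bijection of $\AR$ with inverse $f_U(g^{-1},\cdot)$) maps $\AR\Omega$ onto itself and so extends uniquely to the unitary $U^\dagger(g)$ on $\widehat{\AR}\cong V$. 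Implicit here is that $f_U$ sends $G\times\AR$ into $\AR$; this is the standing assumption that $U$ normalizes $\AR$, as in condition (b) of definition \ref{globaldef}. This identification alone makes unitarity of the $\AR$-action with respect to $(\cdot,\cdot)_\Omega$ manifest, and since $g\mapsto f_U(g^{-1},\cdot)$ is a homomorphism (from $f_U(g_1^{-1},\cdot)\circ f_U(g_2^{-1},\cdot)=f_U((g_1g_2)^{-1},\cdot)$), together with the joint continuity proven below it gives a continuous unitary representation of $G$ on $V\cong\widehat{\AR}$.

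It remains to prove joint continuity of $f_U$ restricted to $\AR$ in the vacuum topology, with no uniform-boundedness hypothesis. First, because $\Omega$ is cyclic for $\AR$ and $f_U$ is strongly continuous on uniformly bounded subsets, Theorem \ref{Ucontthm} applies and $U$ itself is strongly continuous on $G$; composing with the continuous inversion map of $G$, $U^\dagger$ is strongly continuous too. Now fix $\epsilon>0$ and a point $(g_0,\mO_0)\in G\times\AR$. For any $(g,\mO)$, the key identity and the triangle inequality in $V$ give
\begin{align}\nonumber
\bigl\|\bigl(f_U(g,\mO)-f_U(g_0,\mO_0)\bigr)\Omega\bigr\|
&=\bigl\|U^\dagger(g)\mO\Omega-U^\dagger(g_0)\mO_0\Omega\bigr\|\\\nonumber
&\leq\bigl\|U^\dagger(g)(\mO-\mO_0)\Omega\bigr\|+\bigl\|\bigl(U^\dagger(g)-U^\dagger(g_0)\bigr)\mO_0\Omega\bigr\|\\
&=\bigl\|(\mO-\mO_0)\Omega\bigr\|+\bigl\|\bigl(U^\dagger(g)-U^\dagger(g_0)\bigr)\mO_0\Omega\bigr\|.
\end{align}
The first term is exactly the vacuum-norm distance from $\mO$ to $\mO_0$, hence below $\epsilon/2$ as soon as $\mO\in B_{\epsilon/2}(\mO_0,\Omega)$; the second is below $\epsilon/2$ for all $g$ in a suitable neighborhood of $g_0$ by strong continuity of $U^\dagger$ at $g_0$ evaluated on the fixed vector $\mO_0\Omega$. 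Thus $f_U$ carries a neighborhood of $(g_0,\mO_0)$ into $B_\epsilon\bigl(f_U(g_0,\mO_0),\Omega\bigr)$, which is joint continuity in the vacuum topology.

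I expect essentially no serious obstacle here; the content is all in choosing the right reformulation. The one point worth emphasizing — and the reason the uniform-boundedness restriction needed in Theorem \ref{fcontthm} can now be dropped — is precisely the collapse $U^\dagger(g)\mO U(g)\,\Omega=U^\dagger(g)\mO\Omega$ using invariance of $\Omega$: here $\mO$ is only ever applied to the single fixed vector $\Omega$, never to a small-but-$g$-dependent vector, so the troublesome term $\|\mO'(U(g')-U(g))v_0\|$ of the earlier proof (where $\mO'$ could have arbitrarily large norm) is replaced by $\|(\mO-\mO_0)\Omega\|$, controlled for free by the very norm defining the topology. The only bookkeeping to keep honest is the homomorphism-versus-antihomomorphism convention in the representation statement (fixed by passing to $g\mapsto f_U(g^{-1},\cdot)$, equivalently the transposed action) and the verification that $f_U(G\times\AR)\subseteq\AR$, both noted above.
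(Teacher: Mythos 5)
Your proof is correct and is essentially the paper's argument: both invoke Theorem \ref{Ucontthm} for strong continuity of $U$ and then exploit $U(g)\Omega=\Omega$ so that $\mO$ only ever acts on the fixed vector $\Omega$, which is exactly how the paper drops the uniform-boundedness requirement (its term $\|\mO'(U(g')-U(g))v_0\|$ vanishes at $v_0=\Omega$), and the unitarity check is the same computation. Your GNS-style isometry $\iota(\mO)=\mO\Omega$ is a tidy repackaging rather than a different route; the only nit is that your justification of strong continuity of $U^\dagger$ via inversion on $G$ presupposes the homomorphism property, whereas for the joint-continuity claim one should instead note that the adjoint is strongly continuous on unitaries, e.g.\ $\|(U^\dagger(g')-U^\dagger(g))v\|=\|v-U(g')U^\dagger(g)v\|$.
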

\begin{proof}
We can first invoke theorem \ref{Ucontthm} to learn that $U$ is strongly continuous.  We may then imitate the proof of theorem \ref{fcontthm}, noting however that now we only need the inequality \eqref{fcontreq} to hold when $v_0=\Omega$.  But then the first term on the righthand side is automatically zero since $(U(g')-U(g))\Omega=0$, so we have no need of a uniform boundedness requirement.  Finally to see that the representation of $G$ on $\AR$ furnished by $f_U$ is unitary, we simply note that
\begin{align}\nonumber
(U^\dagger(g)\mO_1U(g),U^\dagger(g)\mO_2U(g))_\Omega&=(U^\dagger(g) \mO_1\Omega,U^\dagger(g)\mO_2\Omega)=(\mO_1\Omega,\mO_2\Omega)\\
&=(\mO_1,\mO_2)_\Omega.
\end{align}
\end{proof}
In particular this theorem tells us that if a global symmetry is unbroken, then the map $D$ defined by equation \eqref{Dmap} gives a unitary representation of $G$.  And in particular if $G$ is compact, then by theorem \ref{infinitedecthm} $D$ should decompose into a direct sum of finite-dimensional unitary representations.  Moreover not only did we not need a uniform-boundedness requirement in the proof of theorem \ref{Drepthm}, in fact we did not even need to assume that the elements of $\AR$ are bounded!  As long as we restrict to operators whose domain includes the invariant state $\Omega$, we still may use $\Omega$ to define an inner product on these operators in terms of which the action of $f_U$ is unitary and continuous, and thus gives a unitary representation.  

It is interesting to note that if we drop the assumption that the symmetry is unbroken, there are easy examples where the action $f_U$ of $G$ on local operators is not unitary. For example in a free scalar field theory in $d>2$, there is a spontaneously-broken global symmetry which acts on the scalar $\phi$ and the identity $1$ as
\be\label{shiftphi}
\begin{pmatrix}
\phi' \\ 1'
\end{pmatrix}
=\begin{pmatrix} 1 & a \\ 0 & 1\end{pmatrix}\begin{pmatrix}\phi \\ 1\end{pmatrix},
\ee
which is a non-unitary representation of the symmetry group $\mathbb{R}$.  In this kind of situation it is sometimes said that the symmetry ``acts non-linearly'' on $\phi$, but in fact $f_U$ always gives a linear action of $G$ on the set of local operators, and this is manifest in \eqref{shiftphi}.

\section{Building symmetry insertions on general closed submanifolds} \label{closedsubapp}

Consider a $(d-1)$-dimensional compact connected oriented manifold $\Sigma$ embedded in $\mathbb{R}^d$.  Since $H_{d-1}(\mathbb{R}^d)$ is trivial, there is a $d$-dimensional compact connected oriented submanifold $M$ in $\mathbb{R}^d$ such that $\Sigma = \partial M$. In this appendix we show that the insertion of a symmetry operator on $\Sigma$ into the path integral can always be understood in operator language as conjugating all operators in $M$ by $U(g,\mathbb{R}^{d-1})$, as shown in figure \ref{gluingfig} for the special case of $d=3$ and $\Sigma=\mathbb{T}^2$.  

\bfig
\includegraphics[height=5cm]{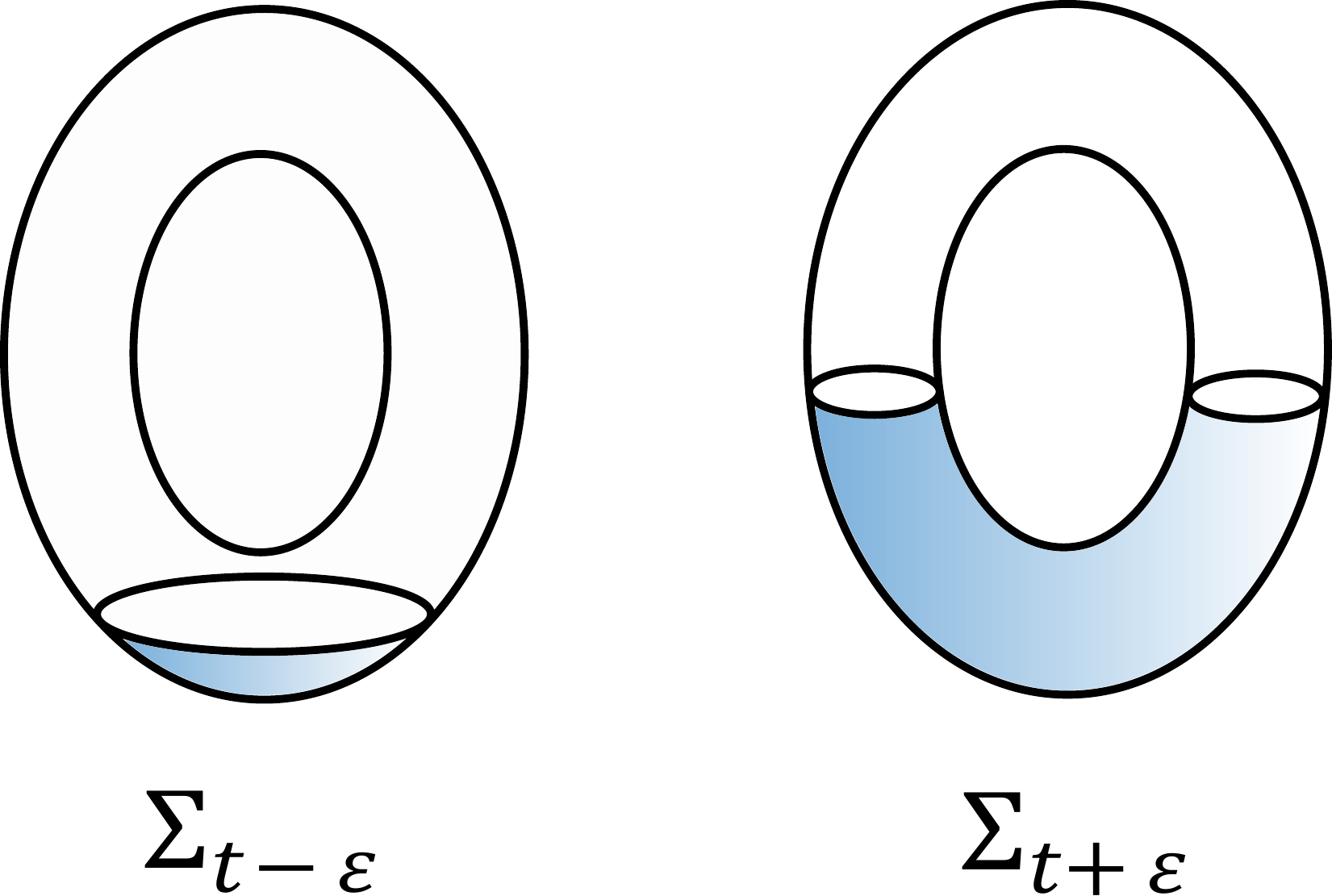}
\caption{Illustrations of $
\Sigma_t = f^{-1}((-\infty, t])=
\{ x \in \Sigma: ~ f(x) \leq t \} $ at two different values of
$t$ when $\Sigma$ is a torus.}\label{morseone}
\efig
Indeed by generically choosing a ``time'' direction in 
$\mathbb{R}^d$, with a linear coordinate $t$, we can define a Morse function $f$ on $\Sigma$ such that $f(p)=t$ at $p \in \Sigma$ (a Morse function is a smooth map from a manifold $\Sigma$ to $\mathbb{R}$ which has no degenerate critical points; such functions are dense in the set of smooth maps from $\Sigma$ to $\mathbb{R}$, so a generic orientation of the time direction will give us one).
For each $t$, define,
\be 
\Sigma_t = f^{-1}((-\infty, t]) = \{ p \in \Sigma: ~ f(p) \leq t \} .
\ee
See figure \ref{morseone} for its illustration. 
We also define,
\be
\overline{M}_t = \mathbb{R}^{d-1}_t \backslash M_t,
\ee
where
$\mathbb{R}^{d-1}_t$ and $M_t$ are sections of $\mathbb{R}^d$
and $M$ at $t$. Let us glue $\Sigma_t$ with $\overline{M}_t$
at their common boundaries $f^{-1}(t)$, to get a surface we call $C_t$.
In the following, we will use Morse theory to study how 
$U(g, {\cal C}_t)$ behaves as we increase
$t$ from $-\infty$ to $+ \infty$.

The Morse function $f$ has isolated non-degenerate
critical points on $\Sigma$. 
The fundamental theorems (Theorems 3.1 and 3.2 in \cite{Milner})
 of the Morse theory say:

\begin{thm}\label{morsetheoremone}
Suppose $t_1 < t_2$
and $f^{-1}([t_1, t_2])$ is compact and contains no critical points of $f$.
Then $\Sigma_{t_1}$ is diffeomorphic
to $\Sigma_{t_2}$ and the inclusion map
$\Sigma_{t_1} \rightarrow \Sigma_{t_2}$ is a homotopy equivalence.
\end{thm}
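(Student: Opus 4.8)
The plan is to reproduce the classical Morse-theoretic argument (Theorem 3.1 of \cite{Milner}) by flowing along a suitably rescaled gradient vector field. First I would fix an auxiliary Riemannian metric on $\Sigma$, so that $\nabla f$ is defined; by hypothesis it is nowhere zero on the compact set $f^{-1}([t_1,t_2])$, hence also on some open neighborhood $N$ of that set containing no critical points of $f$. I would then choose a smooth function $\rho\geq 0$ on $\Sigma$, supported in a compact subset of $N$, equal to $1/|\nabla f|^2$ on $f^{-1}([t_1,t_2])$, and arranged so that $0\leq \rho\,|\nabla f|^2\leq 1$ everywhere (this is possible because near the critical points outside $N$, where $1/|\nabla f|^2$ blows up, the upper bound is automatically available). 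Setting $X\equiv \rho\,\nabla f$, and using that $\Sigma$ is compact, the flow $\varphi_s$ of $X$ exists for all $s\in\mathbb{R}$, which already disposes of the usual completeness check.

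The key computation is $df(X)=\rho\,|\nabla f|^2$, which shows that along any trajectory the function $f\circ\varphi_s$ is non-decreasing, satisfies $f(\varphi_s(q))\leq f(q)+s$ for $s\geq 0$, and increases at exactly unit rate whenever the trajectory lies in $f^{-1}([t_1,t_2])$. From the first two facts I would deduce $\varphi_{t_2-t_1}(\Sigma_{t_1})\subseteq \Sigma_{t_2}$, and from monotonicity together with the unit-rate property I would deduce $\varphi_{t_1-t_2}(\Sigma_{t_2})\subseteq \Sigma_{t_1}$ (a point with $f$-value in $(t_1,t_2]$ is carried down to $f^{-1}(t_1)$ in backward time at most $t_2-t_1$, and then $f$ only decreases further). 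Applying $\varphi_{t_2-t_1}$ to the second inclusion and combining with the first gives $\varphi_{t_2-t_1}(\Sigma_{t_1})=\Sigma_{t_2}$, so $\varphi_{t_2-t_1}$ restricts to a diffeomorphism $\Sigma_{t_1}\to\Sigma_{t_2}$.

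For the homotopy statement I would write down an explicit deformation retraction $r_s:\Sigma_{t_2}\to\Sigma_{t_2}$, $s\in[0,1]$, by setting $r_s$ equal to the identity on $f^{-1}((-\infty,t_1])$ and $r_s(q)=\varphi_{s(t_1-f(q))}(q)$ when $t_1\leq f(q)\leq t_2$; continuity across $f(q)=t_1$ is immediate, $r_0$ is the identity, and $r_1$ maps $\Sigma_{t_2}$ onto $\Sigma_{t_1}$ (the unit-rate property guarantees the trajectory lands exactly on $f^{-1}(t_1)$) while fixing $\Sigma_{t_1}$ pointwise, so the inclusion $\Sigma_{t_1}\hookrightarrow\Sigma_{t_2}$ is a homotopy equivalence. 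The only place any care is really required is the choice of the cutoff $\rho$ — in particular the normalization forcing $0\leq df(X)\leq 1$ globally with equality on $f^{-1}([t_1,t_2])$ — since every inclusion above rests on it; the rest is routine. I do not expect the ambient picture $\Sigma=\partial M\subset\mathbb{R}^d$ to introduce anything new, since the statement is intrinsic to $\Sigma$.
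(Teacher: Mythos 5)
Your proof is correct, and it is essentially the argument the paper relies on: the paper does not prove this statement itself but quotes it as Theorem 3.1 of Milnor's \emph{Morse Theory}, whose proof is exactly your construction of the rescaled gradient field $X=\rho\,\nabla f$ with $df(X)=1$ on $f^{-1}([t_1,t_2])$, the time-$(t_2-t_1)$ flow giving the diffeomorphism, and the flow-to-the-level-set deformation retraction giving the homotopy equivalence. The only cosmetic remark is that the normalization $0\leq\rho\,|\nabla f|^2\leq 1$ is most cleanly obtained by setting $\rho=\chi/|\nabla f|^2$ with $\chi$ a bump function equal to $1$ on $f^{-1}([t_1,t_2])$ and supported in a compact subset of $N$, which makes your parenthetical about critical points unnecessary.
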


\bfig
\includegraphics[height=5cm]{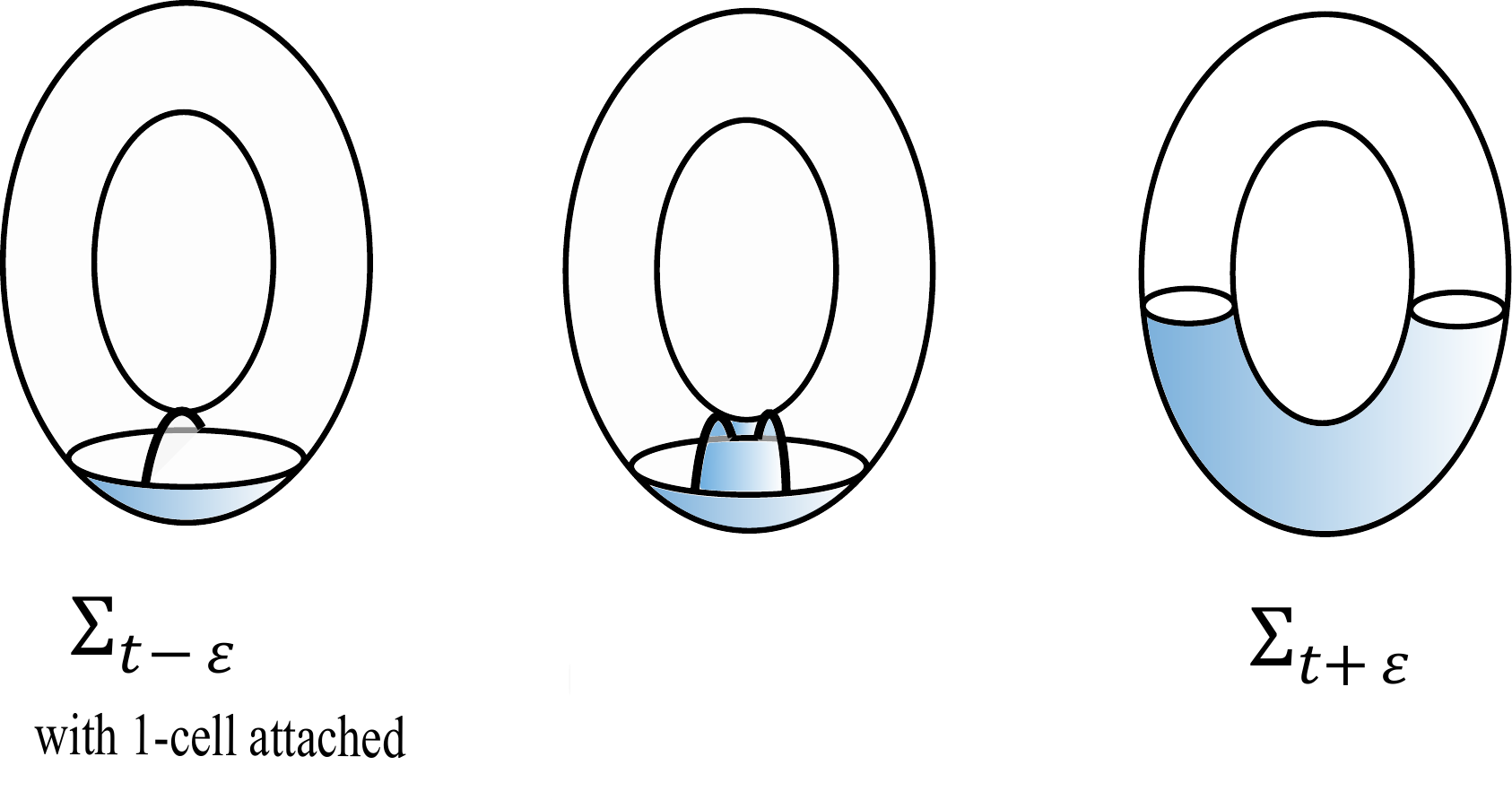}
\caption{When $p$ is a critial point of $f$ with index $n$,
$\Sigma_{f(p)+\epsilon}$
is homotopic to $\Sigma_{f(p) - \epsilon}$
with an $n$-cell attached, provided we choose $\epsilon>0$ to be sufficiently
small.}\label{morsetwo}
\efig
The second fundamental theorem tells us what happens at 
critical points. Before stating the theorem, let us note that
according to Morse's lemma, each
critical point $p$ of $f$ is characterized 
by its index $n$, which means that we can choose coordinates
$(x_1, \cdots , x_{d-1})$ around $p$ such that $p$ is at $x=0$ and,
\be 
f (x) = f(p) - x_1^2 - \cdots - x^2_n + x_{n+1}^2
+ \cdots + x_{d-1}^2,
\ee
holds throughout the coordinate patch (these coordinates are obtained by diagonalizing the Hessian matrix at $p$).
We can choose $\epsilon> 0$ sufficiently small
so that $f$ has no other critical point in $[t- \epsilon,
t+ \epsilon]$, where $t=f(p)$.

\begin{thm}\label{morsetheoremtwo}
If $p$ is a critical point of $f$ with $f(p)=t$ and index $n$, 
and if there is no other critical point in $f^{-1}([t-\epsilon, t+\epsilon])$ for
some $\epsilon > 0$, 
 $\Sigma_{t+\epsilon}$
is homotopic to $\Sigma_{t - \epsilon}$
with an $n$-cell attached.\footnote{See appendix \ref{stabilizerapp} for a brief discussion of CW complexes and the definition of an $n$-cell.} See figure \ref{morsetwo}  for illustration. 
\end{thm}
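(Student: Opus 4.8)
The plan is to follow the classical argument of Milnor (Theorem 3.2 of \cite{Milner}), adapting the notation to our setting where $\Sigma$ has dimension $d-1$ and we write $c \equiv f(p)$ for the critical value. By the Morse lemma quoted just above the theorem, I would first choose coordinates $(x_1, \ldots, x_{d-1})$ in a neighborhood of $p$ in which $f = c - \xi + \eta$, where $\xi \equiv x_1^2 + \cdots + x_n^2$ and $\eta \equiv x_{n+1}^2 + \cdots + x_{d-1}^2$. Shrinking $\epsilon$ if necessary, I may assume this chart contains the closed ball $\{\xi + \eta \leq 2\epsilon\}$ and that, as hypothesized, $f$ has no critical point other than $p$ in $f^{-1}([c-\epsilon, c+\epsilon])$. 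I then define the attached $n$-cell to be
\[ e^n \equiv \{\, \xi \leq \epsilon,\ \eta = 0 \,\}, \]
whose boundary sphere $\{\xi = \epsilon,\ \eta = 0\}$ lies in $\Sigma_{c-\epsilon}$ since there $f = c - \epsilon$. The goal is to show that $\Sigma_{c-\epsilon} \cup e^n$ is a deformation retract of $\Sigma_{c+\epsilon}$, which immediately yields the claimed homotopy equivalence.

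The key device is an auxiliary function that ``pushes the critical point down.'' I would pick a smooth $\mu: [0,\infty) \to \mathbb{R}$ with $\mu(0) > \epsilon$, with $\mu(r) = 0$ for $r \geq 2\epsilon$, and with $-1 < \mu'(r) \leq 0$ everywhere, and set $F \equiv f - \mu(\xi + 2\eta)$ inside the chart and $F \equiv f$ outside. Since $\mu$ vanishes for $\xi + 2\eta \geq 2\epsilon$, $F$ is smooth and agrees with $f$ away from the ball. Three properties must then be verified by direct computation in the $(\xi,\eta)$ coordinates. First, $F^{-1}((-\infty, c+\epsilon]) = \Sigma_{c+\epsilon}$: this follows because $F \leq f$, while inside the region $\{\xi + 2\eta \leq 2\epsilon\}$ one has $f = c - \xi + \eta \leq c + \eta \leq c + \tfrac{1}{2}(\xi + 2\eta) \leq c + \epsilon$, so the two sublevel sets coincide. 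Second, $F$ has the same unique critical point $p$ in the band: the derivatives $\partial F/\partial x_i$ equal $2 x_i\,\partial F/\partial \xi$ for $i \leq n$ and $2 x_i\,\partial F/\partial \eta$ for $i > n$, and since $\partial F/\partial \xi = -1 - \mu' < 0$ and $\partial F/\partial \eta = 1 - 2\mu' > 0$, they vanish simultaneously only at $x = 0$. Third, the critical value is lowered below the band, $F(p) = c - \mu(0) < c - \epsilon$, so $F$ has no critical value in $[c-\epsilon, c+\epsilon]$.

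With these in hand I would apply the first fundamental theorem (theorem \ref{morsetheoremone}) to $F$ on the critical-point-free band $F^{-1}([c-\epsilon, c+\epsilon])$, concluding that $\Sigma_{c+\epsilon} = F^{-1}((-\infty, c+\epsilon])$ deformation retracts onto $F^{-1}((-\infty, c-\epsilon])$. It then remains to identify $F^{-1}((-\infty, c-\epsilon])$ as $\Sigma_{c-\epsilon}$ together with a ``handle'' region $H \equiv F^{-1}((-\infty, c-\epsilon]) \setminus \mathring{\Sigma}_{c-\epsilon}$, and to deformation retract $\Sigma_{c-\epsilon} \cup H$ onto $\Sigma_{c-\epsilon} \cup e^n$. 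This final retraction is constructed explicitly in three coordinate regions (inside the cell where $\eta = 0$, an intermediate annular region, and the region where $F = f$), pushing the $\eta$-directions to zero while leaving $\xi$ fixed and matching continuously across the region boundaries; I would transcribe exactly the piecewise map of \cite{Milner} and check continuity and the retraction property at the seams.

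I expect the main obstacle to be the bookkeeping in this last explicit deformation retraction rather than any conceptual difficulty: one must verify simultaneously that the interpolating map is well-defined and continuous where the three pieces meet, that it fixes $\Sigma_{c-\epsilon} \cup e^n$ pointwise throughout the homotopy, and that it carries $H$ into $\Sigma_{c-\epsilon} \cup e^n$ without ever leaving $F^{-1}((-\infty, c-\epsilon])$. The verification of the first property of $F$ also requires the small sign-sensitive case analysis in the $(\xi,\eta)$ plane indicated above, so I would take care there. Everything else — the Morse normal form, the existence of $\mu$, and the invocation of theorem \ref{morsetheoremone} — is either quoted or routine.
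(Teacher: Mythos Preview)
Your proposal is correct and faithfully reproduces Milnor's classical argument. Note, however, that the paper does not actually prove this theorem: it simply quotes it as one of the two ``fundamental theorems of the Morse theory'' from \cite{Milner} (Theorems 3.1 and 3.2 there) and then uses it as a black box in the subsequent construction of the symmetry insertion. Since you have explicitly followed the proof of Theorem 3.2 in \cite{Milner}, your sketch is precisely the argument the paper is citing, so there is nothing to compare.
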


Since $\Sigma$ is compact,
there is $t_0$ such that $\Sigma_t$
is empty for  $t < t_0$. For such $t$, 
${\cal C}_t = \mathbb{R}^{d-1}_t$ and
$U(g, {\cal C}_t)$ is the symmetry generator. Let us choose $t_0$ to be
the largest such $t_0$.
Increasing $t$ continuously, we reach $t=t_0$
where $\mathbb{R}_{t_0}^{d-1}$ touches $\Sigma$.
Clearly, $\Sigma_{t_0+\epsilon}$ is homotopic to 
$\Sigma_{t_0 - \epsilon}$ (which is empty) with a $0$-cell (the 
point of the first contact) attached, as expected from Theorem \ref{morsetheoremtwo}. 
We can then continously deform
${\cal C}_{t_0 - \epsilon} =\mathbb{R}^{d-1}_{t-\epsilon}$ to 
${\cal C}_{t_0 + \epsilon}$ and 
$U(g, {\cal C}_{t_0 + \epsilon})$ is still a symmetry
generator.

\bfig
\includegraphics[height=5cm]{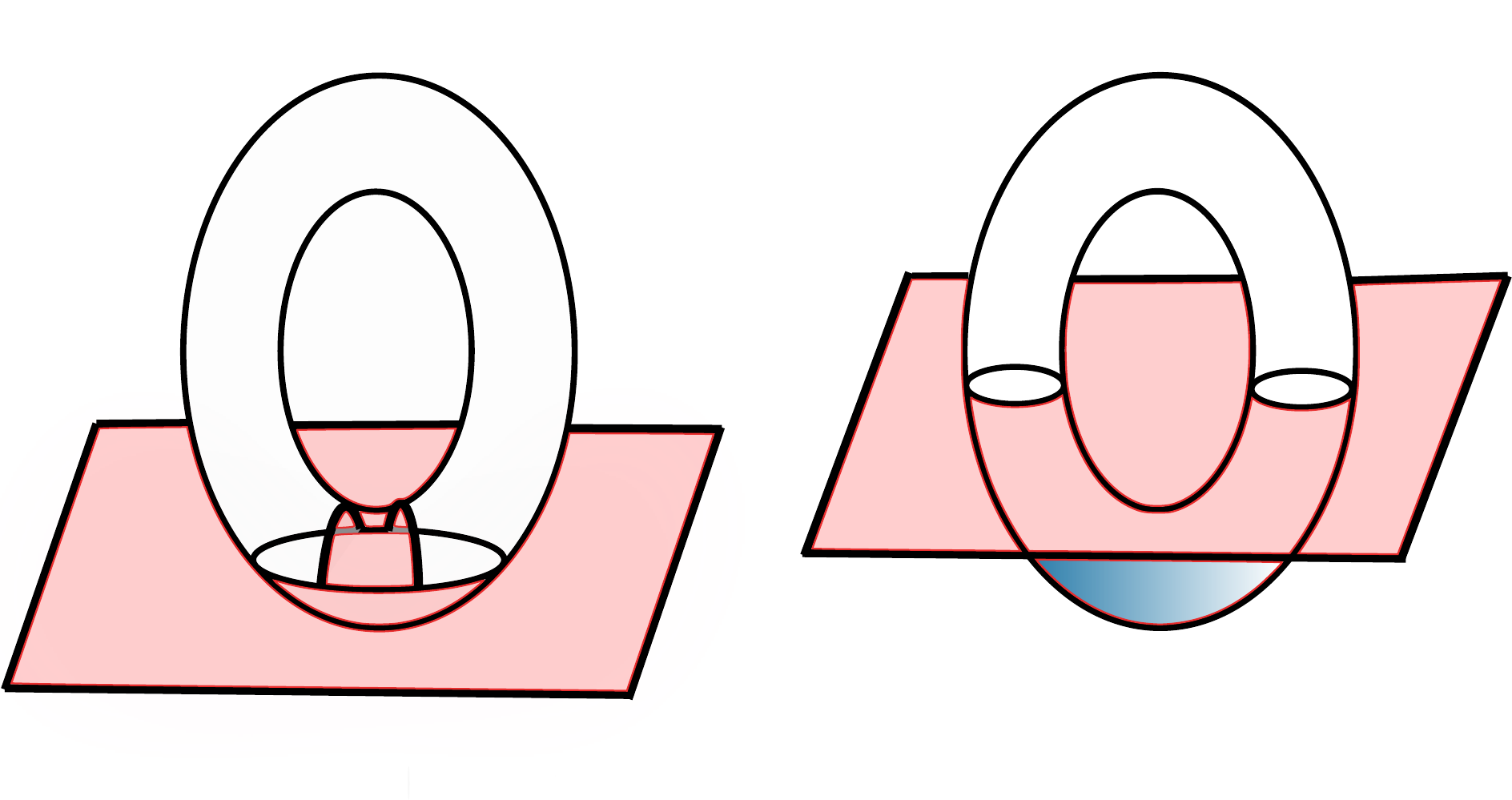}
\caption{
Since symmetry insertions on the same cell with
opposite orientations (in red and blue in the figure) cancel, $U(g, {\cal C}_{t-\epsilon})$
can be continuously deformed to $U(g, {\cal C}_{t+\epsilon})$.}\label{morsethree}
\efig
As we increase $t$ further, we will inevitably encounter
a critical point with non-zero index $n$ at some $t$.
According to Theorem \ref{morsetheoremtwo}, we can homotopically
deform $\Sigma_{t - \epsilon}$ to $\Sigma_{t + \epsilon}$ 
by attaching an $n$-cell. We can also deform 
$\overline{M}_{t-\epsilon}$ to $\overline{M}_{t+\epsilon}$
by attaching the same $n$-cell with opposite orientation. 
Since symmetry insertions on the pair of $n$-cells  with
opposite orientations has no effect, $U(g, {\cal C}_{t-\epsilon})$
can be continuously deformed to $U(g, {\cal C}_{t+\epsilon})$.
See  figure \ref{morsethree}  for illustration.

Since $\Sigma$ is compact, there is $t_1$ suth that 
$\Sigma_{t} = \Sigma$ for  $t > t_1$.
Choosing $t_1$ to be 
the smallest such $t_1$,
${\cal C}_{t_1} = \Sigma \cup \mathbb{R}^{d-1}_{t_1}$. 

We conclude
that the symmetry generator $U(g, {\cal C}_t) = U(g, \mathbb{R}^{d-1}_t)$ for $t < t_0$
can be deformed to $U(g, {\cal C}_{t_1}) =U(g, \Sigma \cup \mathbb{R}^{d-1}_{t_1})$
at  $t = t_1$. Since
$U(g, \Sigma) = U(g, \Sigma \cup \mathbb{R}^{d-1}_{t_1}) U(g, \mathbb{R}^{d-1}_{t_1})^\dagger$, this is what we wanted to show.

\section{Lattice splittability theorem}\label{splitapp}
In this appendix we give a proof of theorem \ref{latticethm}, which says that a unitary which acts locally on each tensor factor of a tensor product Hilbert space must itself be a tensor product of local unitaries.  \footnote{This proof uses a few basic facts about purifications.  These follow easily from the Schmidt decomposition of any pure state in a bipartite system, which says that for any state $|\psi\ran_{AB}$, there are orthonormal states $|a\ran_A$ and $|a\ran_B$ such that $|\psi\ran=\sum_a \sqrt{p_a} |a\ran_A|a\ran_B$, with $0\leq p_a\leq 1$ and $\sum_a p_a=1$.  For a brief overview of the Schmidt decomposition see, eg, appendix C of \cite{Harlow:2014yka}.} 

\begin{proof}
We first note that it is enough to establish the theorem for the case of two tensor factors, $\mathcal{H}=\HA\otimes \HB$, with a unitary $U_{AB}$ which send operators on $A$ to operators on $A$, and operators on $B$ to operators on $B$, since we can then iterate the argument to obtain the desired result for any finite number of tensor factors.  We thus just need to show that $U_{AB}=(U_A\otimes I_B)(I_A\otimes U_B)$.  

The basic idea is to double the size of the system, introducing copies $\mathcal{H}_{\hat{A}}$ and $\mathcal{H}_{\hat{B}}$ of $\HA$ and $\HB$, and then consider the state
\be\label{phieq}
|\phi\ran\equiv \frac{1}{\sqrt{|A||B|}}\sum_{ab}|a\ran_{\hat{A}}|b\ran_{\hat{B}}U_{AB}|ab\ran_{AB}.
\ee
Here $|a\ran_A$, $|a\ran_{\hat{A}}$ are orthonormal bases for $\HA$ and $\mathcal{H}_{\hat{A}}$, and similarly for $|b\ran_B$, $|b\ran_{\hat{B}}$.  Noting that $U_{AB}^\dagger(I_A\otimes\mO_B)U_{AB}=(I_A\otimes\mO'_B)$ for any $\mO_B$, and that any operator $\mO_{B\hat{B}}$ can be expanded as a sum of tensor products of operators on $\HB$ and $\mathcal{H}_{\hat{B}}$, a simple calculation shows that for any operators $\mO_{\hat{A}}$ and $\mO_{B\hat{B}}$ on $\mathcal{H}_{\hat{A}}$ and $\HB\otimes \mathcal{H}_{\hat{B}}$ respectively, we must have
\begin{align}\nonumber
\lan\phi|\mO_{\hat{A}}\mO_{B\hat{B}}|\phi\ran&=\lan\phi|U_{AB}\mO'_A\mO'_{B\hat{B}}U_{AB}^\dagger|\phi\ran\\\nonumber
&=\lan\phi|U_{AB}\mO_{\hat{A}}U_{AB}^\dagger|\phi\ran\lan\phi|U_{AB}\mO'_{B\hat{B}}U_{AB}^\dagger|\phi\ran\\
&=\lan\phi|\mO_{\hat{A}}|\phi\ran\lan\phi|\mO_{B\hat{B}}|\phi\ran.
\end{align}
In other words there is no correlation between $\hat{A}$ and $B\hat{B}$, so the partial trace of $|\phi\ran\lan\phi|$ over $A$ factorizes:
\be
\rho_{\hat{A}B\hat{B}}(\phi)\equiv \Tr_{A}|\phi\ran\lan\phi|=\rho_{\hat{A}}(\phi)\otimes\rho_{B\hat{B}}(\phi).
\ee
Moreover from \eqref{phieq} we have
\be
\rho_{\hat{A}}(\phi)=\frac{I_{\hat{A}}}{|A|},
\ee
where $|A|$ denotes the dimensionality of $\HA$.  

Now the key point is that the state $\rho_{\hat{A}B\hat{B}}(\phi)$ must be purified into $|\phi\ran$ by adding back the $A$ system, which means that its rank can be at most $|A|$.  But since the rank of $\rho_{\hat{A}}(\phi)$ is already $|A|$, this means that $\rho_{B\hat{B}}(\phi)$ must have unit rank, or in other words must be a pure state $|\chi\ran\lan\chi|_{B\hat{B}}$.  We may then observe that since any two purifications of a mixed state onto a given system differ at most by a unitary transformation on that system, and since the state
\be
|\psi\ran=\frac{1}{\sqrt{|A|}}\sum_{a}|a\ran_{\hat{A}}|a\ran_A|\chi\ran_{B\hat{B}}
\ee
is a purification of $\hat{A}B\hat{B}$ onto $A$, it must be that $|\phi\ran$, which is another such purification, is given by
\be
|\phi\ran=U_A|\psi\ran=\frac{1}{\sqrt{|A|}}\sum_{a}|a\ran_{\hat{A}}U_A|a\ran_A |\chi\ran_{B\hat{B}}
\ee
for some $U_A$. Moreover since again from \eqref{phieq} we have $\rho_{\hat{B}}(\phi)=\frac{I_{\hat{B}}}{|B|}$, by the same argument we must have
\be
|\chi\ran_{B\hat{B}}=\frac{1}{\sqrt{|B|}}\sum_b |b\ran_{\hat{B}}U_B|b\ran_B
\ee
for some $U_B$.  We then finally have that
\be
|\phi\ran=\frac{1}{\sqrt{|A||B|}}\sum_{ab}|a\ran_{\hat{A}}|b\ran_{\hat{B}}U_A|a\ran_A U_B|b\ran_B,
\ee
which is compatible with \eqref{phieq} if only if $U_{AB}=U_A\otimes U_B$.  
\end{proof}

\section{Hamiltonian for lattice gauge theory with discrete gauge group}\label{gaugeapp}
In this appendix we sketch how to derive the lattice gauge theory Hamiltonians \eqref{KSH}, \eqref{discreteham} from the continuous-time limit of the Wilson action.  The Euclidean Wilson action on a spacetime cubic lattice with lattice spacing $a$ is \cite{Wilson:1974sk}
\be
S_E=-\frac{a^{d-4}}{g^2}\sum_{\gamma\in \hat{\Gamma}} W_\alpha(\gamma),
\ee
where $\hat{\Gamma}$ is the set of (oriented) plaquettes in Euclidean spacetime and $\alpha$ is a faithful representation of $G$.  This action makes sense for any gauge group $G$, discrete or continuous.  To extract a Hamiltonian, we need to take the lattice spacing in the time direction, which we'll denote as $a_0$, to be much smaller than the lattice spacing in the space directions, which we'll continue to call $a$.  In this case the Wilson action becomes
\begin{align}\nonumber
S_E&=-\frac{a^{d-4}}{g^2}\left(\frac{a}{a_0}\sum_{\gamma\in \hat{\Gamma}_0}W_\alpha(\gamma)+\frac{a_0}{a}\sum_{\gamma\in\hat{\Gamma}_s}W_{\alpha}(\gamma)\right)\\
&\equiv -A \sum_{\gamma\in \hat{\Gamma}_0}W_\alpha(\gamma)-B\sum_{\gamma\in\hat{\Gamma}_s}W_{\alpha}(\gamma),
\end{align}
where $\hat{\Gamma}_0$ denotes the set of plaquettes which have a time component and $\hat{\Gamma}_s$ denotes the set of plaquettes with no time component.

We now study the thermal partition function 
\be
Z(\beta)\equiv\int \mathcal{D}g e^{-S_E},
\ee
where we are integrating over an element of $g$ assigned to each edge of a cubic Euclidean spacetime lattice with periodic time.  We can use gauge transformations to set the temporal edges all to the identity except for at one time, and the integral over the temporal edges at that time simply imposes a projection onto gauge-invariant states.  The thermal partition function then has the form \cite{Creutz:1976ch}
\be
Z(\beta)=\Tr(T^N),
\ee
where the trace is over only gauge-invariant states and $T$ is called the transfer matrix; it is given by
\be
\lan g'|T|g\ran=\exp\left(A \sum_{e\in E}\Tr\Big(D_{\alpha}(g_eg_e^{\prime -1})+D_\alpha(g_e'g_e^{-1})\Big)+B\sum_{\gamma\in\Gamma}W_\alpha(\gamma)\right).
\ee
Here $|g\ran$ and $|g'\ran$ are elements of gauge-field part of the Hilbert space \eqref{bigH}.  As in the main text, $E$ denotes the set of edges in a time slice and $\Gamma$ denotes the set of plaquettes in a timeslice.  Note that $\Gamma$ is \textit{not} equal to $\hat{\Gamma}_s$, which is the set of spatial plaquettes at all times.  We may re-express $T$ using our lattice gauge theory operators:
\be
T=\prod_{e\in E}\left(\int dh e^{A\Tr\left(D_\alpha(h)+D_\alpha(h^{-1})\right)}L_h(e)\right)e^{B\sum_{\gamma\in\Gamma}W_\alpha(\gamma)},
\ee
where we have written $L_h(e)$ instead of $L_h(\ell)$ since this  expression does not care which way we orient the link $\ell$ on edge $e$. Finally to extract the Hamiltonian we take the limit $a_0\to 0$, identifying the Hamiltonian via 
\be
T=e^{-a_0 H}.
\ee 
To proceed, we now need to decide if $G$ is continuous or discrete.  If it is continuous, in the limit $a_0$ the integral over $h$ will be dominated by the region near the identity.  We may then use a Gaussian approximation to evaluate it, which directly  leads to the Kogut-Susskind Hamiltonian \eqref{KSH} up to an additive $c$-number renormalization \cite{Creutz:1976ch}.  When $G$ is discrete things are a little more subtle, to obtain an interesting theory we need to forget the expressions for $A$ and $B$ in terms of $a$, $a_0$, and $g$, which after all came from trying to reproduce the Yang-Mills action in the continuum, and instead simply view $A$ and $B$ as parameters to vary as we like.  For $G$ continuous we took $A$ to infinity and $B$ to zero such that their product was finite, but for $G$ discrete the right limit is instead to take $A$ to infinity and $B$ to zero such that $Be^A$ is finite:  it is only in this limit that (after another $c$-number renormalization) we have that $T\approx 1-\epsilon H$ with $\epsilon$ small and $H$ a Hamiltonian with both ``electric'' and ``magnetic'' terms \cite{Fradkin:1978th}.  In this limit the identity contribution to the sum over $h$ is set to one by the $c$-number renormalization, which replaces $\Tr\left(D_\alpha(h)+D_\alpha(h^{-1})\right)$ by $\Tr\left(D_\alpha(h)+D_\alpha(h^{-1})\right)-2d_\alpha$ for each edge, and the other terms in the sum over $h$ which survive in the continuous-time limit are those which maximize $\Tr\left(D_\alpha(h)+D_\alpha(h^{-1})\right)$. This finally leads to the Hamiltonian \eqref{discreteham}, with the normalization of the new gauge coupling $g$ being chosen in a somewhat arbitrary manner.

\section{Stabilizer formalism for the \texorpdfstring{$\mathbb{Z}_2$}{Z2} gauge theory}\label{stabilizerapp}
The stabilizer formalism is a useful technique for defining nontrivial subspaces of the Hilbert space of $n$ qubits \cite{Gottesman:1997zz}.  In this appendix we explain how it may be used to compute the ground state degeneracy of the $\mathbb{Z}_2$ lattice gauge theory with charged matter in the limit of small $g$ and large $\lambda$, with Hamiltonian \eqref{toric}.  In fact in these ground states the charges are never excited, so our result also gives the ground state degeneracy of the pure $\mathbb{Z}_2$ gauge theory, which is one of the simplest topological quantum field theories.   In the main text we are primarily interested in cubic lattices which discretize the $d-1$-dimensional ball $B^{d-1}$, but, mostly for fun, we will use a few tools from algebraic topology to compute the ground state degeneracy for any spatial lattice with the structure of a $d-1$-dimensional CW complex.\footnote{CW complexes are discrete versions of manifolds, which are constructed recursively by starting with a collection of points, called zero-cells, attaching a set of intervals, called one-cells, such that the boundary of each one-cell consists of some subset of zero-cells, attaching a set of discs, called two-cells, such that the boundary of each two-cell consists of the zero-cells and one-cells, and so on up to $(d-1)$-cells if the complex is $(d-1)$-dimensional \cite{hatcher2002algebraic}. In our lattice gauge theory parlance, the zero-cells are the sites, the one-cells which are not in the boundary are the edges, and the two-cells are the plaquettes.}  In the continuum limit, this will give a formula for the Hilbert space dimension of the $\mathbb{Z}_2$ gauge theory on any spatial $d-1$-manifold, with or without boundary.  In particular we will show that the Hamiltonian \eqref{toric} has a unique ground state on any lattice whose CW complex is homeomorphic to $B^{d-1}$, on which the operators  $Z(\gamma)$ and $\prod_{\vec{\delta}}X(\vec{x},\vec{\delta})$ act as the identity for any plaquette $\gamma$ and site $\vec{x}$, while more generally the ground state degeneracies for any connected CW complex (or connected manifold) are given by \eqref{noBdeg} if there is no boundary and \eqref{Bdeg} if there is a boundary.

The basic idea of the stabilizer formalism is to consider the $+1$ eigenspace of an abelian subgroup $S$ of the $n$-qubit Pauli group $P_n$.  $P_n$ is the multiplicative group of operators on the Hilbert space of $n$ qubits which is generated by all single-qubit Pauli operators together with $i I$, where $I$ is the identity operator and $i=\sqrt{-1}$.  The stabilizer formalism then rests on the following theorem:
\begin{thm}\label{stabilizerthm}
Let $S$ be a abelian subgroup of $P_n$, not containing $-I$, which is generated by $m$ independent generators $\{g_1, \ldots, g_m\}$.  Then the subspace of states on which all elements of $S$ act as the identity has dimension $2^{n-m}$.
\end{thm}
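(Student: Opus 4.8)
The plan is to proceed by induction on the number of generators $m$, peeling off one stabilizer generator at a time. First I would handle the base case $m=0$: the only element of $S$ is the identity, there is no constraint, and the stabilized subspace is all of $(\mathbb{C}^2)^{\otimes n}$, which has dimension $2^n = 2^{n-0}$. For the inductive step, suppose the claim holds for all abelian subgroups of $P_n$ with $m-1$ independent generators (and not containing $-I$), and let $S = \langle g_1,\ldots,g_m\rangle$. The key observation is that since $g_m$ is a Pauli operator with $g_m^2 = I$ (it cannot square to $-I$ because $-I\notin S$), it is Hermitian and unitary, so it has eigenvalues $\pm 1$ only, and the projector onto its $+1$ eigenspace is $P_+ = \tfrac{1}{2}(I + g_m)$. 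I would then show that $P_+$ has rank exactly $2^{n-1}$: because $g_m$ is a nontrivial element of the Pauli group, it is conjugate under a Clifford unitary to a single-qubit $Z$ operator (or one can argue directly that $\operatorname{tr}(g_m)=0$ for any Pauli string other than $\pm I$, hence the $+1$ and $-1$ eigenspaces have equal dimension $2^{n-1}$).

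The second ingredient is to reduce the remaining constraints to the stabilized subspace of $g_m$. Write $V_+$ for the $+1$ eigenspace of $g_m$, so $\dim V_+ = 2^{n-1}$. Since $S$ is abelian, each $g_i$ commutes with $g_m$ and hence preserves $V_+$; restricting $g_1,\ldots,g_{m-1}$ to $V_+$ gives commuting involutions on a space of dimension $2^{n-1}$. The stabilized subspace of $S$ inside the full Hilbert space is precisely the joint $+1$ eigenspace of $g_1|_{V_+},\ldots,g_{m-1}|_{V_+}$. To invoke the inductive hypothesis I need these restrictions to be ``independent'' in the appropriate sense and to generate a group not containing $-\mathrm{Id}_{V_+}$. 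The latter is immediate: if some product of $g_1,\ldots,g_{m-1}$ restricted to $V_+$ equalled $-\mathrm{Id}_{V_+}$, then that product times $g_m$ would act as $-I$ on $V_+$ and as something on $V_-$, but a Pauli operator acts as a scalar on a subspace of dimension $>1$ only if it is $\pm I$ globally, forcing $-I\in S$, a contradiction. Independence likewise transfers: a relation among the restrictions lifts (after multiplying by a power of $g_m$) to a relation in $S$, contradicting independence of $g_1,\ldots,g_m$. Once these points are established, the inductive hypothesis applied to $V_+$ (which one identifies with $(\mathbb{C}^2)^{\otimes(n-1)}$ after a Clifford change of basis) yields a stabilized subspace of dimension $2^{(n-1)-(m-1)} = 2^{n-m}$, completing the induction.

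I expect the main obstacle to be the bookkeeping around ``independence'' and the claim that the restricted generators behave like a genuine $(m-1)$-generator stabilizer group on a genuine $(n-1)$-qubit system. The cleanest way around this is probably to avoid the restriction-to-a-subspace language entirely and instead argue directly: choose a Clifford unitary $U$ such that $U g_i U^\dagger = Z_i$ (a single-qubit $Z$ on the $i$-th qubit) for $i=1,\ldots,m$ simultaneously — this is the standard normal-form theorem for stabilizer groups, which follows from the symplectic/linear-algebra structure of $P_n/\{\pm1,\pm i\}$ over $\mathbb{F}_2$ together with the hypothesis $-I\notin S$ (which guarantees the common $+1$ eigenspace is nonempty and the generators can be simultaneously diagonalized with all eigenvalues $+1$ chosen consistently). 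After conjugating by $U$, the stabilized subspace is visibly $\{|b_1=\cdots=b_m=0\rangle\}\otimes(\mathbb{C}^2)^{\otimes(n-m)}$, of dimension $2^{n-m}$, and since conjugation by a unitary preserves dimension we are done. I would present the proof this second way if space permits, falling back on the inductive argument only as a sanity check, since the normal-form statement is itself a short linear-algebra exercise over $\mathbb{F}_2$ and makes the dimension count transparent.
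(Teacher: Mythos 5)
Your proposal is correct in substance, but note that the paper does not actually prove this theorem: it only states the idea that projecting onto the $+1$ eigenspace of each generator halves the dimension and defers to the cited reference, whose standard proof is the one-line trace computation $\dim V_S=\tr P_S$ with $P_S=\prod_{i=1}^m\tfrac12(I+g_i)=2^{-m}\sum_{g\in S}g$, where $-I\notin S$ (together with independence) guarantees every nonidentity element of $S$ is a Pauli string distinct from $\pm I,\pm iI$ and hence traceless, so $\tr P_S=2^{-m}\tr I=2^{n-m}$. Your inductive argument is essentially a rigorous unwinding of the paper's halving remark, while your preferred route via the simultaneous Clifford normal form $Ug_iU^\dagger=Z_i$ is a genuinely different, equally standard proof: it makes the code subspace completely explicit, at the price of first establishing the symplectic normal-form theorem over $\mathbb{F}_2$ (including the sign-fixing step by conjugation with anticommuting Paulis, where $-I\notin S$ enters again), whereas the trace argument needs nothing beyond tracelessness.

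One step in your fallback induction is wrong as stated: the lemma ``a Pauli operator acts as a scalar on a subspace of dimension $>1$ only if it is $\pm I$ globally'' is false (e.g.\ $Z\otimes I$ acts as $+1$ on a $2^{n-1}$-dimensional subspace). The repair needs the dimension count: if a product $p$ of $g_1,\dots,g_{m-1}$ restricts to $-\mathrm{Id}$ on $V_+$, then $p\neq\pm I$ (it is not $-I$ since $-I\notin S$, not $I$ since its restriction is $-\mathrm{Id}$), so $p$ is traceless and its $-1$ eigenspace has dimension exactly $2^{n-1}$; containing $V_+$, it must equal $V_+$, hence $p=-g_m$ and $-I=pg_m\in S$, a contradiction. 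The same eigenspace-equality argument (with $+1$ in place of $-1$) is what shows independence transfers: a product of $g_1,\dots,g_{m-1}$ restricting to $+\mathrm{Id}$ on $V_+$ must be $I$ or $g_m$, and the latter contradicts independence. With that care the induction goes through; your main normal-form route sidesteps the issue entirely.
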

\noindent
We refer the reader to \cite{nielsen2010quantum} for a proof, but the basic idea is that the projection onto the $+1$ subspace of each generator decreases the dimensionality of the subspace by a factor of two.  

We can apply this theorem to the lattice $\mathbb{Z}_2$ gauge theory with charged matter by noting that in unitarity gauge the Hilbert space is just the tensor product of a qubit on each edge of the lattice.  The set of plaquettes $Z(\gamma)$ and ``stars'' $\prod_{\vec{\delta}}X(\vec{x},\delta)$ generate an abelian subgroup $S$ of the Pauli group on this Hilbert space, and it is easy to see that 
 no product of plaquettes and stars can give $-I$. In fact, below we will classify all the relations among plaquettes and stars.  Hermitian elements of the Pauli group can only have eigenvalues $\pm 1$, so states where all plaquettes and stars act as the identity will necessarily be ground states of the Hamiltonian \eqref{toric}.  We may thus apply theorem \ref{stabilizerthm} to identify the dimensionality of the ground state subspace.  To show that the ground state is unique, we need to show that the number of independent generators of $S$ is equal to the number of edges in the lattice.  

\bfig
\includegraphics[height=8cm]{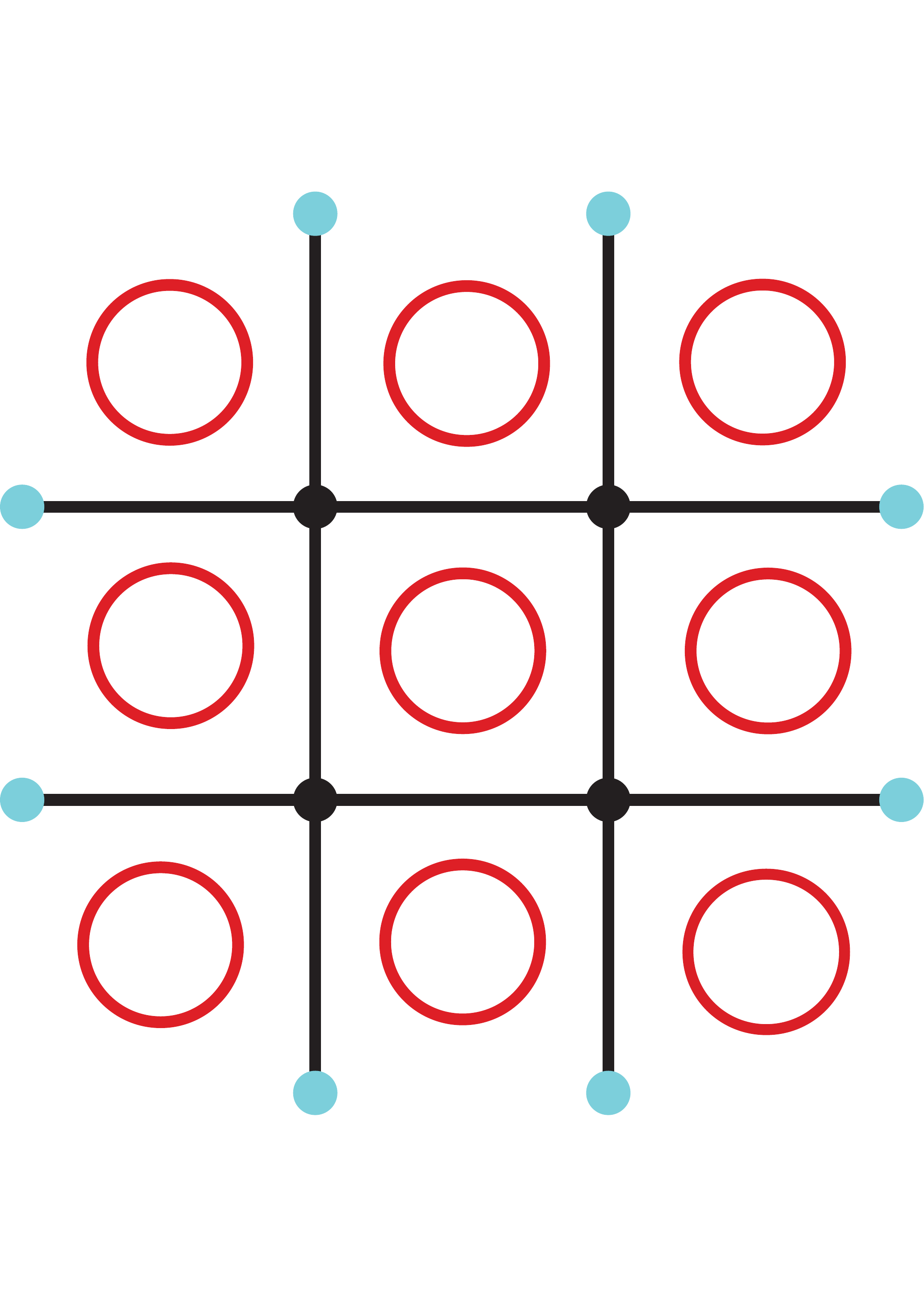}
\caption{Stabilizer generators for a cubic lattice with two spatial dimensions.  The nine red circles indicate plaquettes,
 and the four star constraints live at the black dots.  
 Since the product of all the plaquettes is the identity, 
 the number of independent generators is $8 + 4 = 12$, which 
 agrees with the number of edges. By 
 Theorem~\ref{stabilizerthm}, the ground state is unique.}\label{stabilizerfig}
\efig

Counting the number of independent generators of $S$ is nontrivial because there are relations among stars and plaquettes. 
 For example consider the situation in figure \ref{stabilizerfig}.   Since stars and plaquettes commute with each other, and since the only relations among Pauli generators that reduce their
numbers are $X^2 = Z^2 = 1$, any relation among stars and plaquettes can be expressed a product of a relation among stars
only and 
a relation among plaquettes only. Thus, it is sufficient to treat stars and plaquettes separately when counting their relations. There are no relations among the four stars, since it is not possible to cancel the $X(e)$ on boundary-piercing edges, but the product of the nine red plaquettes is equal to the identity. 
Therefore, the number of independent generators (plaquettes and stars) is equal to twelve, 
which indeed equals the number of edges. 
  It is easy to see that this counting works out more generally for a two-dimensional rectangular square lattice with some numbers of rows and columns. We now explain how to generalize this counting to arbitrary dimension and topology. 
	
For simplicity we first discuss the case where the lattice has no boundary, for example it could be a discretization of a Riemann surface.  We will refer to the CW complex associated to the lattice as $X$, and we will denote by $N_n(X)$ the number of $n$-cells in $X$.  We will take $X$ to be connected, since in the disconnected case the ground state subspace just tensor factorizes component by component.  The number of stars is $N_0(X)$, the number of edges is $N_1(X)$, and the number of plaquettes is $N_2(X)$.  There is however one relation between the stars: the product of all of them is the identity.  There can be no further relations, as can be seen by the following argument.  Any relation between the stars can be expressed by saying that the product of some subset of them is equal to the identity.  To get a nontrivial relation, at least one star must be included.  Consider any loop of edges which includes an edge attached to that star.  Each edge of the loop must appear in either zero or two stars in the relation in order for it to be equal to the identity, and moreover they must all appear in zero or all appear in two.  Since one of them appears in two, they all must.  But since this true for any loop containing that edge, to get a nontrivial relation we need to include all the stars. Thus we have
\be
\#(\mathrm{independent\,\,stars})=N_0(X)-1.
\ee 
Counting the relations between the plaquettes is more nontrivial, we claim that
\begin{align}\nonumber
\#(\mathrm{independent\,\, plaquettes})=&N_2(X)-(N_3(X)+b_2(X))+(N_4(X)+b_3(X))-\ldots \\
&+(-1)^{d-1}(N_{d-1}(X)+b_{d-2}(X))-(-1)^{d-1} b_{d-1}(X),
\end{align} 
where $b_m(X)$ is the dimensionality of the homology group $H_m(X,\mathbb{Z}_2)$.  The idea of this is as follows: the product of any set of plaquettes living on a two-cycle in $\mathbb{Z}_2$ homology is the identity, and so gives a relation between the plaquettes.  The set of two-cycles which are boundaries of three-chains is generated by products of three-cells, of which there are $N_3(X)$.  We also need to include one representative of each nontrivial homology class of two-cycles, hence our subtraction of $(N_3(X)+b_2(X))$.  But there aren't actually $N_3(X)$ independent homologically-trivial two-cycles, since those collections of three-cells which form three-cycles have trivial boundary and thus do not generate two-cycles.  So we need to add back the number of three-cycles, which is given by $(N_4(X)+b_3(X))$, except then some collections of the four cells are five cycles, which we need to resubtract, and so on.  In the last step we need to add or subtract the number of $d-1$-cycles, which are clearly never boundaries of $d$ cycles, so we are left with only $b_{d-1}$.	In stabilizer parlance, we have
\be
n=\#(\mathrm{edges})=N_1(X)
\ee
qubits and 
\be
m=\#(\mathrm{independent\,\,stars})+\#(\mathrm{independent\,\, plaquettes})
\ee
generators of $\mathcal{S}$, so the groundstate degeneracy is
\be
2^{n-m}=2^{b_1(X)},\label{noBdeg}
\ee
where we have used the expressions
\be\label{euler}
\chi(X)\equiv \sum_{n=0}^d (-1)^nN_n(X)=\sum_{n=0}^d (-1)^n b_n(X).
\ee
for the Euler characteristic of $X$, and also that $b_0(X)=1$ since $X$ is connected.  The expression \eqref{noBdeg} has a natural interpretation: the ground state subspace is labeled by the eigenvalues of the Wilson lines on the topologically distinct one-cycles of $X$ \cite{Kitaev:1997wr}.  

We now turn to lattices where $\partial X$ is nontrivial.  In order to allow a nontrivial long-range gauge symmetry, we had to choose boundary conditions on our gauge theory with matter fields as in figures \ref{latticeboundaryfig}, \ref{stabilizerfig}, where boundary edges are not included since we do not have degrees of freedom there and there are no star constraints on boundary sites.  For $X$ to be a CW complex however, we need to include these boundary edges as one-cells and boundary sites as zero-cells, since otherwise the boundaries of plaquettes which are adjacent to the boundary will not be part of the set of zero-cells and one-cells. Similarly $X$ needs to include all higher cells in $\partial X$ as well.  The number of edges which carry qubits is thus now given by
\be\label{boundaryedges}
\#(\mathrm{edges})=N_1(X)-N_1(\partial X).
\ee
There are no longer any relations between the star constraints, since given any edge in a star involved in such a relation we can construct a path to the boundary on which all edges would need to appear in two stars, but this is impossible for boundary-piercing edges since there are no star constraints on boundary sites.  Therefore we have
\be\label{boundarystars}
\#(\mathrm{independent \,\, stars})=N_0(X)-N_0(\partial X).
\ee
Counting the number of independent plaquettes is again more difficult, we claim that
\begin{align}\nonumber
\#(\mathrm{independent\,\, plaquettes})=&N_2(X)-N_2(\partial X)\\\nonumber
&-\left((N_3(X)-N_3(\partial X)+b_2(X)-b_2^{NT}(\partial X)+b_1^T(\partial X)\right)\\\nonumber
&+\left(N_4(X)-N_4(\partial X)+b_3(X)-b_3^{NT}(\partial X)+b_2^T(\partial X)\right)\\\nonumber
&-\ldots\\\nonumber
&+(-1)^{d-1}\left(N_{d-1}(X)+b_{d-2}(X)-b_{d-2}^{NT}(\partial X)+b_{d-3}^T(\partial X)\right)\\
&-(-1)^{d-1}\left(b_{d-1}(X)+b^T_{d-2}(\partial X)\right).\label{boundaryp}
\end{align}
In this formula we use a notation where we have split the $n$-cycles in $\partial X$ which are not boundaries in $\partial X$ into a set which \textit{are} boundaries in $X$, which have $b_n^{T}(\partial X)$ independent representatives, and a set which \textit{aren't} boundaries in $X$, which have $b_n^{NT}(\partial X)$ independent representatives.  By definition, we have
\be
b_n(\partial X)=b_n^T(\partial X)+b_n^{NT}(\partial X).  
\ee
To understand equation \eqref{boundaryp}, we begin as before: there are $N_2(X)-N_2(\partial X)$ plaquettes, but the product of plaquettes on any two-cycle in $\mathbb{Z}_2$ homology vanishes identically. This again imposes relations on the plaquettes.  The set of two-cycles which are boundaries is generated by the three-cells, of which there are $N_3(X)$, but the three cells which lie in the boundary are automatically trivial, so we should subtract $N_3(\partial X)$.  In counting two-cycles we should include a representative of each nontrivial class in $H_2(X,\mathbb{Z}_2)$, hence adding $b_2(X)$, but now we need to account for the fact that nontrivial two-cycles in $X$ which are homologous to nontrivial two-cycles in the boundary can still be generated by the three-cells, so we should subtract $b_2^{NT}(\partial X)$.  Finally in addition to the two-cycles, there are also relations from two-chains whose boundaries lie in $\partial X$, since these again are the identity.  When the boundary of such a two-chain is a boundary in $\partial X$, then the relation associated to it is equivalent to one from a two-cycle in $X$ which contains some boundary two-cells, so we only get new relations from those two-chains in $X$ whose boundary is in $\partial X$ but is not a boundary there.  These are counted precisely by $b_1^T(\partial X)$, hence we add this to our list of relations, finally subtracting the whole set as the second line of \eqref{boundaryp}.  We then observe that collections of three-cells which generate three-cycles or three-chains whose boundary is in $\partial X$ do not actually define two-cycles, and so we need to add back the third line of \eqref{boundaryp}.  And so on.  Combining \eqref{boundaryedges}, \eqref{boundarystars}, and \eqref{boundaryp}, and again using \eqref{euler} and $b_0(X)=1$, we at last have a ground state degeneracy
\be\label{Bdeg}
2^{n-m}=2^{b_0(\partial X)-1+b_1(X)-b_1^{NT}(\partial X)}.
\ee
This formula again has an elegant interpretation in terms of Wilson lines: 
\be\label{lines}
b_0(\partial X)-1
\ee
counts the number of independent Wilson lines stretching from one component of $\partial X$ to another, while
\be\label{loops}
b_1(X)-b_1^{NT}(\partial X)
\ee
counts the number of independent homologically-nontrivial Wilson loops which are not homologous to boundary one-cycles, since those which are must be trivial by the boundary conditions.  In particular if $X$ is homeomorphic to $B^{d-1}$, then \eqref{lines} and \eqref{loops} both vanish ($\partial B^{d-1}=\mathbb{S}^{d-2}$ is connected and there are no nontrivial one-cycles in $B^{d-1}$), so the ground state is unique.

\section{Multiboundary wormholes in three spacetime dimensions}\label{wormholeapp}
In this appendix we review some of what is known about multiboundary wormholes in $AdS_3/CFT_2$, focusing on the feasibility of constructing geometries which can be used in our second proof of theorem \ref{noglobalthm}.  The great advantage of $d=2$ is that there are no gravitational waves, so all solutions of the Einstein equation with negative cosmological constant and no matter are locally isometric to $AdS_3$.  More precisely, they are quotients of $AdS_3$ by a discrete subgroup $\Gamma$ of its isometry group $SO(2,2)$.  In $AdS_3/CFT_2$ such states can often be prepared by cutting the path integral of the CFT on a Riemann surface \cite{Krasnov:2000zq,Skenderis:2009ju,Balasubramanian:2014hda,Maxfield:2016mwh}, we now review this construction.  

\bfig
\includegraphics[height=5cm]{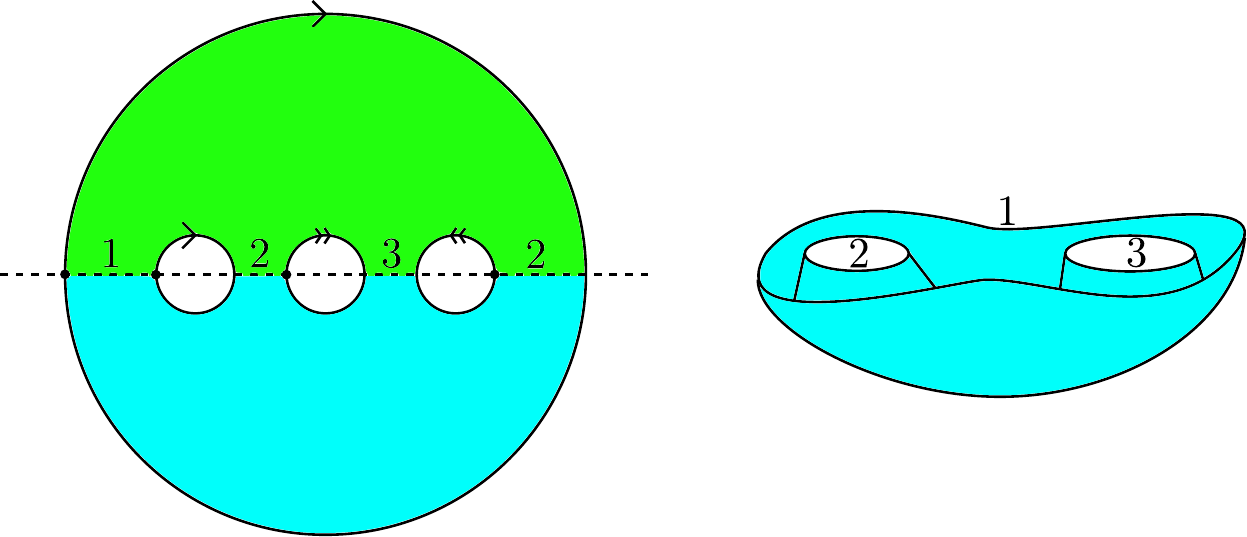}
\caption{A genus two Riemann surface constructed using four Schottky discs.  On the left the surface is the union of the green and blue regions, with the indicated identifications and the marked points  identified.  Performing the CFT path integral over just the blue region below the cut prepares a state in the Hilbert space of the CFT on three circles, labeled $1,2,3$.  On the right we show a heuristic picture of the cut geometry embedded into $\mathbb{R}^3$.}\label{genus2fig}
\efig
We begin by recalling the Schottky construction of an arbitrary Riemann surface.  Viewing the complex plane as the Riemann sphere, we place an even number of non-intersecting discs and then identify their boundaries in pairs with opposite orientation: the Riemann surface is the region to the exterior of all the discs.  Each identified pair can be viewed as adding a handle to the Riemann sphere, so if we place $2g$ discs we get a genus $g$ Riemann surface.  The moduli of the Riemann surface arise from the locations and sizes of the discs, as well as a possible twist in each identification.  By an $SL(2,\mathbb{C})$ transformation we can always choose one of the discs to be centered at infinity, and if we restrict to geometries which are time-reversal invariant then we can take all discs to be centered on the real axis with no twists.  A $g=2$ example is shown in figure \ref{genus2fig}, where we cut to get a state of the CFT on three circles. More generally by cutting a genus $g$ surface we can produce a pure state in the Hilbert space of the CFT on $g+1$ spatial circles.    

\bfig
\includegraphics[height=3.5cm]{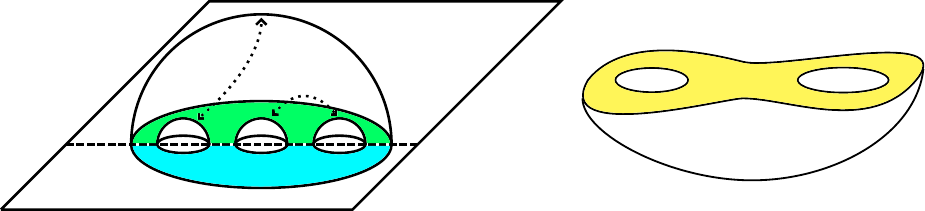}
\caption{A time-symmetric genus two handlebody.  In the left diagram, the handlebody lies above the small hemispheres and below the large hemisphere, with the indicated identifications of hemispheres.  The dashed boundary-time slice is extended straight up to give a symmetric timeslice of the bulk geometry.  In the right diagram this bulk timeslice is shaded yellow as a cut through the heuristic representation of the genus two handlebody embedded in $\mathbb{R}^3$.  Note that the three asymptotic boundaries are connected through a wormhole, as in figure \ref{triple2fig}.}\label{genus2bulkfig}
\efig
In order to find the bulk geometry of a state constructed in this manner, one needs to minimize the Euclidean Einstein-Hilbert action with negative cosmological constant over all solutions whose asymptotic boundary is the Riemann surface in question.  Assuming that this minimum has a time-symmetric slice whose boundary lies in the real axis of the Schottky construction (if not then the bulk interpretation of the state is unclear), one then takes that slice as initial data for the Lorentzian Einstein equation to construct the real-time bulk geometry.  The full set of these Euclidean solutions is rather complex, but there is an especially simple subset referred to as the \textit{handlebodies}, which are obtained by ``filling in'' the Riemann surface embedded in $\mathbb{R}^3$.  Given a Schottky presentation of a Riemann surface, there is a natural way to do this by viewing the complex plane in the Schottky construction as the boundary of the three-dimensional upper half plane, with metric
\be
ds^2=\frac{dx^2+dy^2+dz^2}{z^2}
\ee
and $z>0$, and then contracting the boundary of each disc using a hemisphere in the bulk. We illustrate this for genus two in figure \ref{genus2bulkfig}.  It is important to emphasize however that there can be different Schottky presentations of the same Riemann surface, which differ by acting with an element of the mapping class group of ``large'' diffeomorphisms that exchange the various cycles, eg $PSL(2,\mathbb{Z})$ for genus one, and these different presentations lead to different handlebodies in the bulk since different cycles are contracted.  Moreover in general the Schottky presentation in which the time-symmetric slice is the real axis is not the Schottky presentation from which the handlebody is constructed, unlike in figure \ref{genus2bulkfig} where it is.  At genus one there are only two time-symmetric handlebodies, the ``Euclidean BTZ'' and ``thermal AdS'' solutions, which differ by which of the two cycles is contracted in the bulk, and it is the Euclidean BTZ solution which is constructed as in figure \ref{genus2bulkfig}.

\bfig
\includegraphics[height=4cm]{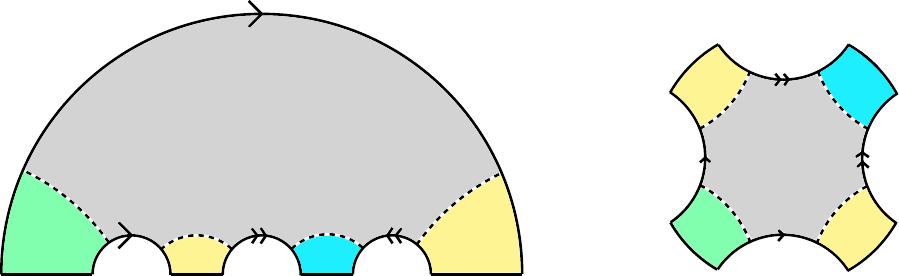}
\caption{The time-symmetric bulk slice of a three-boundary wormhole.  On the left we give the upper-half-plane presentation, while on the right we give the Poincare-disc presentation.  The ``interior'' region is shaded grey, while the three ``exterior'' regions are shaded green, blue, and yellow.  The dashed lines are the minimal length curves between the identification semicircles, and in Lorentzian signature they are the bifurcate horizons.}\label{genus2slicefig}
\efig
In fact at any genus we are especially interested in the particular handlebody where the Schottky presentation with time-symmetry about the real axis \textit{does} coincide with the Schottky presentation where the disc boundaries are contracted in the bulk, as shown in figure \ref{genus2bulkfig}.  The reason is that this is the only handlebody for which the time-symmetric bulk slice is connected, so in Lorentzian signature it is the one that describes a wormhole connecting all of the asymptotic boundaries.  For example at genus one the bulk timeslice of the ``thermal AdS'' handlebody is two disconnected discs.  We can understand better the structure of this wormhole by looking in more detail at the geometry of the time-symmetric slice, obtained by cutting through the geometry in the left diagram of figure \ref{genus2bulkfig}  directly above the dashed boundary cut.  This slice has the geometry of a quotient of the upper-half plane by a discrete subgroup, and in fact for this particular handlebody it is the Fuchsian presentation of the same cut Riemann surface on which the CFT path integral was evaluated to prepare the state.  Moreover the intersection of the bifurcate horizons in the Lorentzian solution with this timeslice are given precisely by the minimal length curves between the identification semicircles \cite{Aminneborg:1997pz}, which gives an elegant way of splitting the time-symmetric slice into ``interior'' and ``exterior'' regions.  We illustrate this for genus two in figure \ref{genus2slicefig}.  In general whenever this spatial slice connects $n$ asymptotic boundaries without any additional interior handles we can compute its volume using the Gauss-Bonnet theorem: it is an $n$-punctured sphere with a metric of constant negative curvature $R=-2$, and whose punctures are bounded by geodesics with $K=0$, so (in units where $\ell_{ads}=1$) we just have \cite{Marolf:2015vma}
\be\label{voln}
\mathrm{Interior \,\, spatial \,\, volume}=2\pi(n-2),
\ee 
which is independent of the moduli.  Notice that indeed for $n>2$ (and therefore $g>1$) we have a nontrivial interior which grows in size as we increase $n$.  Moreover it will not be in the entanglement wedge of any one of the boundaries, which is the key property for our wormhole-based proof of theorem \ref{noglobalthm}.  

\bfig
\includegraphics[height=6cm]{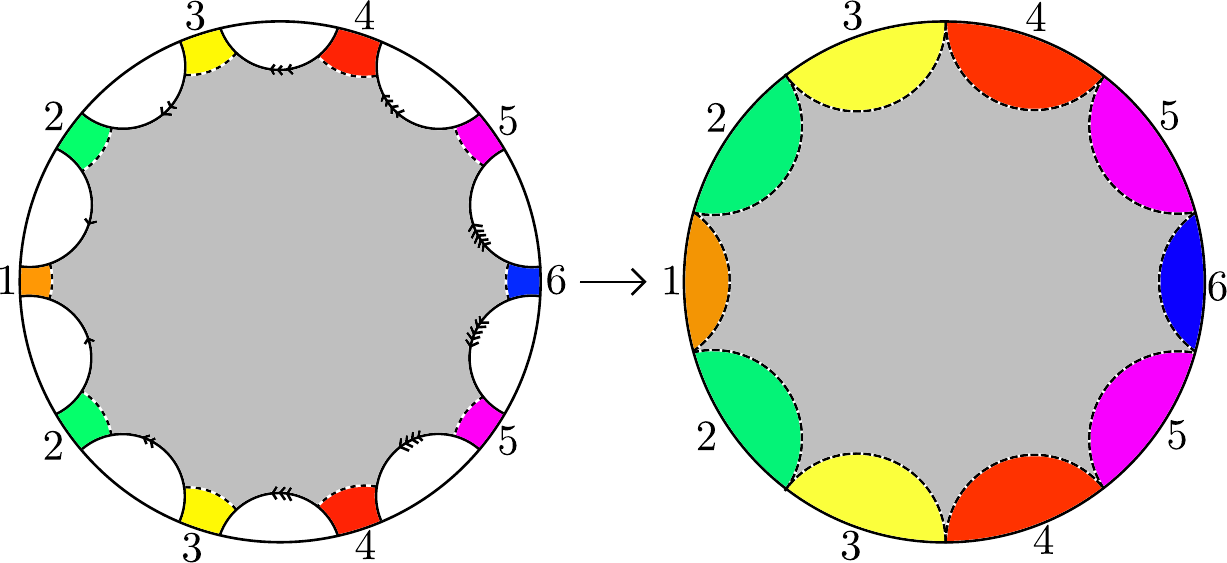}
\caption{The time-symmetric bulk slice of a genus five wormhole with six exterior regions.  The interior is shaded grey, while the exteriors are shaded in various colors.  The horizons are the dashed lines, and on this line in moduli space the horizon lengths are equal for boundaries 2,3,4, and 5, each of which is twice the length of the horizons for boundaries 1 and 6.  On the left we show a geometry where these length are all finite, while on the right we show the limiting configuration as the lengths go to infinity.}\label{genus5fig}
\efig
In order for that proof to be valid however, we need to check that these connected-wormhole handlebodies do actually dominate the Euclidean path integral, at least somewhere in moduli space.  For genus one the handlebodies are all the solutions, and we know that at high temperature the Euclidean BTZ geometry is dominant.  For $g\geq 2$ they are not: the others are usually called \textit{non-handlebodies}, and they are less well-understood.  Fortunately there is some evidence that non-handlebodies are always subleading to at least one handlebody in the Euclidean path integral \cite{Yin:2007at,Maxfield:2016mwh}, and in what follows we will assume this to be the case.  We are then left with the following question: at any particular point in moduli space, which choice of handlebody minimizes the Euclidean action?  Unfortunately even this question has not been systematically addressed, since evaluating the Euclidean action of a handlebody amounts to computing the classical action of a solution of the Liouville equation on the boundary Riemann surface \cite{Krasnov:2000zq}, which so far is only possible analytically in very restricted cases.\footnote{In fact the connection to the Liouville equation holds if we work in a conformal frame where the boundary metric has constant negative curvature for $g\geq 2$.  It might well be that it is easier to compute the action in some other conformal frame, but we won't pursue this here.}  Recently a numerical algorithm has been developed for computing the Liouville action on arbitrary Riemann surfaces \cite{Maxfield:2016mwh}, specifically with the goal of clarifying which handlebodies dominate the Euclidean gravitational path integral with a boundary Riemann surface in various regions of moduli space, but so far it has only been applied in a few special cases.  We also will not solve this problem, but will instead just suggest a limit in moduli space where we find it plausible that the connected wormhole should be the dominant handlebody.  

Our proposal is most natural in the Poincare disk representation of the bulk time-slice, shown for genus two as the right diagram in figure \ref{genus2slicefig}.  The idea is to introduce $2g$ equally-spaced and equally-sized semicircles around the edge of the Poincare disk, oriented such that there is a reflection symmetry across the real axis, and then identify the semicircles which are related by this reflection.  We leave the size of the semicircles as a free parameter, which means we are looking at a one-dimensional slice through the moduli space.  We illustrate this construction for genus five in figure \ref{genus5fig}, notice in particular the increased size of the interior region compared to figure \ref{genus2slicefig}, which is consistent with \eqref{voln}.  Our conjecture is then that as we take the radii of the identification semicircles to zero, shown in the right diagram of figure \ref{genus5fig}, this handlebody will be the dominant solution in the Euclidean gravity path integral.  Our conjecture is based on the observation that the Euclidean action is essentially the renormalized volume of spacetime, indeed evaluated on any solution which is a quotient of the hyperbolic three-plane we have we have
\be
S_E=\frac{1}{4\pi G}\left(\int_M d^3x \sqrt{g}-\frac{1}{2}\int_{\partial M} d^2 x\sqrt{\gamma}(K-1)\right).
\ee 
Given a choice of which boundary cycles to contract in the bulk, it is natural to expect that this action will tend to want to contract the smallest cycles, since most likely this can be done at the cost of the least volume in the bulk.  For the family of handlebodies we have constructed, in the limit of small identification semicircles, and therefore large horizon length, the cycles in the boundary which correspond to spatial circles in the time-symmetric slice become parametrically larger than their dual cycles, which are the cycles which appear as the boundaries of the Schottky discs.  At genus one and genus two we can confirm that this is indeed the case: the transition from thermal AdS to Euclidean BTZ indeed happens right when the thermal circle becomes smaller than the spatial one, and the numerical results of \cite{Maxfield:2016mwh} confirm that our limiting family of Riemann surfaces, which corresponds to the line $\ell_3=2\ell_{12}$ in their figure 7, dominates over the other possible handlebodies (and also one non-handlebody they were able to check analytically) in the limit of large horizon length.  Assuming this conjecture is also correct at higher genus, the connected wormhole will always dominate at sufficiently large horizon length, and any quasilocal bulk operator can fit into the interior region for sufficiently high genus.\footnote{Henry Maxfield has suggested a related set of surfaces constructed by taking $n$ copies of the complex plane and gluing them together using two pairs of branch points on each copy.  In the dual CFT this amounts to computing the four-point function of $\mathbb{Z}_n$ twist operators in the symmetric orbifold of $n$ copies of the CFT.  For this set of surfaces there is a natural guess for where the transition from ``totally connected'' to ``totally disconnected'' takes place: at the crossing-symmetric configuration of the four twist operators.  The argument that there is a totally connected phase for sufficiently large cross ratio is the same as for our surfaces: eventually the smallest cycles should all contract in the bulk.}  We are then able to run our second proof of theorem \ref{noglobalthm}.

\section{Sphere/torus solutions of Einstein's equation}\label{spheretorapp}
In this appendix we discuss in more detail the solutions of Einstein's equation with negative cosmological constant used in section \ref{pformholsec}, with metric of the form
\be\label{pformmetric}
ds^2=-\alpha(r)dt^2+\frac{dr^2}{\alpha(r)\beta(r)}+e^{\gamma(r)}dx_p^2+r^2 d\Omega_{d-p-1}^2.
\ee
The time, planar, radial, and spherical components of Einstein's equations with negative cosmological constant for metrics of the form \eqref{pformmetric} are given respectively by\footnote{The reader can compare these equations to those in \cite{Copsey:2006br} in the special case $d=4$, $p=1$.}
\begin{align}\nonumber
r(\alpha \beta'&+\alpha' \beta)\left(2(d-p-1)+pr \gamma'\right)+2(d-p-1)\alpha\beta\left(d-p-2+pr \gamma'\right)\\\nonumber
&+pr^2\alpha\beta\left(\frac{p+1}{2}\gamma'^2+2\gamma''\right)\\
&=2\big((d-p-2)(d-p-1)+d(d-1)r^2\big)\\\nonumber
r\beta'(r\alpha'&+\alpha(2(d-p-1)+(p-1)r\gamma'))+2\beta\Big((d-p-2)(d-p-1)\alpha+2(d-p-1)r\alpha'\\\nonumber
&+r^2\alpha''\Big)+2(p-1)\beta \Big(r\gamma'\big(d-p-1+r\alpha'+\frac{p}{4}r \gamma'\big)+r\gamma''\Big)\\
&=2\big((d-p-2)(d-p-1)+d(d-1)r^2\big)\\\nonumber
p(p-1)r^2\alpha\beta\gamma'^2&+2pr\beta(2(d-p-1)+r\alpha')\gamma'+4\beta(d-p-1)((d-p-2)\alpha+r\alpha')\\
&=4\big((d-p-2)(d-p-1)+d(d-1)r^2\big)\\\nonumber
r^2\alpha'\beta'&+r(2\alpha'\beta+\alpha \beta')(2(d-p-2)+pr\gamma')+2r^2\beta\alpha''+2(d-p-3)(d-p-2)\alpha \beta\\\nonumber
&+2p(d-p-2)r\alpha \beta \gamma'+pr^2 \alpha \beta \left(\frac{p+1}{2}\gamma'^2+2\gamma''\right)\\
&=2\big((d-p-3)(d-p-2)+d(d-1)r^2\big).
\end{align}

We first consider the vacuum solution, where it is the sphere that contracts in the bulk.  We can then assume a further symmetry between the time and planar directions, setting 
\be
\gamma=\log \alpha.
\ee
The first two equations of motion become redundant, and the third simplifies so that we can solve for $\beta$:
\be\label{betaeq}
\beta=\frac{4\alpha\left((d-p-2)(d-p-1)+d(d-1)r^2\right)}{4(d-p-2)(d-p-1)\alpha^2+4(d-p-1)(p+1)r\alpha\alpha'+p(p+1)r^2\alpha'^2}.
\ee
After this substitution, the first, third, and fourth equations of motion each give the same second order ordinary differential equation for $\alpha$.  

To find the right boundary conditions, we can expand $\alpha$ in a power series near $r=0$ and then substitute into this differential equation.  The result is that if we want $\alpha(0)>0$ then we must have
\be
\alpha(r)=\alpha(0)\left(1+\frac{1}{d-p}r^2+O(r^3)\right).  
\ee
This then tells us that we must impose $\alpha'(0)=0$, which from \eqref{betaeq} then implies that $\beta(0)\alpha(0)=1$, as needed to avoid a singularity at $r=0$.  The overall scale of $\alpha$ can be absorbed into a rescaling of the time coordinate, so we thus have a unique vacuum solution, as found by Horowitz and Copsey for $d=4$ and $p=1$.  

\bfig
\includegraphics[height=4cm]{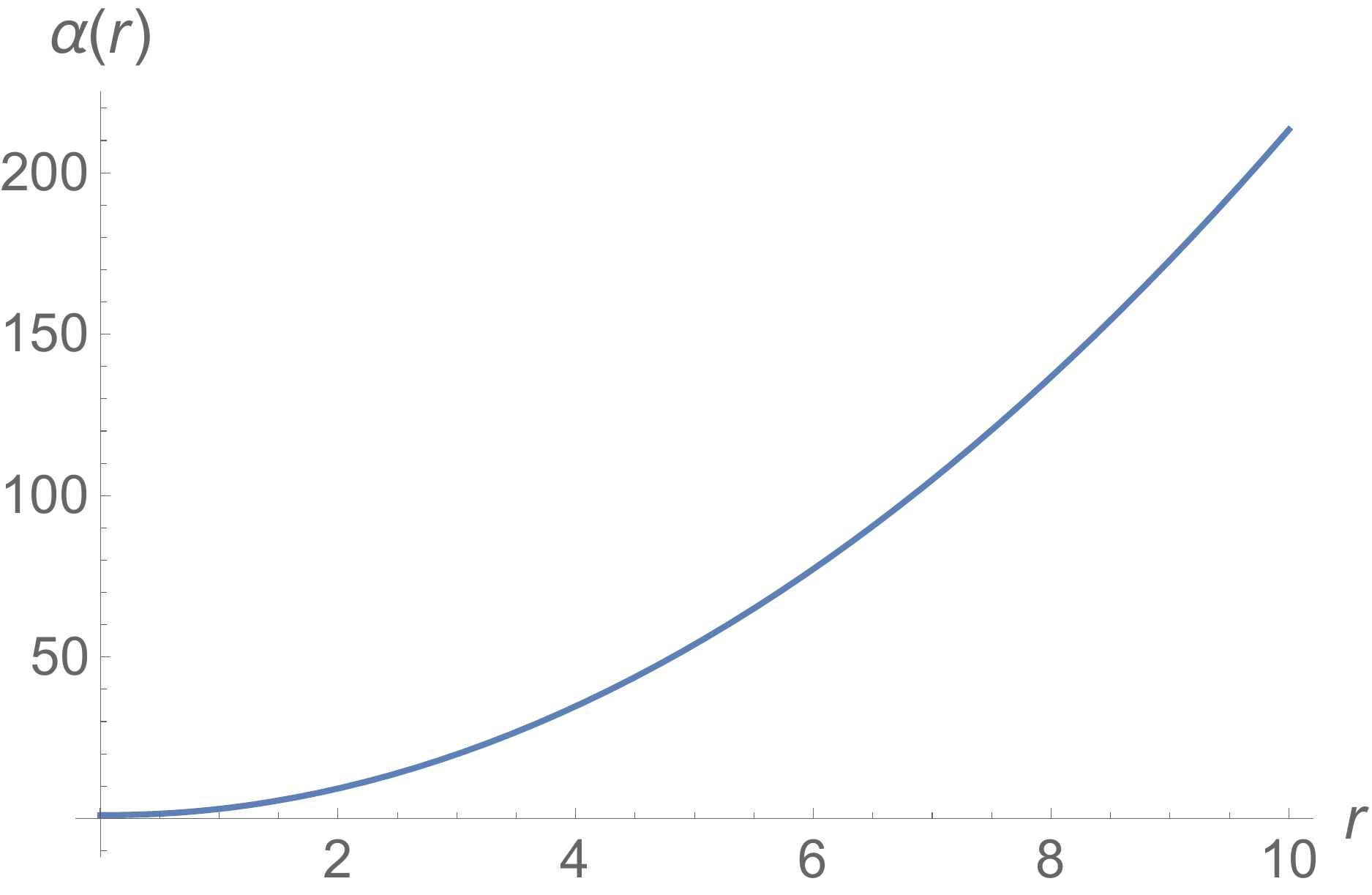} \hspace{2cm}
\includegraphics[height=4cm]{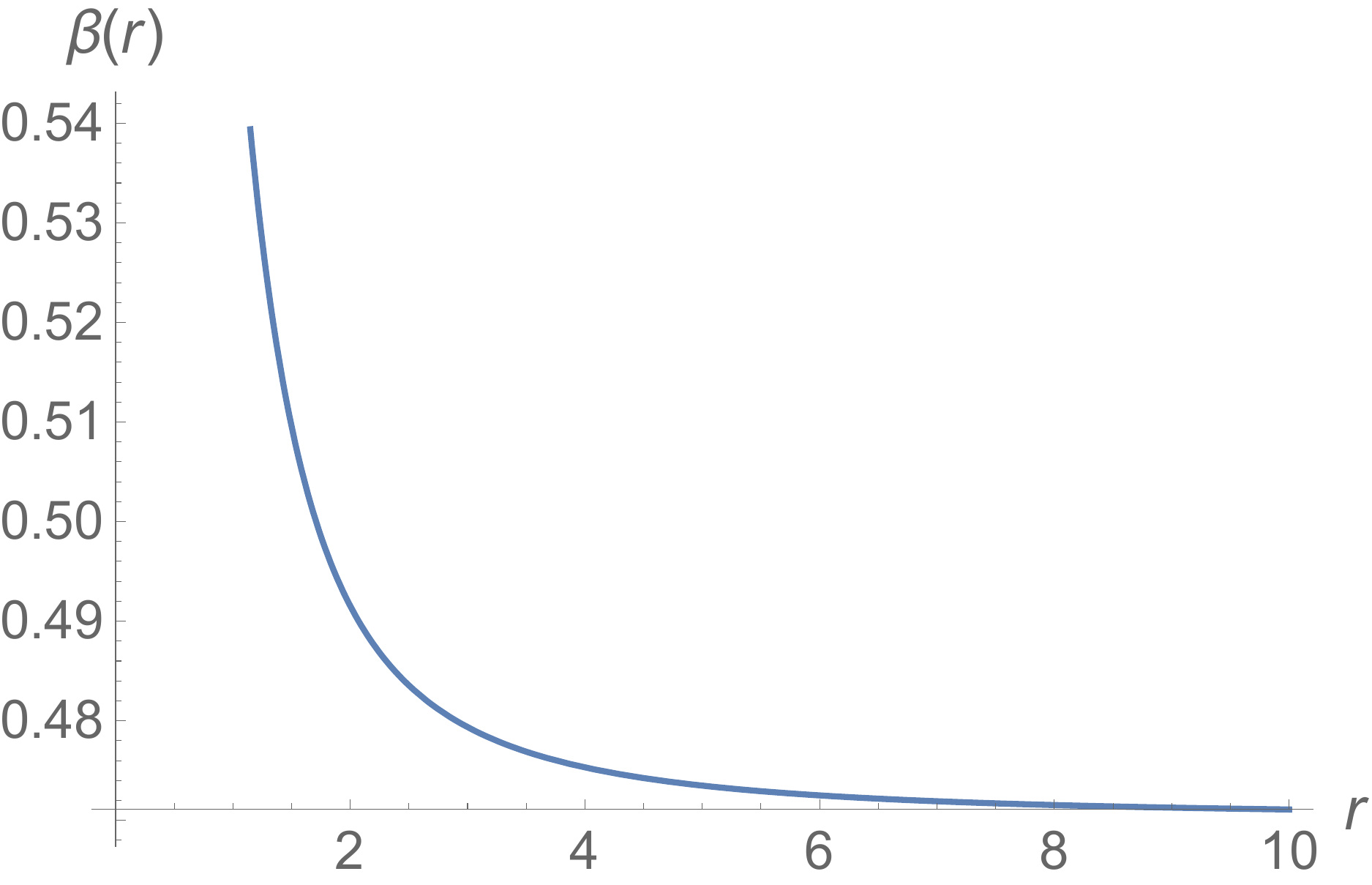}
\caption{Numerical plots of the vacuum solution for $p=2$ and $d=5$.}\label{vacuumsol}
\efig
The differential equation for $\alpha$ can only be solved numerically, which we've written a mathematica file (included in the arxiv submission) to do. We've checked for a variety of $d$ and $p$ that, with these boundary conditions, the solutions for $\alpha$ and $\beta$ are positive, and behave as $\alpha\beta=r^2+o(r^2)$ at large $r$, as required for the geometry to be asymptotically AdS.  We plot a typical example in figure \ref{vacuumsol}.

We now consider the wormhole solutions, where $\alpha$ vanishes at some $r_s>0$.  In this case we cannot assume symmetry between $t$ and $x$, so we must treat $\alpha$, $\beta$, and $\gamma$ independently.  We first observe that the third equation of motion is quadratic in $\gamma'$, and can be solved to give an expression for $\gamma'$ in terms of $\alpha$, $\alpha'$, and $\beta$:
\begin{align}\nonumber
\gamma'=&\frac{1}{p(p-1)r^2\alpha \beta}\Bigg(-pr\beta(r\alpha'+2(d-p-1)\alpha)\\\nonumber
&+\Big[p^2r^2\beta^2(r\alpha'+2(d-p-1)\alpha)^2\\
&+4p(p-1)r^2\alpha\beta\left((d-p-2)(d-p-1)+d(d-1)r^2-(d-p-1)\beta(r\alpha'+(d-p-2)\alpha)\right)\Big]^{1/2}\Bigg)
\end{align}
This expression then may be substituted into the other equations, to produce a pair of independent differential equations which are second order in $\alpha$ and first order in $\beta$.  One nice simplification occurs if we take the difference of the first and fourth equations, which tells us that
\begin{align}\nonumber
-2r^2\beta\alpha''+2r\alpha\beta-r^2\alpha'\beta'+2\alpha\beta(2(d-p-2)+pr\gamma')-r\alpha'\beta(2(d-p-3)+pr\gamma')=4(d-p-2).
\end{align}
We can pair this equation with, say, the first equation, and then solve them numerically.  We now need three boundary conditions: one is provided by $\alpha(r_s)=0$, and another can be fixed by rescaling time so that $\alpha'(r_s)$ takes any value we choose.  Finally by inspecting the form of the equations at a point where $\alpha=0$, we can see that we must have
\be
\beta(r_s)=\frac{d-p-2+dr_s^2}{r_s\alpha'(r_s)}.
\ee 
The parameter $r_s$ is physical, and sets the temperature parameter in the thermofield double state. 

\bfig
\includegraphics[height=4cm]{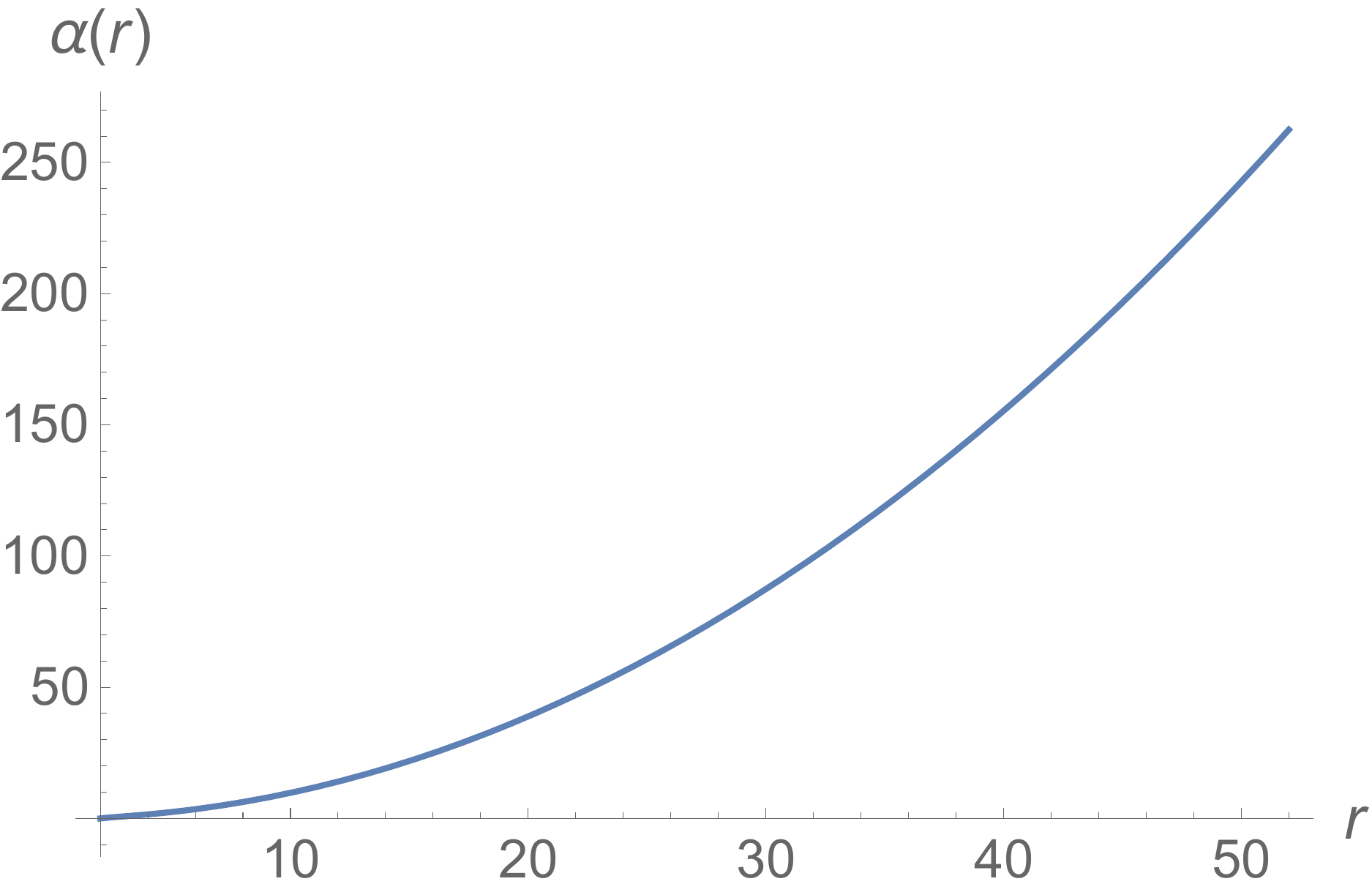}\hspace{1cm}
\includegraphics[height=4cm]{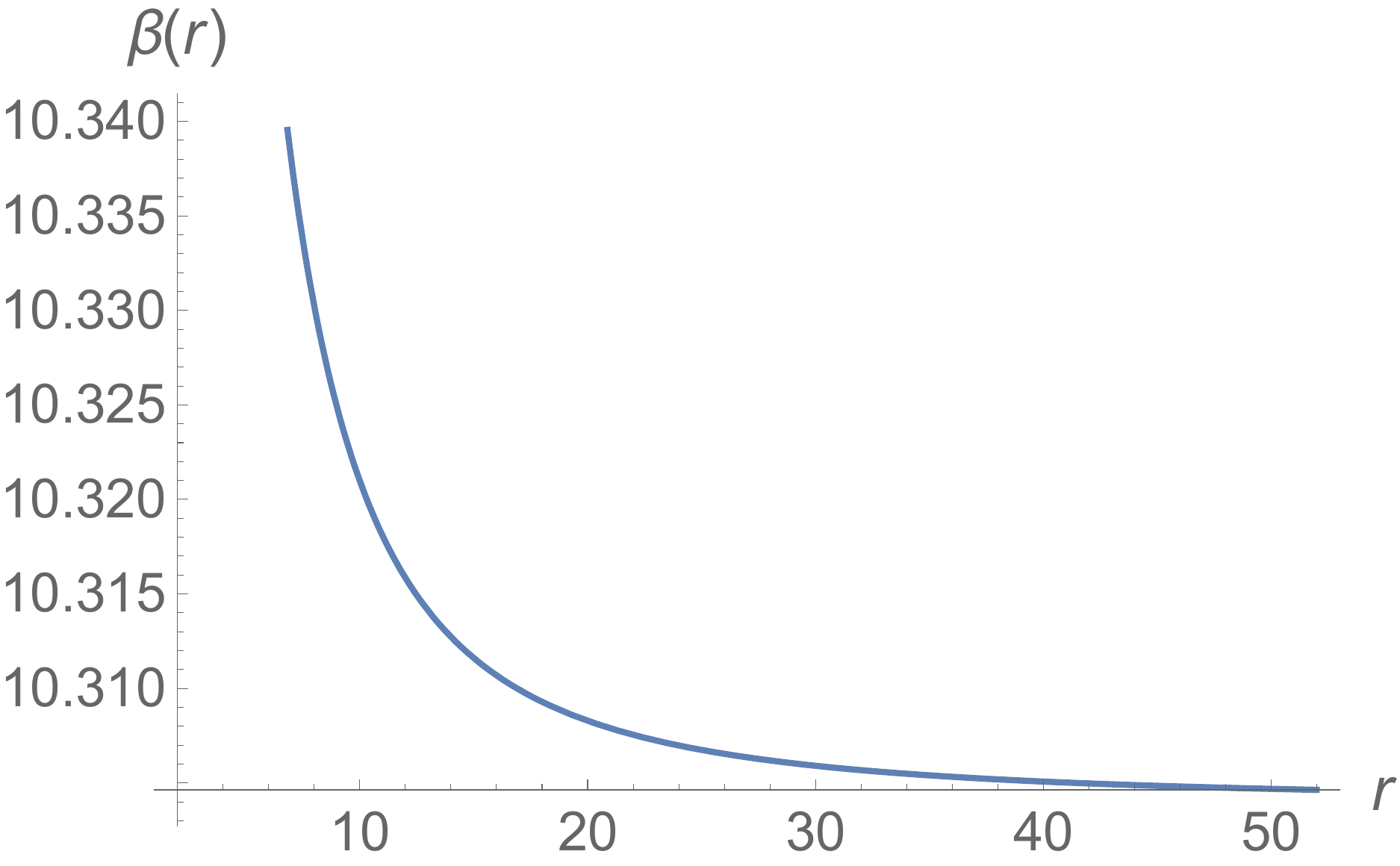}\hspace{1cm}
\includegraphics[height=4cm]{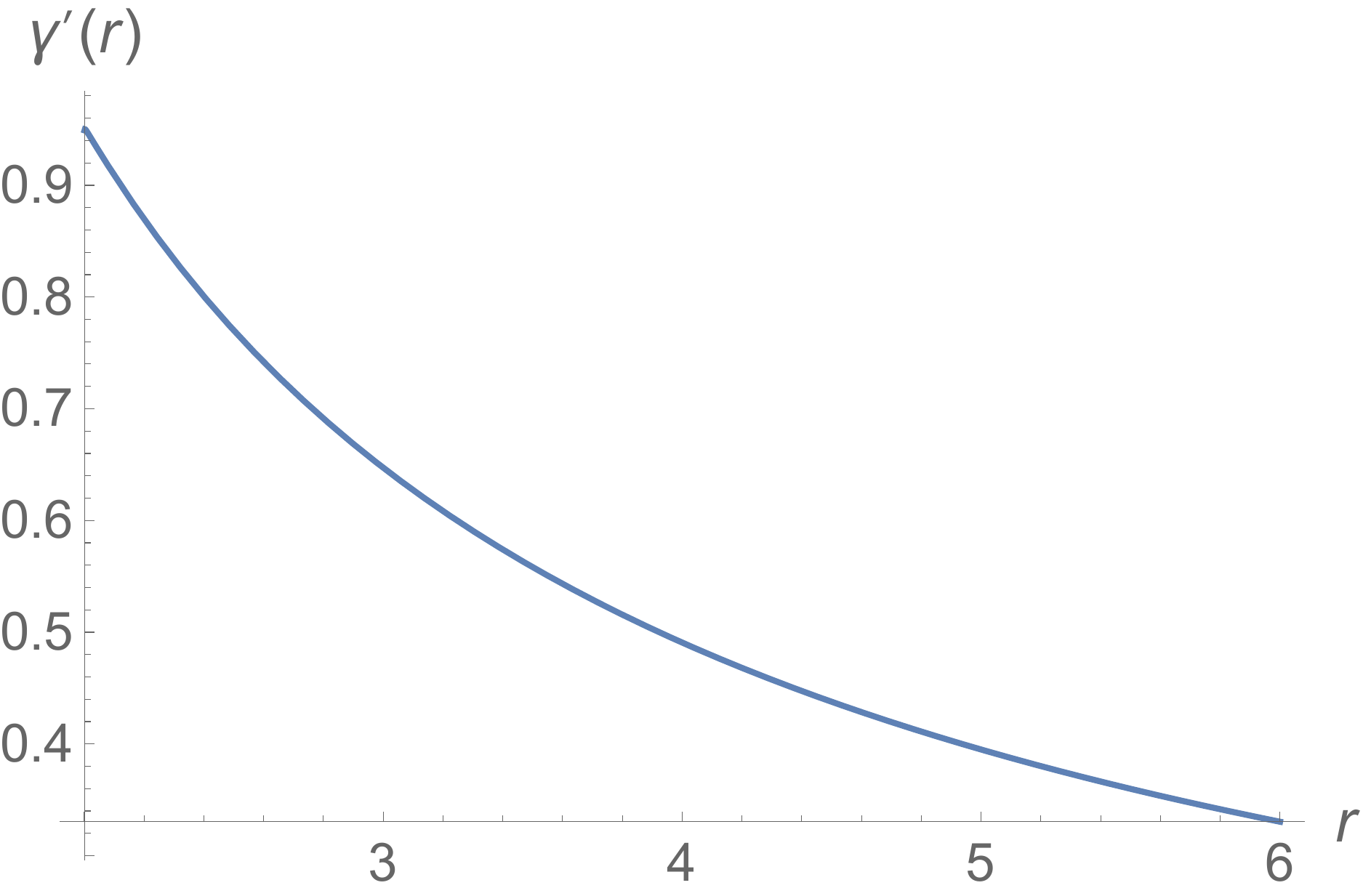}
\caption{Numerical plots of the wormhole solution, for $r_s=2$, $d=5$, and $p=2$.}\label{pformwormholeplot}
\efig
We've again written mathematica code (included in the arxiv submission) to solve these equations numerically, and again confirmed for a variety of $d$, $p$, and $r_s$ that $\alpha$ and $\beta$ are positive, and they have the right large-$r$ asymptotics. We plot an example in figure \ref{pformwormholeplot}.
\bibliographystyle{jhep}
\bibliography{bibliography}
\end{document}